\newcommand{\subtitle}[1]{%
  \posttitle{%
    \par\end{center}
    \begin{center}\large#1\end{center}
    \vskip0.5em}%
}
\newtheorem{lemma}{Lemma}
\newtheorem{theorem}{Theorem}
\newtheorem*{theorem*}{Theorem}
\newtheorem{corollary}{Corollary}
\theoremstyle{definition}
\newtheorem{definition}{Definition}
\newtheorem{remark}{Remark}
\newtheorem*{remark*}{Remark}
\newtheorem{example}{Example}
\newtheorem*{example*}{Example}
\newtheorem{assumption}{Assumption}
\newtheorem{condition2}{Condition}[section]
\newenvironment{conditionp}[1]{
  
  \conditionalt
}{\endconditionalt}
\def\thm@space@setup{%
  \thm@preskip=\parskip \thm@postskip=0pt
}
\DeclareMathOperator{\Var}{Var}
\newcommand{\R}{\mathbb{R}}
\newcommand{\E}{\mathbb{E}}
\title{Generalized Bayes in Conditional Moment Restriction Models%
  \thanks{Email address: \texttt{sid.kankanala@chicagobooth.edu}. 
  I thank Xiaohong Chen, Yuichi Kitamura, and Donald Andrews for their guidance on this project. I am also grateful to Stéphane Bonhomme, Victor Chernozhukov, Tim Christensen, Ben Deaner, Chris Hansen, Han Hong, Tetsuya Kaji, Áureo de Paula, Veronika Ro\v{c}kov\'{a}, Andres Santos, and Anna Simoni for valuable comments and suggestions. This paper is based on  and supersedes an older preprint 
  \href{https://arxiv.org/abs/2311.00662}{arXiv:2311.00662} 
  \citep{kankanala2023gaussian}.}\\[3ex]}  
\author{Sid Kankanala}
\affil{University of Chicago}
\date{\today}
\begin{document}
\maketitle

\begin{center}
\large

\end{center}

\bigskip

\begin{abstract}

	\noindent
	\normalsize
	This paper develops a generalized Bayes framework for conditional moment restriction models, where the parameter of interest is a nonparametric structural function of endogenous variables. We establish contraction rates for a class of Gaussian process priors and provide conditions under which a Bernstein-von Mises theorem holds for the quasi-Bayes posterior. Consequently, we show that optimally weighted quasi-Bayes credible sets achieve exact asymptotic frequentist coverage, extending classical results for parametric GMM models. As an application, we estimate firm-level production functions using Chilean plant-level data. Simulations illustrate the favorable performance of generalized Bayes estimators relative to common alternatives.

\vspace{0.5cm}
\noindent \textbf{Keywords:} Gaussian process, quasi-Bayes, nonlinear ill-posed inverse, Bernstein–von Mises, nonparametric IV, nonparametric quantile IV

\end{abstract}

\newpage

\section{Introduction}
Conditional moment restrictions are widely used to identify structural parameters in complex economic models. In many applications, the object of interest is an unknown nonparametric structural function $h_0(\cdot)$ that satisfies
\begin{align*} 
\mathbb{E} \!\left[ \rho\big(Y, h_0(X)\big) \,\middle|\, W \right] = \mathbf{0}\;,
\end{align*}
where $Y \in \mathbb{R}^{d_y}$ is a vector of outcomes, $X \in \mathbb{R}^d$ is a vector of endogenous regressors, $W \in \mathbb{R}^{d_w}$ is a vector of conditioning (or instrumental) variables, and the conditional distribution of $(Y,X) \mid W$ is left unrestricted. Here, $\rho(.) = [ \rho_{1}(.) , \dots , \rho_{d_{\rho}}(.)   ] $ is a $d_{\rho}$ dimensional vector of generalized residual functions, whose functional forms are assumed to be fully known. Common applications of this framework include consumer demand \citep*{blundell2007semi}, firm productivity (\citealp{doraszelski2013r}), differentiated product markets \citep{berry2024nonparametric}, production functions (\citealp*{ackerberg2015identification}), international trade \citep*{adao2017nonparametric}, treatment effects \citep{chernozhukov2005iv} and asset pricing \citep{bansal1993no,chen2009land}.

A common challenge for practitioners is that, although these restrictions are informative in the population, their finite-sample information content can be quite limited. In parametric models, this issue is typically attributed to weak instruments \citep*{stock2002survey}, whereas in nonparametric endogenous settings it reflects an “ill-posed inverse” problem \citep*{chen2012estimation}. As a result, classical nonparametric estimators often display undesirable properties such as high finite-sample variability, irregular behavior, and extreme sensitivity to small data perturbations. These difficulties are particularly evident in applications with multivariate endogenous regressors or when closed-form solutions are unavailable.

Motivated by these concerns, this paper proposes a class of nonparametric estimators and confidence sets obtained as solutions to generalized (quasi-) Bayes decision rules. In this framework, the conditional restrictions are interpreted as a quasi-likelihood which, when combined with a prior, yields a generalized Bayesian nonlinear inverse problem for the structural parameter. To fix ideas, let $\widehat{m}(\cdot)$ denote a feasible first-stage estimator of $m(W,h) = \E[\rho(Y,h)\mid W]$, $\widehat{\Sigma}(\cdot)$ a positive semi-definite weighting matrix, and $d\mu(\cdot)$ a prior on structural functions. We then study the generalized Bayes posterior distribution:   \[ \mu(\cdot \mid \mathcal{D}_n) \;=\; \frac{\exp\!\big(-\frac{n}{2}\,\E_n\!\big[\widehat{m}(W,\cdot)^{\prime}\,\widehat{\Sigma}(W)\,\widehat{m}(W,\cdot)\big]\big)\, d\mu(\cdot)}
{\int \exp\!\big(-\frac{n}{2}\,\E_n\!\big[\widehat{m}(W,h)^{\prime}\,\widehat{\Sigma}(W)\,\widehat{m}(W,h)\big]\big)\, d\mu(h)} \, . \]
 In the nonparametric endogenous models considered here, this framework provides a powerful form of data-driven regularization. Importantly, it also allows researchers to incorporate auxiliary information that strengthens the finite sample information content of the moments. Such information may range from weakly informative features, such as smoothness, to restrictions informed by application-specific microfoundations.

Over the past two decades, parametric quasi-Bayes procedures have found a variety of applications in econometrics, from models with nonsmooth objectives \citep{chernozhukov2005iv} to settings with nonstandard identification (\citealp*{chen2018monte}; \citealp{andrews2022optimal}). Most of the literature has focused on the properties of quasi-posteriors in parametric models. By contrast, relatively little is known about the behavior of quasi-Bayes in settings with a nonparametric structural parameter. This article helps bridge that gap by providing a unified treatment of quasi-Bayes for the broad class of nonparametric conditional moment restriction models commonly encountered in applied work. As we illustrate, when paired with a suitable nonparametric prior, quasi-Bayes naturally functions as a powerful form of data-driven regularization in endogenous models.

The main theoretical contributions of this paper are as follows. First, we introduce a theoretically motivated class of Gaussian process priors to model the nonparametric structural parameter. Together with the conditional restrictions, this induces a generalized (quasi-) Bayes posterior for the parameter. Second, we derive posterior contraction rates for the quasi-Bayes posterior in  classical $L^2$ metrics. Third, we establish conditions under which a nonparametric Bernstein–von Mises (BvM) theorem holds for the quasi-Bayes posterior. We use this to provide frequentist guarantees for certain optimally weighted quasi-Bayes credible
sets that are centered around the posterior mean. In particular, we show that such credible sets achieve asymptotically exact frequentist coverage. This provides the first nonparametric quasi-Bayes inferential guarantee in the literature, extending classical results (e.g. \citealp{chernozhukov2003mcmc})  for parametric GMM models.

We demonstrate the viability of our procedures across a broad class of models, including classical linear nonparametric IV, conditional quantile restrictions, and general nonlinear conditional restrictions. We complement this with extensive simulation evidence, replicating all univariate benchmark designs from the literature and extending them to settings with multivariate endogenous regressors. To highlight the flexibility of our approach, we additionally estimate models under alternative sets of restrictions whenever such alternatives are available. Overall, we expect our generalized Bayes procedures and accompanying implementation toolkit to be broadly useful for nonlinear conditional moment restrictions, particularly in ill-posed problems or when closed-form solutions are unavailable.

The paper is organized as follows. Section \ref{sec3} introduces the class of conditional moment restriction models and develops the generalized (quasi-) Bayes framework. Section \ref{motiv-sim} discusses our motivation for generalized Bayes procedures and relates it to the broader econometric literature. Section \ref{sec4} presents the assumptions and develops the main results. Sections \ref{motiv-sim} and \ref{simulations} provide simulation evidence on the performance of generalized Bayes estimators relative to common alternatives. In Section \ref{applic-prod}, we apply our methodology to nonlinear restrictions that arise in the nonparametric estimation of production functions. Section~\ref{conclus} provides additional remarks and concludes. Appendices \ref{append-sims}, \ref{implem-append}, \ref{gentheory}, and \ref{proofs} provide additional details on simulations, implementation, theory, and proofs, respectively.

\subsection{Literature} \label{lit-review}
There is a large literature on nonparametric sieve-based frequentist estimation and inference for conditional moment restriction models. As part of our general analysis, we review a subset of this literature in Sections~\ref{sec3}–\ref{sec4}. For a more comprehensive survey, particularly on early contributions, see \citet{chen2016methods}.

 In econometrics, our work is most closely related to \citet{chen2012estimation, chen2015sieve}, who developed the foundational frequentist sieve-based analysis of general conditional moment restriction models. At a high level, our procedures provide a generalized Bayes counterpart to their theory for infinite-dimensional sieves. However, instead of relying on traditional sieves and penalization, we develop procedures that are built around a class of infinite dimensional Gaussian process priors.

\citet{chernozhukov2003mcmc} developed the quasi-Bayes limit theory for parametric models strongly identified by a collection of moments. For finite-dimensional structural parameters, several alternative approaches have been proposed, including exponentially tilted empirical likelihoods \citep*{schennach2005bayesian,chib2018bayesian, chib2022bayesian} and methods that project a posterior on the data-generating distribution onto the parameter of interest \citep{chamberlain2003nonparametric, walker2024semiparametric}. By contrast, our focus is on endogenous models in which the parameters of interest are nonparametric structural functions. Importantly, in this setting, the structural parameter is infinite-dimensional, and its recovery is a challenging statistical ill-posed inverse problem.

In the statistical literature, early extensions of \citet{chernozhukov2003mcmc} to nonparametric models focused on slowly growing uninformative flat sieve priors. This line of work includes conditions for basic consistency \citep{liao2011posterior} and convergence rates in the special case of linear nonparametric IV models \citep{kato2013quasi}. These approaches parallel classical frequentist analysis (e.g. \citealp{ai2003efficient,newey2003instrumental}), where regularization is achieved by restricting estimation to a sequence of slowly expanding sieve spaces. By contrast, we study generalized Bayes procedures with infinite dimensional Gaussian process priors and develop statistical guarantees for general nonlinear conditional moment restrictions.

As we illustrate in Sections \ref{motiv-sim} and \ref{simulations}, the regularizing properties of the Gaussian process priors we study make them particularly well-suited to nonparametric endogenous models identified via general conditional moment restrictions. This motivation connects to early econometric work on the consistency of Gaussian priors in conjugate linear models with a known operator \citep*{florens2012nonparametric}.\footnote{For related work in statistics, see also \cite{knapik2011bayesian}, \cite{gugushvili2020bayesian}.} Our setting allows for general nonlinear and possibly nonsmooth restrictions with an unknown operator, leading to a non-conjugate quasi-Bayes posterior based on an estimated first-stage likelihood. Addressing this general case is necessary to cover the wide range of conditional moment restrictions commonly encountered in applied work, and our analysis develops both estimation and inferential guarantees in this setting.

Finally, in the special case of regression with exogenous covariates, our procedures relate to a growing literature in applied mathematics that examines Gaussian priors for nonlinear regression models with homoscedastic Gaussian noise \citep{dashti2015bayesian,monard2021statistical, nickl2023bayesian}. Our framework can be seen as complementary to this line of work, providing a generalized Bayes analogue that accomodates certain forms of heteroskedasticity and non-Gaussianity.
\section{Models and Procedures} \label{sec3}
Let $(Y,X,W)$ denote random vectors, where $Y \in \mathbb{R}^{d_y}$ is the outcome, $X \in \mathbb{R}^{d}$ the regressors, and $W \in \mathbb{R}^{d_w}$ the conditioning (instrumental) variables. We are interested in an unknown structural function $h_0$ that satisfies the conditional moment restriction
\begin{equation}\label{cmr}
\mathbb{E}\!\left[\rho\big(Y, h_0(X)\big)\,\middle|\, W\right] = \mathbf{0}.
\end{equation}

Here,  $\rho(.) = [ \rho_{1}(.) , \dots , \rho_{d_{\rho}}(.)   ] $ is a $d_{\rho}$ dimensional vector of generalized residual functions, whose functional forms are assumed to be fully known. Components of $X$ that are exogenous may, without loss of generality, be included in $W$. As is standard in applications, the conditional distribution of $(Y,X)$ given $W$ is not assumed to be known.

This framework is very general. By varying the choice of $\rho(\cdot)$, we can recover a large class of structural models commonly encountered in applied work. The form of the conditional restrictions, or equivalently the choice of generalized residual $\rho(\cdot)$, typically varies significantly across applications. The following examples illustrate some of these restrictions in further detail.
\begin{example}[Nonparametric Instrumental Variables] \label{ex1}  
The observed data consist of a scalar outcome variable $Y$, a vector of endogenous regressors $X$, and a vector of instrumental variables $W$. The structural function $h_0(\cdot)$ is identified by the conditional moment restriction:
$$
    \E[Y - h_0(X) \mid W] = 0.
$$
The generalized residual is $\rho(Y, h(X)) = Y - h(X)$. This model has been studied extensively in econometrics (e.g. \citealp{ai2003efficient}; \citealp{newey2003instrumental}; \citealp{hall2005nonparametric}; \citealp{darolles2011nonparametric}). As a special case, when the regressors are exogenous ($W = X$), the structural function is the conditional mean $h_0(X) = \E[Y \mid X]$. Generalizations of the classical NPIV restriction arise in a wide variety of settings, such as experimental price variation \citep{bergquist2020competition}, international trade \citep*{adao2017nonparametric}, and differentiated product markets (\citealp{compiani2022market}; \citealp{berry2024nonparametric}).
\end{example}

\begin{example}[Nonparametric Quantile IV] \label{ex2}
The observed data is as in Example~\ref{ex1}. Following \citet{chernozhukov2005iv,horowitz2007nonparametric,chen2012estimation}, fix a quantile $\tau \in (0,1)$, and consider the structural function $h_0(\cdot)$ that satisfies the restriction
\begin{align*}
    \mathbb{P} \big( Y - h_0(X) \leq 0 \mid W \big) - \tau = 0.
\end{align*}
The generalized residual function is $\rho_{\tau}(Y, h(X)) = \mathbbm{1} \{ Y - h(X) \leq 0 \} - \tau$. In this setting, we interpret $h_0(X)$ as a quantile structural effect. As discussed in \citet*{chernozhukov2007instrumental, chen2014local}, conditional quantile restrictions can also be used to estimate a large class of structural models with nonseparable disturbances.
\end{example}

\begin{example}[Production functions]  \label{ex3}  Following \cite*{levinsohn2003estimating,ackerberg2015identification}, consider the value-added output model
\[
y_{it} = F(x_{it}) + \omega_{it} + \varepsilon_{it},
\]
where $F(x_{it})$ is a production function for inputs $x_{it} \in \mathbb{R}^d$ (e.g., capital and labor), $\varepsilon_{it}$ represents shocks to production that are unobserved by the firm, and $\omega_{it}$ denotes shocks that are observed (or predictable) before the firm’s input decisions at time $t$. Assume $\omega_{it}$ is first-order Markov with conditional mean $\E[\omega_{it}\mid \omega_{i,t-1}] = g(\omega_{i,t-1})$. Let $m_{it}$ denote an intermediate input (e.g., electricity, fuel), and define $\Phi_t(x_{it},m_{it}) = \E[y_{it}\mid x_{it},m_{it}]$. If $\mathcal{I}_{t}$ denotes the firm’s information set at time $t$, \cite*{ackerberg2015identification} show that $h_0 = F(\cdot)$ satisfies the conditional restriction
\begin{equation}\label{eq:va-cmr}
\E\!\left[\, y_{it} - F(x_{it}) - g\!\big( \Phi_{t-1}(x_{i,t-1},m_{i,t-1}) - F(x_{i,t-1}) \big) \,\middle|\, \mathcal{I}_{t-1} \right] = 0.
\end{equation}
Similar nonlinear restrictions arise in a variety other settings, such as models of firm productivity \citep*{doraszelski2013r,boler2015r} and dynamic panel data \citep{blundell2000gmm}.
\end{example}

For intuition and as a guide to our general analysis, we will frequently refer to Examples \ref{ex1} and \ref{ex2}. We view these two examples as useful benchmark models for the following reason. In Example \ref{ex1}, the residual 
\( \rho(.) \) is a smooth linear function of $h$, whereas in Example \ref{ex2}, it is highly nonlinear and nonsmooth in $h$. In particular, they exemplify two distinct classes of models, distinguished by the regularity of the residual function. Although the restrictions encountered in empirical applications often appear more complex, their analysis and limiting structure can typically be characterized between these two extremes.
\subsection{Framework}
Given a function \( h(X) \), we denote the conditional mean of the generalized residual by $$
m(W,h) = \E \big[ \rho(Y,h(X)) \mid W \big].$$ The restriction \(m(W,h_0)=\mathbf{0}\) implies that \(h_0\) is the minimizer of the population criterion
\[
Q(h)=\mathbb{E}\!\left[m(W,h)^{\prime}\,\Sigma(W)\,m(W,h)\right],
\]
where \(\Sigma(W)\in\mathbb{R}^{d_{\rho}\times d_{\rho}}\) is a positive-definite weighting matrix.

As the distributional structure of the data is not assumed to be known, working with \( Q(h) \) directly is infeasible. The standard approach (e.g. \citealp{ai2003efficient,newey2003instrumental,chen2012estimation}) replaces \( m(W,h) \) and $\Sigma (\cdot)$ with suitable empirical analogs. Specifically, let \( \widehat{m}(W,h) \) and \( \widehat{\Sigma}(W) \) be  ``first-stage'' estimators of \( m(W,h) \) and \( \Sigma(W) \), respectively. Then, a feasible finite-sample objective function is  
\begin{align}
    \label{cmr-obj-feasible} 
    Q_n(h) = \E_n \big[ \widehat{m}(W,h)' \widehat{\Sigma}(W) \widehat{m}(W,h) \big].
\end{align}
The classical approach to estimating \( h_0 \) involves a ``second stage'', where \( Q_n(\cdot) \) is minimized over a suitable parameter space $\mathcal{H}_n$ to obtain an estimator \( \widehat{h} \). As noted in the literature (e.g.  \citealp*{chetverikov2017nonparametric}), these solutions often exhibit substantial finite-sample variability and are highly sensitive to small perturbations in the data and user-selected tuning parameters such as the complexity of $\mathcal{H}_n$. Intuitively, the second stage is ``ill-posed'' and the large finite-sample variability of these estimators arises from their representation as the inverse of an ill-posed objective.

To stabilize the inverse problem and more efficiently utilize the information content in the conditional moments, we examines a class of nonparametric estimators that arise as solutions to generalized Bayes decision rules. Specifically, we view the conditional moment restriction as a nonlinear inverse problem for the infinite dimensional structural parameter $h_0$. The restriction $ m(W,h_0) = \mathbf{0} $ then motivates a quasi-Bayes likelihood of the form \begin{align} \label{quasi-lik}
    L^{}(h) = \exp \bigg( - \frac{n}{2}  \E_n \big[    \widehat{m}(W,h) ' \widehat{\Sigma}(W)  \widehat{m}(W,h)  \bigg).
\end{align}
Denote the observed data by  $\mathcal{D}_n = \{ (X_1,Y_1,W_1), \dots , (X_n,Y_n,W_n) \}$. By combining the likelihood $L(.)$ with a (possibly data dependent) prior $\mu$ over structural functions, we obtain the generalized (quasi-) Bayes posterior: \begin{equation}
\label{qb-general}
\mu(\cdot \mid \mathcal{D}_n) \;=\; 
\frac{\exp\!\big(-\frac{n}{2}\,\E_n\!\big[\widehat{m}(W,\cdot)^{\prime}\,\widehat{\Sigma}(W)\,\widehat{m}(W,\cdot)\big]\big)\, d\mu(\cdot)}
{\int \exp\!\big(-\frac{n}{2}\,\E_n\!\big[\widehat{m}(W,h)^{\prime}\,\widehat{\Sigma}(W)\,\widehat{m}(W,h)\big]\big)\, d\mu(h)} \, .
\end{equation}
Related to this construction, \citet{liao2011posterior} transformed the conditional moment restrictions into a growing set of unconditional moments and proved the asymptotic consistency of a classical quasi-Bayes GMM criterion \citep{chernozhukov2003mcmc} under slowly growing flat sieve priors. In contrast, we follow the conventional frequentist approach, in which the first-stage functional $\widehat{m}(\cdot)$ is estimated directly, and we then treat the objective function $L(\cdot)$ in (\ref{quasi-lik}) as a quasi-likelihood for the model.

In this paper, we focus on a class of infinite dimensional Gaussian process priors for $d \mu (\cdot)$. When the structural function $h_0(\cdot)$ is defined over a bounded smooth domain $\mathcal{X} \subset \mathbb{R}^d$, a common choice is the family of Whittle--Matérn Gaussian process priors \citep{williams2006gaussian}. \begin{remark}[Weighting]  \label{opt-weight}
The weighting matrix $\widehat{\Sigma}(\cdot)$ may be deterministic or data dependent. For instance, analogous to two-step GMM, it may be constructed using a first step preliminary estimator of $h_0$. For estimation, a common choice is identity weighting $\widehat{\Sigma} = I_{d_{\rho}}$. We will refer to the quasi-Bayes posterior as \emph{optimally weighted} if $\widehat{\Sigma}(\cdot)$ is a consistent estimator of the efficient weighting matrix $ \Sigma_0(W) = \left\{ \E\!\left[ \rho(Y,h_0(X)) \rho(Y,h_0(X))' \,\big|\, W \right] \right\}^{-1}. $
\end{remark}
\subsection{Gaussian process priors} \label{gp-prior}
Gaussian process priors are widely employed in Bayesian nonlinear inverse problems, especially in applications arising within applied mathematics \citep{nickl2023bayesian}. To fix ideas, consider a mean-zero Gaussian process \( G \) with realizations in a  Hilbert space \( \mathcal{H} \) and covariance operator \( \Lambda \). By the spectral theorem, there exists an orthonormal basis of eigenfunctions \( (e_i)_{i=1}^{\infty} \subset \mathcal{H} \) that diagonalizes \( \Lambda \). If \( \lambda_i \) denotes the non-negative eigenvalue associated with \( e_i \), then \( G \) admits a unique Karhunen-Loève expansion of the form: \begin{equation}
    G \stackrel{d}{=} \sum_{i=1}^{\infty} \sqrt{\lambda_i}\, Z_i e_i, 
    \quad Z_i \stackrel{\text{i.i.d.}}{\sim} N(0,1).
\end{equation}
Intuitively, the rate at which \( \lambda_i \to 0 \) serves as a measure of the process's smoothness relative to the eigenbasis. If \( (e_i)_{i=1}^{\infty} \) denotes the standard Fourier basis, this corresponds to classical Sobolev smoothness. 

Similar to the analysis in \cite*{knapik2011bayesian}, we consider a family of Gaussian process priors $\{G_{\alpha} : \alpha \in \mathcal{L} \}$ that are indexed by a regularity hyperparameter $\alpha \in \mathcal{L} \subset \mathbb{R}_+$. In this setting, each process $G_{\alpha}$ admits an expansion of the form\footnote{If the mapping $\alpha \mapsto \lambda_{i,\alpha}$ influences the exponent in a different way, the results can also be stated in terms of the induced exponent $s(\alpha)$, i.e., $\lambda_{i,\alpha} \asymp i^{-s(\alpha)}$.} \begin{align}
    \label{gexpand2} 
    G_{\alpha} \stackrel{d}{=}  \sum_{i=1}^{\infty} \sqrt{\lambda_{i,\alpha}} Z_i e_i, \quad Z_i \stackrel{\text{i.i.d.}}{\sim} N(0,1).
\end{align} where $\lambda_{i,\alpha} \asymp i^{-(1 + 2\alpha/d)}$ and $(e_i)_{i=1}^{\infty}$ is an orthonormal basis of $L^2(\mathcal{X})$. 

While we do not impose any restrictions on the eigenbasis $(e_i)_{i=1}^{\infty}$ directly, we will typically require the sample paths of the Gaussian process $G_{\alpha}$ (for $\alpha \in \mathcal{L}$) to satisfy some minimum regularity (see Condition \ref{gp-structure} below). In most cases, this can be satisfied by restricting the regularity index set to $ \alpha \in  \mathcal{L}  \subseteq [\underline{\alpha}, \infty)$ for some minimum regularity $\underline{\alpha} > 0$. The following example illustrates the general idea for a widely used family of Gaussian process priors.
\begin{example*}[Mat\'ern Gaussian Priors]
If the structural function $h_0(.)$ is defined over a bounded smooth domain $\mathcal{X} \subset \R^d$, a popular choice  is the Whittle–Matérn Gaussian process $G_{\alpha}$, indexed by smoothness regularity $\alpha > 0$. This Gaussian process  has covariance kernel \begin{align} \label{covst}
    \Lambda_{\alpha}(s,t) = \int_{\R^d} e^{- \mathbf{i} \langle s-t , \zeta \rangle} (1 + \| \zeta \|_{\ell^2}^2)^{-(\alpha+d/2)} d \zeta \; \; \; \; \; \forall  \; s,t \in \mathcal{X}.
\end{align}
It is well known \citep[Proposition I.4]{ghosal2017fundamentals} that $G_{\alpha}$ has sample paths belonging almost surely to the Hölder spaces $C^{\beta}(\mathcal{X})$ for any $\beta < \alpha$, so that $G_{\alpha}$ can be viewed as an ``almost $\alpha$ smooth" process. Furthermore, the process $G_{\alpha}$ satisfies, for some $\kappa > 0$, the stochastic partial differential equation $$ \big( \kappa - \Delta  \big)^{ \frac{\alpha}{2} + \frac{d}{4}} G_{\alpha} =   \mathcal{Z} \:, $$
where $\Delta$ is the Laplacian operator and $\mathcal{Z}$ is Gaussian white noise. It follows that the covariance operator $\Lambda_{\alpha}$ of $G_{\alpha}$ diagonalizes in the same eigenbasis as the Laplacian. Since the eigenvalues $ (\kappa_i)_{i=1}^{\infty}  $ of the Laplacian scale as  $\kappa_i \asymp i ^{2/d}$, it follows that the eigenvalues $(\lambda_{i,\alpha})_{i=1}^{\infty}$ of  $\Lambda_{\alpha}$ scale at rate $ \lambda_{i,\alpha} \asymp i^{-(1 +2 \alpha/d)}$.
\end{example*}
Intuitively, larger values of $\alpha$ correspond to smoother sample paths. In certain applications, suitable smoothness levels can be motivated by prior studies or application-specific microfoundations. In settings where such guidance is unavailable, $\alpha = 3/2$ and $\alpha = 5/2$ are widely used as standard defaults \citep{williams2006gaussian}, offering a balance between regularity and flexibility to accommodate irregular variation.
\begin{remark}[Centering] \label{loc-remark} We focus on a mean-zero process for simplicity. In most settings, the data can be appropriately standardized for this location to be natural. For instance, in Example \ref{ex1} and \ref{ex2}, we have $\E[Y] = \E[h_0(X)]$, which motivates the use of a mean-zero process for the ``standardized model" that uses $ \widetilde{Y} = \big[Y - \E_n(Y)\big] \big(\widehat{Var}(Y)\big)^{-1/2}$.
\end{remark}

\begin{remark}[Scaling] \label{scale-remark}
It is also possible to define a new process by scaling and stretching an existing one. 
Specifically, if 
$
G = \{ G(x) : x \in \mathcal{X} \}
$ 
is a base process, we can define 
$$
G_{\theta}(x) = \sigma \, G(\ell^{-1} x),
$$
where the notation $\ell^{-1} x$ is interpreted coordinate-wise as
$
\ell^{-1}x = (\ell_1^{-1} x_1, \dots, \ell_d^{-1} x_d).
$
Here, $\theta = (\sigma, \ell)$, where $\sigma \in \mathbb{R}_{+}$ denotes the signal variance and $\ell \in \mathbb{R}_{+}^d$ the length-scale parameter. Intuitively, $\sigma$ controls the vertical scale of the process, while $\ell$ controls the rate at which correlations decay with distance. The theoretical properties for any fixed $\theta$ are similar to those of the base process. However, in practice, it is common to partially tune these hyperparameters using the observables. We discuss hyperparameter tuning in Section \ref{conclus} and Appendix \ref{implem-append}.
\end{remark}

\subsection{First stage estimation} \label{first-stage}
Researchers have considerable flexibility in the choice of the first-stage estimator for the conditional mean $ m(W,h) = \mathbb{E}[\rho(Y,h(X)) \mid W] $. This can accomodate a broad range of regression and machine learning methods. In practice, however, it will be convenient to focus on estimators that are computationally efficient, as this ensures that the quasi-likelihood $L ( \cdot) $ in (\ref{quasi-lik}) can be evaluated efficiently.

A common and efficient choice is to consider sieve-based first stages, defined as linear projections onto a set of basis functions. Let $b^K(W) =  [ b_1(W), \dots , b_K(W)   ]'$ denote a vector of first stage approximating functions. Then, for a given function $h(X)$, we estimate the conditional mean by the least squares projection:
\begin{align} \label{cmean}  &  \widehat{m}(w,h)  =      \E_n \big[   \rho(Y,h_{}(X)) \big(b^K(W) \big)'    \big]               [ \widehat{G}_{b,K} ]^{-1}   b^K(w)     \; , \\ &  \text{where} \; \; \;    \widehat{G}_{b,K} = \E_n \big[
 \big(b^K(W) \big) \big( b^K(W) \big)'    \big]      .        \nonumber
\end{align}
In low dimensions, approximating functions can be formed from tensor products of standard univariate bases (e.g. Fourier series, splines), eigenfunction expansions and indicator functions to accommodate discrete instruments. In higher dimensions, common alternatives are bases constructed using randomized features (e.g.  \citealp{rahimi2007random}).

To facilitate detailed analysis and clarity of exposition, we focus on a classical first stage defined by a linear projection onto approximating functions.\footnote{In Appendix \ref{gentheory}, we provide some theory for contraction with generic first-stage estimators.} Although our main results extend to other first-stage estimators, the conditions required to obtain statistical guarantees will generally depend on the specific choice of estimator. By concentrating on the sieve case, we keep the first-stage analysis self-contained and directly comparable to the classical frequentist analysis of conditional moment restriction models. 
  
In the classical frequentist literature (e.g., \citealp*{blundell2007semi, chen2012estimation}), 
the choice of first stage estimator is typically not viewed as a ``key tuning parameter.'' 
Intuitively, estimating the smooth conditional mean $
 \mathbb{E}[\rho(Y,h(X)) \mid W] $
is a well-posed regression problem and is far less sensitive to tuning than a classical ill-posed inverse problem. This is also true in our setting. Specifically, if $\Theta_n$ denotes a suitable collection of high probability regular sample paths of the Gaussian process, the first stage is best viewed as providing an efficient approximation to the conditional mean operator
$ \Theta_n \ni
h \mapsto \mathbb{E}[\rho(Y,h(X)) \mid W]
$.

\section{Motivation} \label{motiv-sim}
In this section, we discuss the econometric and practical motivation for quasi-Bayes procedures, with emphasis on their application to nonparametric endogenous models. We begin with the econometric motivation, particularly in comparison with fully Bayesian and classical frequentist approaches.

A fully Bayes approach to this problem would typically require explicit modeling of the conditional distribution $(Y,X)\mid W$. Since our primary object of interest is the structural parameter, this distribution is a complex nuisance, and modeling it may be undesirable in many settings. Analogous to the econometric motivation underlying classical GMM \citep{hansen1982generalized}, it is often preferable to target the structural parameter directly, particularly when the parameter itself is a complex nonparametric object.\footnote{For finite dimensional structural parameters, 
a similar point was made by \citet{chernozhukov2003mcmc}.}

Beyond modeling challenges, the analysis in \citet*{bornn2019moment,florens2021gaussian} also highlight that, even with parametric structural parameters, there are subtle probabilistic difficulties in specifying a joint prior on the nuisance law $F_{(Y,X)\mid W}$ and structural parameter.\footnote{Constructing a reasonable prior on the low dimensional manifold 
$\Theta=\{(h,F): \E_{F}[\rho(Y,h(X)) \mid W]=0\}$ is challenging: 
for any fixed $h$, classical priors typically assign probability zero to the fiber 
$\mathcal{F}_h=\{F:\E_{F}[\rho(Y,h(X)) \mid W]=0\}$. 
This difficulty arises even in simpler settings with unconditional moments and 
finite-dimensional structural parameters.} In our setting with an infinite dimensional structural function, this becomes considerably more challenging. Although it may be possible, in theory, to proceed without a prior on the structural function, this is ill-advised for the nonparametric endogenous models we study, as it forgoes the regularization, interpretability, and flexibility gained by placing the prior directly on the structural function.

\begin{remark}[Frequentist estimation]  \label{frequentist-app} Frequentist approaches (e.g. \citealp{ai2003efficient}; \citealp{newey2003instrumental}; \citealp{chen2012estimation})  have typically focused on the objective function in (\ref{cmr-obj-feasible}), which avoids the need to model the nuisance explicitly. Generalizing the intuition from classical GMM, these approaches exploit the fact that identification of $h_0$ depends on the nuisance only through the first stage functional $ h \mapsto \mathbb{E}[\rho(Y,h) \mid W] $, 
which can be accurately estimated using a wide range of off-the-shelf regression methods. Intuitively, for the purpose of estimating the structural function $h_0$, the first stage is an efficient ``sufficient functional statistic" for the nuisance.
\end{remark}

From the preceding discussion, it follows that quasi-Bayes can be viewed as a convenient hybrid between frequentist and fully Bayes methods. Similar to classical frequentist procedures, it utilizes the efficient first stage as a sufficient statistic for the nuisance. In the second stage, the difficult, ill-posed recovery of the structural function is formulated as a generalized Bayesian nonlinear inverse problem \citep{nickl2023bayesian}. In this setting, the prior on the structural function provides a powerful form of data driven regularization, while also allowing the researcher to incorporate domain-specific knowledge.

\subsection{Simulation Evidence} \label{sim-evid}
To illustrate some of our motivation in greater detail, we make use of all the benchmark designs previously employed in the nonparametric instrumental variable (NPIV) literature. Specifically, we consider the designs from \citet{newey2003instrumental}, \citet{santos2012inference}, \citet*{chernozhukov2015constrained}, \citet{chetverikov2017nonparametric}, and \citet*{chen2025adaptive}, which we refer to as \textbf{NP}, \textbf{S}, \textbf{CNS}, \textbf{CW} and \textbf{CCK}, respectively. In all of these designs, the regressor is univariate and the structural function is estimated under a nonparametric instrumental variable (NPIV) restriction (Example \ref{ex1}). Details on all the designs are contained in Appendix \ref{append-sims}.
 
Let $\mathcal{D}_n$ denote the observed data, and let $X'$ be an independent draw from the distribution of $X$. 
Given an estimator $\widehat{h} = \widehat{h}(\mathcal{D}_n)$, we define the \emph{expected out-of-sample root mean squared risk:} 
\[
\mathcal{R}(\widehat{h}, h_0)
= \left\{ \mathbb{E}_{\mathcal{D}_n,\,X'} \left[ \big( \widehat{h}(X') - h_0(X') \big)^2 \right] \right\}^{1/2}.
\]
Let \textbf{2SLS} denote the two-stage least squares estimator, where the first stage uses thin-plate splines and the structural function uses natural splines, both of dimension $J$.\footnote{Natural splines provide some regularization by enforcing $h''(x) = 0$ at the data boundary, implying linearity beyond. For larger $J$, results appeared more unstable with alternative bases.}
\begin{table}[!htbp]
\caption{Sample size: \( n = 1000 \). Risk \( \mathcal{R}(\widehat{h}, h_0) \) for NPIV 2SLS estimators.}
\label{table1}
\centering
\scalebox{1}{
\begin{tabular}{
  l
  @{\hspace{30pt}}c
  @{\hspace{15pt}}c
  @{\hspace{15pt}}c
  @{\hspace{15pt}}c
}
\toprule
Design
  & \multicolumn{4}{c}{\textbf{2SLS}} \\
\cmidrule(lr){2-5}
  & \multicolumn{1}{c}{$J=3$}
  & \multicolumn{1}{c}{$J=4$}
  & \multicolumn{1}{c}{$J=5$}
  & \multicolumn{1}{c}{$J=6$} \\
\midrule
\textbf{NP}   & 0.131 & 0.154 & 0.355 & 4.84  \\
\textbf{S}    & 0.292 & 7.30 & 37.52 & 132.11
   \\
\textbf{CNS}  & 0.189 & 11.77 & 34.83 & 74.35   \\
\textbf{CW}   & 1.623 & 8.20 & 34.19 & 113.37  \\
\textbf{CCK}  & 0.345 & 6.01 & 130.04 & 435.91 \\
\bottomrule
\end{tabular}}
\end{table}

As Table \ref{table1} illustrates, in endogenous models, classical estimators are highly sensitive to tuning parameters that determine the complexity of the parameter space. In some univariate settings (e.g. NPIV, \citealp*{chen2025adaptive}), this complexity can be tuned in a data driven way. However, in models with generalized nonlinear restrictions, multivariate regressors, or no closed-form solutions, effective tuning becomes substantially more challenging. Indeed, to the best of our knowledge, no regularization mechanism has yet been demonstrated to perform successfully across the broad range of models, restrictions and data generating processes encountered in theoretical and empirical work.

It is well known that Bayes procedures regularize naturally via the prior, albeit at the cost of potential finite-sample bias. In endogenous settings, the resulting variance reduction can be substantial. In nonparametric Bayes procedures, this bias typically takes the form of a preference for well-behaved or regular functions. We argue that this property is particularly valuable as a regularization mechanism in nonparametric endogenous models, where structural function regularity is typically already a prerequisite for any meaningful analysis. Indeed, this feature is evident in all the designs reported in Table \ref{table1} and all other designs considered in the broader literature.

To further illustrate the preceding point, consider all the designs in Table \ref{table1}. 
They can be estimated using either of the following generalized residuals:
\[
\begin{aligned}
(i) \quad & \rho(Y,h(X)) = Y - h(X) && \text{(NPIV)}, \\
(ii) \quad & \rho(Y,h(X)) = \mathbbm{1}\{ Y - h(X) \leq 0 \} - 0.5 && \text{(median NPQIV)}.
\end{aligned}
\]

In general, the NPQIV restriction is considered more challenging, as it involves a nonlinear and nonsmooth residual. Let \textbf{QB} denote the quasi-Bayes posterior mean, based on  a first-stage thin plate spline of dimension $K$ and a classical Whittle–Matérn Gaussian process prior. We use the same prior and implementation algorithm across all designs and both sets of restrictions. Further details are provided in Appendix \ref{implem-append}.

\begin{table}[!htbp]
\caption{Sample size $n=1000$. Risk $\mathcal{R}(\widehat{h},h_0)$ for \textbf{QB} estimators, based on 1000 replications.}
\label{table2}
\centering
\setlength{\tabcolsep}{9pt} 
\renewcommand{\arraystretch}{1.1}
\setlength{\arrayrulewidth}{0.6pt} 
\scalebox{1}{
\begin{tabular}{l @{\hspace{12pt}} *{3}{c} @{\hspace{18pt}} | @{\hspace{18pt}} *{3}{c}}
\toprule
Design
& \multicolumn{3}{c}{\textbf{QB} (NPIV)}
& \multicolumn{3}{c}{\textbf{QB} (NPQIV)} \\
\cmidrule(lr){2-4}
\cmidrule(lr){5-7}
& $K=5$ & $K=7$ & $K=10$
& $K=5$ & $K=7$ & $K=10$ \\
\midrule
\textbf{NP}  & 0.155 & 0.148 & 0.141 & 0.362 & 0.361 & 0.359 \\
\textbf{S}   & 0.232 & 0.210 & 0.197 & 0.608 & 0.608 & 0.609 \\
\textbf{CNS} & 0.138 & 0.134 & 0.134 & 0.105 & 0.100 & 0.105 \\ 
\textbf{CW}  & 0.126 & 0.122 & 0.118 & 0.176 & 0.173 & 0.173 \\
\textbf{CCK} & 0.285 & 0.276 & 0.266 & 0.330 & 0.326 & 0.329 \\
\bottomrule
\end{tabular}}
\end{table}

Table~\ref{table2} reports the quasi-Bayes risk for all designs in Table~\ref{table1}, under both NPIV and NPQIV restrictions. The estimates appear remarkably accurate and stable across both restrictions. A natural question is how far these findings extend. For example, can they generalize to more challenging settings with multivariate regressors? In Section~\ref{simulations}, we provide additional evidence by examining multivariate extensions of the designs in Table~\ref{table2}.

\begin{figure}
  \centering
  \includegraphics[width=0.9\textwidth]{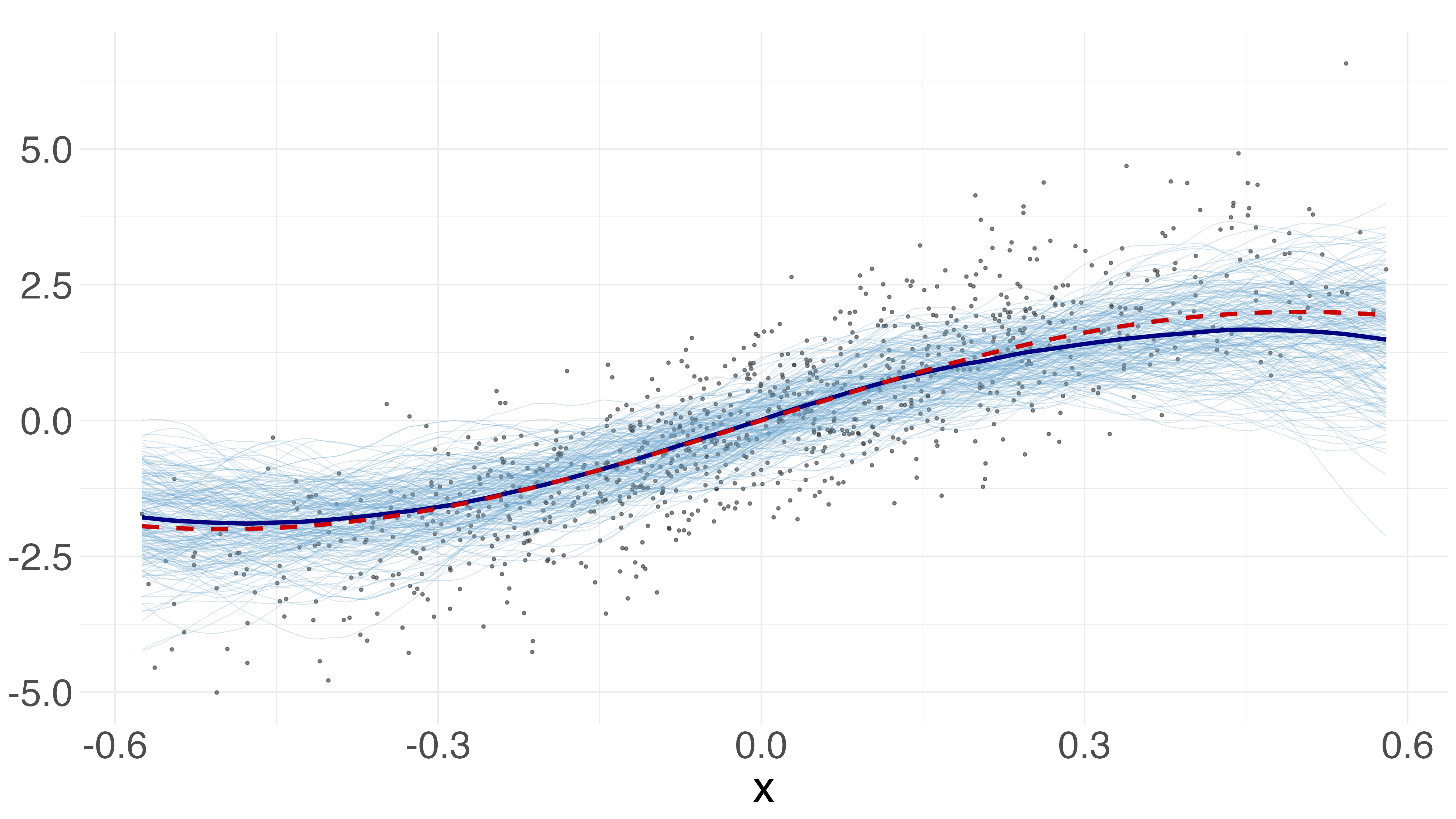}
  \caption{Sample size $n = 1000$. NPIV posterior for the design in \citet{santos2012inference}. 
  The red dashed line shows the true function, the dark blue solid line is the posterior mean, and the light blue lines are posterior draws.}
  \label{posmeanz}
\end{figure}

As a final remark, we note that these procedures differ from classical frequentist regularization in two key ways. First, as noted earlier, devising a broadly effective data-driven regularization scheme that works across all models and restrictions is highly challenging. By contrast, in our quasi-Bayes framework, the priors we employ induce a nontrivial form of regularization that has proven effective in a wide range of applications, particularly in nonlinear inverse problems.\footnote{See \cite{ghosal2017fundamentals,nickl2023bayesian} for an overview of applications.} Second, quasi-Bayes procedures are inherently data-driven through the interplay between the prior and the information in the conditional moments. This interaction is precisely what allows the information content in the moments to dominate in settings with strong identification and enables a single prior specification to yield reasonable results across all the designs and restrictions in Table \ref{table2}.

\section{Theory} \label{sec4}
In this section, we develop the limit theory for the generalized (quasi-) Bayes posterior in (\ref{qb-general}). Specifically, we examine the following questions in detail: (i) What are the minimal conditions on the model and prior that ensure quasi-Bayes consistency? (ii) How do convergence rates depend on the smoothness of the structural function \( h_0 \)? (iii) When do nonparametric quasi-Bayes credible sets achieve exact frequentist coverage?

\subsection{Assumptions on the Generalized Residual} \label{sect-genres}
To begin with, we state our main conditions on the generalized residual function $\rho(\cdot)$ that defines the conditional moment restriction in (\ref{cmr}). We assume that the endogenous regressor \(X\) is supported on a smooth bounded domain \(\mathcal X\subset\R^{d}\), and the instrument \(W\) is supported on a domain \(\mathcal W\subseteq\R^{d_w}\). This is standard in the literature and, if necessary, can always be satisfied by applying an appropriate transformation of the regressors.\footnote{In practice, apart from basic standardization, no transformations are used in our implementation.}

For any $t > 0$, let $(\mathbf{H}^t, \| \cdot \|_{\mathbf{H}^t})$ denote the usual Sobolev space of order $t$ over $\mathcal{X}$. The Sobolev ball of radius $M$ is denoted by $  \mathbf{H}^t(M) = \{ h : \| h \|_{\mathbf{H}^t} \leq M \}$.

\begin{condition2}[Local $L^2$ continuity] \label{residuals} For some  $ \kappa  \in (0,1]$, $ t > d/ 2\kappa  $ and any $M < \infty$, there exists $C_1 = C_1(M) < \infty$ such that
\begin{align*}
 & \sup_{w \in \mathcal{W}} \E_{} \bigg(  \sup_{h \in  \mathbf{H}^t(M) :  \| h' - h \|_{ \infty } \leq \xi}  \|   \rho(Y, h(X) ) - \rho(Y,h'(X))            \|_{\ell^2}^2 \big|W=w   \bigg) \leq C_1^2  \xi^{2 \kappa}  , \\ & \sup_{h \in  \mathbf{H}^t(M) :  \| h' - h \|_{ L^2(\mathbb{P}) } \leq \xi} \;  \sup_{w \in \mathcal{W}} \E \bigg(   \|   \rho(Y, h(X) ) - \rho(Y,h'(X))            \|_{\ell^2}^2 \big|W=w    \bigg) \leq C_1^2 \xi^{2 \kappa}
\end{align*}
holds for all $h' \in \mathbf{H}^{t}(M)$ and $\xi > 0$ small enough.
\end{condition2}
In Condition \ref{residuals}, the two expectations differ in the metrics they employ. The first expectation is over the the supremum with respect to the stronger $\| \cdot \|_{\infty}$ norm, whereas the outer supremum of the second expectation is taken under the weaker $ \|\cdot \|_{L^2(\mathbb{P})}$ norm. Intuitively, because the expected supremum is more difficult to control, it is taken over functions that are closer in a stronger metric. 

Condition \ref{residuals} is analogous to conditions that are frequently used in the analysis of non-smooth objectives \citep*{chen2003estimation}. In particular, it permits a pointwise discontinuous residual function (e.g. NPQIV models) provided that $\rho(\cdot)$ is suitably uniformly continuous in $L^{2}(\mathbb{P})$ expectation. The parameter $\kappa$ is typically referred to as the local continuity exponent.  It holds with $\kappa=1$ for the NPIV model (Example \ref{ex1}) and $\kappa=1/2$ for the NPQIV model (Example \ref{ex2}).
\begin{condition2}[Residual moments] \label{residuals2} 
There exists $\epsilon, \delta > 0 $ and $t > d/2\kappa$  such that for any $M > 0 $, there exists finite constants $C_2(M) , C_3(M), C_4(M) < \infty$ that satisfy   
\begin{align*} &  (i) \; \; \; \;  \sup_{w \in \mathcal{W}} \E \bigg(  \sup_{h \in \mathbf{H}^{t}(M)}  \|  \rho(Y,h(X))  \|_{\ell^2}^2 \big| W=w        \bigg) \leq C_2^2 \;  , \\ & (ii) \; \; \; \; \E \bigg(  \sup_{h \in \mathbf{H}^{t}(M)}  \|  \rho(Y,h(X))  \|_{\ell^2}^{2+ \epsilon}       \bigg) \leq C_3^2 \; , \\ & (iii)  \; \; \; \; \mathbb{P} \bigg(  \sup_{h,h' \in \mathbf{H}^{t}(M) : \|h - h' \|_{L^2(\mathbb{P})} \leq \delta  }  \|  \rho(Y,h(X)) - \rho(Y,h'(X))  \|_{\ell^2} \leq C_4 \bigg) = 1.
\end{align*}
\end{condition2}
Condition~\ref{residuals2} imposes mild moment restrictions on the residual function:  
the bounds only need to hold over any fixed Sobolev ball.  
The assumption is trivially satisfied with bounded residual functions (e.g. NPQIV).
More generally, if $t>d/2$, the Sobolev embedding theorem \citep{evans2022partial} implies that  
$\mathbf H^{t}$ embeds continuously into a Hölder space, so functions in $\mathbf H^{t}(M)$ are uniformly bounded in the $\|\cdot\|_{\infty}$ norm. In most settings, this observation makes it straightforward to verify Condition~\ref{residuals2}.  
For example, in the NPIV model, Condition~\ref{residuals2} holds if the unobserved error $u$ satisfies  
$\mathbb E\big(|u|^{2+\epsilon}\big)<\infty$ and $\mathbb E[u^{2}\big| W]\le\bar{\sigma}^{2}$ for some  $\bar{\sigma}^{2} < \infty$.

While the generalized residual may be non-smooth, we assume (as is standard) that its smoothed conditional mean
$m(W,h) = \E[\rho(Y,h(X)) \mid W]$ is sufficiently regular, in the sense that it satisfies a local Lipschitz property. This is formalized below in Condition~\ref{mean-lip}.
\begin{condition2}[Locally Lipschitz conditional mean]  \label{mean-lip}
 For some $t > d/(2\kappa)$, the map $h \mapsto m(W,h)$ from $(\mathbf{H}^t, \| \cdot \|_{L^2(\mathcal{X})})$ to $(L^2(W),\| \cdot \|_{L^2(\mathbb{P})})$ is continuous. Furthermore, for every $M > 0 $, there exists a constant $C_5(M) < \infty$ such that $\| m(W,h) - m(W,h_0) \|_{L^2(\mathbb{P})} \leq C_5  \| h - h_0 \|_{L^2(\mathcal{X})}$ for every $h \in \mathbf{H}^t(M)$.
\end{condition2}

\subsection{Consistency} \label{sect-cons}
In this section, we establish the consistency of general quasi-Bayes posteriors arising from suitably rescaled Gaussian process priors. As discussed in Section~\ref{first-stage}, we consider a classical first stage based on projecting onto a set of basis (approximating) functions $
b^K(W) = \big[b_1(W), \dots, b_K(W)\big].$ Denote by $\Pi_K(\cdot)$, the $L^2(\mathbb{P})$ projection operator onto the span of these functions.

Following the discussion in Section~\ref{gp-prior}, let $G_{\alpha}$ denote a mean-zero Gaussian process with regularity parameter $\alpha > 0$. Let $(e_i)_{i=1}^{\infty}$ be the orthonormal eigenfunction basis of its covariance operator $\Lambda_{\alpha}$. Similar to the analysis in \citet*{knapik2011bayesian}, it will be convenient to measure regularity directly with respect to this basis.\footnote{When $G_{\alpha}$ is a Whittle--Matérn Gaussian process, or when $(e_i)_{i=1}^{\infty}$ is a standard Fourier basis, this reduces to classical Sobolev regularity.} To that end, for any $p > 0$, we define the associated $p$-regularity class as
\begin{align} \label{sob} 
\mathcal{H}^p 
  = \Bigg\{ h \in L^2(\mathcal{X}) : 
      h = \sum_{i=1}^{\infty} c_i e_i \; , \; 
      \| h \|_{\mathcal{H}^p}^2 
        = \sum_{i=1}^{\infty} i^{2p/d} c_i^2 < \infty 
    \Bigg\}.
\end{align}

Given  $G_{\alpha}$  and first stage sieve dimension $K$, we consider the rescaled prior: \begin{align} \label{prior-scale} d\mu(.) \sim  \frac{G_{\alpha}}{\sqrt{K}}. \end{align}
Rescaled Gaussian process priors are frequently employed in the analysis of Bayesian nonlinear inverse problems \citep{monard2021statistical,nickl2024posterior,nickl2025bayesian}. In our conditional moment restriction framework, the scaling provides additional regularization that is crucial both for $(i)$ controlling the nonlinear ill-posedness of the inverse problem and $(ii)$ obtaining high-probability guarantees on the behavior of the first-stage estimator $\widehat{m}(W,h)$ used to approximate the conditional mean  $h \mapsto m(W,h)$.

 Intuitively, the posterior limit theory is determined by the interplay between the prior and the quasi-Bayes likelihood  $
h \mapsto \E_n\!\big[ \widehat{m}(W,h)' \, \widehat{\Sigma}(W) \, \widehat{m}(W,h) \big] $. 
To formalize this interplay, we impose low-level conditions on three components: the prior, the weighting matrix $\widehat{\Sigma}(\cdot)$, and the first-stage basis functions $\{b_1(W), \dots, b_K(W)\}$ used to construct the conditional mean estimator $\widehat{m}(W,h)$. Our main requirements on these objects are summarized in the following two conditions.

 \begin{condition2}[Regularity] \label{gp-structure} $(i)$ The density of $X$ with respect to the Lebesgue measure is bounded away from $0$ and $\infty$ on $\mathcal{X}$. $(ii)$ $G_{\alpha}$ is a Gaussian random element on a separable subspace of the Sobolev space $\mathbf{H}^t$ for some $t > d/2\kappa$.
\end{condition2}
Condition~\ref{gp-structure}$(i)$ is imposed for convenience, as it ensures the equivalence of the norms $\| \cdot \|_{L^2(\mathbb{P})}$ and $\| \cdot \|_{L^2(\mathcal{X})}$, where the latter is taken with respect to the Lebesgue measure. Condition \ref{gp-structure}$(ii)$ can be interpreted as a minimum regularity requirement in that it ensures the Gaussian process $G_{\alpha}$ has continuous and bounded sample paths.\footnote{This is a consequence of the Sobolev inequality \citep{evans2022partial}, since $\mathbf{H}^t$ (for $t > d/2$) embeds into a H\"{o}lder space $C^{\beta}$ for some $\beta > 0$.} 

\begin{condition2}[First stage approximation] \label{fsbasis} $(i)$ The matrix  $G_{b,K} = \E \big( [b^K(W)] [ b^K(W) ]'    \big)$ is positive definite for all $K$ and $\zeta_{b,K} =  \sup_{w \in \mathcal{W} }  \|  G_{b,K}^{-1/2} b^K(w) \|_{\ell^2}  \lessapprox \sqrt{K}$. $(ii)$ The eigenvalues of $\widehat{\Sigma}(W)$ are asymptotically bounded above and below: $\mathbb{P}\big(  c \leq   \lambda_{\min}(  \widehat{\Sigma}(W) ) \leq    \lambda_{\max}(  \widehat{\Sigma}(W) )   \leq C   \big)  \rightarrow 1$ for some $0 < c \leq C < \infty$. $(iii)$ For any fixed $ M > 0 $, the first stage is uniformly consistent over the Sobolev ball $\mathbf{H}^t(M)$: $ \sup_{h \in \mathbf{H}^t(M)  } \|(\Pi_K - I) m(W,h)   \|_{L^2(\mathbb{P})} \rightarrow 0 $ as  $K \rightarrow \infty$.  
\end{condition2}
Both Condition~\ref{fsbasis}$(i)$, which restricts the growth of the $\| \cdot \|_{\ell^2}$
 norm, and Condition~\ref{fsbasis}$(iii)$, which requires uniform consistency over bounded regularity classes, are mild assumptions. They are satisfied by many standard bases, including splines, CDV wavelets, and Fourier series (see, e.g.,   \citealp{chen2015optimal,belloni2015some}).

\begin{theorem}[Consistency] \label{t1}  Suppose  Conditions \ref{residuals}-\ref{fsbasis} hold and $h_0 \in L^2(\mathbb{P})$ is the unique structural function that satisfies $ \E\big( \| m(W,h_0) \|_{\ell^2}^2\big) = 0$. Let $K = K_n \rightarrow \infty$ denote any sequence that satisfies $ n^{d/2(\alpha + d)}  \lessapprox K_n  $ and $ \log(n) K_n = o(n)$. If $h_0 \in \mathcal{H}^p$ for some $p \geq \alpha + d/2$, the quasi-Bayes posterior is consistent: \begin{align}
    \label{const-eq} \mu(h : \| h - h_0  \|_{L^2(\mathbb{P})} > \epsilon \:\big|\: \mathcal{D}_n) \xrightarrow{\mathbb{P}} 0 \; \; \; \; \; \; \; \forall \: \epsilon > 0.
\end{align} 
\end{theorem}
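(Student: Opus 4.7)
The plan is to adapt the standard Ghosal--van der Vaart paradigm to the quasi-Bayes setting. Writing
\begin{align*}
\mu(A \mid \mathcal{D}_n) \;=\; \frac{\int_A \exp\!\big\{-\tfrac{n}{2}[Q_n(h) - Q_n(h_0)]\big\}\,d\mu(h)}{\int \exp\!\big\{-\tfrac{n}{2}[Q_n(h) - Q_n(h_0)]\big\}\,d\mu(h)} \;=:\; \frac{N_n(A)}{D_n}
\end{align*}
for $A = \{h : \|h-h_0\|_{L^2(\mathbb{P})} > \epsilon\}$, I would aim to show $D_n \geq \exp(-c_1 n \epsilon_n^2)$ and $N_n(A) \leq \exp(-c_2 n)$ with probability tending to one, for some $\epsilon_n \to 0$. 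Provided $n\epsilon_n^2 = o(n)$, the ratio then vanishes, yielding (\ref{const-eq}).

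The first main step is the denominator lower bound via a prior mass calculation. The condition $p \geq \alpha + d/2$ is precisely the statement that $h_0$ lies in the RKHS of $G_{\alpha}$, since the eigenvalue decay $\lambda_{i,\alpha} \asymp i^{-(1+2\alpha/d)}$ identifies this RKHS with $\mathcal{H}^{\alpha+d/2}$. A standard Gaussian concentration function calculation, adjusted for the $1/\sqrt{K_n}$ rescaling (which inflates the RKHS norm penalty by $K_n$ while shrinking the centered small-ball probability), delivers $\mu(h : \|h - h_0\|_\infty \leq \xi_n) \geq \exp(-C n \epsilon_n^2)$ for a suitable $\xi_n \to 0$, provided $K_n \lesssim n\epsilon_n^2$. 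Inside this $\|\cdot\|_\infty$-ball, Condition \ref{residuals} (local continuity) together with Conditions \ref{residuals2}, \ref{mean-lip}, \ref{fsbasis} and uniform first-stage consistency imply $|Q_n(h) - Q_n(h_0)| \lesssim \xi_n^{2\kappa} + o_{\mathbb{P}}(1)$, which integrates to the desired bound on $D_n$.

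The second main step constructs a sieve $\mathcal{H}_n = \{h : \|h\|_{\mathbf{H}^t} \leq M_n\}$ with slowly growing $M_n$. Borell's Gaussian isoperimetric inequality, combined with the embedding of the rescaled sample-path space into $\mathbf{H}^t$ (Condition \ref{gp-structure}), makes $\mu(\mathcal{H}_n^c)$ exponentially small in $M_n^2$, dominating $D_n^{-1}$. On $A \cap \mathcal{H}_n$, I would combine two ingredients: (i) uniform first-stage consistency $\sup_{h \in \mathcal{H}_n} \|\widehat{m}(W,h) - m(W,h)\|_{L^2(\mathbb{P})} = o_{\mathbb{P}}(1)$, proved via bracketing-entropy arguments under Conditions \ref{residuals}--\ref{residuals2} and \ref{fsbasis}, together with the projection error $\sup_{h \in \mathcal{H}_n}\|(\Pi_K - I)m(W,h)\|_{L^2(\mathbb{P})} \to 0$; and (ii) an identification-based lower bound on the population criterion $Q(h) = \mathbb{E}[m(W,h)'\Sigma(W)m(W,h)]$. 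For (ii), Sobolev embedding makes $\mathcal{H}_n$ precompact in $L^2(\mathcal{X})$, and the continuity of $Q$ from Condition \ref{mean-lip} combined with the uniqueness of $h_0$ yields $\inf_{h \in A \cap \mathcal{H}_n} Q(h) \geq c_{M_n} > 0$. Consequently $Q_n(h) - Q_n(h_0) \geq c_{M_n}/2 - o_{\mathbb{P}}(1)$ uniformly on $A \cap \mathcal{H}_n$, giving the required exponential numerator bound.

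The main obstacle will be step (i), uniform first-stage control over the growing sieve when $\rho$ is nonsmooth (as in NPQIV with $\kappa = 1/2$). Condition \ref{residuals}'s choice of $L^2(\mathbb{P})$-continuity rather than pointwise Lipschitz continuity is explicitly designed for this case, but converting it into a uniform rate requires careful bracketing-number bounds for the class $\{\rho(\cdot, h(\cdot)) : h \in \mathcal{H}_n\}$ under $\|\cdot\|_{L^2(\mathbb{P})}$, after which empirical process maximal inequalities account for the additional sieve projection noise of order $\sqrt{K_n/n}$. The scalings $n^{d/2(\alpha+d)} \lesssim K_n$ and $K_n \log n = o(n)$ are precisely those needed to balance the bias of the first stage against its variance while keeping the denominator bound $\exp(-c_1 n \epsilon_n^2)$ nontrivial. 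A subsidiary difficulty is choosing $M_n$ slowly enough that $c_{M_n}$ does not decay faster than $n\epsilon_n^2/n$, which is feasible because consistency (as opposed to a sharp rate) only requires $\epsilon_n \to 0$ at any rate.
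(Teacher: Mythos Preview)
Your overall strategy is correct but two points need tightening, and the paper takes a structurally different route that avoids both. First, for the denominator you write $|Q_n(h)-Q_n(h_0)|\lesssim \xi_n^{2\kappa}+o_{\mathbb P}(1)$ on the small ball, but $o_{\mathbb P}(1)$ is too crude: you need $Q_n(h)\lesssim \epsilon_n^2$ to conclude $D_n\ge e^{-c_1 n\epsilon_n^2}$. This does hold---Condition~\ref{mean-lip} gives $\|m(W,h)\|_{L^2}\lesssim \|h-h_0\|_{L^2}$ (linear, not $\kappa$-H\"older, since it is the smoothed mean), and the sampling error in $\widehat m$ is uniformly $O_{\mathbb P}(\sqrt{K_n/n})=O_{\mathbb P}(\epsilon_n)$ over a fixed Sobolev ball. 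Second, the growing sieve $M_n\uparrow\infty$ is unnecessary and creates the balancing problem you flag: the $1/\sqrt{K_n}$ rescaling already gives $\mu(\|h\|_{\mathbf H^t}>M)\le 2e^{-DM^2 K_n}=2e^{-DM^2 n\epsilon_n^2}$ for \emph{fixed} $M$, so a large enough constant $M$ beats the denominator and then $c_M>0$ is a fixed number with $nc_M\gg n\epsilon_n^2$ automatically.

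The paper's route is cleaner: it first establishes contraction in the weak metric $\|\widehat m(W,h)\|_{L^2(\mathbb P_n,\widehat\Sigma)}$, where the numerator bound is \emph{trivial}---on the set $\{\|\widehat m(W,h)\|^2>2R\epsilon_n^2\}$ the integrand is at most $e^{-Rn\epsilon_n^2}$ by definition, with no empirical-process work and no identification infimum required. It then upgrades to $\|h-h_0\|_{L^2}$ via a modulus-of-continuity argument on the fixed ball $\{\|h\|_{\mathbf H^t}\le E\}$: by compactness of this ball in $L^2$, continuity of $h\mapsto\|m(W,h)\|_{L^2}$ (Condition~\ref{mean-lip}), and uniqueness of $h_0$, small $\|m(W,h)\|_{L^2}$ forces small $\|h-h_0\|_{L^2}$. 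Your direct approach works once corrected, but the paper's two-step decomposition isolates the compactness/identification argument from the empirical control and yields a reusable weak-contraction theorem (Theorem~\ref{main-contract}) as a byproduct.
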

Theorem~\ref{t1} establishes that the quasi-Bayes posterior is consistent provided that the regularity of the true function exceeds that of the Gaussian process by a factor of $d/2$. The upper bound constraint on $K_n$ is very mild: it guarantees that the first stage estimator $\widehat{m}(w,h)$ is well defined and uniformly approximates its population analog $\Pi_K m(w,h)$. By contrast, the theorem imposes a strict lower bound on the growth rate of the first-stage basis. Intuitively, larger values of $K_n$ increase sampling variability but simultaneously act as a form of regularization by shrinking the Gaussian process prior in (\ref{prior-scale}). This regularization is essential for controlling the nonlinear ill-posedness in the model. The lower bound on $(K_n)_{n=1}^{\infty}$ can be further relaxed in settings where the conditional mean function $
m(W,h) = \mathbb{E}\!\left[\rho(Y,h(X)) \,\middle|\, W\right] $
is known to smooth features of $h$ in a neighborhood of $h_0$.

Theorem~\ref{t1} can be extended in several directions. One possibility is to consider a continuously updated version of the quasi-Bayes posterior. In this case, the data-dependent weighting matrix $\widehat{\Sigma}$ may depend pointwise on both $W$ and the prior realization $h$, i.e.\ $\widehat{\Sigma} = \widehat{\Sigma}(W,h)$. The continuously updated quasi-Bayes posterior is then given by
\begin{align}
    \label{posteriorcu}  
    \mu^{CU}(\,\cdot \mid \mathcal{D}_n) 
    = \frac{\exp\big(    - \frac{n}{2}  \E_n \big[    \widehat{m}(W, \cdot) ' \widehat{\Sigma}(W, \cdot)  \widehat{m}(W, \cdot)            \big]       \big) d \mu(.)  }{\int \exp\big(    - \frac{n}{2}  \E_n \big[    \widehat{m}(W,h) ' \widehat{\Sigma}(W,h)  \widehat{m}(W,h)            \big]       \big) d \mu(h) } .
\end{align}
For example, a natural choice is a feasible estimate of the optimal continuously updated weighting matrix:  
\[
\Sigma(W,h) = \big\{ \, \mathbb{E}[ \rho(Y,h(X)) \rho(Y,h(X))' \mid W ] \, \big\}^{-1}.
\]
Another possible extension is to generalize the contraction result in Theorem~\ref{t1} to settings where the unknown function $h_0$ is not uniquely identified from the data. In this case, the identified set is given by $
\Theta_0 = \big\{ h : \| m(W,h) \|_{L^2(\mathbb{P})} = 0 \big\}.$ 
Intuitively, regardless of point identification, samples from the quasi-Bayes posterior should concentrate in regions where the quasi-Bayes objective function is minimized, i.e.\ around the identified set $\Theta_0$. Below, we state a version of Theorem~\ref{t1} that accommodates both of the preceding extensions. To this end, we impose the following condition on the weighting matrix.

\begin{conditionp}{\ref*{fsbasis}$^*$}[Weighting matrix] \label{fsbasis2}
 Over any Sobolev ball, the eigenvalues of $\widehat{\Sigma}(W,h)$ are uniformly bounded away from $0$ and $\infty$. Specifically, for every $M > 0$, there exist constants $c(M), C(M) > 0$ such that $$ \mathbb{P}\!\left( \, c \;\leq\; \inf_{h \in \mathbf{H}^t(M)} \lambda_{\min}\big(\widehat{\Sigma}(W,h)\big) \;\leq\; \sup_{h \in \mathbf{H}^t(M)} \lambda_{\max}\big(\widehat{\Sigma}(W,h)\big) \;\leq\; C \, \right) \rightarrow 1 .$$
\end{conditionp}
\begin{theorem}[Identified Set Consistency]  \label{t1-2} Let $ \Theta_0 = \{ h \in L^2(\mathbb{P})  : \| m(W,h) \|_{L^2(\mathbb{P})} = 0  \} $ denote the identified set. Suppose Conditions \ref{residuals}-\ref{fsbasis} and \ref{fsbasis2} hold. Let $K = K_n \rightarrow \infty$ denote any sequence that satisfies $ n^{d/2(\alpha + d)}  \lessapprox K_n  $ and $ \log(n) K_n = o(n)$. If there exists some $h_0 \in \Theta_0 \cap \mathcal{H}^{p}$ for $p \geq \alpha+d/2$, the continuously updated quasi-Bayes posterior $\mu^{CU}(.)$ in (\ref{posteriorcu}) is consistent for the identified set. That is, \begin{align}
    \label{const-eq-2} \mu^{CU}(h :  d(h,\Theta_0)  > \epsilon \:\big| \:\mathcal{D}_n)  \xrightarrow{\mathbb{P}} 0 \; \; \; \; \; \; \; \; \; \; \forall \: \epsilon > 0
\end{align} 
where $d(h,\Theta_0) = \inf_{h^* \in \Theta_0} \| h-h^* \|_{L^2(\mathbb{P})} $.
\end{theorem}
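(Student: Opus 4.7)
The plan is to adapt the argument of Theorem \ref{t1} by systematically replacing the $L^2(\mathbb{P})$ distance $\|h - h_0\|_{L^2(\mathbb{P})}$ with the distance-to-identified-set $d(h, \Theta_0)$ throughout, and by using Condition \ref{fsbasis2} in place of Condition \ref{fsbasis}$(ii)$ to handle the pointwise-in-$h$ weighting. The standard route is to write the posterior mass of the bad event $B_\epsilon = \{h : d(h, \Theta_0) > \epsilon\}$ as a ratio, then lower bound the denominator (evidence) while upper bounding the numerator by a factor that decays exponentially faster than the evidence.

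For the evidence, I would select an arbitrary $h_0 \in \Theta_0 \cap \mathcal{H}^p$ and construct an $L^\infty$-ball of radius $r_n \to 0$ around $h_0$ whose rescaled prior mass is bounded below using the Gaussian concentration function of $G_\alpha/\sqrt{K_n}$. This is where $p \geq \alpha + d/2$ enters: it ensures $h_0$ can be approximated by elements of the reproducing kernel Hilbert space of $G_\alpha$ at a rate that, combined with the scaling $K_n \gtrsim n^{d/2(\alpha+d)}$, keeps the concentration function bounded. On this ball, Conditions \ref{residuals}-\ref{mean-lip} together with $m(W, h_0) = \mathbf{0}$ imply $Q_n(h) \lesssim r_n^{2\kappa}$ uniformly for all prior draws in a suitable high-probability Sobolev ball, yielding an evidence lower bound of the form $\exp(-C n r_n^{2\kappa})\,\mu(\text{small ball})$. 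This step is structurally identical to Theorem \ref{t1}; only the anchor point in the identified set changes.

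For the numerator, I would truncate the prior to a high-probability regular sieve $\Theta_n \subset \mathbf{H}^t(M_n)$ with $M_n \to \infty$ slowly enough that $\mu(\Theta_n^c) = o(1)$. The key identification step is uniform and arises as follows: since $t > d/2$, the Rellich–Kondrachov theorem implies $\mathbf{H}^t(M) \hookrightarrow L^2(\mathcal{X})$ compactly, so $\mathbf{H}^t(M) \cap \{d(\cdot, \Theta_0) \geq \epsilon\}$ is $L^2$-compact. By Condition \ref{mean-lip}, the map $h \mapsto \|m(W, h)\|_{L^2(\mathbb{P})}^2$ is $L^2$-continuous there, and since no point in this set lies in $\Theta_0$, the function attains a strictly positive minimum $\eta(\epsilon, M) > 0$. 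Combined with Condition \ref{fsbasis}$(iii)$ (first-stage approximation) and Condition \ref{fsbasis2} (uniform sandwich bounds on the eigenvalues of $\widehat{\Sigma}(W, h)$), this transfers to $Q_n(h) \gtrsim \eta(\epsilon, M_n)$ uniformly on $\Theta_n \cap B_\epsilon$ with high probability, so $\exp(-\tfrac{n}{2} Q_n(h))$ is exponentially small there. Outside $\Theta_n$, the prior mass bound takes over.

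The main obstacle is balancing the growth rate of $M_n$ against the identification strength $\eta(\epsilon, M_n)$, which may deteriorate as $M_n$ grows; this requires choosing $M_n$ so that $n \eta(\epsilon, M_n)$ still dominates the concentration-function term $n r_n^{2\kappa}$ from the denominator, while simultaneously $\mu(\Theta_n^c)$ decays faster than the evidence. A secondary subtlety is that the pointwise weighting $\widehat{\Sigma}(W, h)$ cannot be factored out of the integral, but Condition \ref{fsbasis2} is formulated precisely as the uniform sandwich bound needed to reduce $Q_n(h)$ to $\mathbb{E}_n\|\widehat{m}(W, h)\|_{\ell^2}^2$ up to constants over each Sobolev ball, so the remaining estimates proceed as in the deterministically weighted case. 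Assembling these pieces and dividing the numerator bound by the evidence bound yields $\mu^{CU}(B_\epsilon \mid \mathcal{D}_n) \xrightarrow{\mathbb{P}} 0$, which is (\ref{const-eq-2}).
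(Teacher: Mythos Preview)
The obstacle you flag --- that $\eta(\epsilon, M_n)$ may deteriorate as $M_n \to \infty$ --- is a genuine gap, but it is one you create yourself by letting $M_n$ grow. The paper avoids it entirely by working on a \emph{fixed} Sobolev ball $\mathcal{G}_n = \{h : \|h\|_{\mathbf{H}^t} \leq E\}$ with $E$ a large constant. This is possible because the rescaled prior $G_\alpha/\sqrt{K_n}$ already places exponentially small mass outside any fixed ball: by Gaussian concentration (Theorem~2.1.20 in \citet{gine2021mathematical}), $\mu(\|h\|_{\mathbf{H}^t} > E) \leq 2\exp(-D E^2 K_n) = 2\exp(-D E^2 n\epsilon_n^2)$, and this already beats the evidence lower bound $\exp(-C' n\epsilon_n^2)$ once $E$ is chosen so that $DE^2 > C'$. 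With $E$ fixed, your compactness argument yields a fixed $\eta(\epsilon, E) > 0$, and the balancing problem you worry about simply disappears. There is no assumption in the paper that would let you control $\eta(\epsilon, M_n)$ as $M_n \to \infty$, so without this fix your argument does not close.

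The paper's route is also slightly different in structure: it first establishes weak contraction (posterior concentrates on $\|\widehat{m}(W,h)\|_{L^2(\mathbb{P}_n,\widehat{\Sigma})} \leq L\epsilon_n$) via the general Theorem~\ref{main-contract}, handling the $h$-dependent weighting by using Condition~\ref{fsbasis2} to sandwich the evidence integrand by the identity-weighted version on $\mathcal{S}_n$. It then invokes Corollary~\ref{contract-strong} with the metric $d(h,\Theta_0)$ on the fixed ball $\mathcal{G}_n$, using compactness of $\{h \in \mathcal{G}_n : d(h,\Theta_0) \geq \delta\}$ exactly as you describe. Your direct numerator--denominator argument would also go through once you fix $M$, so the difference is organizational rather than substantive.

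A minor correction: your evidence bound $Q_n(h) \lesssim r_n^{2\kappa}$ conflates two exponents. The $\kappa$ parameter governs the residual $\rho$ (Condition~\ref{residuals}); the conditional mean $m(W,h)$ is locally \emph{Lipschitz} by Condition~\ref{mean-lip}. On the small ball $\mathcal{S}_n = \{h : \|h\|_{\mathbf{H}^t} \leq M,\ \|h-h_0\|_{L^2} \leq \epsilon_n\}$ the correct bound is $\|\Pi_K m(W,h)\|_{L^2}^2 \lesssim \|h-h_0\|_{L^2}^2 \leq \epsilon_n^2$, with the sampling-error term also of order $K_n/n = \epsilon_n^2$.
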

Theorem~\ref{t1-2} establishes the consistency of the continuously updated quasi-Bayes posterior, provided that at least one element of the identified set possesses sufficient regularity relative to the Gaussian process sample paths.
\begin{remark}[Sufficient conditions]
Consider the usual case where $\widehat{\Sigma}(w,h)$ is uniformly (over $\mathbf{H}^t(M)$ and $w$) consistent for 
$\Sigma(w,h) = \big\{ \mathbb{E}[ \rho(Y,h(X)) \rho(Y,h(X))' \mid W = w ] \big\}^{-1}$. In Example~\ref{ex1} (NPIV), we have 
$\Sigma^{-1}(W,h) = \mathbb{E}[u^2 \mid W] + \mathbb{E}[(h(X)-h_0(X))^2 \mid W]$. For any $t > d/2$, the functions in $\mathbf{H}^t(M)$ are uniformly bounded in the $\|\cdot\|_{\infty}$ norm. Thus, Condition~\ref{fsbasis2} holds if the conditional variance $\sigma^2(w) = \mathbb{E}[u^2 \mid W=w]$ is bounded above and below. In Example~\ref{ex2} (NPQIV) with a quantile $\tau \in (0,1)$, we have 
$\Sigma^{-1}(W,h)\in \{\tau^2,(1-\tau)^2\}$ for all $h$, so that Condition~\ref{fsbasis2} is trivially satisfied.
\end{remark}

For the remainder of Section~\ref{sec4}, we focus on the case with a standard weighting matrix and a uniquely identified structural function. Extensions to continuously updated weighting and partial identification can be addressed analogously to Theorem~\ref{t1-2}.

\subsection{Contraction Rates} \label{contract-rate}
In this section, we establish contraction rates for the quasi-Bayes posterior. Although Theorem~\ref{t1} established consistency, it did not quantify the rate of convergence. In the following analysis, we provide explicit posterior contraction rates.

In our setting, as we illustrate below, the posterior contraction rate is determined by the interplay among 
$(i)$ the sample path properties of the Gaussian process prior, 
$(ii)$ the local curvature of the objective function that defines the quasi-Bayes posterior, 
$(iii)$ the smoothing properties of the $h \mapsto m(W,h)$ locally around $h_0$, and 
$(iv)$ the basis functions $b^K(W) = (b_1(W), \dotsc, b_K(W))'$ used to construct a first-stage estimate of $m(W,h)$.

The behavior of the nonlinear map $h \mapsto m(W,h)$ can be locally approximated around $h_0$ by a suitable linearization. Depending on the model and the assumptions on the data  $\mathcal{D} = (Y,X,W)$, there may be multiple candidates for such a linearization. If the map $h \mapsto m(W,h)$ is sufficiently regular in a neighborhood of $h_0$, the natural choice is the Fréchet derivative at $h_0$, i.e. the unique continuous linear operator $D_{h_0}: L^2(X)\to L^2(W)$ such that
\[ \| m(W,h_0+h) - m(W,h_0) - D_{h_0}[h] \|_{L^2(\mathbb{P})} = o(\|h\|_{L^2(\mathbb{P})}) \quad \text{as } \|h\|_{L^2(\mathbb{P})}\to 0. \]
Intuitively, if $D_{h_0}[h]$ provides a good local approximation to $m(W,h)$ around $h_0$, then the smoothing properties of the nonlinear map $h \mapsto m(W,h)$ can be studied through the simpler linear operator $h \mapsto D_{h_0}[h]$. In what follows, we relate the smoothing behavior of $D_{h_0}$ to changes in regularity with respect to the orthonormal basis $(e_i)_{i=1}^{\infty}$ defining the Gaussian process in (\ref{gexpand2}). Since the smoothness of $h_0$ is also defined relative to this basis through membership in the Sobolev ball (\ref{sob}), this allows us to analyze the action of $D_{h_0}(\cdot)$ on $(G_{\alpha}, h_0)$ under a common regularity scale. To this end, it will be convenient to define a family of weak norms on $L^2(\mathcal{X})$, obtained by shrinking the Fourier coefficients of a function relative to the basis $(e_i)_{i=1}^{\infty}$. We introduce the following definition:
\begin{definition}[Weak Norms] \label{weak-n}
Let $\sigma = (\sigma_i)_{i=1}^{\infty}$ be a non-negative sequence with $\sigma_i \to 0$.  
For any $h \in L^2(\mathcal{X})$ with basis expansion $h = \sum_{i=1}^{\infty} \langle h, e_i \rangle e_i$, where $\langle \cdot, \cdot \rangle$ denotes the $L^2(\mathcal{X})$ inner product, we define the weak norm
\begin{align*}
   \| h \|_{w,\sigma}^2 = \sum_{i=1}^{\infty} \sigma_i^2 \, \big| \langle h , e_i \rangle \big|^2.    
\end{align*}
\end{definition}

For $\gamma > 0$ and $\epsilon > 0$, we denote a bounded smooth local neighborhood of $h_0$ by \begin{align}
 \label{nhd-smooth}    \Omega(M,\epsilon,\gamma) =  \{ h \in  \mathcal{H}^{ \gamma}(M)  : \| h - h_0 \|_{L^2(\mathbb{P})} \leq \epsilon    \}.
\end{align}
The following two conditions quantify the smoothing properties of the map $h \rightarrow m(W,h)$ in a local neighborhood of $h_0$ by relating it to a suitable weak norm.
\begin{condition2}[Smoothing Link] \label{gp-link}
There exists $\epsilon > 0$ sufficiently small, $\gamma > 0$ and a sequence $\sigma_i \to 0$ such that, for any $M > 0$, there are constants $C_1(M), C_2(M) < \infty$ satisfying
$
\| D_{h_0}[h - h_0] \|_{L^2(\mathbb{P})} \leq C_1(M) \| h - h_0 \|_{w,\sigma}
 $  \text{and}  $
\| h - h_0 \|_{w,\sigma} \leq C_2(M) \| D_{h_0}[h - h_0] \|_{L^2(\mathbb{P})}
$
for every $h \in \Omega(M, \epsilon, \gamma)$.
\end{condition2}
\begin{condition2}[Local Curvature] \label{lcurv} 
There exists $\epsilon > 0$ sufficiently small and $ \gamma > 0$ such that, for any $M > 0$, there exists a constant $B = B(M) < \infty$ satisfying
$
\| m(W,h) \|_{L^2(\mathbb{P})} \leq B \| D_{h_0}[h - h_0] \|_{L^2(\mathbb{P})}
 $ and $
\| D_{h_0}[h - h_0] \|_{L^2(\mathbb{P})} \leq B \| m(W,h) \|_{L^2(\mathbb{P})}
$
for every $h \in \Omega(M, \epsilon, \gamma)$.

\end{condition2}

\begin{condition2}[First Stage] \label{basis-approx}
Let $\alpha > \gamma$ denote the regularity of the Gaussian process $G_{\alpha}$. There exist sufficiently small $\epsilon, \delta > 0$, a non-increasing function $\varphi: \mathbb{R}_{+} \to \mathbb{R}_{+}$ and a constant $D > 0$ such that, for any $M > 0$,
\begin{align*}
\sup_{h \in \mathcal{H}^{\zeta}(M) \, : \, \| h - h_0 \|_{L^2(\mathbb{P})} \leq \epsilon } 
\big\| (\Pi_{K} - I) m(W,h) \big\|_{L^2(\mathbb{P})}
\;\leq\; D \, \varphi(K)\, K^{- \zeta / d} \, M
\end{align*}
for all sufficiently large $K$ and $\zeta \in (\alpha - \delta, \alpha)$.
\end{condition2}
Conditions~\ref{gp-link}--\ref{basis-approx}, albeit in varied formulations, are standard in the literature.\footnote{Our conditions are equivalent to the assumptions in \cite{chen2012estimation}; see, for example, Corollary 5.3 therein. For further discussion on alternative formulations, see also Remark \ref{variation} below.} These conditions can be further weakened to hold with a sequence $\epsilon = \epsilon_n \to 0$ sufficiently slowly. Condition~\ref{lcurv} holds trivially when $h \mapsto m(W,h)$ is linear, as in the NPIV model. If $D_{h_0}^*$ denotes the adjoint, a sufficient (but not necessary) assumption for Condition \ref{gp-link} is that the self-adjoint operator $D_{h_0}^* D_{h_0}$ diagonalizes in the eigenbasis $(e_i)_{i=1}^\infty$ of the Gaussian process in~(\ref{gexpand2}). Stronger versions of Condition~\ref{gp-link} are often imposed in the literature on linear inverse problems with a known operator (e.g. \citealp*{knapik2011bayesian}; \citealp*{gugushvili2020bayesian}).

The intuition behind Condition~\ref{basis-approx}, following \cite{chen2012estimation}, is that locally around $h_0$, the map $(h,h_0) \mapsto m(W,h) - m(W,h_0)$ exhibits smoothing properties that are comparable to those of its local linear approximation $(h,h_0) \mapsto D_{h_0}[h - h_0]$. Thus, it is expected that the decay rate of $\varphi(K)$ is of the same order as the sequence $\sigma_K$ in Condition~\ref{gp-link}, while $K^{-\zeta/d}$ represents the usual sieve approximation error for bounded smoothness classes $\mathcal{H}^{\zeta}(M)$.
\begin{remark}[On Variations of Conditions] \label{variation}
   Local curvature conditions are standard in this literature, although they appear in varying forms. We follow the formulation in \citet*{chen2012estimation,chen2014local}. Commonly used variations of Condition~\ref{lcurv} can be handled without substantive changes. For example, Remark~A.2.3 in \citet*{chernozhukov2023constrained} and Theorem~2 in \citet*{dunker2014iterative} assume (in our notation) a local curvature relation between $\|\Pi_K m(W,h)\|_{L^2(\mathbb{P})}$ and $\|\Pi_K D_{h_0}[h-h_0]\|_{L^2(\mathbb{P})}$ for all sufficiently large $K$. Under that hypothesis, our revised Condition~\ref{basis-approx}, similar to \cite*{chernozhukov2023constrained}, would instead bound the local linear bias:
\begin{equation*}
\textstyle
\Psi(K)=\sup\nolimits_{h \in \mathcal{H}^{\zeta}(M):\, \|h-h_0\|_{L^2(\mathbb{P})}\le \varepsilon}\,
\|(\Pi_K - I)\, D_{h_0}[h-h_0]\|_{L^2(\mathbb{P})}.
\end{equation*}
\end{remark}
Following standard practice in the literature, we distinguish two regimes of estimation difficulty. The model is said to be \emph{mildly ill-posed} if $\sigma_K$ and $\varphi(K)$ decay at a polynomial rate, 
and \emph{severely ill-posed} if they decay at an exponential rate. The following result establishes contraction rates for the generalized Bayes posterior.
\begin{theorem}[General Contraction Rates] \label{rate}
Suppose  Conditions \ref{residuals}-\ref{basis-approx} hold and $h_0 \in \mathcal{H}^p$ for some $p \geq \alpha + d/2$.
\begin{enumerate}
\item[$(i)$] Suppose the model is \emph{mildly ill-posed}: $\sigma_i \asymp i^{-\zeta/d},\; \varphi(K) \asymp K^{-\chi/d} $ for some $\zeta,\chi \geq 0$. If $ K_n \asymp  n^{d/[2(\alpha + \zeta) + d]}$, there exists a universal  $L > 0$ such that
\begin{align*} 
   \mu \big( h :\| h -  h_0 \|_{L^2}   >  L  n^{\frac{-\alpha}{2[\alpha +\zeta] +d}\frac{(\alpha +  \min \{ \zeta,\chi  \})}{(\alpha + \zeta)}}    \sqrt{\log n}    \: \; \big| \;  \: \mathcal{D}_n \big)   \xrightarrow{\mathbb{P}} 0.
\end{align*}
\item[$(ii)$] Suppose the model is \emph{severely ill-posed}: $\sigma_i \asymp \exp(-R i^{\zeta/d}),\;  
\varphi(K) \asymp \exp(-R' K^{\chi/d})$ 
for some $R,R',\chi,\zeta > 0$. If $ K_n \asymp  (\log n)^{1+d/\zeta} $, there exists a universal  $L > 0$ such that
\begin{align*} 
   \mu \big( h : \| h -  h_0 \|_{L^2}   >  L  (\log n)^{- \min \{ \chi(d^{-1} + \zeta^{-1}),1  \} \alpha/\zeta }    \sqrt{\log \log n}   \; \: \big| \;  \: \mathcal{D}_n \big)   \xrightarrow{\mathbb{P}} 0.
\end{align*}
\end{enumerate}
\end{theorem}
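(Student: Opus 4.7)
The plan is to adapt the Ghosal–van der Vaart framework for posterior contraction to the generalized Bayes setting, first establishing a rate in the weak prediction norm $\|\cdot\|_{w,\sigma}$ and then transferring it to $L^2$ via a Sobolev interpolation inequality.

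\textbf{Localized reduction.} Conditions~\ref{lcurv} and~\ref{gp-link} together give $\|m(W,h)\|_{L^2(\mathbb{P})} \asymp \|h-h_0\|_{w,\sigma}$ on a smooth local neighborhood $\Omega(M,\varepsilon,\gamma)$. So on a localized sieve $\Theta_n \subset \Omega(M,\varepsilon,\gamma)$ the quasi-Bayes objective satisfies $Q_n(h) \asymp \|h-h_0\|_{w,\sigma}^2$ up to (i) a first-stage bias of order $\varphi(K_n) K_n^{-\alpha/d}$ via Condition~\ref{basis-approx} applied to sample paths in $\mathcal{H}^{\alpha-\delta}$, and (ii) a uniform empirical-process remainder of order $\sqrt{K_n/n}$ controlled via the local $L^2$-continuity in Condition~\ref{residuals}, even when $\rho$ is pointwise non-smooth. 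The rescaling $G_\alpha/\sqrt{K_n}$ is what confines prior sample paths to a shrinking Sobolev ball with high probability, so these local conditions apply uniformly on the sieve.

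\textbf{GGV analysis and interpolation.} I would then apply the standard prior-mass $+$ testing decomposition. The rescaled prior $G_\alpha/\sqrt{K_n}$ has RKHS norm equal to $\sqrt{K_n}$ times the RKHS norm of $G_\alpha$, and centered small-ball probability at radius $\varepsilon$ in the weak norm equal to $P(\|G_\alpha\|_{w,\sigma} \leq \sqrt{K_n}\varepsilon)$. Evaluating the concentration function in the weak norm under the hypothesis $h_0 \in \mathcal{H}^p$, $p \geq \alpha + d/2$, yields an evidence lower bound $\exp(-c n \widetilde{\varepsilon}_n^2)$ for a weak-norm rate $\widetilde{\varepsilon}_n$; balancing against the first-stage bias $K_n^{-(\alpha+\chi)/d}$ produces $\widetilde{\varepsilon}_n \asymp n^{-(\alpha+\min\{\zeta,\chi\})/[2(\alpha+\zeta)+d]}$ at $K_n \asymp n^{d/[2(\alpha+\zeta)+d]}$ in case~(i), where the $\min\{\zeta,\chi\}$ emerges from taking the slower of the intrinsic concentration-function rate and the first-stage bias rate. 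Sieve complements are discarded via Borell–Sudakov–Tsirelson applied to $G_\alpha/\sqrt{K_n}$, while the standard Ghosal–van der Vaart testing construction converts weak-norm separation into exponentially small posterior mass. Finally, the Sobolev interpolation
\[
\|h-h_0\|_{L^2} \;\lesssim\; \|h-h_0\|_{w,\sigma}^{\alpha/(\alpha+\zeta)}\,\|h-h_0\|_{\mathcal{H}^{\alpha}}^{\zeta/(\alpha+\zeta)},
\]
valid on the Sobolev-bounded sieve, converts $\widetilde{\varepsilon}_n$ into the $L^2$ rate $\widetilde{\varepsilon}_n^{\alpha/(\alpha+\zeta)}$, which equals the announced exponent in part~(i). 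Part~(ii) follows by the identical route with the polynomial decays replaced by exponentials, producing polylogarithmic rates and the choice $K_n \asymp (\log n)^{1+d/\zeta}$.

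\textbf{Main obstacle.} The central technical difficulty is ensuring that the local equivalences in Conditions~\ref{gp-link}--\ref{basis-approx}, which hold only in a bounded Sobolev neighborhood of $h_0$, extend uniformly over the infinite-dimensional support of the prior. The rescaling by $1/\sqrt{K_n}$ is what makes this possible. Tracking the joint interaction between the first-stage bias exponent $\chi$, the smoothing link exponent $\zeta$, and the prior regularity $\alpha$ through the interpolation step is what produces the precise $\min\{\zeta,\chi\}$ structure. The analysis is especially delicate when $\rho$ is pointwise discontinuous, as in Example~\ref{ex2}, since empirical-process control of the first-stage error then relies critically on the Hölder exponent $\kappa$ in Condition~\ref{residuals}.
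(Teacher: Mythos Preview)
Your overall architecture---weak-norm contraction first, then transfer to $L^2$---matches the paper's strategy, but there are two concrete issues.

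\textbf{The interpolation step is the main gap.} You invoke
\[
\|h-h_0\|_{L^2} \lesssim \|h-h_0\|_{w,\sigma}^{\alpha/(\alpha+\zeta)}\,\|h-h_0\|_{\mathcal{H}^{\alpha}}^{\zeta/(\alpha+\zeta)}
\]
and say it is ``valid on the Sobolev-bounded sieve.'' But the prior $G_\alpha/\sqrt{K_n}$ assigns probability zero to any bounded $\mathcal{H}^\alpha$ ball: with eigenvalues $\lambda_i \asymp i^{-(1+2\alpha/d)}$ one has $\mathbb{E}\|G_\alpha\|_{\mathcal{H}^\alpha}^2 \asymp \sum_i i^{-1} = \infty$, so $\|G_\alpha\|_{\mathcal{H}^\alpha}=\infty$ almost surely. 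Thus there is no sieve of the required form that simultaneously carries most of the posterior mass and has bounded $\mathcal{H}^\alpha$ norm. The paper handles this by working in $\mathcal{H}^{\alpha-r_n}$ with $r_n = (\log K_n)^{-1}$, on which $\mathbb{E}\|G_\alpha\|_{\mathcal{H}^{\alpha-r_n}}^2 \lesssim r_n^{-1}$, and then does the frequency-truncation argument
\[
\|h-h_0\|_{L^2}^2 \leq \Big(\max_{i\leq J}\sigma_i^{-2}\Big)\|h-h_0\|_{w,\sigma}^2 + C\,r_n^{-1} J^{-2\alpha/d} J^{2r_n/d},
\]
optimizing over $J$. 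This is precisely the proof of your interpolation inequality, but with the $r_n$ correction; the factor $r_n^{-1/2}$ is what produces the $\sqrt{\log n}$ (resp.\ $\sqrt{\log\log n}$) in the final rate. Without this adjustment your argument does not close.

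\textbf{Testing is unnecessary here.} You appeal to the ``standard Ghosal--van der Vaart testing construction,'' but the paper does not construct tests at all. Because the quasi-likelihood is already of the form $\exp(-\tfrac{n}{2}\|\widehat m(W,h)\|^2_{\widehat\Sigma})$, once the evidence lower bound $\exp(-C'n\epsilon_n^2)$ is established one can bound the posterior of any set $\Omega=\{h:\|\widehat m(W,h)\|^2_{L^2(\mathbb{P}_n,\widehat\Sigma)}>2R\epsilon_n^2\}$ directly by $\exp((C'-R)n\epsilon_n^2)$; see the paper's Theorem~\ref{main-contract} and Corollary~\ref{contract-strong}. In the genuine GGV setting tests are needed because the likelihood ratio is not a priori small on well-separated alternatives, but here the objective itself encodes the separation. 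Your testing route might be made to work, but it adds a layer that is both unneeded and nontrivial to carry out, since the quasi-likelihood does not factorize as an i.i.d.\ product.
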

In the literature (e.g.  \citealp*{chen2012estimation,chernozhukov2023constrained}), the assumption $\varphi(K) \asymp \sigma_K$ is often imposed, as it corresponds, in a certain sense, to an optimal choice of first-stage approximating functions. Theorem~\ref{rate} allows for some degree of misspecification in this choice, with the rates simplifying under the conventional hypothesis (see Corollary~\ref{posgp} below). For clarity and simplicity of notation, we proceed under the conventional hypothesis for the remainder of the paper.

 As a point estimator for $h_0$, we consider the posterior mean \begin{align}
\label{posmean}   \E \big[ h \: | \:\mathcal{D}_n \big] = \int  h \:d \mu(h \: | \: \mathcal{D}_n). 
\end{align}
Given the posterior contraction rate in Theorem~\ref{rate}, the posterior mean, as a point estimator, is expected to converge at a comparable rate. Intuitively, this follows if the posterior probability of the set where contraction fails decays sufficiently quickly. The next result formalizes this intuition.

\begin{corollary}[Rates of Convergence]
\label{posgp} 
Suppose the hypothesis of Theorem \ref{rate} holds.
\begin{enumerate}
\item[$(i)$] If the model is mildly ill-posed, there exists a universal constant $ L > 0 $ such that \begin{align*}
    \mathbb{P} \bigg(  \|  h_0 -  \E \big[ h \:| \:\mathcal{D}_n \big]  \|_{L^2(\mathbb{P})}  >    L   n^{\frac{-\alpha}{2[\alpha +\zeta] +d}}    \sqrt{\log n}       \bigg)  \rightarrow 0.
\end{align*}
\item[$(ii)$] If the model is severely ill-posed, there exists a universal constant $ L > 0 $ such that
\begin{align*}
   \mathbb{P} \bigg(  \|  h_0 -  \E \big[ h \:| \:\mathcal{D}_n \big]   \|_{L^2(\mathbb{P})}  >    L   (\log n)^{- \alpha / \zeta} \sqrt{\log \log n}     \bigg)  \rightarrow 0.
\end{align*}
\end{enumerate}
\end{corollary}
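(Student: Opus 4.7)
The goal is to upgrade the posterior contraction statement of Theorem~\ref{rate} into a convergence rate for the posterior mean. Applying Jensen's inequality to the convex map $h \mapsto \|h - h_0\|_{L^2(\mathbb{P})}$ gives
\begin{equation*}
\bigl\| h_0 - \E[h \mid \mathcal{D}_n] \bigr\|_{L^2(\mathbb{P})} \leq \int \|h - h_0\|_{L^2(\mathbb{P})} \, d\mu(h \mid \mathcal{D}_n).
\end{equation*}
Write $r_n$ for the contraction rate in the relevant case of Theorem~\ref{rate} and set $A_n = \{h : \|h - h_0\|_{L^2(\mathbb{P})} > r_n\}$. Splitting the integral at the contraction radius,
\begin{equation*}
\int \|h - h_0\|_{L^2(\mathbb{P})} \, d\mu(h \mid \mathcal{D}_n) \leq r_n + \int_{A_n} \|h - h_0\|_{L^2(\mathbb{P})} \, d\mu(h \mid \mathcal{D}_n),
\end{equation*}
so the first term delivers the target rate and it remains to show the tail integral is $O_{\mathbb{P}}(r_n)$.

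\textbf{Tail control.} Cauchy-Schwarz bounds the tail integral by $\sqrt{\E[\|h-h_0\|_{L^2(\mathbb{P})}^2 \mid \mathcal{D}_n] \cdot \mu(A_n \mid \mathcal{D}_n)}$, and the two factors will be controlled separately. For the probability, a quantitative strengthening of Theorem~\ref{rate}, via the standard evidence-ratio argument of Ghosal and van der Vaart, yields $\mu(A_n \mid \mathcal{D}_n) = O_{\mathbb{P}}(e^{-C' n r_n^2})$, where $C'$ can be made arbitrarily large by enlarging the constant $L$ in the contraction radius. For the moment factor, the crude domination $L(h) \leq 1$ (since the quasi-likelihood equals $\exp(-\tfrac{n}{2}\,\E_n[\widehat{m}'\widehat{\Sigma}\widehat{m}])$ with a nonnegative exponent) gives $d\mu(h \mid \mathcal{D}_n) \leq Z_n^{-1}\, d\mu(h)$, where $Z_n$ denotes the evidence. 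The usual prior small-ball argument underlying Theorem~\ref{rate} provides $Z_n \geq e^{-C n r_n^2}$ with $\mathbb{P}$-probability tending to one, hence
\begin{equation*}
\E\bigl[\|h-h_0\|_{L^2(\mathbb{P})}^2 \,\big|\, \mathcal{D}_n\bigr] \leq Z_n^{-1} \int \|h-h_0\|_{L^2(\mathbb{P})}^2 \, d\mu(h) = O_{\mathbb{P}}(e^{C n r_n^2}),
\end{equation*}
since the prior second moment $\|h_0\|_{L^2(\mathbb{P})}^2 + K^{-1}\sum_i \lambda_{i,\alpha}$ is bounded under Condition~\ref{gp-structure} (the eigenvalues satisfy $\sum_i \lambda_{i,\alpha} < \infty$ whenever $\alpha > 0$).

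\textbf{Combination and main obstacle.} Combining the two pieces yields a tail contribution of order $e^{(C-C') n r_n^2 / 2}$, which is $o_{\mathbb{P}}(r_n)$ provided $L$ is chosen so that $C' > C$. Parts $(i)$ and $(ii)$ then follow by substituting the corresponding mildly and severely ill-posed rates for $r_n$. The main technical difficulty is not the Jensen/Cauchy-Schwarz decomposition itself but the careful tracking of constants in the proof of Theorem~\ref{rate}: one must confirm that enlarging $L$ shrinks the posterior tail mass of $A_n$ strictly faster than the crude posterior second moment bound grows, and that both exponential constants $C$ and $C'$ remain stable under the prior rescaling $G_\alpha / \sqrt{K}$ and the specific choice of $K_n$ dictated by each ill-posedness regime. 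With this bookkeeping in place, the argument proceeds uniformly across both cases.
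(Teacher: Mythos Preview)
Your overall strategy (Jensen, then exponential tail control) is right and close to the paper's, but there is a genuine slip in the rates that breaks the argument as written. You take $r_n$ to be the strong-norm contraction rate of Theorem~\ref{rate} (the paper's $\delta_n$) and use $n r_n^2$ in \emph{both} the tail bound and the evidence bound. In this ill-posed setting the two relevant rates differ: the evidence lower bound and the posterior tail mass are governed by $n\epsilon_n^2$ with $\epsilon_n=\sqrt{K_n/n}$, not by $n\delta_n^2$. Concretely, the proof of Theorem~\ref{rate} gives
\[
\mu\!\big(\|h-h_0\|_{L^2}>L\delta_n \,\big|\, \mathcal{D}_n\big)\le \exp(-D L\, n\epsilon_n^2),\qquad Z_n\ge \exp(-C n\epsilon_n^2),
\]
and when $\zeta>0$ one has $n\delta_n^2\gg n\epsilon_n^2$. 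Your stated tail bound $e^{-C' n r_n^2}=e^{-C' n\delta_n^2}$ is therefore too strong and is not what is actually established. If you correct only that claim, your crude second-moment estimate $\E[\|h-h_0\|^2\mid\mathcal{D}_n]\le C e^{C n\delta_n^2}$ (which does follow from your weaker evidence bound) now dominates the corrected tail $e^{-C' n\epsilon_n^2}$, and the Cauchy--Schwarz product no longer vanishes. The fix is to use $\epsilon_n$ in both exponents; then $e^{(C-C')n\epsilon_n^2/2}\to 0$ for $C'>C$ and your argument goes through.

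For comparison, the paper avoids the crude $Z_n^{-1}$ bound altogether by peeling: it bounds $\E[\|h-h_0\|_{L^2}^2\mid\mathcal{D}_n]$ directly over annuli $\{j\overline{L}\delta_n\le \|h-h_0\|<(j+1)\overline{L}\delta_n\}$, using the uniform-in-$L$ tail bound above on each shell, and sums $\overline{L}^2\delta_n^2\sum_j (j+1)^2 e^{-Dj\overline{L} n\epsilon_n^2}<\infty$. This is slightly more direct and keeps the two rates $\delta_n$ and $\epsilon_n$ visibly separate, which is exactly the distinction your version blurred.
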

\begin{remark}[Optimal Rates]
The preceding results require that the regularity $p$ of the structural function $h_0$ exceed that of the Gaussian process $G_{\alpha}$ by at least $d/2$, i.e.  $p \geq \alpha + d/2$. Consequently, the fastest attainable rate occurs when $\alpha = p - d/2$. This rate is slower than the ``optimal'' rate in \cite{chen2012estimation}, which corresponds to $\alpha = p$. In our setting, the additional smoothness of $h_0$ relative to the prior is crucial for controlling the nonlinear inverse problem induced by the infinite-dimensional prior. While sharper rates may be possible, establishing them within the current non-conjugate framework appears challenging.
\end{remark}

\subsection{Inference} \label{inference}
In this section, we study the limiting quasi-posterior distribution for a class of linear functionals. Let $\mathbf{L}(h_0)$ denote a linear functional of interest—for example, the average value of $h_0(\cdot)$ over an interval or its average derivative. Our analysis focuses on two main questions:
$(i)$ What is the limiting quasi-Bayes posterior distribution of $\mathbf{L}(h)$?
$(ii)$ Under what conditions do quasi-Bayes credible sets for $\mathbf{L}(h_0)$ attain valid frequentist coverage?

To begin our analysis, we view the linear functional as a map $\mathbf{L} : L^2(\mathcal{X}) \rightarrow \mathbb{R}$. Then, by the Riesz representation theorem, there exists a function $\Phi \in L^2(\mathcal{X})$ such that
\begin{align} \label{Linf} \mathbf{L}(h) =  \langle h , \Phi \rangle_{L^2(\mathbb{P})} =\mathbb{E}[h(X)\Phi(X)] \qquad \forall \: h \in L^2(\mathcal{X}). \end{align}
The advantage of this representation is that properties of $\mathbf{L}(\cdot)$ (e.g. regularity) can be analyzed through its representer function $\Phi(X)$. 

In the preceding sections, the choice of the weighting matrix $\widehat{\Sigma}(\cdot)$ in the quasi-Bayes posterior (\ref{qb-general}) did not affect the limit theory, provided that the eigenvalues of $\widehat{\Sigma}(\cdot)$ remained asymptotically bounded away from $0$ and $\infty$. Intuitively, under this condition, the rates of convergence can be characterized by analyzing a quasi-Bayes posterior based on the identity weighted objective $h \mapsto \E_n \big( \| \widehat{m}(W,h) \|_{\ell^2}^2 \big)$. To characterize finer aspects of the posterior, it will be necessary to account for the limiting behavior of  $\widehat{\Sigma}(\cdot)$ in the analysis. We impose the following low level condition on the limiting behavior of the weights.
\begin{condition2}[Limiting Weights]\label{limit-weight}
There exists a limit symmetric matrix $\Sigma_0(\cdot)$ such that
$
\sup_{w\in\mathcal{W}}
\bigl\|\widehat{\Sigma}(w)-\Sigma_0(w)\bigr\|_{op}
=O_{\mathbb{P}}(\gamma_n),
$
where $ (\gamma_n)_{n=1}^{\infty} $ satisfies 
\(\gamma_nK_n\to0\). Furthermore, the eigenvalues of \(\Sigma_0(W)\) are uniformly bounded away from zero and infinity: 
\[
\mathbb{P}\Bigl(c\le\lambda_{\min}\bigl(\Sigma_0(W)\bigr)
\le\lambda_{\max}\bigl(\Sigma_0(W)\bigr)\le C\Bigr)
=1
\]
for some universal constants $c,C > 0$.
\end{condition2}
We are primarily interested in the setting where \(\Sigma_0(\cdot)\) is an efficient weighting matrix for the conditional moment restriction, so that \(\widehat\Sigma(\cdot)\) may be viewed as a preliminary first‑step estimate of the optimal weighting matrix.  In finite‑dimensional GMM models, a celebrated result by \citet{chernozhukov2003mcmc} establishes the frequentist validity of optimally weighted quasi‑Bayes credible sets. In this section, we provide a nonparametric extension to their results by studying the frequentist coverage of quasi-Bayes credible sets for the functional \(\mathbf L(h_0)\).

As in Section \ref{contract-rate}, let \( D_{h_0}(\cdot) \) denote the Fréchet derivative of the map \( h \mapsto m(W,h) \) at \( h_0 \). We denote its adjoint by \( D_{h_0}^* \).\footnote{In defining $D_{h_0}^*$, we view $D_{h_0}$ as a map $(L^2(X) , \| . \|_{L^2(\mathbb{P})}) \mapsto (L^2(W , \| . \|_{L_{\Sigma_0}^2(\mathbb{P})}) $, where $\| . \|_{L^2_{\Sigma_0}(\mathbb{P})}$ denotes the optimal weighted norm $
\| D_{h_0}(h) \|_{L^2_{\Sigma_0}(\mathbb{P})}^2 = \mathbb{E} \left[ D_{h_0}(h)' \Sigma_0(W) D_{h_0}(h) \right]$.} Let $\mathbb{H}$ denote the reproducing kernel Hilbert space (RKHS) of the Gaussian process $G_{\alpha}$. The following condition specifies our main regularity requirements on the representer function $\Phi(\cdot)$.

\begin{condition2}[Regular Functional] \label{phi-reg}
There exists $\tilde{\Phi} \in \mathbb{H}$ such that $\Phi = D_{h_0}^* D_{h_0} \tilde{\Phi}$. The first-stage approximation biases of $D_{h_0}[\tilde{\Phi}]$ and $\Sigma_0(W) D_{h_0}[\tilde{\Phi}]$ satisfy:
\begin{align*}
\text{(i)} \quad & \sqrt{K_n} \sqrt{\log n} \, \|(\Pi_{K_n} - I) D_{h_0}[\tilde{\Phi}] \|_{L^2(\mathbb{P})} \rightarrow 0, \\
\text{(ii)} \quad & \sqrt{K_n} \sqrt{\log n} \, \|(\Pi_{K_n} - I) \Sigma_0(W) D_{h_0}[\tilde{\Phi}] \|_{L^2(\mathbb{P})} \rightarrow 0.
\end{align*}
\end{condition2}
The requirement that $\Phi$ lie in a suitable range of the adjoint is a well-known necessary condition for $\sqrt{n}$ estimation of linear functionals, appearing in a variety of settings. For exogenous nonlinear regression models, see \citet*{monard2021statistical}; for NPIV models, see \citet*{severini2012efficiency}, \citet{bennett2022inference}, \citet*{deaner12trade}; and for NPQIV models, see \citet*{chen2019penalized}. This condition implicitly imposes regularity constraints on $\Phi$. Although extending to more general settings, such as irregular functionals, would be desirable, we view our analysis as an important first step toward a comprehensive nonparametric quasi-Bayes inferential theory.

Given the posterior contraction rate established in Theorem \ref{rate}, it suffices, for deriving the distributional limit theory, to restrict our analysis to a quasi-Bayes posterior whose support is contained within local neighborhoods of $h_0$. Specifically, if $\Theta_n$ denotes a sequence of shrinking local neighborhoods around $h_0$, it suffices to focus on the \emph{localized posterior}:
\begin{align} \label{local-pos}
\mu^{\star}(A \mid \mathcal{D}_n) \;=\; 
\frac{ \int_{A \cap \Theta_n}  
\exp\!\Big( - \tfrac{n}{2}\, \mathbb{E}_n \big[ \widehat{m}(W,h)^{\prime}\, \widehat{\Sigma}(W)\, \widehat{m}(W,h) \big] \Big)\, d\mu(h)}
{\int_{\Theta_n} 
\exp\!\Big( - \tfrac{n}{2}\, \mathbb{E}_n \big[ \widehat{m}(W,h)^{\prime}\, \widehat{\Sigma}(W)\, \widehat{m}(W,h) \big] \Big)\, d\mu(h)} \, .
\end{align}
Let $\delta_n$ denote the posterior contraction rate established in Theorem \ref{rate}. In our analysis, we will also make use of the contraction rate $\xi_n$, obtained with the weaker metric $d_w(h,h_0) = \| m(W,h) - m(W,h_0) \|_{L^2(\mathbb{P})}$. As a byproduct of our earlier analysis, it is straightforward to verify that this contraction rate is given by
\begin{align*}
\xi_n =    
\begin{cases}      
n^{- \frac{ \alpha + \zeta }{2(\alpha + \zeta) + d}} \sqrt{\log n} & \text{mildly ill-posed}, \\  
 (\log n)^{1 + (d/2 \zeta)} n^{-1/2} & \text{severely ill-posed}.
\end{cases}
\end{align*}
If $\gamma > 0$ is as in Condition \ref{gp-link}-\ref{basis-approx}, we consider the localized distribution  $\mu^{\star}(\cdot \:|\:\mathcal{D}_n)$ obtained through the sequence of smooth local neighborhoods: \begin{align*}
    \Theta_n = \bigg \{ h  \in  \mathcal{H}^{\gamma} (M)  & :  \|    m(W,h_{})  - m(W,h_0)   \|_{L^2(\mathbb{P})}  \leq   D  \xi_n ,        \| h- h_0 \|_{L^2(\mathbb{P})} \leq D \delta_n       \bigg \}
\end{align*}
where $D,M > 0$ are sufficiently large universal constants. 

To connect with the usual linear distributional theory, we quantify the discrepancy between  $m(W,h)$ and its linear approximation $D_{h_0}[h-h_0]$ locally around $h_0$. To that end, given any function $h : \mathcal{X} \rightarrow \R$, we denote the remainder obtained from linearizing the map $h \rightarrow m(W,h)$  locally around $h_0$ by  \begin{align}
    \label{linear-remainder} R_{h_0}(h,W) =  m(W,h) - m(W,h_0) - D_{h_0}[h-h_0] .
\end{align}
For linear problems such as NPIV (Example \ref{ex1}), we have $R_{h_0}(h,W) = 0$ for every $h$. As such, including (\ref{linear-remainder}) in the analysis is only relevant for nonlinear models. Analogous to the finite dimensional Euclidean case, the remainder vanishes as $\| h- h_0 \|_{L^2(\mathbb{P})} \rightarrow 0$. The precise rate at which this occurs depends on (among other factors) $(i)$ the ill-posedness in the model, $(ii)$ the regularity of $h$ and $(iii)$ the convergence rate of $  \| h- h_0 \|_{L^2(\mathbb{P})}$. 

Let $ \mathcal{M}_n = \{  m(\cdot,h) : h \in \Theta_n \} $ denote the image of $\Theta_n$ under the first stage map $h \mapsto m(W,h)$. As is standard, we quantify the complexity of $\mathcal{M}_n$ through its entropy integral: \begin{align}
    \label{entropy-int}  \mathcal{J}(\epsilon) =  \int_{0}^{\epsilon}  \sqrt{\log N( \mathcal{M}_n , \| . \|_{L^2(\mathbb{P})}, \tau D \xi_n  )  } d \tau \;,
\end{align}
where $N(\mathcal{S},d,\delta)$ denotes the usual $\delta-$covering number of a set $\mathcal{S}$ with respect to the metric $d$. The following condition specifies our requirements on the localized support $\Theta_n$, its image $\mathcal{M}_n$ and nonlinear remainder $\{ R_{h_0}(h,W) : h \in \Theta_n \}$.

\begin{condition2}\label{misc1}   
Let $\kappa$ and $t$ denote the local $L^2$ continuity parameters of the generalized residual $\rho(\cdot)$, as defined in Condition~\ref{residuals}. Suppose that:
\begin{align}
  \nonumber  &(i) \quad n^{-1/2} K_n^2 \, \mathcal{J}(K_n^{-1/2}) \xrightarrow[n \rightarrow \infty]{} 0. \\
  \nonumber  &(ii) \quad \sqrt{\log K_n} \cdot \max \left\{ 
        \frac{K_n^2 \log K_n}{\sqrt{n}}, \,
        \frac{K_n \delta_n^{-d/t}}{\sqrt{n}}, \,
        K_n \sqrt{\log K_n} \delta_n^{\kappa}, \,
        \sqrt{K_n} \delta_n^{\kappa - d/(2t)} 
    \right\} \xrightarrow[n \rightarrow \infty]{} 0. \\
  \nonumber  &(iii) \quad \sqrt{n} \sqrt{K_n \log n} \cdot 
    \sup_{h \in \Theta_n} \left\| \Pi_{K_n} R_{h_0}(h,W) \right\|_{L^2(\mathbb{P})} \xrightarrow[n \rightarrow \infty]{} 0.
\end{align}
\end{condition2}
Conditions~\ref{misc1}$(i)$--$(ii)$ arise primarily from empirical process techniques  used to control the uniform empirical deviation:
\[ \chi_n = \sup_{h \in \Theta_n}
\left| \mathbb{E}_n \left[ \widehat{m}(W,h)' \Sigma(W) \widehat{m}(W) \right] - \mathbb{E} \left[ \Pi_{K} m(W,h)' \Sigma(W) \Pi_{K} m(W,h) \right] \right|.
\]
If we substitute the posterior contraction rate $\delta_n$ and the optimal first-stage sieve dimension sequence $K_n$ from Theorem~\ref{rate}, Condition~\ref{misc1} can be reduced to minimum smoothness requirements on the structural function $h_0$ and prior. The dependence on $\kappa$ and $t$ arises because the generalized residual function $\rho(\cdot)$ may be nonlinear and pointwise discontinuous in $h$. Accordingly, our analysis relies on the weaker $L^2(\mathbb{P})$ continuity condition specified in Condition~\ref{residuals}.
\begin{remark}[On the Remainder Order]
Condition~\ref{misc1}$(iii)$ imposes that the nonlinear remainder vanishes sufficiently fast on local shrinking neighborhoods around $h_0$. Under weak conditions, the remainder satisfies a quadratic bound:
\begin{align} \label{quadrem}
\| \Pi_{K_n} R_{h_0}(h,W) \|_{L^2(\mathbb{P})} \leq \| R_{h_0}(h,W) \|_{L^2(\mathbb{P})} \leq C \| h - h_0 \|_{L^2(\mathbb{P})}^2 \qquad \forall \; h \in \Theta_n.
\end{align}
For mildly ill-posed models, Condition~\ref{misc1}$(iii)$ is satisfied if $\delta_n^2 \sqrt{K_n} \sqrt{\log n} = o(n^{-1/2})$. Substituting the definition of $K_n$ from Theorem~\ref{rate}, this reduces to the smoothness requirement $\alpha > \zeta + d$, similar to Condition 5.7 in \citet{chen2009efficient}. As noted in the literature (e.g.  \citealp*{hanke1995convergence}) quadratic bounds such as (\ref{quadrem}) are usually overly conservative in ill-posed settings. In nonlinear inverse problems, a more informative bound is the tangential cone condition \citep*{chen2014local}, which in our notation requires
\begin{equation}\label{tang-cone}
\|R_{h_0}(h,W)\|_{L^2(\mathbb{P})}
\;\le\;
\phi\!\big(\|h-h_0\|_{L^2(\mathbb{P})}\big)\,
\|m(W,h)-m(W,h_0)\|_{L^2(\mathbb{P})}
\qquad \forall\, h\in\Theta_n,
\end{equation}
for some function $\phi:\mathbb{R}_+\to\mathbb{R}_+$ with $\phi(0)=0$ and continuous at zero.\footnote{This is expression (1.8) in \citet*{hanke1995convergence} with $\phi(t) = t$. For uses and proofs of tangential cone conditions in other settings, see e.g. \citet{kaltenbacher2009iterative}; \citet{de2012local}; \citet{dunker2014iterative}; \citet{breunig2020specification}.} For instance, if $\phi(t) = t$, then \eqref{tang-cone} implies that Condition~\ref{misc1}$(iii)$ holds for severely ill-posed models when $\alpha > \zeta + d$, and for mildly ill-posed models when $\alpha > d$.
\end{remark}
The following result establishes that the quasi-Bayes posterior distribution of a regular functional $\mathbf{L}(.) = \langle \cdot ,  \Phi \rangle_{L^2(\mathbb{P})}$  is well approximated by a suitable Gaussian measure.

\begin{theorem}[Bernstein--von Mises] \label{bvm}
Suppose $h_0 \in \mathcal{H}^p$ for some $p \geq \alpha + d/2$, and let Conditions \ref{residuals}--\ref{misc1} hold. Then:
\begin{align*}
    (i) \;\;\; & \sqrt{n} \, \langle  h - \E \big[h \mid \mathcal{D}_n \big] , \Phi \rangle_{L^2(\mathbb{P})} \;  \big| \; \mathcal{D}_n \: \overset{\mathbb{P}}{\rightsquigarrow} \; \mathcal{N} \big( 0 ,  \E \big[ (D_{h_0} \tilde{\Phi}    )' \Sigma_0 \, (D_{h_0} \tilde{\Phi}) \big] \big), \\
    (ii) \;\; & \sqrt{n} \, \langle  h_0 - \E \big[ h \mid \mathcal{D}_n \big] , \Phi \rangle_{L^2(\mathbb{P})}  \rightsquigarrow \; \mathcal{N} \big( 0 , \E \big[ (D_{h_0} \tilde{\Phi})' \Sigma_0 \, \rho_{\star} \rho_{\star}' \Sigma_0 \, (D_{h_0} \tilde{\Phi}) \big] \big)
\end{align*}
where $\rho_{\star} = \rho(Y,h_0(X)) $ and    $\overset{\mathbb{P}}{\rightsquigarrow}$ denotes weak convergence in probability.
\end{theorem}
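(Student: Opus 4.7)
The plan is to establish both parts via a Laplace-transform argument built on a LAN-type quadratic expansion of the quasi-likelihood combined with a Cameron--Martin shift in the direction $\tilde{\Phi} \in \mathbb{H}$. I would begin by using Theorem~\ref{rate} to replace $\mu(\cdot \mid \mathcal{D}_n)$ by the localized posterior $\mu^{\star}(\cdot \mid \mathcal{D}_n)$ in~\eqref{local-pos}; since the posterior mass outside $\Theta_n$ is $o_{\mathbb{P}}(1)$, both the Laplace transform $\phi_n(t) := \int e^{t\sqrt{n}\langle h-h_0,\Phi\rangle_{L^2(\mathbb{P})}}\, d\mu(h \mid \mathcal{D}_n)$ and the posterior mean $\mathbb{E}[h \mid \mathcal{D}_n]$ are, up to negligible terms, determined by the restriction to $\Theta_n$. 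On $\Theta_n$, writing $\widehat{m}(W,h) = \widehat{m}(W,h_0) + \widehat{D_{h_0}}[h-h_0](W) + \widehat{R}(W,h)$ (first-stage projections of the Fr\'echet derivative action and of the remainder $R_{h_0}$ in~\eqref{linear-remainder}), expanding $Q_n(h)$, and combining Conditions~\ref{residuals}--\ref{residuals2}, \ref{limit-weight}, and \ref{misc1} to control the empirical-process, weighting, and remainder pieces at the $\sqrt{n}$-scale, I will obtain uniformly in $h \in \Theta_n$ the LAN expansion
\begin{align*}
\log L(h) &= \log L(h_0) - n\,\mathbb{E}_n\!\bigl[\widehat{m}(W,h_0)'\,\Sigma_0(W)\,\widehat{D_{h_0}}[h-h_0]\bigr] \\
&\quad - \tfrac{n}{2}\,\mathbb{E}_n\!\bigl[\widehat{D_{h_0}}[h-h_0]'\,\Sigma_0(W)\,\widehat{D_{h_0}}[h-h_0]\bigr] + o_{\mathbb{P}}(1).
\end{align*}

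To compute $\phi_n(t)$, I would apply the Cameron--Martin shift $h \mapsto h - n^{-1/2}t\,\tilde{\Phi}$, which is admissible because $\tilde{\Phi}\in\mathbb{H}$ by Condition~\ref{phi-reg}; the Gaussian prior $\mu$ is quasi-invariant under shifts by elements of its RKHS, with the standard Cameron--Martin Radon--Nikodym derivative. The shift stays inside $\Theta_n$ since $n^{-1/2}\|\tilde{\Phi}\|_{\infty}$ is of smaller order than the contraction rate $\delta_n$. Using the representation $\Phi = D_{h_0}^{\ast} D_{h_0}\tilde{\Phi}$, so that $\langle h-h_0,\Phi\rangle_{L^2(\mathbb{P})} = \langle D_{h_0}[h-h_0], D_{h_0}\tilde{\Phi}\rangle_{L^2_{\Sigma_0}(\mathbb{P})}$, a direct computation after the shift matches the $t$-linear contribution from the prior against the cross term in the log-likelihood, yielding
\begin{align*}
\phi_n(t) = \exp\!\bigl(t\,W_n + \tfrac{t^2}{2}\,V_n\bigr)\,\bigl(1+o_{\mathbb{P}}(1)\bigr),
\end{align*}
where $V_n \xrightarrow{\mathbb{P}} V := \mathbb{E}[(D_{h_0}\tilde{\Phi})'\Sigma_0(D_{h_0}\tilde{\Phi})]$ and $W_n = \sqrt{n}\,\mathbb{E}_n\!\bigl[\rho(Y,h_0(X))'\Sigma_0(W)\,D_{h_0}[\tilde{\Phi}](W)\bigr] + o_{\mathbb{P}}(1)$. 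Condition~\ref{phi-reg}(i)--(ii) is used here to absorb the first-stage projection biases $(\Pi_{K_n}-I)$ acting on $D_{h_0}[\tilde{\Phi}]$ and $\Sigma_0 D_{h_0}[\tilde{\Phi}]$ at the $\sqrt{n}$-rate. Identifying the posterior mean along $\Phi$ from $\partial_t \log \phi_n(0)$ gives $\sqrt{n}\langle \mathbb{E}[h\mid\mathcal{D}_n]-h_0,\Phi\rangle_{L^2(\mathbb{P})} = W_n + o_{\mathbb{P}}(1)$, so the centered Laplace transform becomes $\exp(t^2V/2)(1+o_{\mathbb{P}}(1))$, which is the Laplace transform of $\mathcal{N}(0,V)$ and delivers part~$(i)$. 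Part~$(ii)$ then follows from a classical Lindeberg CLT applied to $W_n$ under the moment bounds in Conditions~\ref{residuals}--\ref{residuals2}, giving $W_n \rightsquigarrow \mathcal{N}\bigl(0,\,\mathbb{E}[(D_{h_0}\tilde{\Phi})'\Sigma_0\,\rho_{\star}\rho_{\star}'\Sigma_0(D_{h_0}\tilde{\Phi})]\bigr)$.

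The main obstacle will be handling, uniformly in $h \in \Theta_n$ and at the $\sqrt{n}$-scale, three intertwined sources of error: (a) the nonlinear linearization remainder $R_{h_0}(h,W)$, controlled by Condition~\ref{misc1}(iii); (b) the first-stage projection bias $(\Pi_{K_n}-I)$ acting on the Riesz representer direction $\tilde{\Phi}$, controlled by Condition~\ref{phi-reg}; and (c) the empirical-process fluctuations of the quasi-likelihood over the image class $\mathcal{M}_n$, controlled by Conditions~\ref{misc1}(i)--(ii) together with the local $L^2(\mathbb{P})$ modulus of continuity in Condition~\ref{residuals}. The delicate part is ensuring that the Cameron--Martin shifted quasi-likelihood retains its LAN form uniformly in $h \in \Theta_n$ under the combined effect of all three, so that the Laplace transform of the shifted posterior reduces to a Gaussian with the claimed variances; the interplay between the non-smoothness of $\rho(\cdot)$ (via the exponents $\kappa$ and $t$) and the nonlinear inverse structure is what makes the usual LAN arguments nontrivial in this setting.
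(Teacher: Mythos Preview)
Your proposal is correct and takes essentially the same approach as the paper: localization via the contraction theorem, a LAN-type quadratic expansion on $\Theta_n$ controlled through Conditions~\ref{misc1} and~\ref{phi-reg}, a Cameron--Martin shift along $\tilde{\Phi}$, computation of the Laplace transform, and a CLT for the linear term $W_n$. The only subtleties the paper makes more explicit are that the $1/\sqrt{K_n}$ prior rescaling forces the Cameron--Martin density to be $e^{o(1)}$ uniformly on $\Theta_n$ (an extra constraint $|\langle h,\tilde{\Phi}\rangle_{\mathbb{H}_n}|\le M\sqrt{n}\,\epsilon_n\|\tilde{\Phi}\|_{\mathbb{H}_n}$ is added to $\Theta_n$ for this), so the prior contributes no linear term and the cancellation is purely between the Laplace exponent and $\log L(h)-\log L(h_t)$; and that identifying the posterior mean from $\phi_n$ requires a separate uniform-integrability argument (the paper's Lemma~\ref{posmean-conv}) rather than formal differentiation at $t=0$.
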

The two variances in Theorem \ref{bvm} coincide if and only if the quasi-Bayes posterior is optimally weighted. That is, when the weighting matrix is \begin{align*}  \Sigma_0(W) = \{ \E[ \rho(Y,h_0(X)) \rho(Y,h_0(X))'|W  ] \}^{-1} . \end{align*}  
An important implication of Theorem \ref{bvm} is that optimally weighted quasi-Bayes credible sets, centered around the posterior mean, attain asymptotically exact frequentist coverage. Specifically, given a linear functional \( \mathbf{L}(\cdot) \) and a significance level \( \gamma \in (0,1) \), define
\[
c_{1-\gamma} =  (1-\gamma) \text{ quantile of } \left| \mathbf{L}(h) - \mathbf{L}\left( \mathbb{E}[h \mid \mathcal{D}_n] \right) \right|, \quad h \sim \mu(\cdot \mid \mathcal{D}_n).
\]
The quasi-Bayes credible set at level \( \gamma \) is defined as:
\[
C_n(\gamma) = \left\{ t \in \mathbb{R} : \left| t - \mathbf{L}\left( \mathbb{E}[h \mid \mathcal{D}_n] \right) \right| \leq c_{1-\gamma} \right\}.
\]
\begin{corollary}[Frequentist coverage]
\label{bvm-col} 
Suppose the assumptions of Theorem~\ref{bvm} hold, and the quasi-Bayes posterior is optimally weighted. Then, for any significance level $\gamma$,
\begin{align*}
 \lim_{n \rightarrow \infty} \mathbb{P} \left( \mathbf{L}(h_0) \in C_n(\gamma) \right) =  1 - \gamma.
\end{align*}
\end{corollary}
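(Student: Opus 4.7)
\textbf{Proof proposal for Corollary~\ref{bvm-col}.} The plan is to combine the two halves of Theorem~\ref{bvm} through the equality of asymptotic variances under optimal weighting and then deduce coverage via a Slutsky-type argument applied to the credible radius.

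First, I would verify that the posterior and frequentist asymptotic variances coincide under optimal weighting. With $\Sigma_0(W) = \{\mathbb{E}[\rho_\star \rho_\star' \mid W]\}^{-1}$, iterating expectations gives
\begin{align*}
\mathbb{E}\!\left[ (D_{h_0}\tilde\Phi)'\,\Sigma_0\, \rho_\star \rho_\star'\,\Sigma_0\, (D_{h_0}\tilde\Phi) \right]
&= \mathbb{E}\!\left[ (D_{h_0}\tilde\Phi)'\,\Sigma_0\, \mathbb{E}[\rho_\star \rho_\star' \mid W]\, \Sigma_0\, (D_{h_0}\tilde\Phi) \right] \\
&= \mathbb{E}\!\left[ (D_{h_0}\tilde\Phi)'\,\Sigma_0\, (D_{h_0}\tilde\Phi) \right] =: V,
\end{align*}
so both Theorem~\ref{bvm}(i) and Theorem~\ref{bvm}(ii) deliver a common Gaussian limit $\mathcal{N}(0,V)$.

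Next, I would upgrade the in-probability weak convergence of the (centered) posterior law of $\sqrt{n}\,\mathbf{L}(h)$ to convergence of its $(1-\gamma)$-quantile. Since $|N|$ with $N\sim\mathcal{N}(0,V)$ has a continuous and strictly increasing distribution function at $q_{1-\gamma}$, the posterior $(1-\gamma)$-quantile $\sqrt{n}\,c_{1-\gamma}$ of $\sqrt{n}\,|\mathbf{L}(h)-\mathbf{L}(\mathbb{E}[h\mid\mathcal{D}_n])|$ converges in probability to $q_{1-\gamma}$. This follows from a standard sandwich argument: pointwise-in-probability convergence of the conditional CDF at $q_{1-\gamma}\pm\varepsilon$, together with monotonicity, pins the quantile down to within $\varepsilon$ with probability tending to one.

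Finally, writing
\begin{align*}
\mathbb{P}\!\left(\mathbf{L}(h_0) \in C_n(\gamma)\right)
 = \mathbb{P}\!\left( \sqrt{n}\,\left|\mathbf{L}(h_0) - \mathbf{L}(\mathbb{E}[h\mid\mathcal{D}_n])\right| \;\le\; \sqrt{n}\,c_{1-\gamma} \right),
\end{align*}
I apply Theorem~\ref{bvm}(ii) together with $\sqrt{n}\,c_{1-\gamma} \xrightarrow{\mathbb{P}} q_{1-\gamma}$ and Slutsky's theorem. Continuity of the law of $|N|$ at $q_{1-\gamma}$ then gives $\mathbb{P}(\mathbf{L}(h_0) \in C_n(\gamma)) \to \mathbb{P}(|N| \le q_{1-\gamma}) = 1-\gamma$.

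The only nontrivial step is the second: extracting quantile convergence from the \emph{in-probability} weak convergence of a random (posterior) measure, rather than from ordinary weak convergence. I would handle this by applying the sandwich argument described above along any subsequence, using that in-probability weak convergence passes along subsequences almost surely and that the limit CDF is continuous and strictly increasing at $q_{1-\gamma}$. Everything else is a direct consequence of Theorem~\ref{bvm} and the algebraic identity $V_{\text{freq}}=V_{\text{post}}$ available under optimal weighting.
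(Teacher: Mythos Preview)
Your proposal is correct and follows essentially the same route as the paper's proof: both combine Theorem~\ref{bvm}(i) to obtain convergence in probability of the posterior quantile $c_{1-\gamma}$ (after $\sqrt{n}$-scaling) to the Gaussian quantile, Theorem~\ref{bvm}(ii) for the frequentist limit of the centered posterior mean, and then conclude via continuity of the limiting CDF. Your version is somewhat more explicit---you spell out the iterated-expectation computation for the variance equality and the sandwich/subsequence argument for quantile convergence from in-probability weak convergence---whereas the paper simply asserts these steps, but the logical structure is identical.
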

To the best of our knowledge, Theorem \ref{bvm} and Corollary \ref{bvm-col} provide the first nonparametric quasi-Bayes inferential guarantees in the literature. These results extend classical quasi-Bayes inferential results for parametric GMM \citep{chernozhukov2003mcmc} to  nonparametric conditional moment restriction models.
\begin{remark}[Semiparametric efficiency]
The equality of variances in Theorem~\ref{bvm} suggests that an optimally weighted quasi-Bayes posterior mean is asymptotically efficient. Observe that, under optimal weighting, the common limiting variance is: \begin{align*}
     V_{\Phi} =  \E \big[ (D_{h_0} \tilde{\Phi} )' \{ \E[ \rho(Y,h_0(X)) \rho(Y,h_0(X))'|W  ] \}^{-1} (D_{h_0} \tilde{\Phi} )     \big]      .
\end{align*}
In settings where the semiparametric efficiency bound can be analytically characterized, quasi-Bayes efficiency can be assessed by comparing   $V_{\Phi}$ to the efficient lower bound. For example, in the NPIV model, substituting $\tilde{\Phi} = (D_{h_0}^* D_{h_0})^{-1} \Phi$ recovers the semiparametric efficiency bound derived in \citet{severini2012efficiency}. 
\end{remark}

\section{Simulations} \label{simulations}
In this section, we present additional simulation evidence on the finite-sample performance of quasi-Bayes posteriors. Whereas Section \ref{sim-evid} focused on structural functions with a univariate regressor, here we consider settings with multivariate regressors.

Specifically, we examine multivariate generalizations of the designs in \citet{newey2003instrumental}, \citet{santos2012inference}, \citet*{chernozhukov2015constrained}, \citet{chetverikov2017nonparametric}, and \citet*{chen2025adaptive}, which we denote as \textbf{NP}, \textbf{S}, \textbf{CNS}, \textbf{CW}, and \textbf{CCK}, respectively. These generalizations are constructed to mimic the endogeneity structure and ill-posedness of the original univariate designs.\footnote{GPT-5  assisted in the construction of these generalizations.} The structural functions are: \par\noindent
\(
\textstyle
\begin{aligned}
\textbf{NP:} \quad\;& h_0(x)=\sum_{j=1}^{5}\log\!\big(1+|x_j-1|\big)\,\mathrm{sign}(x_j-1)
+\tbinom{5}{2}^{-1}\!\!\sum_{1\le j<k\le5}\sin(\pi x_j x_k),\\
\textbf{S:} \quad \;&  h_0(x)=\sin(\pi x_1)+0.5\,\sin\!\big(\pi(x_3-x_2)\big)
+0.5\,\cos\!\big(\pi(x_5-x_4)\big),\\
\textbf{CNS:} \quad\;& h_0(x)=\sum_{j=1}^{5}\Big(1-2\,\Phi(x_j-0.5)\Big),\\
\textbf{CW:} \quad\;&  h_0(x)=\sum_{j=1}^{5}\!\Big(2\max(x_j-0.5,0)^2+0.5 x_j\Big)
+ x_3x_4 + \log\!\big(1+x_1x_2x_5\big),\\
\textbf{CCK:} \quad\;& h_0(x)=\sin(4x_1)\log x_1 + 1.5\,\cos(\pi x_2) + x_3^2 - 0.5\,x_4x_5.
\end{aligned}
\)\par\bigskip
In these designs, the endogenous regressor is five-dimensional, \(X \in \mathbb{R}^5\), and the instrument is two-dimensional, \(W \in \mathbb{R}^2\). 
The structural functions extend those used in the original univariate designs, and collectively span a reasonable spectrum of functional complexity. Beyond maintaining a similar endogeneity structure, we also scaled up the variance of the disturbances to ensure that the signal-to-noise ratios remain comparable to, or smaller than, those in the original univariate designs. All details are provided in Appendix \ref{append-sims}.

In endogenous models with multivariate regressors, it is very challenging to estimate the structural function using classical methods. Indeed, with a five dimensional endogenous regressor, even a minimal tensor-product sieve with three terms per coordinate yields $J = 3^5 = 243$ basis functions. In all designs, 2SLS estimation based on this tensor product produced an extremely large and unstable risk. This mirrors the univariate behavior in Table \ref{table1}, except that in higher dimensions the minimal feasible $J$ is already prohibitively large.

Let \textbf{QB} denote the quasi-Bayes posterior mean, based on a first-stage thin-plate spline with dimension $K=15$ and a Whittle–Matérn Gaussian process prior. The same prior and implementation algorithm are used across all designs and both sets of restrictions (see Appendix \ref{implem-append} for details). For comparison, we also report nonparametric regression estimates using random forests (\textbf{RF}), implemented via the \texttt{ranger} package in R.
\subsection{Results}
\begin{table}[!htbp]
\caption{Sample size $n=2000$. MSE risk $\mathcal{R}^2(\widehat{h},h_0)$ based on 1000 replications.}
\vspace{0.5em}
\label{table3}
\centering
\setlength{\tabcolsep}{3pt}
\scalebox{1}{
\begin{tabular}{
  l
  @{\hspace{15pt}}c
  @{\hspace{15pt}}c
  @{\hspace{15pt}}c
}
\toprule
 Design 
  & \textbf{QB} (NPIV)
  & \textbf{QB} (NPQIV)
  & \textbf{RF} (OLS) \\
\midrule
\textbf{NP}   & 0.541 & 0.737 & 2.05 \\
\textbf{S}    & 0.501 & 0.486 & 2.82 \\
\textbf{CNS}  & 0.156 & 0.053 & 1.94 \\
\textbf{CW}   & 0.313 & 0.268 & 1.51 \\
\textbf{CCK}  & 0.622 & 0.915 & 3.02 \\
\bottomrule
\end{tabular}}
\end{table}
Random forests (\textbf{RF}) are a reliable supervised learning method for high-dimensional regression and are expected to capture much of the variation in the structural functions. However, because of the non-trivial endogeneity in the designs, it exhibits substantial bias. The designs in Table \ref{table3} span a wide range of structural function complexities and endogeneity patterns, with some expected to serve as relatively challenging stress tests. In practice, we expect our methods to perform considerably better in more conventional settings.

The results in Table \ref{table3} demonstrate that the quasi-Bayes estimators perform well and are viable in higher dimensions. In particular, the estimators are accurate and stable across both restrictions. This is especially noteworthy since nonparametric quantile IV (NPQIV) estimation is often regarded as a substantially more difficult problem due to its nonlinear and discontinuous generalized residual. Together with the simulation evidence in Section \ref{motiv-sim}, our findings suggest that quasi-Bayes estimators may provide a broadly useful toolkit for the large class of nonlinear restrictions frequently encountered in applied work. 

\begin{figure}[!htbp]
    \centering
    \includegraphics[width=0.75\textwidth]{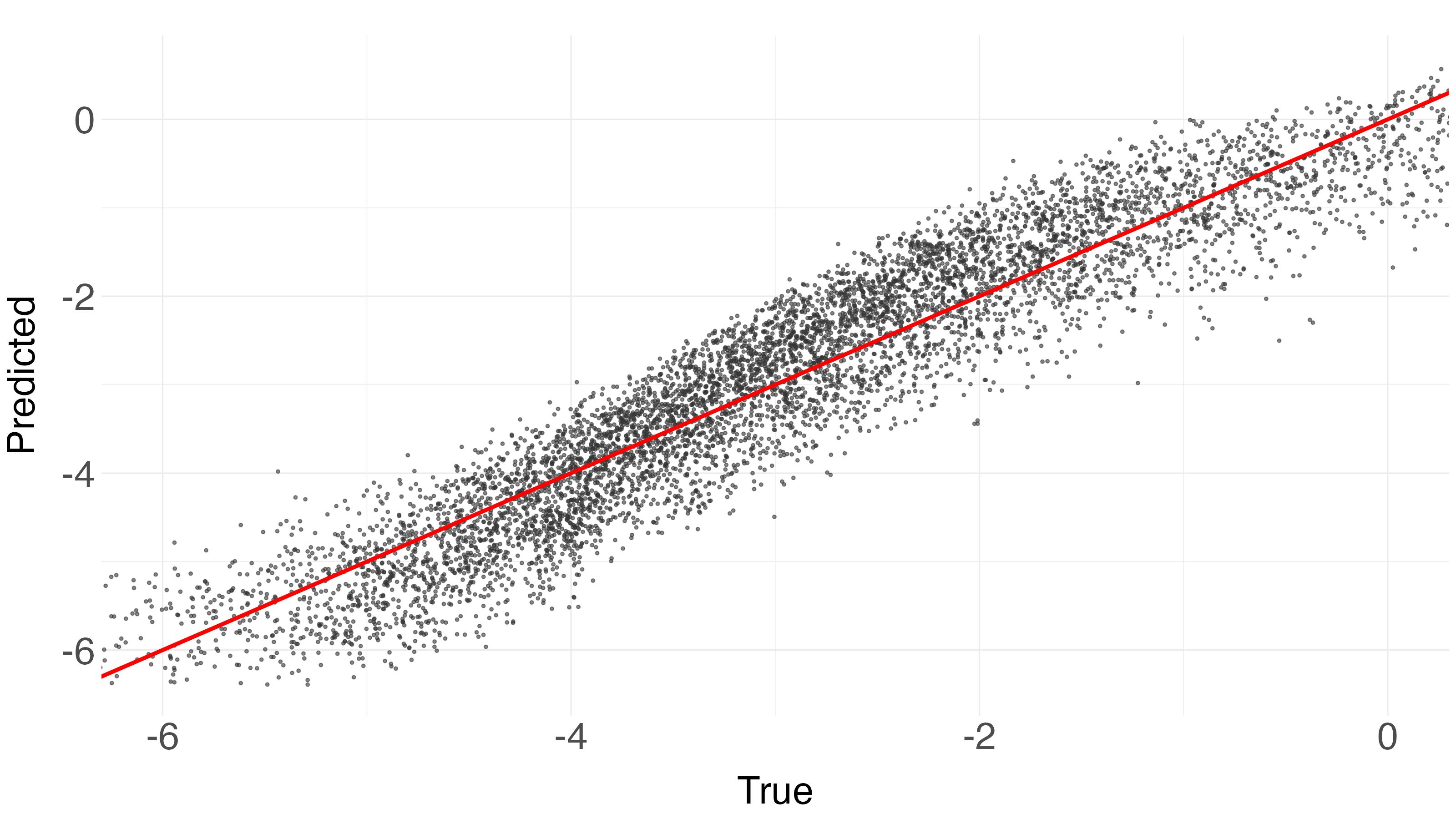}
    \caption{Scatter plot of true vs. predicted values for the multivariate \textbf{NP} design. 
    Quasi-Bayes (NPIV) predictions. The red 45$^\circ$ line denotes perfect prediction ($\text{True} = \text{Predicted}$).}
    \label{npm_qb}
\end{figure}
\begin{figure}[!htbp]
    \centering
    \includegraphics[width=0.75\textwidth]{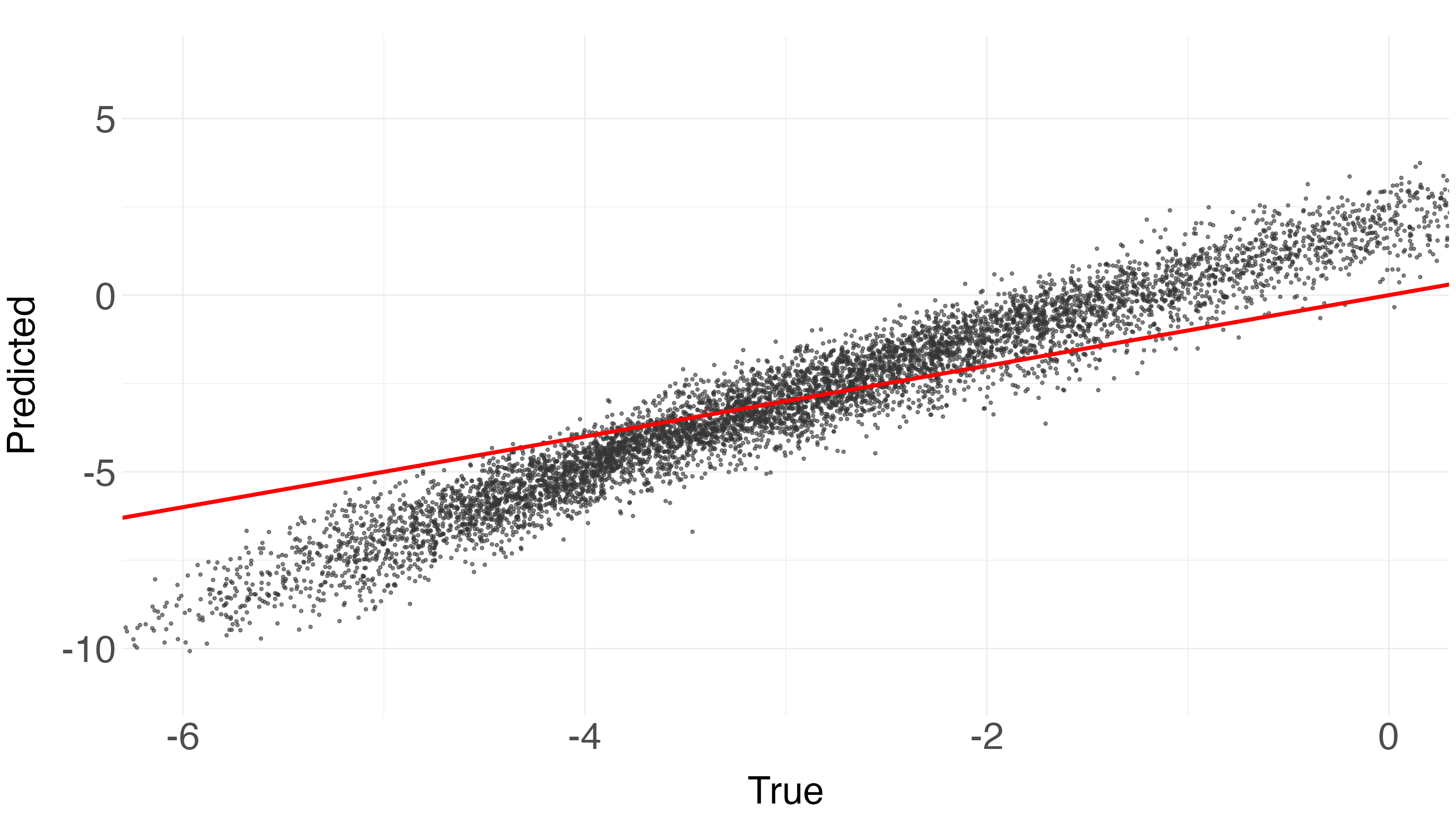}
    \caption{Scatter plot of true vs. predicted values for the multivariate \textbf{NP} design. 
    Random forest (OLS) predictions. The red 45$^\circ$ line denotes perfect prediction ($\text{True} = \text{Predicted}$).}
    \label{npm_ols}
\end{figure}
Figure \ref{npm_qb} plots a sample realization of quasi-Bayes predicted vs true values on a generated test data. The predictions closely follow the trajectory of the true values, concentrating around the 45-degree line of equality. Figure \ref{npm_ols} plots the associated fit for the biased OLS predictions.

As a final remark, it would be desirable to compare the quasi-Bayes estimators with other nonparametric alternatives. However, we are not aware of any reliable implementations for general conditional moment models with multivariate regressors. To the best of our knowledge, our simulation study also provides the first nonparametric risk estimates for quantile IV models with multivariate regressors.

\section{Application: Production Functions} \label{applic-prod}
In this section, we apply our methodology to estimate firm-level production functions in Chile, using data from the national census of manufacturing plants conducted by Chile’s \textit{Instituto Nacional de Estadística}. This dataset is frequently employed in studies of firm-level production functions (e.g. \citealp*{levinsohn2003estimating,gandhi2020identification}). Our analysis focuses on the food products industry, one of the country’s largest manufacturing sectors. We use firms with more than 10 employees and complete observations for the years 1979–1996.

Let $y_{it}, k_{it}, l_{it}$ denote the logarithms of gross output, capital, and labor, respectively, and let $m_{it}$ denote intermediate inputs (fuels, materials, and electricity). All variables are in real terms. Consider the structural value-added production model
\[
y_{it} = F(l_{it}, k_{it}) + \omega_{it} + \varepsilon_{it},
\]
where $F(\cdot)$ is the production function in inputs $(l,k)$, $\varepsilon_{it}$ are exogenous shocks unobserved by the firm, and $\omega_{it}$ are first-order Markov shocks observed (or predictable) by the firm prior to its input decisions at time $t$. We assume $\omega_{it}$ is a deterministic function of inputs, $\omega_{it} = \tilde f_t(k_{it}, l_{it}, m_{it})$, for some function $\tilde f_t$. One interpretation of this specification, following \cite*{ackerberg2015identification}, is that the gross-output production function is Leontief in the intermediate input. Define the conditional means \[
g(\omega_{it-1}) = \mathbb{E}[\omega_{it} \mid \omega_{it-1}] \quad, \quad 
\Phi_t(l_{it}, k_{it}, m_{it}) = \mathbb{E}[y_{it} \mid l_{it}, k_{it}, m_{it}].
\]
Note that, since $\varepsilon_{it}$ is exogenous noise, the function $g(\cdot)$ can be interpreted as the conditional mean regression of $\Phi_{t}(l_{it}, k_{it}, m_{it}) - F(l_{it}, k_{it})$ on $\Phi_{t-1}(l_{it-1}, k_{it-1}, m_{it-1}) - F(l_{it-1}, k_{it-1})$. If $\mathcal{I}_{t}$ denotes the firm’s information set at time $t$, it is shown in \cite*{ackerberg2015identification} that  $ F(\cdot)$ satisfies the conditional moment restriction:
\begin{align} \label{ackerberg-cmr} 
\mathbb{E}\!\left[\, y_{it} - F(l_{it}, k_{it}) - g\!\Big( \Phi_{t-1}(l_{it-1}, k_{it-1}, m_{it-1}) - F(l_{it-1}, k_{it-1}) \Big) \,\middle|\, \mathcal{I}_{t-1} \right] = 0.
\end{align}
In most industries, it is assumed that firms choose labor $l_{it}$ after period $t-1$. Under this timing assumption, the natural information set, as in \cite*{ackerberg2015identification}, is $
  \mathcal{I}_{t-1} = \{ k_{it}, \, l_{it-1}, \, \Phi_{t-1} \}
$. We use the same information set in our analysis.

The functions $g(\cdot)$ and $\Phi_{t-1}(\cdot)$ are smooth, low-dimensional regressions and can therefore be estimated accurately with standard nonparametric methods. In practice,  $\Phi_{t-1}(\cdot)$ is typically estimated using a flexible sieve regression (e.g.  splines). Similarly, for any input function $\tilde{F}$, the output of $g(\cdot)$ in the restriction is obtained from a one-dimensional conditional mean regression, typically implemented with a flexible polynomial. We adopt this approach and thus treat both functions as known for the restriction in (\ref{ackerberg-cmr}). Further implementation details are provided in Appendix \ref{implem-append}.

We aim to estimate the production function that satisfies the conditional moment restriction in (\ref{ackerberg-cmr}). This is a particularly challenging problem, as the restriction defines a complex and highly nonlinear inverse problem in $F(\cdot)$. 
\subsection{Analysis} \label{prod-res}
\begin{figure}[!htbp]
    \centering
    \includegraphics[width=0.9\textwidth]{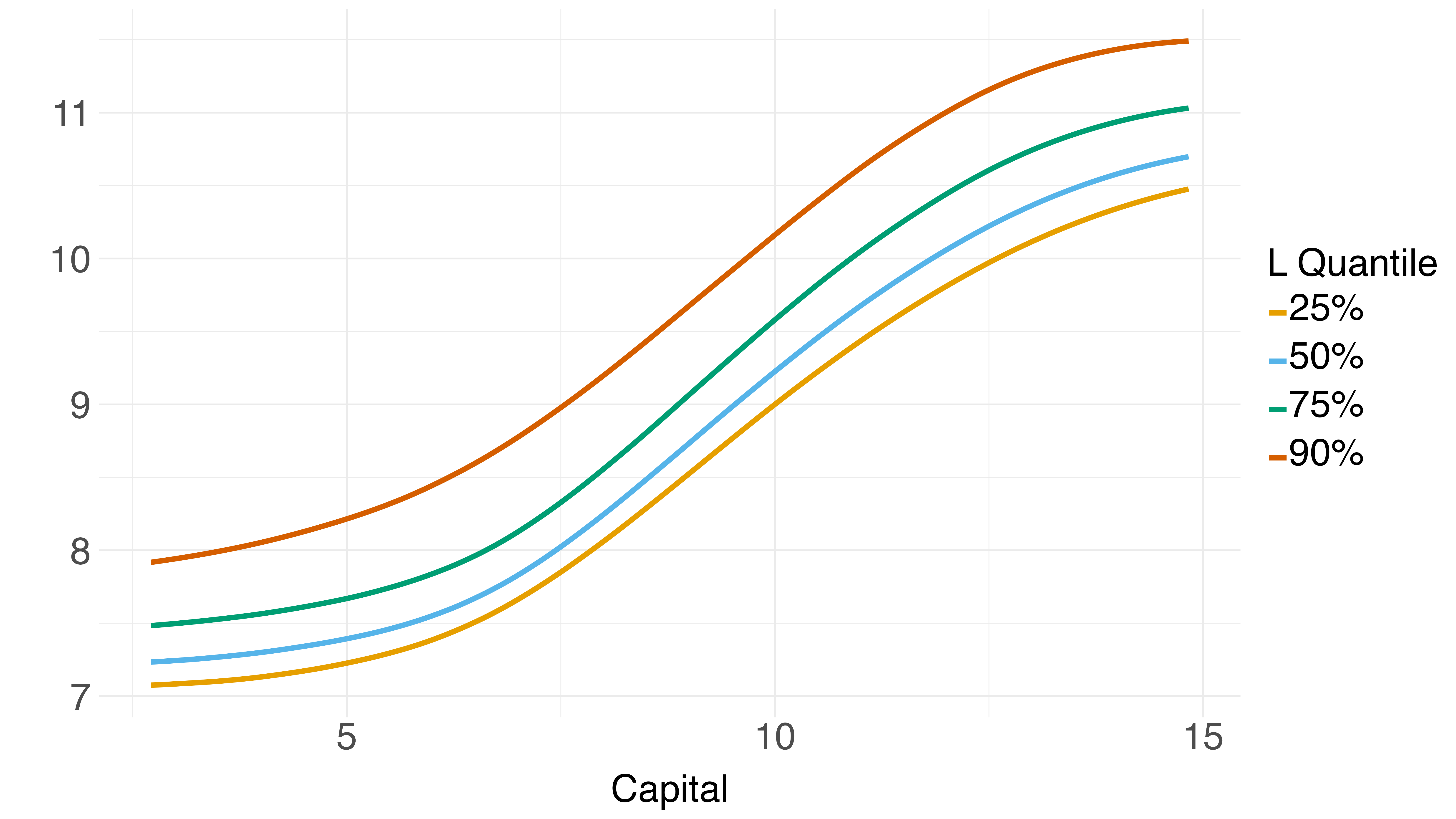}
    \caption{Estimated production function $\widehat F(k,l)$ at selected labor quantiles.}
    \label{varyk}
\end{figure}
\begin{figure}[!htbp]
    \centering
    \includegraphics[width=0.8\textwidth]{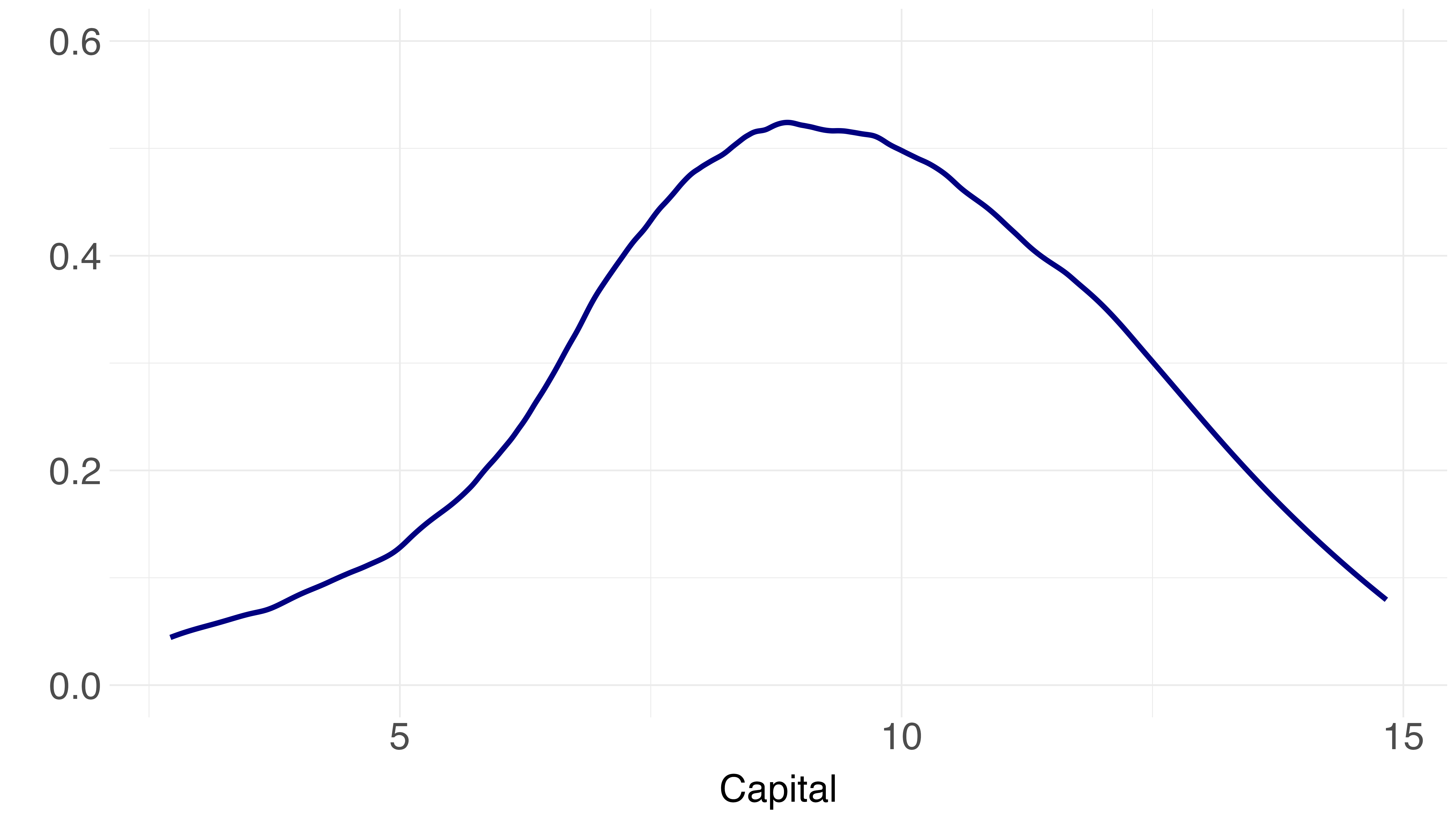}
    \caption{Estimated marginal product $\partial_k \widehat F(k,l)$ at the 0.75 labor quantile, as a function of log capitak $k$, illustrating the classical inverted-U pattern.}
    \label{derivkk}
\end{figure}
Figure~\ref{varyk} shows the posterior mean estimator
$\widehat{F}(k,l) = \E[F(k,l) \mid \mathcal{D}_n]$ as a function of log capital $k$,
with labor fixed at selected quantiles. For each labor quantile, the production function displays the familiar S-shape: convex at low $k$, where additional capital raises productivity at an increasing rate, and concave at higher $k$, where diminishing returns set in. Consequently, the marginal product in Figure~\ref{derivkk} first increases with capital but eventually declines, yielding the classical inverted-U pattern.

\begin{figure}[!htbp]
    \centering
    \includegraphics[width=0.9\textwidth]{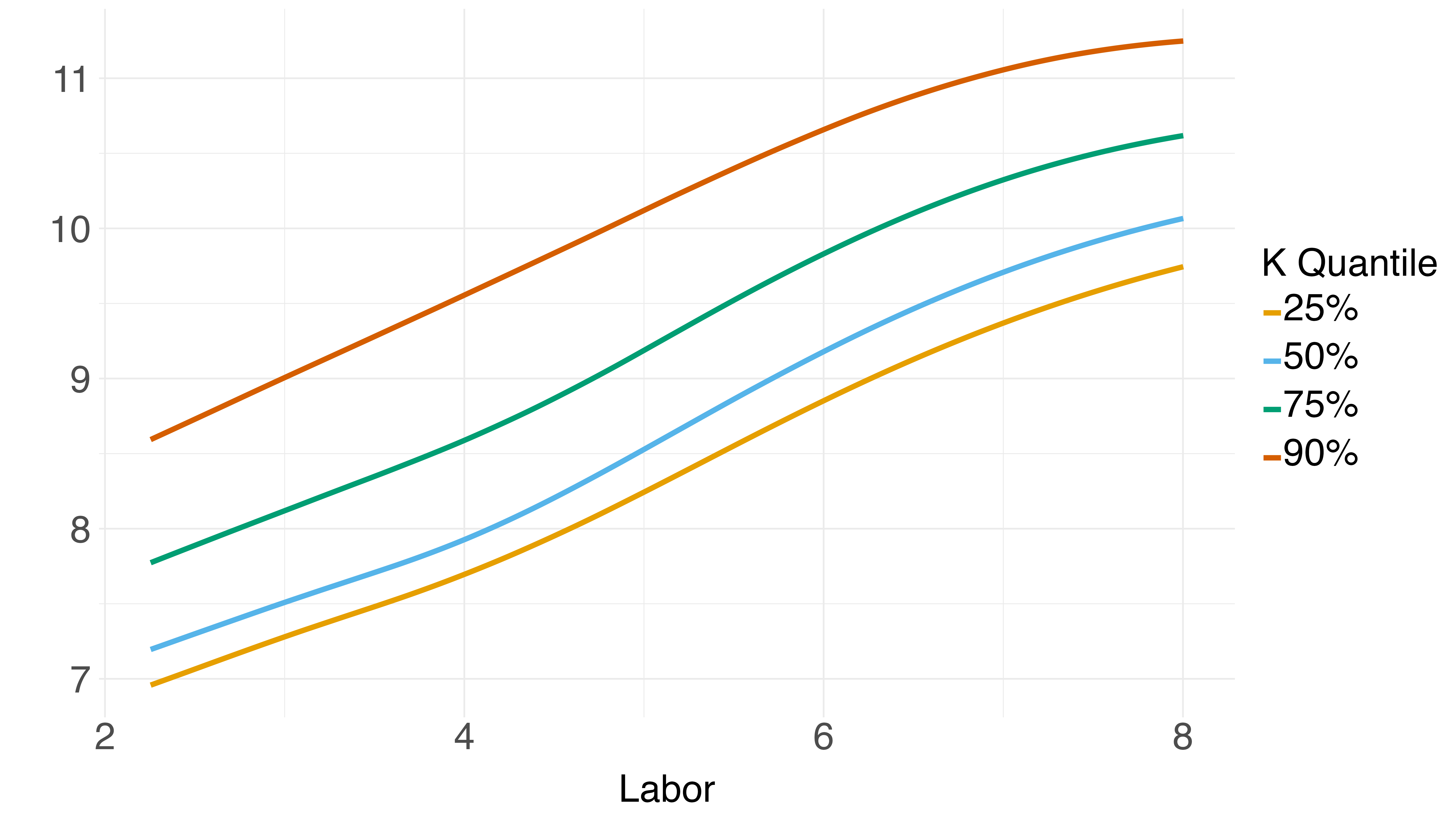}
    \caption{Estimated production function $\widehat F(k,l)$ at select capital quantiles.}
    \label{varyl}
\end{figure}

Figure~\ref{varyl} shows the estimated production function 
$\widehat{F}(k,l)$ as a function of log labor $l$, holding capital fixed at selected quantiles. 
At low to moderate capital quantiles, the function is roughly linear for small values of $l$, 
becomes convex at intermediate levels, and turns concave at higher levels. By contrast, at very high capital quantiles, the function begins at a higher level of output and maintains an almost linear trajectory with a steep slope over most of the range of $l$, turning concave only at higher values. Figure~\ref{mpl-wow} illustrates these patterns via the corresponding marginal product curves.
\begin{figure}[!htbp]
    \centering
    \begin{subfigure}{0.48\textwidth}
        \centering
        \includegraphics[width=\linewidth]{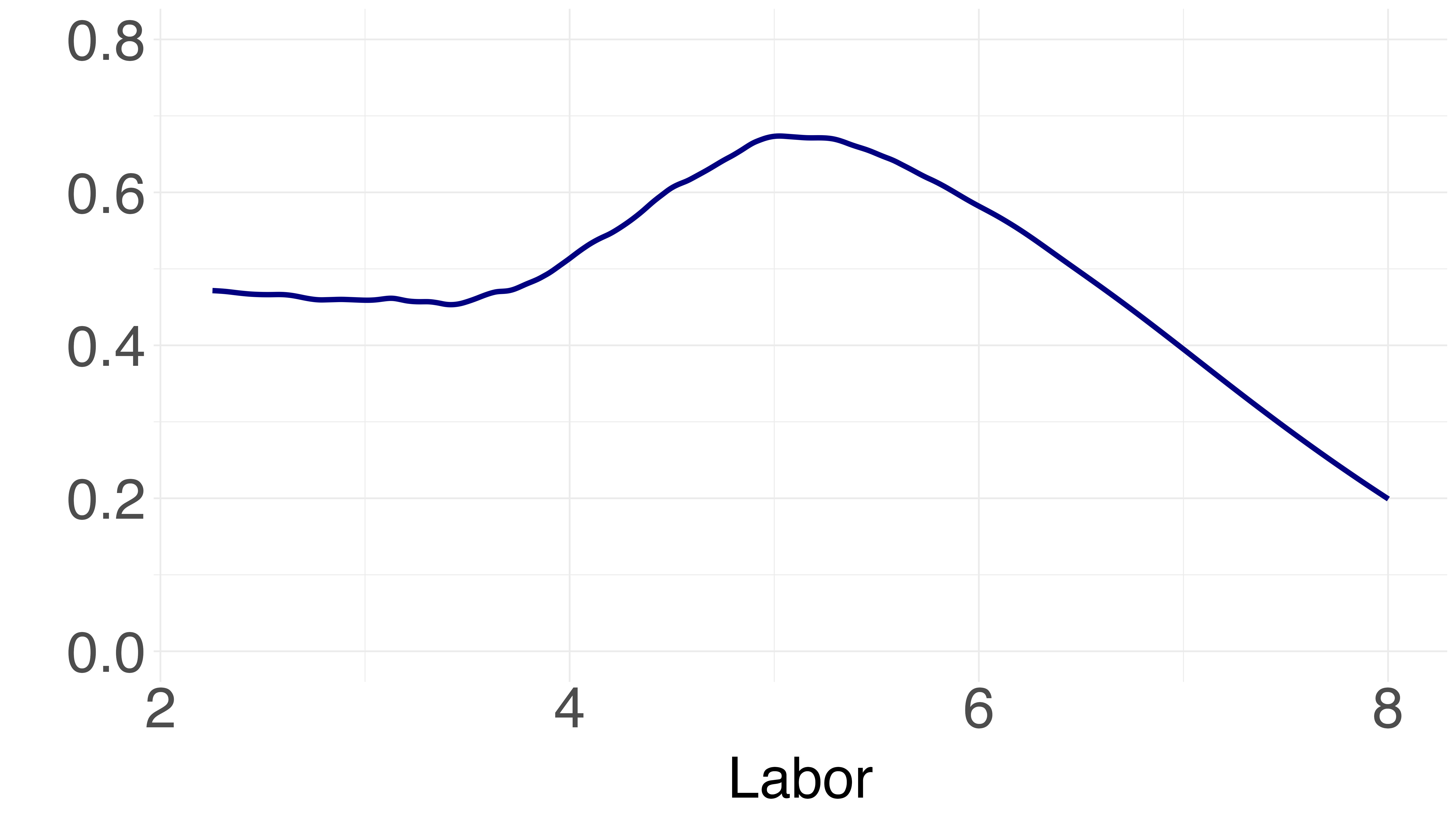}
        \caption{Capital fixed at the 0.75 quantile.}
        \label{derivll1}
    \end{subfigure}
    \hfill
    \begin{subfigure}{0.48\textwidth}
        \centering
        \includegraphics[width=\linewidth]{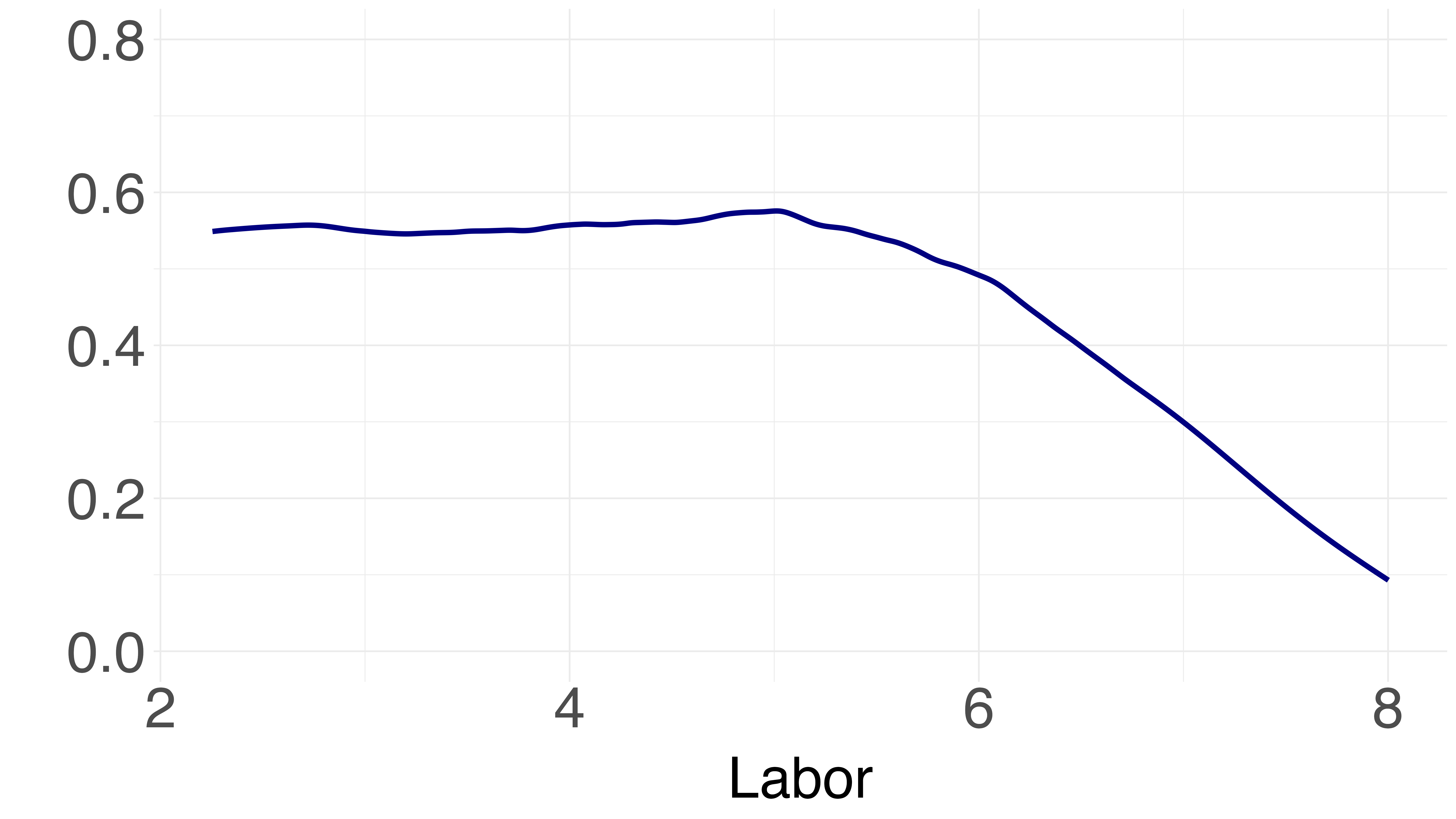}
        \caption{Capital fixed at the 0.90 quantile.}
        \label{derivll2}
    \end{subfigure}
    \caption{Estimated marginal product of labor $\partial_l \widehat{F}(k,l)$ as a function of log labor $l$, with capital fixed at different quantiles.}
    \label{mpl-wow}
\end{figure}

In the data, real capital at the 0.5, 0.75, and 0.95 quantiles equals 740.96, 3656.74, and 24,325.68, respectively, indicating a sharp increase at the upper end of the distribution. One interpretation of these patterns is that they reflect how labor interacts with available capital. With low to moderate capital, complementarities cause output to expand more rapidly as labor increases before diminishing returns set in, yielding convexity followed by concavity. With abundant capital, each worker is already highly productive, so output rises almost linearly with a steep slope in labor until very high levels, where diminishing returns set in.

As a final remark, we note that the identifying restriction for $F(\cdot)$ in (\ref{ackerberg-cmr}) is  complex and highly non-linear. It is therefore noteworthy that our procedures are still able to recover reasonable and meaningful features of $F(\cdot)$ from this restriction alone. To our knowledge, this represents the first fully nonparametric estimate of $F(\cdot)$, obtained without imposing any predetermined parametric structure. Beyond serving as a valuable nonparametric benchmark, these estimates may also provide guidance for the empirical design of approximating parametric specifications. In particular, our findings suggest a preference for specifications that can capture flexible variation in marginal products across input levels.

\section{Conclusion} \label{conclus}
This paper develops a generalized Bayes framework for a broad class of nonparametric conditional moment restriction models. Simulations demonstrate that the proposed procedures are viable and perform well. We expect these methods to be broadly useful, particularly in ill-posed settings or when closed-form solutions are unavailable. As an empirical illustration, we apply the methodology to estimate nonparametric production functions using Chilean plant-level data. We conclude with a few remarks and outline possible extensions.
\subsection{Remarks}
In Section~\ref{motiv-sim}, we motivated quasi-Bayes procedures as an attractive form of data-driven regularization for endogenous nonparametric inverse problems. An additional advantage is in their flexibility to incorporate application specific information. For instance, extending Remark~\ref{loc-remark}, one may specify informative priors centered at a fixed structural function $\widetilde{h}(\cdot)$. In many applications (e.g., \citealp*{adao2017nonparametric, bergquist2020competition}), researchers may have strong microfounded preferences for a parametrically estimated $\widetilde{h}(\cdot)$, yet still wish to accommodate potential misspecification.

As with all nonparametric methods, some degree of finite-sample tuning can often improve performance. In our setting, following Remark \ref{scale-remark}, partial tuning of the Gaussian process covariance hyperparameter $\theta = (\sigma, \ell)$ can be beneficial. When the regressors are normalized, a reasonable default is to set $\ell = 1$ and choose $\sigma$ near the scale of the observables. In nonparametric regression with Gaussian errors, it is standard practice (e.g.  \citealp{williams2006gaussian}) to empirically select $\theta$ by maximizing the Bayesian marginal likelihood. Writing the prior dependence on $\theta$ as $d\mu(h \mid \theta)$, the natural analogue in our framework is to choose $\theta$ by maximizing the quasi-Bayes marginal likelihood:\begin{align*} \mathcal{L}^{}(\theta) = \ \int \exp \bigg( -\frac{n}{2} \E_n \left[ \widehat{m}(W,h)^{\prime} \widehat{\Sigma}(W) \widehat{m}(W,h) \right] \bigg) d\mu(h \mid \theta). \end{align*}
In practice, evaluating this normalizing factor over a large grid can be computationally challenging. An intermediate strategy is to place a weakly informative prior on $\theta$, run a short exploration phase in which we sample from the full posterior over $(h, \theta)$, and then fix $\theta$ at $\hat{\theta}$—the posterior mean computed from the latter part of this exploration phase. Then, proceed with full posterior sampling from the quasi-Bayes posterior $d\mu(h \mid \mathcal{D}_n, \hat{\theta})$. This is the approach we adopt in our implementation. In high-dimensional settings, a common approach for updating $\theta$ during the exploration phase is via slice sampling steps \citep{murray2010slice}.

The first-stage regression in our procedures can use any available source of variation, including both continuous and discrete instruments. Furthermore, there is no requirement that the number of functions in the first stage exceed a fixed threshold. This is in contrast to classical IV 2SLS, which requires at least $K \geq J$ functions in the first stage to estimate a $J$-dimensional second-stage parameter. This flexibility should be particularly valuable in empirical settings where researchers have mixed sources of variation and substantially fewer instruments than endogenous regressors.

We use the same implementation algorithm across all settings considered in this paper, discussed further in Appendix \ref{implem-append}. Briefly, the approach consists of preconditioned Crank–Nicolson (pCN) steps applied to a suitable non-centered parametrization of the Gaussian process sample paths.\footnote{pCN proposals are frequently employed to target infinite-dimensional posteriors that arise in inverse problems with Gaussian process priors \citep{cotter2013mcmc,nickl2023bayesian}.} We view this as an attractive feature, as it suggests that the same algorithm, perhaps with only minor modifications, can be applied broadly.
\subsection{Extensions}
For ease of exposition, we focused on a single structural function $h_0(\cdot)$ that depends on the entire endogenous vector $X$. Adapting the framework to settings with multiple structural functions and restrictions defined on different subcomponents of the observables is straightforward, though notationally more cumbersome.

Our limit theory is developed for a class of infinite-dimensional Gaussian process (GP) priors. Extending the results to other widely used prior classes (e.g., \citealp{chipman2012bart}) or to priors that directly impose specific shape restrictions would be valuable. For GP priors in particular, there is already a substantial literature on enforcing such constraints in regression models (e.g. \citealp{lin2014bayesian}).

Section~\ref{inference} develops, to our knowledge, the first inferential results for a nonparametric quasi-Bayes framework, extending classical parametric GMM results \citep{chernozhukov2003mcmc}. The analysis focused on regular, $\sqrt{n}$-estimable functionals. A natural direction for future work is to broaden the framework to irregular functionals that are slower than $\sqrt{n}$-estimable, similar to the frequentist analysis in \citet{chen2015sieve}.

\bibliographystyle{style}

\bibliography{main.bib}  
\newpage

\appendix

\section{Appendix : Simulation designs} \label{append-sims}
In this section, we describe the simulation designs used in Section \ref{motiv-sim} and \ref{simulations}. We consider multivariate extensions of the designs in \citet{newey2003instrumental}, \citet{santos2012inference}, \citet*{chernozhukov2015constrained}, \citet{chetverikov2017nonparametric}, and \citet*{chen2025adaptive}, which we refer to as \textbf{NP}, \textbf{S}, \textbf{CNS}, \textbf{CW}, and \textbf{CCK}, respectively.

In all designs, the endogenous regressor is five-dimensional, $X \in \mathbb{R}^5$, and the instrument is two-dimensional, $W \in \mathbb{R}^2$. Each multivariate design is constructed as a natural generalization of its univariate counterpart, preserving the underlying endogeneity structure. The structural errors are scaled accordingly to maintain a comparable signal-to-noise ratio. Whenever a covariance matrix $\Sigma$ is not positive definite, it should be interpreted as its projection onto the space of positive definite correlation matrices.
\subsection{NP}
The univariate design in \cite{newey2003instrumental} is given by 
\begin{align*}
\begin{bmatrix} u \\ v \\ w \end{bmatrix}
&\sim \mathcal N \!\left(
\begin{bmatrix} 0 \\ 0 \\ 0 \end{bmatrix},
\begin{bmatrix}
1 & 0.5 & 0 \\
0.5 & 1 & 0 \\
0 & 0 & 1
\end{bmatrix}
\right), &
\begin{aligned}
x &= v + w, \\
h_0(x) &= \log\!\bigl(|x-1|+1\bigr)\,\operatorname{sgn}(x-1), \\
y &= h_0(x) + u
\end{aligned}
\end{align*}
For the multivariate design with $d=5$, we draw $(u,v_1,\dots,v_5)$ and $w=(w_1,w_2)$ as
\[
\begin{bmatrix} u \\ v \end{bmatrix}
\sim \mathcal N\!\left(
\begin{bmatrix} 0 \\ \mathbf 0_{5} \end{bmatrix},
\begin{bmatrix}
1 & \eta\,\mathbf 1_{5}^\top \\
\eta\,\mathbf 1_{5} & I_{5}
\end{bmatrix}
\right),
\qquad
w \sim \mathcal N(0, I_{2})
\]
where $\eta = 0.5$. With round-robin assignment \(\mathrm{map}(j)\in\{1,2\}\) (i.e., \(1,2,1,2,1\)), we set $
x_j = v_j + 0.5\, w_{\mathrm{map}(j)} $ and the structural function is
\[
h_0(x)=\sum_{j=1}^{5} \log(|x_j-1|+1)\,\operatorname{sgn}(x_j-1)
+ \frac{1}{\binom{5}{2}} \sum_{1\le j<k\le 5}\sin(\pi x_j x_k),
\]
and the outcome is $
y = h_0(x) + \sqrt{d}\,u.$
\subsection{CCK}
Let $\Phi( \cdot)$ denote the standard normal CDF. The univariate design in \cite*{chen2025adaptive} is given by
\[
(U,V)^\top \sim \mathcal N\!\left(\mathbf 0_{2},
\begin{bmatrix}
1 & 0.75 \\[3pt]
0.75 & 1
\end{bmatrix}
\right), \qquad
Z \sim \mathcal N(0,1), \qquad
D \sim \mathrm{Bernoulli}(0.5),
\]
\[
X = \Phi\!\big( V + D Z  \big),  \qquad W = \Phi(Z), \qquad
h_0(x) = \sin(4x)\log(x), \qquad
Y = h_0(X) + U.
\]

For the multivariate design with $d=5$, we draw
\[
\begin{bmatrix} U \\ V \end{bmatrix}
\sim \mathcal N\!\left(
\begin{bmatrix} 0 \\ \mathbf 0_{5} \end{bmatrix},
\begin{bmatrix}
1 & \rho\,\mathbf 1_{5}^\top \\[3pt]
\rho\,\mathbf 1_{5} & I_{5}
\end{bmatrix}
\right), \qquad
Z \sim \mathcal N(0, I_{2}),
\]
where $V=(v_1,\dots,v_5)$ and $\rho=0.75$. Set $w=\Phi(Z)\in(0,1)^2$.  
Each regressor is constructed via round-robin instrument assignment $\mathrm{map}(j)\in\{1,2\}$ (i.e.\ $1,2,1,2,1$) and independent switches $D_j\sim\mathrm{Bernoulli}(0.5)$: $
x_j = \Phi\!\bigl(v_j + D_j\,z_{\mathrm{map}(j)}\bigr) $. The structural function is
\[
h_0(x) = 
\sin(4x_1)\,\log(x_1)
\;+\; 1.5\cos(\pi x_2)
\;+\; x_3^2
\;-\; 0.5\,x_4x_5,
\]
and the outcome is $
Y = h_0(x) + \sqrt{d}\,U. $
\subsection{CNS}
We start with the univariate design in \cite*{chernozhukov2015constrained}. We draw latent variables $(X^\ast, Z^\ast, \varepsilon)$ jointly normal,
\[
\begin{bmatrix} X^\ast \\ Z^\ast \\ \varepsilon \end{bmatrix}
\sim \mathcal N\!\left(
\begin{bmatrix} 0 \\ 0 \\ 0 \end{bmatrix},
\begin{bmatrix}
1 & 0.5 & 0.3 \\[3pt]
0.5 & 1 & 0 \\[3pt]
0.3 & 0 & 1
\end{bmatrix}
\right).
\]
Define $x = \Phi(X^\ast)$ and $w = \Phi(Z^\ast)$. The structural function is $
h_0(x) = 1 - 2\,\Phi(x - 0.5) $,
and the outcome is $
Y = h_0(x) + \varepsilon.$
For the multivariate design with $d=5$, we draw \begin{align*}
\begin{bmatrix}
X^\ast \\ Z^\ast \\ \varepsilon
\end{bmatrix}
&\sim \mathcal N\!\left(
\mathbf 0_{\,d+3},\; \Sigma
\right), {\qquad} 
\begin{aligned}
\operatorname{Cov}(X^\ast_j,Z^\ast_1) &= \rho_1 \quad (j=1,2,3),\\
\operatorname{Cov}(X^\ast_j,Z^\ast_2) &= \rho_2 \quad (j=4,5),\\
\operatorname{Cov}(X^\ast_j,\varepsilon) &= \eta \quad (j=1,\dots,5).
\end{aligned}
\end{align*}
and all other covariances equal to $0$.
Here $\rho_1=\rho_2=0.5$ and $\eta=0.3$. We set $
x = \Phi(X^\ast) \in (0,1)^5 $ and
$w = \Phi(Z^\ast) \in (0,1)^2 $. The structural function is
$
h_0(x) = \sum_{j=1}^5 \bigl(1 - 2\,\Phi(x_j - 0.5)\bigr),
$ and the outcome is
$
Y = h_0(x) + \sqrt{d}\,\varepsilon.
$.

\subsection{CW}
We start with the univariate design in \cite{chetverikov2017nonparametric}. Fix parameters $\sigma>0$, $\rho\in(-1,1)$, and $\eta\in(-1,1)$.
Let $\zeta,\varepsilon,\nu \sim \mathcal N(0,1)$ be independent.
Define
\[
w = \Phi(\zeta), 
\qquad 
x = \Phi\!\bigl(\rho \zeta + \sqrt{1-\rho^2}\,\varepsilon\bigr),
\qquad
\epsilon = \sigma\bigl(\eta \varepsilon + \sqrt{1-\eta^2}\,\nu \bigr).
\]
The structural function is $
h_0(x) = 2\,(x-0.5)_+^2 + 0.5\,x, $ and the outcome is $
Y = h_0(x) + \epsilon$. This design uses $\sigma=0.5$, $\rho=0.3$, and $\eta=0.3$. For the multivariate version with $d=5$, fix $\sigma>0$, $\rho_1,\rho_2\in(-1,1)$, and $\eta\in(-1,1)$.
Let $\zeta=(\zeta_1,\zeta_2)^\top\sim\mathcal N(0,I_2)$, $\nu\sim\mathcal N(0,1)$, and
$\varepsilon_x=(\varepsilon_{x1},\dots,\varepsilon_{xd})^\top\sim\mathcal N(0,I_d)$.
Set the instruments and regressors
\[
w=\Phi(\zeta)\in(0,1)^2,\qquad
x_j=
\begin{cases}
\Phi\!\bigl(\rho_1\,\zeta_1+\sqrt{1-\rho_1^2}\,\varepsilon_{xj}\bigr), & j=1,2,3,\\[4pt]
\Phi\!\bigl(\rho_2\,\zeta_2+\sqrt{1-\rho_2^2}\,\varepsilon_{xj}\bigr), & j=4,5,
\end{cases}
\]
Define the composite error $
\epsilon=\sigma\Bigl(\eta\sum_{j=1}^d \varepsilon_{xj}+\sqrt{1-\eta^2}\,\nu\Bigr) $
and the structural function
\[
h_0(x)=\sum_{j=1}^d\!\Bigl(2\,(x_j-0.5)_+^2+0.5x_j\Bigr)
\;+\; x_3x_4 \;+\; \log\!\bigl(1+x_1x_2x_5\bigr),
\]
The outcome is $
Y = h_0(x) + \sqrt{d}\,\epsilon.$ The design uses $\sigma=1$, $\rho_1=\rho_2=0.3$, $\eta=0.3$.

\subsection{S}
We start with the univariate design in \cite{santos2012inference}. 
\begin{align*}
\begin{bmatrix} X^\ast \\ Z^\ast \\ \varepsilon^\ast \end{bmatrix}
&\sim \mathcal N\!\left(
\begin{bmatrix} 0 \\ 0 \\ 0 \end{bmatrix},
\begin{bmatrix}
1 & 0.5 & 0.3 \\[3pt]
0.5 & 1 & 0 \\[3pt]
0.3 & 0 & 1
\end{bmatrix}
\right), \quad
\begin{aligned}
x &= 2\bigl(\Phi(X^\ast/3)-0.5\bigr), \\
w &= 2\bigl(\Phi(Z^\ast/3)-0.5\bigr), \\
\epsilon &= \varepsilon^\ast
\end{aligned}
\end{align*}
The structural function is $
h_0(x)=2\sin(\pi x) $, and the outcome is $
Y = h_0(x) + \epsilon. $

 For the multivariate design with $d=5$, let the latent vector $
(X_1^\ast,\dots,X_d^\ast, Z_1^\ast,Z_2^\ast,\varepsilon)^\top
\sim \mathcal N(0,\Sigma),$
where $\Sigma$ is defined by $
\operatorname{Cov}(X_j^\ast, Z^\ast_{\mathrm{map}(j)}) = \rho, \; \;
\operatorname{Cov}(X_j^\ast,\varepsilon) = \eta $ with all other covariances zero, and $\mathrm{map}(j)\in\{1,2\}$ is the round-robin assignment $(1,2,1,2,1)$. We set $\rho=0.5$ and $\eta=0.5$

 Let $
x_j = 2\bigl(\Phi(X_j^\ast/3)-0.5\bigr), \; \; \;
w_k = 2\bigl(\Phi(Z_k^\ast/3)-0.5\bigr).$ The structural function is $$
h_0(x) = \sin(\pi x_1) + 0.5\sin\!\bigl(\pi(x_3-x_2)\bigr) + 0.5\cos\!\bigl(\pi(x_5-x_4)\bigr) $$
and the outcome is $
Y = h_0(x) + \sqrt{d}\,\varepsilon $.

\section{Appendix : Implementation} \label{implem-append}
Let $X_i = (X_{i1},\dots,X_{id})^\top \in \mathbb{R}^d$ denote the observed regressors. 
For each coordinate $j$, define $
\widehat{u}_{n,j} = \E_n[X_j]$ and $
\widehat{\sigma}_j = \sqrt{\Var_n(X_j)}.$ Denote the ``normalized" grid by:
\[
\mathcal{X}_n
= \left\{ 
\left( 
\frac{X_{i1}-\widehat{u}_{n,1}}{\widehat{\sigma}_1},\; \dots,\; 
\frac{X_{id}-\widehat{u}_{n,d}}{\widehat{\sigma}_d} 
\right)^\top : i=1,\dots,n 
\right\}.
\]
Let \(G\) denote the Gaussian process arising from the prior \(d\mu(\cdot)\). 
For posterior computation, it suffices to work with the finite-dimensional vector $ \mathcal{G} = \{ G(x) : x \in \mathcal{X}_n \} $, as the likelihood depends on \(G\) only 
through its evaluations at the design points. Given $\sigma > 0$ and $\ell \in \mathbb{R}_+^d$, define the scaled process $$ G_{\theta} = \sigma G(\ell^{-1} x). $$
Here, $\theta = (\sigma, \ell)$, where $\sigma \in \mathbb{R}_{+}$ denotes the signal variance and $\ell \in \mathbb{R}_{+}^d$ the length-scale parameter. Intuitively, $\sigma$ controls the vertical scale of the process, while $\ell$ controls the rate at which correlations decay with distance. In multivariate settings, the length scales can also be interpreted as measures of the regressors relative importance in modeling the structural function. The theoretical properties of $G_{\theta}$, for any fixed $\theta$, are similar to those of the base process. If the regressors are normalized, a reasonable choice is to set  $\ell=1$ and fix $\sigma$ near the scale of the response. In practice, these hyperparameters are often partially tuned using the observed data. For example, in Gaussian regression, $\theta$ is typically selected by maximizing the Bayesian marginal likelihood \citep{williams2006gaussian}.

We work with normalized regressors in all settings. As an alternative to tuning $\theta$ via the quasi-Bayes marginal likelihood (as discussed in Section \ref{conclus}), we place independent $\mathrm{LogNormal}(0,1)$ priors on $\sigma$ and each coordinate of $\ell = (\ell_1,\dots,\ell_d)$. The hierarchical posterior is then sampled during an exploration phase of $k=10{,}000$ iterations, targeting an acceptance rate of $0.25$ across all parameters. The posterior mean $\widehat{\theta}$ is computed from the second half of the draws, after which we perform full posterior sampling from the quasi-Bayes posterior $d\mu (h \mid \widehat{\theta},\mathcal{D}_n)$.

Details on the posterior sampling scheme are as follows. 
We represent \(G_{\theta}\), viewed as a process on \(\mathcal{X}_n\), 
in its non-centered parametrization:
\[
G_{\theta} = \sigma L_{\ell} z,
\]
where \(L_{\ell}\) is the \(n \times n\) Cholesky matrix
(depending on the length-scale parameter \(\ell\)), and 
\(z \sim N(0, I_n)\) is a standard Gaussian vector. 
The parameters \(\sigma\) and \(L_{\ell}\) are updated using standard Metropolis steps, 
while \(z\) is updated using preconditioned Crank--Nicolson (pCN) proposals \citep{cotter2013mcmc,nickl2023bayesian}. Once we obtain posterior samples from the quasi-Bayes posterior $d\mu (h \mid \widehat{\theta},\mathcal{D}_n)$, the value of the process at any  $x \notin \mathcal{X}_n $ is computed using the standard Gaussian kriging interpolation formula (see, e.g. \citealp{ghosal2017fundamentals}).

In settings where $|\mathcal{X}_n|$ is very large, recomputing $L_{\ell}$ at each new proposal of $\ell$ in the Markov chain can be computationally expensive during the exploration phase. There are a variety of methods to deal with this, but a simple and widely used approach is to employ a sparse GP approximation by defining the process over a smaller set of inducing points $\mathcal{Z}_n$, with $|\mathcal{Z}_n| \ll |\mathcal{X}_n|$. The value of the process at any $x \notin \mathcal{Z}_n$ can be then be efficiently computed using the kriging interpolation formula. A popular strategy is to select $\mathcal{Z}_n$ using $k$-means clustering on $\mathcal{X}_n$. Once the hyperparameters $\widehat{\theta} = (\widehat{\sigma}, \widehat{\ell})$ have been estimated, full posterior sampling can then be performed directly on $\mathcal{X}_n$, since $L_{\widehat{\ell}}$ is fixed and no longer needs to be recomputed.

\subsection{Simulations}
All simulations use a Whittle–Matérn Gaussian process with regularity $\alpha = 3/2$. The hyperparameter $\hat{\theta}$ is computed using the full grid $\mathcal{X}_n$. The first stage is computed using thin-plate regression splines \citep{wood2003thin} of dimension $K$. For univariate designs we set $K \in \{5,7,10 \}$, while for multivariate designs we use $K=15$. 

\subsection{Empirics}
The empirical application in Section 6 employs a Whittle--Matérn Gaussian process with regularity $\alpha = 3/2$. The hyperparameter $\widehat{\theta}$ is computed using $k$-means clustering to select 2000 inducing points from the set of unique $(l,k)$ pairs in the data. Both the first stage and the conditional mean function $\widehat{\Phi}_t(\cdot)$ are estimated using thin-plate regression splines with dimension $K = 15$. For any input 
function \(\tilde{F}\), define the estimated residual:
\[
\widehat{\omega}_{i,t}(\tilde{F}) 
= \widehat{\Phi}_{t}(l_{it}, k_{it}, m_{it}) - \tilde{F}(l_{it}, k_{it}).
\]
The output of the univariate conditional mean \(g(\cdot)\) is obtained by 
regressing \(\widehat{\omega}_{i,t}(\tilde{F})\) on 
\(\widehat{\omega}_{i,t-1}(\tilde{F})\). The conventional approach \citep*{ackerberg2015identification,gandhi2020identification} is to specify this regression as either an autoregression or a low-degree polynomial. We follow this strategy, employing a second-degree polynomial specification. Note that this regression is performed separately for each function proposal \(\tilde{F}\).

\section{Appendix : General Theory} \label{gentheory}
This section develops a generic contraction result that will later be applied in the derivation of our main results.

\subsection{Assumptions} \label{main-ass}
We state and  discuss the assumptions that we impose on the model and prior. Throughout this section, let $\mu_n$ denote a, possibly data dependent, prior that is supported on a class of functions $\mathcal{H}_n$. Let $(\epsilon_n)_{n=1}^{\infty} $ denote a deterministic sequence of positive constants that converge to zero at a slower than parametric rate : $\epsilon_n \downarrow 0 $ and $n \epsilon_n^2 \uparrow \infty$. 

\begin{assumption}[Sampling Uncertainty] \label{sampling-uncert}
There exists a deterministic (possibly sample size $n$ dependent) function $ \widetilde{m}(W,h) $, a set $\mathcal{S}_n \subseteq \mathcal{H}_n$ and a universal constant $D > 0 $, such that \begin{align*}
     \mathbb{P} \bigg( \sup_{h \in \mathcal{S}_n} \left|  \E_n \big( \| \widehat{m}(W,h)  \|_{\ell^2}^2 \big)   - \E \big( \| \widetilde{m} (W,h)  \|_{\ell^2}^2  \big)  \right|  > D \epsilon_n^2  \bigg) \rightarrow 0 .
\end{align*}
\end{assumption}
Assumption \ref{sampling-uncert} provides bounds on the sampling uncertainty arising from the fact that the true population distribution of $ \mathcal{D} = (Y,X,W) $ is unknown. Typically, \(\widetilde{m}\) is a suitable population analog of \(\widehat{m}\). For instance, with a first stage sieve estimator as in Section \ref{first-stage}, it is natural to set $\widetilde{m}(W,h) = \Pi_K [ m(W,h)]$ where $\Pi_K$ is a population projection operator.\footnote{Denote by $\mathcal{V}_K$, the linear space spanned by the basis functions $ \{b_1(W),\dots,b_K(W)  \} $. Then $\Pi_K (.)  $ is the $L^2(\mathbb{P})$ orthogonal projection onto $\mathcal{V}_K$. } 

The \(\mathcal{S}_n\) typically represents a ball (in an suitable metric) that is centered around a fixed function \( h_n \). The verification of Assumption \ref{sampling-uncert} then largely reduces to applying suitable empirical process techniques to control the deviation of the empirical mean from the population expectation. In some cases, the set \(\mathcal{S}_n\) also includes certain Sobolev-type norm constraints, which aid in controling the sampling uncertainty when $m(W,h)$ is highly nonlinear in $h$.
\begin{assumption}[Weak Bias] \label{weak-bias} Let $\widetilde{m}(.)$ be as in Assumption \ref{sampling-uncert}. For some function $h_n \in \mathcal{H}_n$ and a universal constant $D > 0 $, we have  \begin{align*} & (i) \; \; \;
   \E \big( \| \widetilde{m}(W,h_n) - m(W,h_n)    \|_{\ell^2}^2 \big) \leq D \epsilon_n^2\;, \\ & (ii) \; \;  \E \big( \| m(W,h_n) - m(W,h_0)    \|_{\ell^2}^2 \big) \leq D \epsilon_n^2.
\end{align*}
\end{assumption}
Assumption \ref{weak-bias} imposes bounds on the bias between \(\widetilde{m}\) and \(m\) at the fixed choice \(h_n\), as well as the bias between \(h_n\) and \(h_0\) with respect to the weak metric $$
d_{w}^2(h_0,h_n) = \mathbb{E} \big( \| m(W,h_n) - m(W,h_0) \|_{\ell^2}^2 \big). $$ In some settings, it is natural to set \( h_n = h_0 \) if the true structural function \( h_0 \) is already in the support of the prior. This will be the case when we specialize to Gaussian process priors in  Section \ref{gp-prior}.
\begin{assumption}[Local Concentration] \label{loc-conc}
Let $\widetilde{m}(W,h)$ and $\mathcal{S}_n$  be as in Assumption \ref{sampling-uncert}. For some set $\mathcal{R}_n \supseteq \mathcal{S}_n$, we have  \begin{align*}
    & (i) \; \; \; \; \; \; \mu_n(h \in \mathcal{R}_n   ) \geq c \exp(- C' n \epsilon_n^2) \\ & (ii) \; \; \; \; \; \mu_n(h \in  \mathcal{R}_n \setminus \mathcal{S}_n   ) \leq C \exp( - B n \epsilon_n^2  ) \\ & (iii) \; \; \;   \sup_{h \in \mathcal{S}_n} \E \big( \|  \widetilde{m}(W,h) - \widetilde{m}(W,h_n)    \|_{\ell^2}^2 \big) \leq D \epsilon_n^2. 
\end{align*}
where $c,C,C',B,D > 0$ are universal constants with $B > C'$.
\end{assumption}
The set \( \mathcal{R}_n \) in Assumption \ref{loc-conc} is introduced to provide some flexibility when direct verification of a local concentration bound is challenging for the $\mathcal{S}_n$ in Assumption \ref{sampling-uncert}. In such cases, $\mathcal{R}_n$ relaxes certain restrictions (e.g. Sobolev norm constraints) imposed on  \( \mathcal{S}_n \). Assumption \ref{loc-conc}\((ii)\) further requires that the subset of \( \mathcal{R}_n \) where these restrictions fail to hold is sufficiently negligible. Typically, \( \mathcal{R}_n \) is a small ball (in a suitable metric) around $h_n$. Assumption \ref{loc-conc}\((i)\) then imposes a standard small ball local concentration condition on the prior.

\subsection{Results} \label{main-res}
In this section, we verify that the quasi-Bayes posterior in (\ref{qb-general}) asymptotically concentrates on local neighborhoods of the structural function. 

Given a vector-valued function $g(W)$ and a positive semi-definite weighting matrix $\Sigma(W)$, we define the weighted empirical mean square norm by $ \|  g(W)  \|_{L^2(\mathbb{P}_n,{\Sigma})} = \sqrt{\E_n\big[ g(W)' \Sigma(W) g(W)   \big]} $. We use this norm to induce a first stage  weak metric on structural functions via \begin{align}
    \label{weak-metric} d_{w,\mathbb{P}_n}(h,h_0) = \| \widehat{m}(W,h)  - m(W,h_0)  \|_{L^2(\mathbb{P}_n, \widehat{\Sigma})}.
\end{align}

\begin{theorem}[Weak Contraction]
\label{main-contract} Suppose $\mathbb{P}(\lambda_{\max}(\widehat{\Sigma}(W)) \leq D) \rightarrow 1$ for some universal constant $D > 0$. If Assumptions \ref{sampling-uncert}-\ref{loc-conc} hold with a sequence $\epsilon_n \rightarrow 0$, then there exists a universal constant $L > 0$ such that \begin{equation}
    \label{contraction-main}  \mu_n \big ( h :  \| \widehat{m}(W,h)  - m(W,h_0)  \|_{L^2(\mathbb{P}_n,\widehat{\Sigma})}  > L \epsilon_n  \: \big| \: \mathcal{D}_n  \big ) \xrightarrow{\mathbb{P}} 0.
\end{equation}
\end{theorem}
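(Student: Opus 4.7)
The plan is to follow the classical Ghosal--Ghosh--van der Vaart posterior contraction strategy, adapted to the quasi-Bayes setting where direct exponential bounds on the quasi-likelihood replace the usual testing arguments. Let $\ell_n(h) = \|\widehat{m}(W,h)\|^2_{L^2(\mathbb{P}_n,\widehat{\Sigma})}$ denote the quasi-Bayes criterion, and write
\[
\mu_n(A_L \mid \mathcal{D}_n) = \frac{N_n(A_L)}{D_n}, \qquad
N_n(B) = \int_B e^{-\frac{n}{2}\ell_n(h)}\,d\mu_n(h), \qquad D_n = N_n(\mathcal{H}_n),
\]
where $A_L = \{h : \|\widehat{m}(W,h) - m(W,h_0)\|_{L^2(\mathbb{P}_n,\widehat{\Sigma})} > L\epsilon_n\}$. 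Since $h_0$ satisfies $m(W,h_0) = \mathbf{0}$ by the defining conditional moment restriction, the bad set $A_L$ reduces to $\{\ell_n(h) > L^2\epsilon_n^2\}$. All subsequent steps take place on the high-probability event $\mathcal{E}_n$ on which Assumption~\ref{sampling-uncert} holds and $\lambda_{\max}(\widehat{\Sigma}(W)) \leq D$.

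For the denominator, I would first combine Assumptions~\ref{loc-conc}(i)--(ii) with the strict inequality $B > C'$ to conclude
\[
\mu_n(\mathcal{S}_n) \;\geq\; \mu_n(\mathcal{R}_n) - \mu_n(\mathcal{R}_n \setminus \mathcal{S}_n) \;\geq\; \tfrac{c}{2}\,e^{-C' n \epsilon_n^2}
\]
for $n$ large. For $h \in \mathcal{S}_n$, I would chain Assumption~\ref{loc-conc}(iii) with Assumption~\ref{weak-bias}(i)--(ii) via the triangle inequality to bound $\mathbb{E}\|\widetilde{m}(W,h)\|^2_{\ell^2} \leq C_1 \epsilon_n^2$; Assumption~\ref{sampling-uncert} would then transfer this to $\mathbb{E}_n \|\widehat{m}(W,h)\|^2_{\ell^2} \leq C_2 \epsilon_n^2$ on $\mathcal{E}_n$, and the spectral bound on $\widehat{\Sigma}$ upgrades it to $\ell_n(h) \leq K \epsilon_n^2$ uniformly over $\mathcal{S}_n$. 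Hence
\[
D_n \;\geq\; \int_{\mathcal{S}_n} e^{-\frac{n}{2}\ell_n(h)}\, d\mu_n(h) \;\geq\; \tfrac{c}{2}\, e^{-(C' + K/2)\, n\epsilon_n^2}.
\]

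The numerator bound is immediate: on $A_L$, $\ell_n(h) > L^2\epsilon_n^2$, so $N_n(A_L) \leq e^{-nL^2\epsilon_n^2/2}$. Combining yields
\[
\mu_n(A_L \mid \mathcal{D}_n) \;\leq\; \tfrac{2}{c}\, e^{-\frac{n\epsilon_n^2}{2}\left(L^2 - 2C' - K\right)},
\]
which tends to zero on $\mathcal{E}_n$ once $L$ is chosen so that $L^2 > 2C' + K$. The main technical obstacle is the uniform control $\ell_n(h) \leq K\epsilon_n^2$ on $\mathcal{S}_n$: it requires carefully assembling the four bounds provided by Assumptions~\ref{sampling-uncert}, \ref{weak-bias}(i)--(ii) and \ref{loc-conc}(iii), while simultaneously converting between weighted and unweighted $\ell^2$ norms and between empirical and population expectations. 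A secondary subtlety is the interplay between $\mathcal{R}_n$ (where prior mass is guaranteed) and $\mathcal{S}_n$ (where sampling deviations are controlled), which is precisely the role of Assumption~\ref{loc-conc}(ii) combined with the exponent gap $B > C'$.
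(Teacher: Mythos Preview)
Your proposal is correct and follows essentially the same approach as the paper: both arguments lower-bound the denominator by restricting to $\mathcal{S}_n$, chaining Assumptions~\ref{sampling-uncert}--\ref{loc-conc} to show $\ell_n(h)\lesssim\epsilon_n^2$ there and invoking the $\mathcal{R}_n/\mathcal{S}_n$ mass comparison with $B>C'$, then upper-bound the numerator directly from $m(W,h_0)=\mathbf{0}$ and the definition of $A_L$. The only cosmetic difference is ordering---the paper applies Assumption~\ref{sampling-uncert} first and then the triangle-inequality chain through $h_n$, whereas you bound $\mathbb{E}\|\widetilde m\|^2$ first and transfer afterward---but the logic and constants match.
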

Theorem \ref{main-contract} establishes contraction with the respect to the  weak  metric $d_{w} (h,h_0)  $. The interpretation of this convergence varies from model to model, but in general, it is meant to be interpreted as a preliminary contraction that can then be subsequently used to deduce results in a stronger metric. In particular, if (\ref{contraction-main}) holds and the bulk of the posterior mass is contained in a well-behaved subset, it is often possible to deduce results in a stronger metric like   $ d(h,h_0) = \|  h- h_0  \|_{L^2} $. To fix ideas, given a metric $d(.)$ and a class of functions $\mathcal{G}_n \subseteq \mathcal{H}_n$, we define the modulus of continuity by \begin{equation*}  \omega_n( d, \mathcal{G}_n, \epsilon) = \sup \{ d(h,h_0) :  h \in \mathcal{G}_n, \: \| \widehat{m}(W,h)  - m(W,h_0)  \|_{L^2(\mathbb{P}_n, \widehat{\Sigma})} \leq \epsilon          \}  .    \end{equation*}
The modulus of continuity is frequently used to characterize the convergence rate in inverse problems (see e.g. \citealp{chen2012estimation}; \citealp{knapik2018general}). The following result is a straightforward consequence of Theorem \ref{main-contract}.
\begin{corollary}[Contraction]
\label{contract-strong}
    Suppose the hypothesis of Theorem \ref{main-contract} holds. Let $\mathcal{G}_n$ be any subset of functions for which  $$ \mu_n(h \notin \mathcal{G}_n : \| \widehat{m}(W,h)  - m(W,h_0)  \|_{L^2(\mathbb{P}_n)} \leq L \epsilon_n )  \leq  C \exp(-D' n \epsilon_n^2) $$ 
 holds for some $C > 0$ and a sufficiently large $D' > 0$. Then  \begin{equation}
       \mu_n \big( h \in \mathcal{G}_n : d(h,h_0) \leq \omega_n(d,\mathcal{G}_n,L\epsilon_n  ) \: \big| \:  \mathcal{D}_n   \big) \xrightarrow{\mathbb{P}} 1.
    \end{equation}
\end{corollary}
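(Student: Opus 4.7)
The plan is to bootstrap the weak-metric contraction of Theorem~\ref{main-contract} to the stronger distance $d(\cdot,\cdot)$ by exploiting the definition of the modulus of continuity together with a standard evidence lower bound derived from Assumption~\ref{loc-conc}(i). First, I would rewrite the complement of the target event as
\begin{align*}
A^c \;=\; \underbrace{\{h \in \mathcal{G}_n :\, d(h,h_0) > \omega_n(d,\mathcal{G}_n,L\epsilon_n)\}}_{=:B_1} \;\cup\; \underbrace{\{h \notin \mathcal{G}_n\}}_{=:B_2},
\end{align*}
and show separately that both $\mu_n(B_1 \mid \mathcal{D}_n)$ and $\mu_n(B_2 \mid \mathcal{D}_n)$ vanish in probability.

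The piece $B_1$ is immediate: by the very definition of the modulus, any $h \in \mathcal{G}_n$ with $d(h,h_0) > \omega_n(d,\mathcal{G}_n,L\epsilon_n)$ must satisfy $\|\widehat m(W,h) - m(W,h_0)\|_{L^2(\mathbb{P}_n,\widehat\Sigma)} > L\epsilon_n$. Thus $B_1$ is contained in the weak-contraction complement of Theorem~\ref{main-contract}, and that theorem directly yields $\mu_n(B_1 \mid \mathcal{D}_n) \xrightarrow{\mathbb{P}} 0$. For $B_2$, I would further split it as $B_2 = \big(B_2 \cap \{\|\widehat m - m(W,h_0)\|_{L^2(\mathbb{P}_n,\widehat\Sigma)} > L\epsilon_n\}\big) \cup \big(B_2 \cap \{\|\widehat m - m(W,h_0)\|_{L^2(\mathbb{P}_n,\widehat\Sigma)} \leq L\epsilon_n\}\big)$. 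The first piece is again absorbed by Theorem~\ref{main-contract}; for the second I would write the posterior probability as a ratio, bound the numerator by $\mu_n(B_2 \cap \{\text{weak}\le L\epsilon_n\}) \leq C e^{-D' n\epsilon_n^2}$ (using $L(h) \leq 1$ and the hypothesis of the corollary), and lower bound the denominator by $c\,e^{-C' n \epsilon_n^2}$ on an event of probability tending to one. This evidence bound is the standard small-ball computation already embedded in the proof of Theorem~\ref{main-contract}: Assumption~\ref{loc-conc}(i) provides a prior lower bound of order $e^{-C' n \epsilon_n^2}$ on the local ball $\mathcal{R}_n$, while Assumptions~\ref{sampling-uncert} and~\ref{weak-bias} ensure that $\mathbb{E}_n[\|\widehat m(W,h)\|_{\ell^2}^2] \lesssim \epsilon_n^2$ uniformly over $\mathcal{R}_n$, so the integrand $L(h)$ is bounded below there. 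Taking the ratio yields $(C/c)\exp(-(D' - C') n \epsilon_n^2)$, which vanishes provided $D' > C'$---the precise quantitative meaning of the ``sufficiently large $D'$'' qualifier in the statement.

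Since the hard weak-metric contraction has already been established in Theorem~\ref{main-contract}, there is no serious technical obstacle in this corollary; it is a routine upgrade to a stronger metric via the modulus-of-continuity device, together with a direct application of the evidence bound used to prove Theorem~\ref{main-contract}. The only point warranting mild attention is reconciling the norms in the statement: $\omega_n$ is defined through the weighted metric $L^2(\mathbb{P}_n,\widehat\Sigma)$, whereas the hypothesis of the corollary is stated in the unweighted $L^2(\mathbb{P}_n)$ norm. Because $\lambda_{\max}(\widehat\Sigma(W)) \leq D$ with probability tending to one, the two are equivalent up to a fixed multiplicative constant on that event, and in applications where Condition~\ref{fsbasis} is in force one also has a matching lower bound on $\lambda_{\min}(\widehat\Sigma)$; any residual discrepancy can be absorbed into $L$. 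This is purely bookkeeping and does not affect the conclusion.
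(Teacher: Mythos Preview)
Your proposal is correct and follows essentially the same route as the paper: both arguments combine the weak-contraction result of Theorem~\ref{main-contract} with the evidence lower bound from its proof to kill the piece $\{h\notin\mathcal G_n,\ \text{weak metric}\le L\epsilon_n\}$, and then invoke the definition of the modulus. Your observation about the mismatch between the unweighted $L^2(\mathbb P_n)$ norm in the hypothesis and the weighted $L^2(\mathbb P_n,\widehat\Sigma)$ norm in $\omega_n$ is a genuine bookkeeping point that the paper glosses over, and your resolution via the eigenvalue bound on $\widehat\Sigma$ is the right one.
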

Corollary \ref{contract-strong} provides contraction rates in terms of the modulus $\omega_n(d,\mathcal{G}_n,L \epsilon_n)$. The constant $D'$, which regulates the decay of mass on $\mathcal{G}_n^c$, is required to be larger than some of the preceding constants that appear in Assumption \ref{sampling-uncert} - \ref{loc-conc}. Usually, the set $\mathcal{G}_n$ is chosen as a function of $D'$ so as to ensure the desired bound holds trivially.

\section{Appendix : Proofs} \label{proofs}
We denote by $\widehat{G}_{b,K}^{o}$ the matrix \begin{align}
\label{gbko}  \widehat{G}_{b,K}^{o} =  G_{b,K}^{-1/2} \widehat{G}_{b,K} G_{b,K}^{-1/2} \; \end{align}

In this section, we provide proofs for all the main results.

\begin{lemma}
\label{aux1}
Suppose Condition \ref{fsbasis}(i) holds. Then, for every sieve dimension $K$ and $t > 0 $, we have that
 $$  \mathbb P \left( \|\widehat{G}_{b,K}^{o}- I_{K}  \|_{op} > t \right)  \leq  2 K \exp \bigg( - \frac{t^2/2}{\zeta^2_{b,K}/n + 2 \zeta^2_{b,K} t/(3n) }      \bigg).   $$
 
\end{lemma}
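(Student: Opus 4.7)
The plan is to prove the bound as a direct application of the matrix Bernstein inequality (e.g. Tropp, 2012) after centering and standardizing the basis.

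First I would define the standardized basis $\tilde{b}(W) = G_{b,K}^{-1/2} b^K(W)$, which satisfies $\E[\tilde{b}(W)\tilde{b}(W)'] = I_K$ and, by Condition \ref{fsbasis}$(i)$, $\sup_{w \in \mathcal{W}} \|\tilde{b}(w)\|_{\ell^2} \leq \zeta_{b,K}$. Writing
\begin{equation*}
\widehat{G}_{b,K}^{o} - I_K = \frac{1}{n} \sum_{i=1}^{n} \xi_i, \qquad \xi_i = \tilde{b}(W_i)\tilde{b}(W_i)' - I_K,
\end{equation*}
produces a sum of independent mean-zero symmetric $K \times K$ random matrices, to which matrix Bernstein applies.

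Next I would obtain the two ingredients needed by Bernstein. For the almost sure bound, since $\tilde{b}(W)\tilde{b}(W)'$ is rank one with single nonzero eigenvalue $\|\tilde{b}(W)\|_{\ell^2}^2 \leq \zeta_{b,K}^2$, the triangle inequality gives $\|\xi_i\|_{op} \leq \zeta_{b,K}^2 + 1 \leq 2\zeta_{b,K}^2$ (one may assume $\zeta_{b,K}^2 \geq 1$, otherwise the statement is trivial since $I_K$ has rank $K \geq 1$). For the variance proxy,
\begin{equation*}
\E[\xi_i^2] = \E\!\bigl[\|\tilde{b}(W)\|_{\ell^2}^2\, \tilde{b}(W)\tilde{b}(W)'\bigr] - I_K \;\preceq\; \zeta_{b,K}^2\, \E[\tilde{b}(W)\tilde{b}(W)'] = \zeta_{b,K}^2 I_K,
\end{equation*}
so $\bigl\|\sum_{i=1}^n \E[\xi_i^2]\bigr\|_{op} \leq n\zeta_{b,K}^2$.

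Finally, applying matrix Bernstein to $\sum_i \xi_i$ at the level $nt$ yields
\begin{equation*}
\mathbb{P}\!\left(\|\widehat{G}_{b,K}^{o} - I_K\|_{op} > t\right) \leq 2K \exp\!\left(-\frac{(nt)^2/2}{n\zeta_{b,K}^2 + (2\zeta_{b,K}^2)(nt)/3}\right) = 2K \exp\!\left(-\frac{t^2/2}{\zeta_{b,K}^2/n + 2\zeta_{b,K}^2 t/(3n)}\right),
\end{equation*}
which is exactly the claim. There is no real obstacle here; the only care needed is to track the constants (in particular, retaining the factor of $2$ on the Bernstein subexponential term by using the bound $\|\xi_i\|_{op} \leq 2\zeta_{b,K}^2$ rather than the tighter rank-one bound), so that the exponent matches the form stated in the lemma.
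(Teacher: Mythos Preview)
Your proposal is correct and follows essentially the same approach as the paper: both write $\widehat{G}_{b,K}^{o}-I_K$ as a sum of i.i.d.\ centered symmetric matrices built from the orthonormalized basis $G_{b,K}^{-1/2}b^K(W)$, bound the summands by $2\zeta_{b,K}^2$ in operator norm and the variance proxy by $\zeta_{b,K}^2$, and then invoke Tropp's matrix Bernstein inequality. Your variance computation via the PSD ordering $\E[\xi_i^2]\preceq \zeta_{b,K}^2 I_K$ is in fact slightly cleaner than the paper's, and your parenthetical about $\zeta_{b,K}^2\geq 1$ is automatic since $\zeta_{b,K}^2\geq \E\|\tilde b(W)\|_{\ell^2}^2=\operatorname{tr}(I_K)=K\geq 1$.
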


\begin{proof}[Proof of Lemma \ref{aux1}]
Observe that
$$ \widehat{G}_{b,K}^o - I_K  = n^{-1} \sum_{i=1}^n G_{b,K}^{- 1/2} \big \{ b^K (W_i) b^K(W_i)' - \E [b^K(W) b^K(W)' ]       \big \} G_{b,K}^{- 1/2} = \sum_{i=1}^n \Xi_{i}  \; , $$
where  $(\Xi_i)_{i=1}^n$ are i.i.d matrices of dimension $K \times K$. Furthermore, we have that  \begin{align*}
& \| \Xi_i  \|_{op} \leq 2 n^{-1} \zeta_{b,K}^2 \; , \\ & \| \E[\Xi_i \Xi_i']   \|_{op} \leq n^{-2} \| \E [ G_{b,K}^{- 1/2} b^K(W)  b^K(W)' G_{b,K}^{- 1/2}    ]    \|_{op} = n^{-2} \| I_K \|_{op} = n^{-2} \; , \\ & \| \E[\Xi_i' \Xi_i]   \|_{op} \leq n^{-2} \big | \E[ b^K(W)' G_{b,K}^{-1} b^K(W)  ]    \big| \leq n^{-2} \zeta_{b,K}^2 .
\end{align*}
The claim follows from using these bounds in an application of \citep[Theorem 1.6]{tropp}.
\end{proof}

\begin{lemma}
\label{aux2}
 Suppose Condition \ref{fsbasis}(i) holds. Let $\bar{K}_{\max}  = \bar{K}_{\max,n} $ denote a sequence  that satisfies $\bar{K}_{\max} \uparrow \infty$ and $\bar{K}_{\max}  \log( \bar{K}_{\max} ) /n \downarrow 0$. Then, there exists a universal constant $D < \infty$ such that $$  \mathbb{P} \bigg( \sup_{K \in \mathbb{N} :K \leq \bar{K}_{\max}} \| \widehat{G}_{b,K}^o- I_K \|_{op} \leq    D  \frac{\sqrt{  \bar{K}_{\max} } \sqrt{\log  \bar{K}_{\max} }}{\sqrt{n}}    \bigg) \rightarrow 1 . $$

\end{lemma}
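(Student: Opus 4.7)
The plan is to bootstrap the pointwise matrix Bernstein bound of Lemma \ref{aux1} into a uniform bound by a union bound over $K \in \{1, \dots, \bar{K}_{\max}\}$. The crucial input is Condition \ref{fsbasis}$(i)$, which gives $\zeta_{b,K}^2 \lesssim K$, so the Bernstein denominator in Lemma \ref{aux1} can be controlled uniformly in $K \leq \bar{K}_{\max}$ by a quantity of order $\bar{K}_{\max}/n$.

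Set $t_n = D \sqrt{\bar{K}_{\max} \log(\bar{K}_{\max})/n}$ for a constant $D$ to be determined. The hypothesis $\bar{K}_{\max} \log(\bar{K}_{\max})/n \downarrow 0$ implies $t_n = o(1)$, so the linear-in-$t$ correction $2\zeta_{b,K}^2 t_n/(3n)$ in the denominator is dominated asymptotically by $\zeta_{b,K}^2/n$. Substituting into Lemma \ref{aux1} and using $\zeta_{b,K}^2 \leq C K$, we obtain, for every $K \leq \bar{K}_{\max}$ and $n$ sufficiently large,
\[
\mathbb{P}\!\left(\|\widehat{G}_{b,K}^{o} - I_K\|_{op} > t_n\right) \;\leq\; 2K \exp\!\left(-\frac{c\, D^2 \bar{K}_{\max} \log \bar{K}_{\max}}{K}\right) \;\leq\; 2K\, \bar{K}_{\max}^{-c D^2},
\]
where $c > 0$ is an absolute constant and the final inequality uses $\bar{K}_{\max}/K \geq 1$.

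A union bound over the at most $\bar{K}_{\max}$ admissible values of $K$ then yields
\[
\mathbb{P}\!\left(\sup_{K \leq \bar{K}_{\max}} \|\widehat{G}_{b,K}^{o} - I_K\|_{op} > t_n\right) \;\leq\; \sum_{K=1}^{\bar{K}_{\max}} 2K\, \bar{K}_{\max}^{-c D^2} \;\leq\; \bar{K}_{\max}^{\,2 - c D^2}.
\]
Choosing $D$ large enough that $cD^2 \geq 3$ drives the right-hand side to zero as $n \to \infty$, proving the claim. The only delicate point is calibrating $D$ so as to absorb both the $2K$ prefactor from Lemma \ref{aux1} and the $\bar{K}_{\max}$ terms introduced by the union bound; the rest is a direct specialization of the pointwise estimate.
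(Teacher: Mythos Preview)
Your proof is correct and follows essentially the same route as the paper: apply Lemma \ref{aux1} pointwise, use Condition \ref{fsbasis}(i) to bound $\zeta_{b,K}^2 \leq C K$, exploit $t_n \to 0$ to control the Bernstein denominator, invoke $\bar{K}_{\max}/K \geq 1$ to get a $K$-uniform exponential bound, and finish with a union bound over $K \leq \bar{K}_{\max}$. The paper's version differs only cosmetically in how the constants are tracked (it writes the threshold as $D > \sqrt{8L}$ rather than $cD^2 \geq 3$).
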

\begin{proof}[Proof of Lemma \ref{aux2}]
Lemma \ref{aux1} and a union bound yields \begin{align*}   \mathbb{P} \bigg(\sup_{K \in \mathbb{N} :K \leq  \bar{K}_{\max} } \|\widehat  G_{b,K}^o - I_K  \|_{op} > t   \bigg) & \leq \sum_{K \in \mathbb{N} :K \leq  \bar{K}_{\max}}   \mathbb{P} \left(  \|\widehat  G_{b,K}^o - I\|_{op} > t  \right) \\ & \leq   2 \sum_{K \in \mathbb{N} :K \leq  \bar{K}_{\max}}  K \exp\left\{ - \frac{t^2 /2}{\zeta_{b,K}^2(1+ 2 t  /3)n^{-1}} \right\}  .
\end{align*}

Let $L > 0 $ be such that $\zeta^2_{b,K} \leq L K $ for all $K$ and fix any $D  > \sqrt{8 L} $. Define $t = t_n = D  \sqrt{  \bar{K}_{\max}   \log \bar{K}_{\max}  }/ \sqrt{n}$. Since $t_n \downarrow 0$, there exists $N \in \mathbb{N}$ such that  $2 t_n /3 \leq 1   $ for all $  n > N$. For $n > N$, it follows that \begin{align*}   \sum_{K \in \mathbb{N} :K \leq  \bar{K}_{\max}}  K \exp\left\{ - \frac{t_n^2 /2}{\zeta_{b,K}^2(1+ 2 t_n  /3)/n} \right\} & \leq \bar{K}_{\max}^2 \exp \bigg \{  -  \frac{D^2  \log (\bar{K}_{\max}) }{4L}    \bigg \} \\ & = \exp \bigg \{  \bigg(2- \frac{D^2}{4L} \bigg)  \log(\bar{K}_{\max})  \bigg \}  \\ & \rightarrow 0.
\end{align*}
\end{proof}

\begin{lemma}
\label{emp-b}
Suppose Conditions \ref{residuals}, \ref{residuals2}(i) and \ref{fsbasis}(i) hold. For each fixed $l \in \{ 1 ,  \dots , d_{\rho} \}$ and function $h: \mathcal{X} \rightarrow \R $, define
 \begin{align*} & R_{h,l}^K(Z) =     \big[ G_{b,K}^{-1/2}  b^K(W) \big ]   \rho_{l}(Y,h_{}(X))  .  \end{align*}
Given any $M > 0 $, there exists a universal constant $D = D(M) < \infty$ such that
 \begin{equation}  \sup_{l \in \{1 , \dots ,  d_{\rho} \} }  \E\bigg(     \sup_{h  \in \mathbf{H}^t(M)   }   \|   \E_n [ R_{h,l}^K(Z) ] - \E [ R_{h,l}^K(Z) ]        \|_{\ell^2}  \bigg) \leq D  \frac{\sqrt{K}}{\sqrt{n}} \label{bern1}   \end{equation}
holds for every $K $.
\end{lemma}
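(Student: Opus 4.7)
The plan is to reduce the claim to a per-coordinate empirical process bound and then sum over the $K$ coordinates. Write $g_j^K(w) = \bigl[G_{b,K}^{-1/2} b^K(w)\bigr]_j$, so that $R_{h,l}^K(Z)_j = g_j^K(W)\rho_l(Y,h(X))$; by construction $\E[g_j^K(W)^2] = \bigl(G_{b,K}^{-1/2} G_{b,K} G_{b,K}^{-1/2}\bigr)_{jj} = 1$. Applying Jensen's inequality and then swapping the supremum with the nonnegative coordinate sum yields
\begin{align*}
\Bigl(\E\sup_{h \in \mathbf{H}^t(M)} \|\E_n R_{h,l}^K - \E R_{h,l}^K\|_{\ell^2}\Bigr)^{2}
\;\leq\; \sum_{j=1}^K \E\sup_{h \in \mathbf{H}^t(M)} \bigl(X_h^j\bigr)^{2},
\end{align*}
where $X_h^j = (\E_n - \E)[g_j^K(W)\rho_l(Y,h(X))]$. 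Hence it suffices to prove that each summand is $O(1/n)$ uniformly in $j$ and $K$.

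The next step is to control $\E\sup_h (X_h^j)^2$ via chaining. For fixed $j$, the centered empirical process $h \mapsto X_h^j$ on $\mathbf{H}^t(M)$ has two properties whose constants do not depend on $j$: by Condition \ref{residuals2}$(i)$, the pointwise variance satisfies $\Var(X_h^j) \leq n^{-1} \E[g_j^K(W)^2\, \E[\rho_l^2\mid W]] \leq n^{-1} C_2^2$, and by Condition \ref{residuals} (after passing to the supremum in $w$),
\begin{align*}
\E\bigl[(X_h^j - X_{h'}^j)^2\bigr]
\;\leq\; n^{-1}\,\E\bigl[g_j^K(W)^2\, \E[(\rho_l(Y,h) - \rho_l(Y,h'))^2 \mid W]\bigr]
\;\leq\; n^{-1} C_1^2\,\|h-h'\|_\infty^{2\kappa}.
\end{align*}
Conditions \ref{residuals2}$(ii)$--$(iii)$ supply the moment and local boundedness needed for the Bernstein- or Talagrand-type tail inequality that upgrades a bound on $\E\sup_h |X_h^j|$ to a bound on $\E\sup_h (X_h^j)^2$.

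A Dudley chaining argument then bounds $\E\sup_h |X_h^j|$ by a universal constant times the entropy integral $n^{-1/2}\int_0^D \sqrt{\log N(\mathbf{H}^t(M), \|\cdot\|_\infty,\, \varepsilon^{1/\kappa})}\,d\varepsilon$. Using the Sobolev entropy estimate $\log N(\mathbf{H}^t(M), \|\cdot\|_\infty, \delta) \lesssim \delta^{-d/t}$, a change of variable shows that the integral is finite precisely when $t > d/(2\kappa)$, which is exactly the assumption imposed in Condition \ref{residuals}. This yields $\E\sup_h (X_h^j)^2 \leq D(M)^2/n$ with $D(M)$ independent of $j$ and $K$. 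Summing over $j=1,\dots,K$ and taking square roots delivers the claimed $\sqrt{K}/\sqrt{n}$ bound.

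The main obstacle is the per-coordinate chaining, which must simultaneously accommodate two structurally different regimes: possibly unbounded residuals as in NPIV (handled through the $(2+\epsilon)$ moment in Condition \ref{residuals2}$(ii)$, combined with truncation or a Bernstein inequality), and pointwise-discontinuous residuals as in NPQIV (handled through the weaker $L^2$-Lipschitz exponent $\kappa=1/2$ in Condition \ref{residuals}). Uniformity in $j$ is essential and follows from the normalization by $G_{b,K}^{-1/2}$, which enforces $\E[g_j^K(W)^2] = 1$ and hence rescales the variance and the chaining modulus identically across coordinates.
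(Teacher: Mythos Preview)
Your approach is essentially the paper's: reduce to a coordinatewise empirical-process bound via the normalization $\E[g_j^K(W)^2]=1$, control the first moment by an entropy integral built from $\|\cdot\|_\infty$-coverings of $\mathbf{H}^t(M)$ together with the Sobolev entropy estimate and the condition $t>d/(2\kappa)$, upgrade to the second moment, and then sum.

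There is, however, a gap in the upgrade step. You invoke Conditions~\ref{residuals2}$(ii)$--$(iii)$ to justify a Bernstein- or Talagrand-type tail bound, but the lemma only assumes Condition~\ref{residuals2}$(i)$; parts $(ii)$--$(iii)$ are not available here. Talagrand's inequality requires a uniform sup-norm bound on the class, and Bernstein-type arguments need higher moments or truncation control, neither of which is granted. The paper avoids this entirely by using a Hoffmann--J{\o}rgensen type inequality (specifically \cite[Theorem~2.14.5]{weakc}), which gives $\E\sup_h(X_h^j)^2 \lesssim \bigl(\E\sup_h|X_h^j| + \|F_j\|_{L^2(\mathbb{P})}\bigr)^2$ and therefore requires only the $L^2$ envelope bound $\|F_j\|_{L^2(\mathbb{P})}\le C_2$ supplied by Condition~\ref{residuals2}$(i)$. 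With this substitution your argument goes through.

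A minor related point: your ``Dudley chaining'' should be read as the bracketing maximal inequality (e.g.\ \cite[Theorem~3.5.13]{gine2021mathematical}), since the available increment control is in $L^2(\mathbb{P})$ rather than sub-Gaussian; the paper constructs brackets from $\|\cdot\|_\infty$-covers of $\mathbf{H}^t(M)$ and bounds their $L^2$ size via Condition~\ref{residuals}, yielding the same entropy integral you wrote down.
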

\begin{proof}[Proof of Lemma \ref{emp-b}]

It suffices to verify that (\ref{bern1}) holds for each  $l \in \{1 , \dots , d_{\rho} \}$.  Fix any such $l$.  For ease of notation, we suppress the dependence on $l$ and denote the associated vector by $   R_{h,l}^K(Z) = R_{h}^K(Z)$.  Denote the $j  \in \{ 1 , \dots , K \}$ element of  $R_{h}^K(Z)$ by $[R_{h}^K(Z)]_{j} =   \big[ G_{b,K}^{-1/2}  b^K(W) \big ]_{j}   \rho_{l}(Y,h_{}(X))   $. Observe that
 \begin{align*} &
\E_{} \bigg[ \sup_{h \in \mathbf{H}^t(M)  }  \|   \E_n [R_{h}^K(Z)] - \E_{}[R_{h}^K(Z)]       \|_{\ell^2}^2  \bigg] \\ & =  \frac{1}{n} \E \bigg[   \sup_{h \in \mathbf{H}^t(M)}  \sum_{j=1}^K \left|   \frac{1}{\sqrt{n}} \sum_{i=1}^n\big \{  [R_{h}^K(Z_i)]_{j} -\E_{} ( [R_{h}^K(Z)]_{j} ) \:  \big \}  \right|^2          \:  \bigg]   \\ & \leq  \frac{1}{n} \sum_{j=1}^K \E \bigg[ \sup_{h \in \mathbf{H}^t(M)}   \left|   \frac{1}{\sqrt{n}} \sum_{i=1}^n\big \{  [R_{h}^K(Z_i)]_{j} -\E_{} ( [R_{h}^K(Z)]_{j} ) \:  \big \}  \right|^2      \;  \bigg] \\ & \leq \frac{K}{n} \sup_{j \in \{1 , \dots , K \}}  \E \bigg[ \sup_{h \in \mathbf{H}^t(M)}   \left|   \frac{1}{\sqrt{n}} \sum_{i=1}^n\big \{  [R_{h}^K(Z_i)]_{j} -\E_{} ( [R_{h}^K(Z)]_{j} ) \:  \big \}  \right|^2      \;  \bigg] .
\end{align*}
It suffices to verify that the expectations are uniformly bounded.  Fix any such $j$.  We view  the expectation as a higher moment of an empirical process over the class of functions $$\mathcal{F}  = \{ [R_h^K(Z)]_{j}   : h \in \mathbf{H}^t(M)    \}    . $$
Let $F(Z) =  \sup_{f \in \mathcal{F}} \left| f(Z) \right| $ denote the envelope of $\mathcal{F}$.  Let $C_2(M)$ be as in Condition \ref{residuals2}(i).  By Condition \ref{residuals2}(i) and the observation that $\big[ G_{b,K}^{-1/2}  b^K(W) \big ]_{j}$ has unit $L^2(\mathbb{P})$ norm (by the definition of $G_{b,K}$), the envelope admits the bound
 
  \begin{align*}
\|  F  \|_{L^2(\mathbb{P})}^2 & = \bigg \|    \sup_{h \in \mathbf{H}^t(M)}   \big[ G_{b,K}^{-1/2}  b^K(W) \big ]_{j}   \rho_{l}(Y,h_{}(X))       \bigg \|_{L^2(\mathbb{P})}^2 \\ & \leq  \E \bigg[  \left|   \big[ G_{b,K}^{-1/2}  b^K(W) \big ]_{j}  \right| ^2  \E \bigg[    \sup_{h \in \mathbf{H}^t(M)}                \left| \rho_{l}(Y,h_{}(X)) \right|^2  \bigg| W \bigg]   \bigg] \\ & \leq C_2^2 \E \bigg[  \left|   \big[ G_{b,K}^{-1/2}  b^K(W) \big ]_{j}  \right| ^2   \bigg] \\ & = C_2^2.
 \end{align*}
By an application of \citep[Theorem 2.14.5]{weakc},  there exists a universal constant $ D > 0 $ such that \begin{align*}
 & \E \bigg[ \sup_{h \in \mathbf{H}^t(M)}   \left|   \frac{1}{\sqrt{n}} \sum_{i=1}^n\big \{  [R_{h}^K(Z_i)]_{j} -\E_{} ( [R_{h}^K(Z)]_{j} ) \:  \big \}  \right|^2      \;  \bigg] \\ & \leq D \bigg(   \E \bigg[ \sup_{h \in \mathbf{H}^t(M)}   \left|   \frac{1}{\sqrt{n}} \sum_{i=1}^n\big \{  [R_{h}^K(Z_i)]_{j} -\E_{} ( [R_{h}^K(Z)]_{j} ) \:  \big \}  \right|      \;  \bigg]    + C_2   \bigg)^2 .
\end{align*}

 By an application of \citep[Theorem 3.5.13]{gine2021mathematical},  there exists a universal constant $ D > 0 $ such that \begin{align*}
&  \E \bigg[ \sup_{h \in \mathbf{H}^t(M)}   \left|   \frac{1}{\sqrt{n}} \sum_{i=1}^n\big \{  [R_{h}^K(Z_i)]_{j} -\E_{} ( [R_{h}^K(Z)]_{j} ) \:  \big \}  \right|      \;  \bigg]    \leq \frac{D}{\sqrt{n}}  \int_{0}^{8  \|  F  \|_{L^2(\mathbb{P})}  }  \sqrt{\log N_{[]}(\mathcal{F}, \| . \|_{L^2(\mathbb{P})},  \epsilon  ) } d \epsilon.
\end{align*}
Since $t > d/2$, the set $\mathbf{H}^t(M)$ is compact under the $\| . \|_{\infty}$ norm. Let $\{  h_i \}_{i=1}^T  $  denote a $\delta > 0$ covering of  $\big( \mathbf{H}^t(M) , \| . \|_{\infty}   \big)$. Define the functions $$ e_{i}(Z) = \sup_{h \in \mathbf{H}^t(M) : \|  h -h_i \|_{\infty} < \delta  } \ \big| \:  [R_{h}^K(Z)]_{j}  -  [R_{h_i}^K(Z)]_{j}   \:   \big| \; \; \; \; \; i=1,\dots,T . $$
By definition of the $\{ e_i \}_{i=1}^T $, it follows that $ \big \{   [R_{h_i}^K(Z)]_{j}   - e_i \; ,  \;  [R_{h_i}^K(Z)]_{j}  + e_i   \big \}_{i=1}^T  $ is a bracket covering for $\mathcal{F}$.  Let $C_1(M)$ and $\kappa \in (0,1]$ be as in Condition \ref{residuals}.  By Condition \ref{residuals}, we have that  \begin{align*}
 \| e_i \|_{L^2(\mathbb{P})}^2  &   \leq \E \bigg[  \left|   \big[ G_{b,K}^{-1/2}  b^K(W) \big ]_{j}  \right| ^2  \E \bigg[    \sup_{h \in \mathbf{H}^t(M) : \|  h -h_i \|_{\infty} < \delta  }           \left| \rho_{l}(Y,h_{}(X)) - \rho_{l}(Y,h_{i}(X))  \right|^2  \bigg| W \bigg]   \bigg] \\ & \leq C_1^2 \delta^{2 \kappa} \E \bigg[  \left|   \big[ G_{b,K}^{-1/2}  b^K(W) \big ]_{j}  \right| ^2 \bigg] \\ & = C_1^2 \delta^{2 \kappa}.
\end{align*}
It follows that \begin{align*}
 & \int_{0}^{8 \|  F  \|_{L^2(\mathbb{P})}  }  \sqrt{\log N_{[]}(\mathcal{F}, \| . \|_{L^2(\mathbb{P})},  \epsilon  ) } d \epsilon \leq  \int_{0}^{8 \|  F  \|_{L^2(\mathbb{P})}  } \sqrt{\log N \bigg(  \mathbf{H}^t(M)   ,\|.\|_{\infty} ,   \bigg( \frac{\epsilon}{ 2  C_1}   \bigg)^{1/ \kappa} \:   \bigg)   } d \epsilon .
\end{align*}
 By \citep[Proposition C.7]{ghosal2017fundamentals}, we have  $  \log N_{}( \mathbf{H}^t(M),  \| . \|_{ \infty}, \epsilon) \lessapprox  \epsilon^{-d/t}$ as $\epsilon \downarrow 0$.  It follows that there exists a universal constant $ D> 0$ such that \begin{align*}
  \int_{0}^{8 \|  F  \|_{L^2(\mathbb{P})}  } \sqrt{\log N \bigg(  \mathbf{H}^t(M)   ,\|.\|_{\infty} ,   \bigg( \frac{\epsilon}{ 2  C_1}   \bigg)^{1/ \kappa} \:   \bigg)   } d \epsilon  &  \leq D  \int_{0}^{8 \|  F  \|_{L^2(\mathbb{P})}  } \epsilon^{-d/2 \kappa t} d \epsilon  \\ & \leq D \int_{0}^{8 C_2}  \epsilon^{-d/2 \kappa t} d \epsilon .
 \end{align*}
 Since $t > (2 \kappa)^{-1} d$  (by Assumption \ref{residuals}(ii)), the integral  above  is convergent. By monotonicity of the $L^p(\mathbb{P})$ norm and combining all the preceding bounds, it follows that there exists a universal constant $ D > 0 $ such that

\begin{align*}
 \E_{} \bigg[ \sup_{h \in \mathbf{H}^t(M)  }  \|   \E_n [R_{h}^K(Z)] - \E_{}[R_{h}^K(Z)]       \|_{\ell^2}  \bigg]   & \leq  \bigg \|      \sup_{h \in \mathbf{H}^t(M)  }  \|   \E_n [R_{h}^K(Z)] - \E_{}[R_{h}^K(Z)]       \|_{\ell^2}      \bigg  \|_{L^2(\mathbb{P})} \\ & \leq D \frac{\sqrt{K}}{\sqrt{n}}.
\end{align*}
\end{proof}

\begin{lemma}
\label{emp-c}
Suppose Conditions \ref{residuals}, \ref{residuals2}(i)(ii) and \ref{fsbasis}(i) hold.  For each fixed $l \in \{ 1 ,  \dots , d_{\rho} \}$ and function $h: \mathcal{X} \rightarrow \R$, define
 $$ R_{h,l}^K(Z) =     \big[ G_{b,K}^{-1/2}  b^K(W) \big] \rho_{l}(Y,h_{}(X)). $$
Let $\epsilon> 0 $ be as in Condition \ref{residuals2}(ii) and define $\gamma = 1- 1/(2+2 \epsilon) > 1/2 $. Suppose $\bar{K}_{\max} \rightarrow  \infty$ is any sequence of sieve dimensions that satisfies $ \big(\log( \bar{K}_{\max}) \big)^3 = o(n^{\gamma-1/2})  $ and $K_{\min} \asymp  \big(\log( \bar{K}_{\max}) \big)^2$. Define the grid of sieve dimensions $\mathcal{K}_n= [ K_{\min} , \bar{K}_{\max} ] \cap \mathbb{N} $. Then, given any $M > 0 $, there exists a universal constant $D = D(M) < \infty$ such that  \begin{equation}
  \mathbb{P}     \bigg(   \sup_{l \in \{ 1 , \dots ,d_{\rho} \} }     \sup_{K \in \mathcal{K}_n} \sup_{h \in \mathbf{H}^t(M)}   K^{-1/2} 
 \| \E_n \big[ R_{h,l}^K(Z) \big]   - \E_{} \big[   R_{h,l}^K(Z)  \big]       \|_{\ell^2}  \leq  \frac{D}{\sqrt{n}}   \bigg ) \rightarrow 1 .    \label{bern2}
    \end{equation}
\end{lemma}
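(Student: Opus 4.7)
The task is to promote the $L^{1}$ bound in Lemma \ref{emp-b} (for each fixed $K$ and $l$) to a high-probability bound that is uniform across $K \in \mathcal{K}_n$, $l \in \{1,\dots,d_\rho\}$, and $h \in \mathbf{H}^{t}(M)$. The route is a truncation-plus-concentration argument followed by a union bound.

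\textbf{Step 1 (Dualization and envelope).} By duality, $\|\E_n[R_{h,l}^K] - \E[R_{h,l}^K]\|_{\ell^2} = \sup_{\|v\|_{\ell^2}\leq 1}(\E_n - \E)[g_v(W)\rho_l(Y,h(X))]$, where $g_v(W) := v^{\top}G_{b,K}^{-1/2}b^K(W)$ satisfies $|g_v|\leq \zeta_{b,K}\lessapprox \sqrt{K}$ and $\E[g_v^2]\leq 1$. The problem reduces to controlling a real-valued empirical process over the class $\mathcal{F}_{K,l} := \{g_v(\cdot)\rho_l(\cdot,h(\cdot)) : \|v\|\leq 1,\; h\in\mathbf{H}^{t}(M)\}$, with weak variance $\sigma^2\leq C_2^2$ (Condition \ref{residuals2}(i)) and envelope dominated by $\zeta_{b,K}\, F(Z)$, where $F := \sup_{h\in\mathbf{H}^{t}(M)}|\rho_l(Y,h(X))|$ has finite $(2+\epsilon)$-th moment by Condition \ref{residuals2}(ii).

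\textbf{Step 2 (Truncation).} Set $\tau_n := n^{1/(2+2\epsilon)}$. On the event $A_n := \{\max_{i\leq n} F(Z_i) \leq \tau_n\}$, replacing $\rho_l$ by $\rho_l \mathbf{1}\{F\leq\tau_n\}$ leaves every empirical average unchanged; Markov together with Condition \ref{residuals2}(ii) gives $\mathbb{P}(A_n^{c}) \leq n C_3^2 \tau_n^{-(2+\epsilon)} = O(n^{-\epsilon/(2+2\epsilon)}) = o(1)$. The deterministic truncation bias is controlled by Hölder:
\begin{equation*}
   \bigl\|\E\bigl[G_{b,K}^{-1/2}b^K \rho_l \mathbf{1}\{F>\tau_n\}\bigr]\bigr\|_{\ell^2}
   \leq \zeta_{b,K}\,\E\bigl[F\,\mathbf{1}\{F>\tau_n\}\bigr]
   \lessapprox \sqrt{K}\,\tau_n^{-(1+\epsilon)}
   = O(\sqrt{K/n})
\end{equation*}
uniformly in $(h,l)$; the exponent of $\tau_n$ is calibrated precisely so that $\tau_n^{-(1+\epsilon)} = n^{-1/2}$. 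Re-running the proof of Lemma \ref{emp-b} with $\rho_l \mathbf{1}\{F\leq\tau_n\}$ in place of $\rho_l$ (Conditions \ref{residuals} and \ref{residuals2}(i) pass through verbatim) also yields $\E[\sup_{\mathcal{F}_{K,l}^{\tau_n}}|(\E_n-\E)f|]\lessapprox \sqrt{K/n}$.

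\textbf{Step 3 (Concentration and union bound).} Apply Bousquet's Talagrand-type inequality to the truncated class $\mathcal{F}_{K,l}^{\tau_n}$ (bounded envelope $\lessapprox \sqrt{K}\tau_n$, weak variance $\lessapprox 1$, expected supremum $\lessapprox \sqrt{K/n}$) with deviation parameter $t = C'\log\bar{K}_{\max}$ to obtain, for a universal $c>0$,
\begin{equation*}
\mathbb{P}\!\Bigl(\sup_{\mathcal{F}_{K,l}^{\tau_n}}|(\E_n-\E)f|
   \geq c\bigl[\sqrt{K/n}+\sqrt{\log\bar{K}_{\max}/n}+\sqrt{K}\,\tau_n\log\bar{K}_{\max}/n\bigr]\Bigr) \leq \bar{K}_{\max}^{-C'}.
\end{equation*}
The second term is absorbed into $\sqrt{K/n}$ because $K\geq K_{\min}\asymp (\log\bar{K}_{\max})^2$. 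The envelope term equals $\sqrt{K/n}\cdot n^{-(\gamma-1/2)}\log\bar{K}_{\max}$, which is $o(\sqrt{K/n})$ by $(\log\bar{K}_{\max})^3 = o(n^{\gamma-1/2})$. A union bound over $K\in\mathcal{K}_n$ (at most $\bar{K}_{\max}$ values) and the finite set of $l$'s, combined with $\mathbb{P}(A_n^{c}) = o(1)$ and the bias estimate in Step 2, delivers (\ref{bern2}).

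\textbf{Main obstacle.} The delicate step is the calibration of $\tau_n$: it must be large enough that both the truncation bias is $O(\sqrt{K/n})$ and $\mathbb{P}(A_n^{c})\to 0$, yet small enough that Bousquet's envelope-dependent term $\sqrt{K}\,\tau_n\log\bar{K}_{\max}/n$ does not dominate. The choice $\tau_n = n^{1/(2+2\epsilon)}$ is forced by making the truncation bias exactly of order $\sqrt{K/n}$; the definition $\gamma = 1-1/(2+2\epsilon)$ and the growth hypothesis $(\log\bar{K}_{\max})^3 = o(n^{\gamma-1/2})$ are then what is required for the Bernstein-type envelope term to remain negligible after accounting for all polylogarithmic factors.
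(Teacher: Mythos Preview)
Your overall architecture (truncation, Talagrand concentration, union bound over $K$ and $l$) is exactly the route taken in the paper. However, Step~2 contains a sign error that breaks the argument as written. With $\tau_n = n^{1/(2+2\epsilon)}$ you get
\[
n\,\tau_n^{-(2+\epsilon)} \;=\; n^{\,1-(2+\epsilon)/(2+2\epsilon)} \;=\; n^{\,\epsilon/(2+2\epsilon)} \longrightarrow \infty,
\]
not $n^{-\epsilon/(2+2\epsilon)}$. So the Markov bound for $\mathbb{P}(A_n^c)$ is vacuous at this truncation level: with only a $(2+\epsilon)$-th moment on $F$, controlling $\max_{i\le n}F(Z_i)\le\tau_n$ with high probability requires $\tau_n\gg n^{1/(2+\epsilon)}$, which is strictly larger than $n^{1/(2+2\epsilon)}$. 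Your calibration was driven by making the truncation bias exactly $O(\sqrt{K/n})$, but this is incompatible with the ``work on the event $A_n$'' strategy.

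The paper avoids this tension by \emph{not} passing through the event $A_n$ at all. Instead of asking that no observation exceed the threshold, it bounds the tail contribution itself in probability via a first-moment Markov inequality:
\[
\mathbb{P}\!\Bigl(\sup_{h}\Bigl\|\sum_i \Xi_{2,i}^K(h)\Bigr\|_{\ell^2}>\tfrac{\zeta_{b,K}}{\sqrt n}\Bigr)\;\le\;\tfrac{\sqrt n}{\zeta_{b,K}}\,\E\Bigl[\sup_h\sum_i\|\Xi_{2,i}^K(h)\|_{\ell^2}\Bigr]\;\lesssim\;\frac{\sqrt n}{L_n^{1+\epsilon}},
\]
which only needs $\E[F\mathbf 1\{F>L_n\}]\lesssim L_n^{-(1+\epsilon)}$, not a bound on $\max_i F(Z_i)$. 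This is what permits the smaller truncation level $L_n\asymp n^{1/(2+2\epsilon)}\log\bar K_{\max}$ and is precisely why the exponent $\gamma=1-1/(2+2\epsilon)$ appears in the statement. If you insist on your ``event $A_n$'' route, you can still complete a proof by raising $\tau_n$ to order $n^{1/(2+\epsilon)}$ (the bias and Bernstein terms remain under control), but then the role of $\gamma$ and the hypothesis $(\log\bar K_{\max})^3=o(n^{\gamma-1/2})$ would no longer be the binding constraint you claim in the final paragraph.
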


\begin{proof}[Proof of Lemma \ref{emp-c}]
It suffices to verify that (\ref{bern2}) holds at each fixed $l \in \{  1 , \dots, d_{\rho} \} $. Fix any such $l$. For a given sequence of deterministic constants $L_n \uparrow \infty$, define   \begin{align*}  & \xi_{1,i}^K (h_{}) =  R_{h,l}^K(Z_i)  \mathbbm{1} \bigg \{ \sup_{h \in \mathbf{H}^t(M)}  | \rho_l(Y_i, h(X_i)) | \leq L_n   \bigg \} \;, \\ &  \xi_{2,i}^K (h_{}) = R_{h,l}^K(Z_i)  \mathbbm{1} \bigg \{ \sup_{h \in \mathbf{H}^t(M)}  | \rho_l(Y_i, h(X_i)) | > L_n    \bigg \}  \; .      \end{align*}
Write the deviation as \begin{align} \label{sums} (\E_n - \E_{})[R_{h,l}^K(Z) ]   = \sum_{i=1}^n  \Xi_{1,i}^K (h_{})  + \sum_{i=1}^n \Xi_{2,i}^K (h_{})    . \end{align}
where $ \Xi_{1,i}^K (h_{}) = n^{-1} [\xi_{1,i}^K (h_{}) - \E_{} \xi_{1,i}^K (h_{})] $ and $ \Xi_{2,i}^K(h_{}) = n^{-1} [ \xi_{2,i}^K(h_{}) - \E_{} \xi_{2,i}^K(h_{}) ]$.
First, we derive a bound for $ \sum_{i=1}^n \Xi_{2,i}^{K}(h) $.  Let $\epsilon > 0 $ be as in Condition \ref{residuals2}(ii).  By definition of $\zeta_{b,K}$, we have $ \zeta_{b,K}^{-1} \|  G_{b,K}^{-1/2} b^{K}(W_i)     \|_{\ell^2}  \leq   1 $ almost surely. It follows that
\begin{align*}   &  \mathbb{P}_{} \bigg( \sup_{h \in \mathbf{H}^t(M)}   \bigg \| \sum_{i=1}^n \Xi_{2,i}^K (h_{}) \bigg \|_{\ell^2}   >  \frac{\zeta_{b,K}}{\sqrt{n}}  \bigg) \\ & \leq       \frac{\sqrt{n}}{\zeta_{b,K}} \E_{} \bigg(     \sup_{h \in \mathbf{H}^t(M)}    \sum_{i=1}^n \| \Xi_{2,i}^K (h_{}) \|_{\ell^2} \bigg) \\ & \leq 2 \sqrt{n} \E_{} \bigg( \sup_{h \in  \mathbf{H}^t(M)}  | \rho_l(Y_i, h(X_i))  | \mathbbm{1} \bigg \{ \sup_{h \in \mathbf{H}^t(M)}  | \rho_l(Y_i, h(X_i)) | > L_n    \bigg \} \bigg) \\ & \leq \frac{2 \sqrt{n}}{L_n^{1+ \epsilon}} \E_{} \bigg( \sup_{h \in \mathbf{H}^t(M)}  | \rho_l(Y_i, h(X_i))  |^{2+ \epsilon} \bigg) . \end{align*}
Since $\E_{} \big( \sup_{h \in \mathbf{H}^t(M)} \left| \rho_l(Y_i, h(X_i))  \right|^{2+ \epsilon} \big) < \infty $, a union bound over $K \in \mathcal{K}_n$ yields $$  \mathbb{P}_{} \bigg( \bigcup_{K \in \mathcal{K}_n} \bigg \{ \sup_{h \in \mathbf{H}^t(M)}   \bigg  \| \sum_{i=1}^n \Xi_{2,i}^K (h) \bigg \|_{\ell^2}   >  \frac{\zeta_{b,K}}{\sqrt{n}}  \bigg \} \bigg) \lessapprox \frac{\sqrt{n} \log(\bar{K}_{\max})}{L_n^{1+ \delta}} .  $$
The term on the right is $o(1)$ when $L_n^{1+ \epsilon} \asymp   \sqrt{n} (\log \bar{K}_{\max})^{1+ \epsilon} $. The desired bound then follows from observing that $\zeta_{b,K} \lessapprox \sqrt{K}$. It remains to bound the first sum in (\ref{sums}) when $L_n^{1+ \epsilon} \asymp \sqrt{n} (\log \bar{K}_{\max})^{1+ \epsilon} $. Observe that $$ \sup_{h \in \mathbf{H}^t(M)} \bigg \|  \sum_{i=1}^n \Xi_{1,i}^K(h)        \bigg \|_{\ell^2} =  \sup_{h \in \mathbf{H}^t(M)} \; \sup_{\alpha  \in \mathbb{S}^{K-1}  }  \sum_{i=1}^n \alpha' \Xi_{1,i}^K (h)  $$
where $\mathbb{S}^{K-1} = \{ v \in \R^K : \|v \|_{\ell^2} = 1  \}$. Let $C_2 = C_2(M) < \infty$ be as in   Condition \ref{residuals2}(i). Define $\gamma = 1- 1/(2+2 \epsilon) > 1/2 $. For any fixed $\alpha \in  \mathbb{S}^{K-1}  $ and  $h \in \mathbf{H}^t(M) $, we have that   \begin{align*} &    \E_{}  \big[ \big(  \alpha' \Xi_{1,i}^K(h)  \big)^2 \big] \leq   n^{-2} \E_{} \bigg( \alpha' G_{b,K}^{- 1/2} b^K(W_i) b^K(W_i)' G_{b,K}^{- 1/2} \alpha  \sup_{h \in \mathbf{H}^t(M)} | \rho_l(Y,h (X) )|^2   \bigg) \leq  C_2^2 n^{-2}  \: , \\ &  \left|  \alpha' \Xi_{1,i}^K (h)   \right| \leq 2 n^{-1}  L_n \zeta_{b,K} \lessapprox  \frac{2  \zeta_{b,K} \log \bar{K}_{\max} }{n^{\gamma} }.    \: 
\end{align*}
By Lemma \ref{emp-b}, there exists a universal constant $D  =D(M) <  \infty  $ such that
\begin{align*}
\E_{} \bigg(    \sup_{h \in \mathbf{H}^t(M)}    \bigg     \|  \sum_{i=1}^n \Xi_{1,i}^K (h) \bigg \|_{\ell^2} \bigg) \leq  D  \frac{\sqrt{K}}{\sqrt{n}} .
\end{align*}
holds for every $K$. The preceding bounds and Talagrand's inequality \citep[Theorem 3.3.9]{gine2021mathematical} yields   \begin{align*} & \mathbb{P}_{} \bigg( \sup_{h \in \mathbf{H}^t(M)}  \bigg \|  \sum_{i=1}^n \Xi_{1,i}^K        \bigg \|_{\ell^2} \geq   \frac{ D  \sqrt{K}  }{\sqrt{n}} + \frac{\sqrt{K}}{\sqrt{n}}   \bigg) \\ &  \leq \exp \bigg(  - \frac{1}{ 2 C_2^2 K^{-1}  + (8 D  + 4/3) \big(\zeta_{b,K} \log (\bar{K}_{\max}) K^{-1/2} n^{1/2 - \gamma}   \big   )      }    \bigg) .
\end{align*}
Let $E > 0 $ be such that $\zeta_{b,K} \leq E \sqrt{K}$. From a union bound, we obtain \begin{align*}
 & \mathbb{P}_{} \bigg( \bigcup_{K \in \mathcal{K}_n} \bigg \{  \sup_{h \in \mathbf{H}^t(M)}   \bigg \|  \sum_{i=1}^n \Xi_{1,i}^K        \bigg \|_{\ell^2} \geq  \frac{ D  \sqrt{K}  }{\sqrt{n}} + \frac{\sqrt{K}}{\sqrt{n}} \bigg \}  \bigg) \\ & \lessapprox \bar{K}_{\max} \exp \bigg(   - \frac{1}{ 2 C_2^2 K_{\min}^{-1} + E(8 D + 4/3) \log(\bar{K}_{\max}) n^{1/2 - \gamma}  }         \bigg).
\end{align*}
This term is $o(1)$ since $K_{\min} \log(\bar{K}_{\max}) / n^{\gamma - 1/2} \downarrow 0$ and $ \log(\bar{K}_{\max}) K_{\min}^{-1} \downarrow 0  $.

\end{proof}

\begin{proof}[Proof of Theorem \ref{main-contract}]
Let $D > 0$  denote a generic universal constant that may change from line to line. 

\begin{enumerate}
\item[\textbf{$(i)$}]First, we derive a lower bound for the normalizing constant of the posterior. We aim to show there exists $C,C' > 0 $ such that \begin{align}
    \label{t1lb} \int \exp \bigg( - \frac{n}{2} \E_n \big[    \widehat{m}(W,h) ' \widehat{\Sigma}(W)  \widehat{m}(W,h)            \big]    \bigg) d \mu(h) \geq C \exp \big(   - C' n    \epsilon_{n}^2   \big)  
\end{align}
holds with $\mathbb{P}$ probability approaching $1$.

Let $ \mathcal{S}_n$ be as in Assumption \ref{sampling-uncert}. By Assumption \ref{sampling-uncert}, we have \begin{align*}
     & \int \exp \bigg( - \frac{n}{2} \E_n \big[    \widehat{m}(W,h) ' \widehat{\Sigma}(W)  \widehat{m}(W,h)            \big]  \bigg) d \mu(h)  \\ & \geq \int_{\mathcal{S}_n} \exp \bigg( - \frac{n}{2} \E_n \big[    \widehat{m}(W,h) ' \widehat{\Sigma}(W)  \widehat{m}(W,h)            \big]  \bigg) d \mu(h) \\ & \geq \exp(-n D \epsilon_n^2) \int_{\mathcal{S}_n} \exp \big( - nD \E \big( \| \widetilde{m}(W,h)   \|_{\ell^2}^2  \big) \big) d \mu(h)
\end{align*}
with $\mathbb{P}$ probability approaching $1$. 

Let $h_n$ be as in Assumption \ref{weak-bias}. Since $m(W,h_0) = \mathbf{0}$, we have \begin{align*}
    \| \widetilde{m}(W,h)   \|_{\ell^2} & = \|  \widetilde{m}(W,h)    -  \widetilde{m}(W,h_n) + \widetilde{m}(W,h_n) - m(W,h_n) + m(W,h_n)   \|_{\ell^2}   \\ & \leq \| \widetilde{m}(W,h)    -  \widetilde{m}(W,h_n) \|_{\ell^2} + \| \widetilde{m}(W,h_n)- m(W,h_n) \|_{\ell^2} + \|  m(W,h_n) - m(W,h_0)  \|_{\ell^2}
\end{align*}
for any $h$. By Assumption \ref{weak-bias}-\ref{loc-conc}, it follows that \begin{align*}
    & \int_{\mathcal{S}_n} \exp \big( - nD \E \big( \| \widetilde{m}(W,h)   \|_{\ell^2}^2  \big) \big) d \mu(h) \\ & \geq \exp(-n D \epsilon_n^2) \int_{\mathcal{S}_n} \exp \big( - n D \E \big( \| \widetilde{m}(W,h) - \widetilde{m}(W,h_n)   \|_{\ell^2}^2  \big)  \big) d \mu(h) \\ & \geq \exp(-n D \epsilon_n^2) \int_{\mathcal{S}_n} d \mu(h).
\end{align*}
Let $\mathcal{R}_n \supseteq \mathcal{S}_n $ be as in Assumption \ref{loc-conc}. Since $\mathcal{R}_n = \mathcal{S}_n \cup (\mathcal{R}_n \setminus \mathcal{S}_n)  $, we have $ \int_{\mathcal{S}_n} d \mu(h) = \int_{\mathcal{R}_n} d \mu(h) - \int_{\mathcal{R}_n \setminus \mathcal{S}_n} d \mu(h) $. By Assumption \ref{loc-conc}, we have \begin{align*}
    &  \int_{\mathcal{R}_n } d \mu(h)  \geq c\exp(-C'n  \epsilon_n^2) \\ & \int_{h \in \mathcal{R}_n \setminus\mathcal{S}_n  }  d \mu(h) \leq C \exp(-B n \epsilon_n^2)
\end{align*}
for some $c,C,C',B, > 0$ with $B > C'$. Since $B > C'$, it follows that $$ \int_{h \in \mathcal{S}_n  }  d \mu(h) \geq  c\exp(-C'n  \epsilon_n^2) - C \exp(-B n \epsilon_n^2) \geq\exp(-n D \epsilon_n^2)  . $$
The lower bound in (\ref{t1lb}) follows from combining all the preceding estimates.
 
\item[\textbf{$(ii)$}] For any set $\Omega$, the lower bound in part $(i)$ yields \begin{align*}
     \mu \big ( h \in \Omega \: \big| \: \mathcal{D}_n  \big ) & = \frac{ \int_{h \in \Omega}  \exp \big( - \frac{n}{2} \E_n \big[    \widehat{m}(W,h) ' \widehat{\Sigma}(W)  \widehat{m}(W,h)            \big]    \big)  }{\int_{} \exp \big( - \frac{n}{2} \E_n \big[    \widehat{m}(W,h) ' \widehat{\Sigma}(W)  \widehat{m}(W,h)            \big]    \big)  } \\ & \leq D \exp (C' n \epsilon_n^2) \int_{h \in \Omega}  \exp \bigg( - \frac{n}{2} \E_n \big[    \widehat{m}(W,h) ' \widehat{\Sigma}(W)  \widehat{m}(W,h)            \big]    \bigg) d \mu(h)
\end{align*}
with $\mathbb{P}$ probability approaching $1$, for some universal constants $D, C'  > 0$. Fix any $R >  C '$ and define the set \begin{align*}
    \Omega = \{ h : \| \widehat{m}(W,h) - m(W,h_0)  \|_{L^2(\mathbb{P}_n, \widehat{\Sigma} )}^2  > 2 R \epsilon_n^2     \}
\end{align*}
Since $m(W,h_0) = \mathbf{0}$, it follows that \begin{align*}
    \mu \big ( F \in \Omega \: \big| \: \mathcal{D}_n  \big ) & \leq D \exp(C' n \epsilon_n^2) \int_{h \in \Omega}  \exp \bigg( - \frac{n}{2} \E_n \big[    \widehat{m}(W,h) ' \widehat{\Sigma}(W)  \widehat{m}(W,h)            \big]    \bigg) d \mu(h) \\ & =  D \exp(C' n \epsilon_n^2) \int \limits_{h : \| \widehat{m}(W,h)   \|_{L^2(\mathbb{P}_n, \widehat{\Sigma} )}^2  > 2 R \epsilon_n^2  } \exp \bigg( - \frac{n}{2} \E_n \big[    \widehat{m}(W,h) ' \widehat{\Sigma}(W)  \widehat{m}(W,h)            \big]    \bigg) d \mu(h) \\ & \leq D \exp( [C' - R] n \epsilon_n^2) .
\end{align*}
Since $R > C'$ and $n \epsilon_n^2 \uparrow \infty$, the claim follows.
\end{enumerate}
\end{proof}

\begin{proof}[Proof of Corollary \ref{contract-strong}] For any set $\Omega$, the lower bound derived in the proof of part $(i)$ in Theorem \ref{main-contract} yields \begin{align*}
     \mu \big ( h \in \Omega \: \big| \: \mathcal{D}_n  \big ) & =  \frac{ \int_{h \in \Omega}  \exp \big( - \frac{n}{2} \E_n \big[    \widehat{m}(W,h) ' \widehat{\Sigma}(W)  \widehat{m}(W,h)            \big]    \big)  }{\int_{} \exp \big( - \frac{n}{2} \E_n \big[    \widehat{m}(W,h) ' \widehat{\Sigma}(W)  \widehat{m}(W,h)            \big]    \big)  } \\ & \leq D \exp (C' n \epsilon_n^2) \int_{h \in \Omega}  \exp \big( - \frac{n}{2} \E_n \big[    \widehat{m}(W,h) ' \widehat{\Sigma}(W)  \widehat{m}(W,h)            \big]    \big) d \mu(h)
\end{align*}
with $\mathbb{P}$ probability approaching $1$, for some universal constants $D, C'  > 0$. Define   \begin{align*}
    \Omega = \{ h : h \notin \mathcal{H}_n : \| \widehat{m}(W,h) - m(W,h_0)  \|_{L^2(\mathbb{P}_n, \widehat{\Sigma} )} \leq L \epsilon_n  \}.
\end{align*}
If the hypothesis of Corollary \ref{contract-strong} holds for some $D' > C'$, the preceding bound and the conclusion of Theorem \ref{main-contract} yields \begin{align*}
    \mu \bigg ( h \in \mathcal{H}_n : \| \widehat{m}(W,h) - m(W,h_0)  \|_{L^2(\mathbb{P}_n, \widehat{\Sigma} )}  \leq L \epsilon_n  \: \bigg| \: \mathcal{D}_n  \bigg ) \xrightarrow{\mathbb{P}} 1.
\end{align*}
The claim follows from the definition of the modulus $\omega_n(.)$

\end{proof}

\begin{proof}[Proof of Theorem \ref{t1}]
\begin{enumerate}
    \item[$(i)$]

First, we aim to apply Theorem \ref{main-contract} with  $\epsilon_n = \sqrt{K_n} / \sqrt{n}$. We proceed by verifying that  Assumptions \ref{sampling-uncert}-\ref{loc-conc} hold. Given any fixed function $h$, we can write \begin{align*}
     \widehat{m}(w,h)  =     \E_n \big(   \rho(Y,h_{}(X)) \big[ G_{b,K}^{-1/2} b^K(W) \big]'    \big)               [ \widehat{G}_{b,K}^{o}  ]^{-1} G_{b,K}^{-1/2}   b^K(w).
\end{align*}
It follows that
 \begin{align*}
   &  \E_n \big( \| \widehat{m}(W,h) \|_{\ell^2}^2  \big)   = \sum_{l=1}^{d_{\rho}}  [ \E_n (R_{h, l}^K)      ] ' [ \widehat G_{b,K}^o]^{-1}  [ \E_n (R_{h, l}^K)  ] \\ & \text{where} \; \; \; \;   R_{h,l}^K(Z) =     \big[ G_{b,K}^{-1/2}  b^K(W) \big] \rho_{l}(Y,h_{}(X)). 
\end{align*}
Observe that, by definition of $G_{b,K}$, the functions in the vector $G_{b,K}^{-1/2} b^K(W)$ are an orthonormal (with respect to the $L^2(\mathbb{P})$ inner product) basis of the linear space spanned by $\{ 
 b_1(W) , \dots , b_K(W) \}$. Hence, the $L^2(\mathbb{P})$ norm of $\Pi_K m(W,h)$ can be expressed as
\begin{align*}
  \|  \Pi_K m(W,h)  \|_{L^2(\mathbb{P})}^2   =  \E  \big( \| \Pi_K m(W,h) \|_{\ell^2}^2 \big) = \sum_{l=1}^{d_{\rho}} \| \E[R_{h,l}^K(Z)]   \|_{\ell^2}^2. 
\end{align*}
We denote the empirical analog of this representation by  \begin{align*} 
  \|  \widehat{\Pi}_{K} m(W,h_{})  \|_{L^2(\mathbb{P}_n)}^2 =  \sum_{l=1}^{d_{\rho}}  \| \E_n (R_{h, l}^K)      \|_{\ell^2}^2.
\end{align*}
Let $\hat{\lambda}_{K,\min}$ and $\hat{\lambda}_{K,\max}$ denote the minimum and maximum eigenvalues of $   [\widehat{G}_{b,K}^o]^{-1}$. By Lemma \ref{aux2}, we have that \begin{equation}  \label{eig} \mathbb{P}(0.9 <\hat{\lambda}_{K,\min} \leq  \hat{\lambda}_{K,\max} < 1.1 ) \rightarrow 1.
\end{equation}
Let $\widetilde{m}(W,h) = \Pi_K[m(W,h)]$. We aim to verify Assumption \ref{sampling-uncert} with $\widetilde{m}(.) $ and the set \begin{align*}
    \mathcal{S}_n = \{h : \| h \|_{\mathbf{H}^t} \leq M , \|  h - h_0 \|_{L^2(\mathcal{X})} \leq \epsilon_n   \}
\end{align*}
for some sufficiently large $ M > 0$, which we specify below.

Fix any $l \in \{1 , \dots , d_{\rho} \} $. On the set where (\ref{eig}) holds,  we have that \begin{align*}
 &  \E_n (R_{h, l}^K)      ] ' [ \widehat G_{b,K}^o]^{-1}  [ \E_n (R_{h, l}^K)  ]   \leq 1.1 \| \E_n (R_{h, l}^K)  \|_{\ell^2}^2.
\end{align*} 
 By Lemma \ref{emp-c}, there exists a $C = C(M) < \infty$ such that  \begin{align*} \sum_{l=1}^{d_{\rho}}
\| \E_n(R_{h,l}^K)       \|_{\ell^2}^2 & \leq   \sum_{l=1}^{d_{\rho}} \big(  \| \E_n(R_{h,l}^K) - \E(R_{h,l}^K)      \|_{\ell^2} + \| \E(R_{h,l}^K)  \|_{\ell^2}  \big)^2                        \nonumber \\  & \leq  C  \bigg(          \frac{K}{n} + \|  \Pi_K m(W,h)  \|_{L^2(\mathbb{P})}^2     \bigg)
\end{align*}
holds for all $h \in \mathbf{H}^t(M)$ (with $\mathbb{P}$ probability approaching $1$). Since $ \epsilon_n^2 = K /n $, Assumption \ref{sampling-uncert} follows. Assumption \ref{weak-bias} is trivially satisfied with the choice $h_n = h_0$, since $ \widetilde{m}(W,h_0) = \Pi_K m(W,h_0) = 0 $. For Assumption \ref{loc-conc}$(iii)$, Condition \ref{mean-lip} yields \begin{align*} \sup_{h \in \mathcal{S}_n}\| \Pi_K m(W,h) \|_{L^2(\mathbb{P})} \leq \sup_{h \in \mathcal{S}_n} \|  m(W,h) \|_{L^2(\mathbb{P})} & = \sup_{h \in \mathcal{S}_n} \| m(W,h) - m(W,h_0) \|_{L^2(\mathbb{P})} \\ &\leq D \sup_{h \in \mathcal{S}_n}  \| h- h_0\|_{L^2(\mathcal{X})} \\ & \leq D \epsilon_n.
\end{align*}
To verify Assumption \ref{loc-conc}$(i-ii)$, we use the set $ \mathcal{R}_n = \{h : \|  h - h_0 \|_{L^2(\mathcal{X})} \leq \epsilon_n   \} $. The RKHS associated to the Gaussian  random element  $G_{\alpha} $    can be represented as \begin{align*}
  \mathbb{H}_{\alpha}  =  \bigg \{  h \in L^2(\mathcal{X})  : \|h \|_{\mathbb{H}_{\alpha}}^2 =   \sum_{i=1}^{\infty}     i^{ 1 +   2\alpha /d}  \left| \langle h ,  e_i \rangle_{L^2(\mathcal{X})}      \right|^2 < \infty               \bigg \} .
\end{align*}
The concentration function of the scaled Gaussian measure $d\mu(.)$  at $h_0$ is given by \begin{align*} 
 \varphi_{h_0}(\epsilon) = \inf_{h \in \mathbb{H}_{\alpha}:\| h - h_0   \|_{L^2( \mathcal{X})} \leq \epsilon  } \bigg\{ \frac{   K}{2} \| h \|_{\mathbb{H}_{\alpha}}^2 - \log \mathbb{P} \bigg(  \| G_{\alpha}   \|_{L^2(\mathcal{X})} < \epsilon \sqrt{ K}           \bigg)             \bigg \}.
\end{align*}
It follows from \citep[Proposition 11.19]{ghosal2017fundamentals} 
 that there exists a $C > 0 $ such that  $ \int_{\mathcal{R}_n} d \mu(h)    \geq \exp \big(  - \varphi_{h_0}(C  \epsilon_{n})        \big)$. Since $h_0 \in \mathcal{H}^p$ for some $p \geq \alpha+d/2$, it follows that $h_0 \in \mathbb{H}_{\alpha}$. In particular, by choosing $h = h_0$ in the infimum defining $\varphi_{h_0}(.)$, we obtain \begin{align*}
    \varphi_{h_0}(\epsilon) \leq  D \bigg[ K  - \log \mathbb{P} \bigg(  \| G_{\alpha}   \|_{L^2(\mathcal{X})} < \epsilon \sqrt{  K}           \bigg)           \bigg].
\end{align*}
For the second term, by an application of  \citep[Lemma 11.47]{ghosal2017fundamentals}, we obtain  \begin{align*}
\varphi_{h_0}(  C  \epsilon_{n})  \leq D \big[    K +   (   \epsilon_{n}  \sqrt{   K})^{-d/\alpha  }        \big].
\end{align*}
Since $\epsilon_n = \sqrt{K} / \sqrt{n}$ and $K \gtrapprox n^{d/2(\alpha + d)} $, the first term on the right of the preceding inequality dominates and we obtain $ \int_{\mathcal{R}_n} d \mu(h) \geq \exp(-C' n \epsilon_n^2)  $ for some $C' > 0$. Assumption \ref{loc-conc}$(i)$ follows. Moreover, we note that the constant $C'$ is independent of $M$. 

Since $d\mu(.)$ is the distribution of  $G_{\alpha} /  \sqrt{K} $, it follows from Theorem 2.1.20 of \citet{gine2021mathematical}  that there exists a universal constant $D > 0 $ such that $$ \int_{\mathcal{R}_n \setminus \mathcal{S}_n} d \mu(h) \leq \int_{h : \| h \|_{\mathbf{H}^t} > M} d \mu(h) \leq2 \exp(-D M^2 n \epsilon_n^2).   $$
By picking $ M > 0$ large enough, we can ensure that  $DM^2  > C'$ and Assumption \ref{loc-conc}$(ii)$ follows. From the conclusion of Theorem \ref{main-contract}, we obtain $$ \mu \big ( h :  \| \widehat{m}(W,h)  - m(W,h_0)  \|_{L^2(\mathbb{P}_n,\widehat{\Sigma})}  > L \epsilon_n  \: \big| \: \mathcal{D}_n  \big ) \xrightarrow{\mathbb{P}} 0 $$
for some universal constant $  L > 0$.
\item[$(ii)$] We aim to apply Corollary \ref{contract-strong} with the metric $d(h,h_0) = \| h - h_0 \|_{L^2(\mathbb{P})}$. For a fixed $E > 0$, define the set $ \mathcal{G}_n = \{ h : \|h \|_{\mathbf{H}^t}  \leq E \} $. By Theorem 2.1.20 of \citet{gine2021mathematical}, there exists a universal constant $D > 0 $ such that $ \mu(h \notin \mathcal{G}_n) \leq 2 \exp (-D E^2 n \epsilon_n^2)$. We can pick $E > 0$ large enough so as to satisfy the hypothesis of Corollary \ref{contract-strong}.

Since the conditions of Corollary \ref{contract-strong} are satisfied, it only remains to verify that the modulus satisfies $\omega_n \xrightarrow{\mathbb{P}} 0$. Define the set $$ \mathcal{E}_n = \{ h \in \mathcal{G}_n : \| \widehat{m}(W,h)  - m(W,h_0)  \|_{L^2(\mathbb{P}_n,\widehat{\Sigma})} \leq L \epsilon_n  \} . $$

 By arguing as in part $(i)$ and using Condition \ref{fsbasis}$(ii)$ and Lemma \ref{emp-c}, we can deduce that $ \sup_{h \in \mathcal{E}_n} \| \Pi_K m(W,h) \|_{L^2(\mathbb{P})} \leq D \epsilon_n $ with $\mathbb{P} $ probability approaching $1$. It follows that $\sup_{h \in \mathcal{E}_n} \| m(W,h) \|_{L^2(\mathbb{P})} \leq D \gamma_n $ where $$ \gamma_n = \max \bigg\{ \epsilon_n , \sup_{h \in \mathcal{G}_n   } \| (\Pi_K - I) m(W,h) \|_{L^2(\mathbb{P})}   \bigg\} .$$
By Condition \ref{fsbasis}$(iii)$, we have $\gamma_n \to 0$. The set $ \mathcal{G}_n $ is compact under the $\| \cdot \|_{L^2(\mathbb{P})}$ metric, and by Condition \ref{mean-lip}, the map $h \rightarrow  m(W,h)$ is uniformly continuous on $\mathcal{G}_n$. To prove the claim, it suffices to prove that for every $\delta > 0$, there exists a $\gamma > 0$ such that $$ h \in \mathcal{G}_n \; , \; \| m(W,h)  \|_{L^2(\mathbb{P})} < \gamma \implies \| h - h_0 \|_{L^2(\mathbb{P})} < \delta.   $$
Suppose this fails. Then for some $\gamma_n \rightarrow 0$, $\delta > 0$ and a sequence $(h_n)_{n=1}^{\infty} \in \mathcal{G}_n$, we have $ \| m(W,h_n) \|_{L^2(\mathbb{P})} < \gamma_n$ and $\| h_n - h_0 \|_{L^2(\mathbb{P})} \geq \delta$. Since the set $\{h  \in \mathcal{G}_n : \| h - h_0 \|_{L^2(\mathbb{P})} \geq \delta  \}$ is a closed 
 (and hence compact) subset of $\mathcal{G}_n$, the continuous function $h \rightarrow \| m(W,h) \|_{L^2(\mathbb{P})}$ achieves its minimum on it. Since $h_0$ is the unique zero of this function, there must exist a $\gamma^* > 0$ such that $$ \inf_{h \in \mathcal{G}_n : \| h - h_0 \|_{L^2(\mathbb{P})} \geq \delta} \| m(W,h) \|_{L^2(\mathbb{P})} \geq \gamma^*. $$
This leads to a contradiction for any $\gamma_n < \gamma^*$.
\end{enumerate}

\end{proof}

\begin{proof}[Proof of Theorem \ref{t1-2}]
We argue similarly to Theorem \ref{t1}. Let $h_0 \in \mathcal{H}^p \cap \Theta_0$ be as in the statement of the theorem. As this quasi-Bayes posterior contains a continuously updated weighting matrix, Theorem \ref{main-contract} does not directly apply. However, with $\mathcal{S}_n = \{ h : \|h \|_{\mathbf{H}^t} \leq M, \| h - h_0 \|_{L^2(\mathcal{X})} \leq \epsilon_n \}$ and Condition \ref{fsbasis2}, we have \begin{align*}
     & \int \exp \bigg( - \frac{n}{2} \E_n \big[    \widehat{m}(W,h) ' \widehat{\Sigma}(W,h)  \widehat{m}(W,h)            \big]  \bigg) d \mu(h)  \\ & \geq c\int_{\mathcal{S}_n} \exp \bigg( - \frac{n}{2} \E_n \big[    \widehat{m}(W,h) '  \widehat{m}(W,h)            \big]  \bigg) d \mu(h) 
\end{align*}
for some $c > 0 $, with $\mathbb{P}$ probability approaching $1$. The remainder of the argument is identical to Theorem \ref{main-contract}. As such, we can conclude, similarly to Theorem \ref{t1}, that $$ \mu \big ( h :  \| \widehat{m}(W,h)    \|_{L^2(\mathbb{P}_n,\widehat{\Sigma})}  > L \epsilon_n  \: \big| \: \mathcal{D}_n  \big ) \xrightarrow{\mathbb{P}} 0 $$
for some universal constant $  L > 0$.

Next, we aim to apply Corollary \ref{contract-strong} with the metric $d(h,\Theta_0) = \inf_{h^* \in \Theta_0}\| h - h^* \|_{L^2(\mathbb{P})}$. For a fixed $E > 0$, define the set $ \mathcal{G}_n = \{ h : \|h \|_{\mathbf{H}^t}  \leq E \} $. By Theorem 2.1.20 of \citet{gine2021mathematical}, there exists a universal constant $D > 0 $ such that $ \mu(h \notin \mathcal{G}_n) \leq 2 \exp (-D E^2 n \epsilon_n^2)$. We can pick $E > 0$ large enough so as to satisfy the hypothesis of Corollary \ref{contract-strong}.  Define the set $$ \mathcal{E}_n = \{ h \in \mathcal{G}_n : \| \widehat{m}(W,h)  )  \|_{L^2(\mathbb{P}_n,\widehat{\Sigma})} \leq L \epsilon_n  \} . $$
By arguing as in Theorem \ref{t1}, we obtain $\sup_{h \in \mathcal{E}_n} \| m(W,h) \|_{L^2(\mathbb{P})} \leq D \gamma_n $ where $$ \gamma_n = \max \bigg\{ \epsilon_n , \sup_{h \in \mathcal{G}_n   } \| (\Pi_K - I) m(W,h) \|_{L^2(\mathbb{P})}   \bigg\} .$$
By Condition \ref{fsbasis}$(iii)$, we have $\gamma_n \to 0$. Since the distance function $h \rightarrow d(h,\Theta_0)$ is continuous, for any $\delta > 0 $, the set $\{ h \in \mathcal{G}_n : d(h,\Theta_0) \geq \delta \} $ is a closed (and hence compact) subset of $\mathcal{G}_n$. As such, by an analgous argument to Theorem \ref{t1}, there exists a sequence $\delta_n \rightarrow 0$ such that $ \sup_{h \in \mathcal{E}_n} d(h,\Theta_0) \leq \delta_n$ with $\mathbb{P}$ probability approaching $1$.
\end{proof}

\begin{proof}[Proof of Theorem \ref{rate}]
First, we aim to apply Theorem \ref{main-contract} with  $\epsilon_n = \sqrt{K_n} / \sqrt{n}$. Let $\gamma > 0$ be as in Conditions \ref{gp-link}-\ref{lcurv} and $\widetilde{m}(W,h) = \Pi_K[m(W,h)]$. Define $$
    \mathcal{S}_n^{\star} = \{h : \| h \|_{\mathbf{H}^t} \leq M , \: \| h \|_{\mathcal{H}^{\gamma}} \leq M \;,   \|  h - h_0 \|_{w,\sigma} \leq \epsilon_n   \} $$ for some sufficiently large $ M > 0$, which we specify below. Since the set $\mathcal{S}_n$ is compact, an analogous argument to Theorem \ref{t1} implies that \begin{align*}
        \mathcal{S}_n^{\star} \subseteq \mathcal{S}_n = \{h : \| h \|_{\mathbf{H}^t} \leq M , \: \| h \|_{\mathcal{H}^{\gamma}} \leq M \;,   \|  h - h_0 \|_{w,\sigma} \leq \epsilon_n \; , \| h - h_0 \|_{L^2(\mathcal{X})}  \leq \delta_n  \}
    \end{align*}
for some sequence $\delta_n \rightarrow 0$. We need to verify that Assumptions \ref{sampling-uncert}-\ref{loc-conc} hold with $\widetilde{m}(\cdot)$ and $\mathcal{S}_n$. 

Verification of Assumption \ref{sampling-uncert}-\ref{weak-bias} is analogous to Theorem \ref{t1}. We focus on Assumption \ref{loc-conc}. For Assumption \ref{loc-conc}$(iii)$, Condition \ref{gp-link} and \ref{lcurv} yields
\begin{align*} \sup_{h \in \mathcal{S}_n}\| \Pi_K m(W,h) \|_{L^2(\mathbb{P})} \leq \sup_{h \in \mathcal{S}_n} \|  m(W,h) \|_{L^2(\mathbb{P})} & = \sup_{h \in \mathcal{S}_n} \| m(W,h) - m(W,h_0) \|_{L^2(\mathbb{P})} \\ &\leq D \sup_{h \in \mathcal{S}_n}  \| D_{h_0}[h-h_0] \|_{L^2(\mathcal{X})} \\ & \leq D \sup_{h \in \mathcal{S}_n}  \| h- h_0  \|_{w,\sigma} \\ & \leq D \epsilon_n.
\end{align*}
To verify Assumption \ref{loc-conc}$(i-ii)$, we use the set $ \mathcal{R}_n = \{h : \|  h - h_0 \|_{w,\sigma} \leq \epsilon_n   \} $. The RKHS associated to the Gaussian  random element  $G_{\alpha} $    can be represented as \begin{align*}
  \mathbb{H}_{\alpha}  =  \bigg \{  h \in L^2(\mathcal{X})  : \|h \|_{\mathbb{H}_{\alpha}}^2 =   \sum_{i=1}^{\infty}     i^{ 1 +   2\alpha /d}  \left| \langle h ,  e_i \rangle_{L^2(\mathcal{X})}      \right|^2 < \infty               \bigg \} .
\end{align*}
The concentration function of the scaled Gaussian measure $d\mu(.)$  at $h_0$ is given by \begin{align*} 
 \varphi_{h_0}(\epsilon) = \inf_{h \in \mathbb{H}_{\alpha}:\| h - h_0   \|_{w,\sigma} \leq \epsilon  } \bigg\{ \frac{   K}{2} \| h \|_{\mathbb{H}_{\alpha}}^2 - \log \mathbb{P} \bigg(  \| G_{\alpha}   \|_{w,\sigma} < \epsilon \sqrt{ K}           \bigg)             \bigg \}.
\end{align*}
It follows from \citep[Proposition 11.19]{ghosal2017fundamentals} 
 that there exists a $C > 0 $ such that  $ \int_{\mathcal{R}_n} d \mu(h)    \geq \exp \big(  - \varphi_{h_0}(C  \epsilon_{n})        \big)$. By choosing $h = h_0$ in the infimum defining $\varphi_{h_0}(.)$, we obtain \begin{align*}
    \varphi_{h_0}(\epsilon) \leq  D \bigg[ K  - \log \mathbb{P} \bigg(  \| G_{\alpha}   \|_{w,\sigma} < \epsilon \sqrt{  K}           \bigg)           \bigg].
\end{align*}
To obtain the desired bound, it suffices to show that  \begin{align}
    \label{concbound} - \log \mathbb{P} \bigg(  \| G_{\alpha}   \|_{w,\sigma} < \epsilon_n \sqrt{  K}           \bigg)  \leq D K.
\end{align} 
 Consider first the case where the model is mildly ill-posed so that $\sigma_i \asymp i^{-\zeta/d}$ for some $\zeta \geq 0$. By an application of  \citep[Lemma 11.47]{ghosal2017fundamentals}, we obtain  \begin{align*}
- \log \mathbb{P} \bigg(  \| G_{\alpha}   \|_{w,\sigma} < \epsilon_n \sqrt{  K}           \bigg) \leq C     \big(  \epsilon_{n}  \sqrt{   K} \big)^{-d/(\alpha+ \zeta)  }       .
\end{align*}
Since $\epsilon_n =   \sqrt{K}/ \sqrt{n} $ and $K = K_n  \asymp n^{\frac{d}{2[\alpha + \zeta] + d}}$, the bound in (\ref{concbound}) follows from observing that
\begin{align*}
  n^{\frac{d}{2(\alpha + \zeta)}}   \lessapprox K_n^{\frac{d}{2( \alpha + \zeta)}}  n^{\frac{d}{2(\alpha + \zeta)}} \asymp K_n^{1 + \frac{d}{\alpha + \zeta} }.
\end{align*}
Now suppose the model is severely ill-posed so that $\sigma_i \asymp \exp(-R i^{\zeta/d} ) $ for some $R, \zeta \geq 0$. It follows from \citep[Lemma 5.1]{ray} that    \begin{align*}
- \log \mathbb{P} \bigg(  \| G_{\alpha}   \|_{w,\sigma} < \epsilon_n \sqrt{ K}           \bigg) \leq C     \bigg \{  \log  \bigg ( \frac{1}{  \epsilon_{n}  \sqrt{ K}   }    \bigg )     \bigg \}^{1 + \frac{d}{\zeta} }        .
\end{align*}
Since $\log( (\epsilon _n \sqrt{K})^{-1}  ) \asymp \log(n) $ and  $K = K_n \asymp (\log n)^{1 +d/\zeta}$, the bound in (\ref{concbound}) follows. Hence, we obtain $ \int_{\mathcal{R}_n} d \mu(h) \geq \exp(-C' n \epsilon_n^2)  $ for some $C' > 0$. Assumption \ref{loc-conc}$(i)$ follows. Moreover, we note that the constant $C'$ is independent of $M$. 

Since $d\mu(.)$ is the distribution of  $G_{\alpha} /  \sqrt{K} $ and $\alpha > \gamma$, it follows from Theorem 2.1.20 of \citet{gine2021mathematical}  that there exists a universal constant $D > 0 $ such that $$ \int_{\mathcal{R}_n \setminus \mathcal{S}_n} d \mu(h) \leq \int_{h : \| h \|_{\mathbf{H}^t} > M} d \mu(h) + \int_{h : \| h \|_{\mathcal{H}^\gamma} > M} d \mu(h) \leq4 \exp(-D M^2 n \epsilon_n^2).   $$
By picking $ M > 0$ large enough, we can ensure that  $DM^2  > C'$ and Assumption \ref{loc-conc}$(ii)$ follows. From the conclusion of Theorem \ref{main-contract}, we obtain $$ \mu \big ( h :  \| \widehat{m}(W,h)  - m(W,h_0)  \|_{L^2(\mathbb{P}_n,\widehat{\Sigma})}  > L \epsilon_n  \: \big| \: \mathcal{D}_n  \big ) \xrightarrow{\mathbb{P}} 0 .$$
Next, we aim to apply Corollary \ref{contract-strong} with the metric $d(h,h_0) = \| h - h_0 \|_{L^2(\mathbb{P})}$. Define $r_n = (\log K)^{-1}$ and for any fixed $E > 0$, define the set $$ \mathcal{G}_n = \{ h : \|h \|_{\mathbf{H}^t} \leq E \; , \; \| h \|_{\mathcal{H}^{\gamma}} \leq E \;, \; \| h \|_{\mathcal{H}^{\alpha - r_n}} \leq E r_n^{-1/2} \} . $$
By expressing $G_{\alpha} \stackrel{d}{=}  \sum_{i=1}^{\infty} \sqrt{\lambda_{i,\alpha}} Z_i e_i$ in its Karhunen-Loève expansion, we have    \begin{align*}
\E  \big( \| G_{\alpha} \|_{\mathcal{H}^{\alpha - r_n}}^2 \big) = \sum_{i=1}^{\infty} i ^{2(\alpha - r_n)/d} \lambda_i \; \; \; \; \text{where} \; \; \; \lambda_i \asymp i^{-1 - 2 \alpha/d}. 
\end{align*}
Therefore, from the definition of $r_n$, it follows that $    \E  \big( \| G_{\alpha} \|_{\mathcal{H}^{\alpha - r_n}}^2 \big)  \leq C    r_n^{-1}  $. Since $d\mu(.)$ is the distribution of  $G_{\alpha} /  \sqrt{K} $, it follows from Theorem 2.1.20 of \citet{gine2021mathematical}  that there exists a universal constant $D > 0 $ such that $ \mu(h \notin \mathcal{G}_n) \leq 6 \exp (-D E^2 n \epsilon_n^2)$. We can pick $E > 0$ large enough so as to satisfy the hypothesis of Corollary \ref{contract-strong}. Since the conditions of Corollary \ref{contract-strong} are satisfied, it only remains to verify the rate for the modulus $\omega_n$. Define the set $$ \mathcal{E}_n = \{ h \in \mathcal{G}_n : \| \widehat{m}(W,h)  - m(W,h_0)  \|_{L^2(\mathbb{P}_n,\widehat{\Sigma})} \leq L \epsilon_n  \} . $$
By arguing as in Theorem 1, we can deduce that $ \sup_{h \in \mathcal{E}_n} \| \Pi_K m(W,h) \|_{L^2(\mathbb{P})} \leq D \epsilon_n $ with $\mathbb{P} $ probability approaching $1$. It follows that  $$ \sup_{h \in \mathcal{E}_n} \| m(W,h) \|_{L^2(\mathbb{P})} \leq D  \max \bigg\{ \epsilon_n , \sup_{h \in \mathcal{G}_n   } \| (\Pi_K - I) m(W,h) \|_{L^2(\mathbb{P})}   \bigg\} .$$
As in Theorem 1, this implies that $\sup_{h \in \mathcal{E}_n} \|  h - h_0 \|_{L^2(\mathbb{P})} \leq \delta_n$ for some sequence $\delta_n \rightarrow 0$. In particular, for any $\epsilon > 0$, we have $\mathcal{E}_n \subseteq \{ h : \| h - h_0 \|_{L^2(\mathbb{P})} \leq \epsilon  \}$ asymptotically. Since $\epsilon_n = \sqrt{K} / \sqrt{n}$, Condition \ref{gp-link}-\ref{basis-approx} imply \begin{align*}
    \sup_{h \in \mathcal{E}_n} \| h- h_0 \|_{w,\sigma} \leq D \bigg(  \sqrt{K}n^{-1/2}  + \varphi(K) K^{-\alpha/d}  r_n^{-1/2} K^{r_n/d}    \bigg).
\end{align*}
By substituting the definition of $r_n$, we have $K^{r_n/d} = O(1)$, and \begin{align*}
    \sup_{h \in \mathcal{E}_n} \| h- h_0 \|_{w,\sigma} \leq D \bigg(  \sqrt{K}n^{-1/2}  + \varphi(K) K^{-\alpha/d} r_n^{-1/2}    \bigg).
\end{align*}
Since $h_0 \in \mathcal{H}^p$ and $\| h \|_{\mathcal{H}^{\alpha-r_n}} \leq D r_n^{-1/2}$, we have for all $h \in \mathcal{E}_n$, \begin{align*}
\|   h- h_0 \|_{L^2(\mathcal{X})}^2 =   \sum_{i=1}^{\infty}  \left|  \langle e_i , h - h_0 \rangle_{}     \right|^2  & =  \sum_{i=1}^{J}  \left|  \langle e_i , h - h_0 \rangle_{}     \right|^2 +  \sum_{i> J}^{}  \left|  \langle e_i , h - h_0 \rangle_{}   \right|^2 \\  &  \leq \big (\max_{i \leq J} \sigma_i^{-2}   \big) \sum_{i=1}^{\infty} \sigma_i^2  \left|  \langle e_i , h - h_0 \rangle_{}     \right|^2 + D  r_n^{-1}   J^{- 2 \alpha/d} J ^{2r_n/d} \\ & \leq D \bigg(   \max_{i \leq J} \sigma_i^{-2}    \|  h- h_0  \|_{w,\sigma}^2 +  r_n^{-1}   J^{- 2 \alpha/d} J ^{2r_n/d}    \bigg) 
\end{align*}
for all $J \geq 1$. From the preceding derived bounds, the last term on the right can be bounded as
\begin{align*}
& \bigg(   \max_{i \leq J} \sigma_i^{-2}    \|  h- h_0  \|_{w,\sigma}^2 +  r_n^{-1}   J^{- 2 \alpha/d} J ^{2r_n/d}    \bigg)  \\ & \leq D \bigg(  \max_{i \leq J} \sigma_i^{-2}   \bigg[ K n^{-1}  + \varphi^2(K) K^{-2\alpha/d}  r_n^{-1}         \bigg] +     J^{- 2 \alpha/d} J ^{2r_n/d}  r_n^{-1}     \bigg).
\end{align*}
It follows that \begin{align*}
  \sup_{h \in \mathcal{E}_n}   \|   h- h_0 \|_{L^2(\mathcal{X})}^2 \leq D  \inf_{J \geq 1}  \bigg(  \max_{i \leq J} \sigma_i^{-2}   \bigg[ K n^{-1}  + \varphi^2(K) K^{-2\alpha/d}  r_n^{-1}         \bigg] +     J^{- 2 \alpha/d} J ^{2r_n/d}  r_n^{-1}     \bigg).
\end{align*}
 In the mildly ill-posed case, we have $  \sigma_i \asymp i^{- \zeta/d}$ and $\varphi(K) \asymp K^{-\chi/d}$ for some $\chi,\zeta \geq 0$. Since $ K_n \asymp  n^{d/[2(\alpha + \zeta) + d]}$ satisfies $K_n n^{-1} \asymp K_n^{-2(\alpha + \zeta)/d}$, the preceding term reduces to \begin{align*}
     \sup_{h \in \mathcal{E}_n}   \|   h- h_0 \|_{L^2(\mathcal{X})}^2 \leq D  \inf_{J \geq 1}  \bigg[  J^{2 \zeta/d}   K_n n^{-1} \big( 1 + r_n^{-1}  K_n^{2 (\zeta - \chi)/d}    \big)   + J^{- 2 \alpha/d} J ^{2r_n/d}  r_n^{-1}        \bigg].
 \end{align*}
We pick $J = J_n$ to satisfy $J_n^{-2 (\alpha + \zeta)/d} \asymp  n^{-1}K_n^{1 + 2( \max \{\zeta - \chi ,0\})/d} $. This choice also ensures that $J_n^{2 r_n/d} = O(1)$. Since $ K_n \asymp  n^{d/[2(\alpha + \zeta) + d]}$, the implied rate is
 \begin{align*}
    \sup_{h \in \mathcal{E}_n} \|   h- h_0 \|_{L^2(\mathcal{X})}^2 \leq D n^{- \frac{2 \alpha}{2[\alpha + \zeta] + d} \frac{(\alpha +  \min \{ \zeta,\chi  \})}{(\alpha + \zeta)}} \log n .
\end{align*}
In the severely ill-posed case, we have $  \sigma_i \asymp  \exp( -R i^{\zeta/d})$ and $\varphi(K) \asymp \exp(-R' K^{\chi/d})$ for some $R,R',\zeta,\chi > 0$. Define $c' = \chi(d^{-1} + \zeta^{-1})  > 0$.  Since $ K_n \asymp (\log n)^{1+d/\zeta} $, we have $\varphi^2(K_n) \asymp \exp( - c (\log n)^{c'}   )$ for some $c > 0$. In this case, the choice $J = \lfloor (c_0 \log n)^{ \min \{ c' ,1 \} d/\zeta} \rfloor  $ for  a sufficiently small $c_0$ implies $J^{2 r_n/d} = O(1)$ and \begin{align*}
    \sup_{h \in \mathcal{E}_n} \|   h- h_0 \|_{L^2(\mathcal{X})}^2 \leq D \big( \log n  \big)^{-2 \min \{ c',1  \} \alpha/\zeta} \log \log n.
\end{align*}
\end{proof}

\begin{lemma}
\label{emp-bloc}
Suppose Conditions \ref{residuals}, \ref{residuals2} and \ref{fsbasis}$(i)$ hold. Given functions $h(X), h'(X): \mathcal{X} \rightarrow \R $, define the differenced residual:
 \begin{align*}  R_{h-h'}^K (Z) =\big[ G_{b,K}^{-1/2}  b^K(W) \big] \big[  \big \{  \rho_{}(Y,h_{}(X)) - \rho_{}(Y,h_{}'(X)) \big \}   \big]_{l}  \; \; \; \; \; , \; \; \; \; \; \;   l \in \{1 , \dots , d_{\rho} \}\;,\end{align*}
where $[v]_{l}$ denotes the $l^{th}$ element of a vector $v$. Then, given any $ M > 0 $ and a  sequence $\delta_n \downarrow 0$, there exists a universal constant $D  = D(M) < \infty$ such that 
    \begin{align*}
&  \sqrt{n} \sup_{l \in \{1 , \dots ,  d_{\rho} \} }  \E\bigg(     \sup_{h,h'  \in \mathbf{H}^t(M)   : \|  h- h'   \|_{L^2(\mathbb{P})} \leq \delta_n  }   \|   \E_n [ R_{h-h',l}^{K}(Z) ] - \E [ R_{h-h',l}^{K}(Z) ]        \|_{\ell^2}  \bigg)  \\ & \leq D \bigg[   \frac{K^{3/2} \log(K)}{\sqrt{n}}  + \frac{\sqrt{K} \delta_n^{-d/t} }{\sqrt{n}}  + \sqrt{K} \sqrt{\log(K)} \delta_n^{\kappa} + \delta_n^{\kappa - d/(2t)}     \bigg].
\end{align*}

\end{lemma}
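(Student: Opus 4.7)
The plan is to reduce the vector-valued empirical process to a scalar supremum by duality, then apply a localized Bernstein--Dudley maximal inequality to the enlarged class indexed by $(h,h',\alpha) \in \mathbf{H}^{t}(M)^{2} \times S^{K-1}$. Fix $l \in \{1,\dots,d_{\rho}\}$ and write
\[
\|(\mathbb{E}_{n}-\mathbb{E})R^{K}_{h-h',l}\|_{\ell^{2}}
= \sup_{\alpha \in S^{K-1}} (\mathbb{E}_{n}-\mathbb{E})\!\left[g_{\alpha}(W)\,\Delta_{h,h'}(Y,X)\right],
\]
with $g_{\alpha}(w) = \alpha' G_{b,K}^{-1/2} b^{K}(w)$ and $\Delta_{h,h'}(y,x) = \rho_{l}(y,h(x)) - \rho_{l}(y,h'(x))$. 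Because $G_{b,K}^{-1/2} b^{K}$ is $L^{2}(\mathbb{P})$-orthonormal, $\|g_{\alpha}\|_{L^{2}(\mathbb{P})}=1$ and $\|g_{\alpha}\|_{\infty}\leq \zeta_{b,K} \lessapprox \sqrt{K}$ by Condition \ref{fsbasis}(i). Condition \ref{residuals2}(iii) supplies an almost-sure envelope $|\Delta_{h,h'}| \leq C_{4}$ once $\delta_{n}$ falls below the threshold there, so the class $\mathcal{G} = \{g_{\alpha}\Delta_{h,h'}\}$ has envelope $U \lessapprox \sqrt{K}$, while Condition \ref{residuals} together with orthonormality gives variance $\sup_{f\in\mathcal{G}} \mathbb{E}f^{2} \lessapprox \delta_{n}^{2\kappa}$.

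Next I would compute the $L^{2}(\mathbb{P})$-bracketing entropy of $\mathcal{G}$ via a product covering. A Euclidean $\varepsilon_{\alpha}$-cover of $S^{K-1}$ has $\log N \lessapprox K\log(1/\varepsilon_{\alpha})$, and a $\|\cdot\|_{\infty}$-cover of $\mathbf{H}^{t}(M)$ at scale $\eta$ has $\log N \lessapprox \eta^{-d/t}$ by \citep[Proposition C.7]{ghosal2017fundamentals}. For each triple $(\alpha_{i},\tilde{h}_{r},\tilde{h}_{s})$ in the discrete index I build a bracket centered at $g_{\alpha_{i}}(\rho_{l}(\cdot,\tilde{h}_{r}(\cdot)) - \rho_{l}(\cdot,\tilde{h}_{s}(\cdot)))$, picking up two error contributions of $L^{2}(\mathbb{P})$-size $\|g_{\alpha}-g_{\alpha_{i}}\|_{\infty}\cdot C_{4} \lessapprox \sqrt{K}\varepsilon_{\alpha}$ from the spherical cover and $C_{1}\eta^{\kappa}$ from the local modulus in Condition \ref{residuals}. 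Restricting to pairs $(\tilde{h}_{r},\tilde{h}_{s})$ with $\|\tilde{h}_{r}-\tilde{h}_{s}\|_{L^{2}(\mathbb{P})} \leq \delta_{n} + 2\eta$ preserves the distance constraint, and balancing the two scales yields
\[
\log N_{[]}(\varepsilon,\mathcal{G},L^{2}(\mathbb{P})) \lessapprox K\log(\sqrt{K}/\varepsilon) + \varepsilon^{-d/(t\kappa)}.
\]

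Finally, I would invoke a Bernstein--Dudley bound (e.g.\ \citealp[Theorem 3.5.13]{gine2021mathematical}) with $\sigma = C_{1}\delta_{n}^{\kappa}$ and $U \lessapprox \sqrt{K}$:
\[
\sqrt{n}\,\mathbb{E}\sup_{\mathcal{G}}|(\mathbb{E}_{n}-\mathbb{E})f|
\lessapprox J_{[]}(\sigma,\mathcal{G}) + \frac{U \log N_{[]}(\sigma)}{\sqrt{n}}.
\]
The Dudley integral splits by $\sqrt{a+b}\leq\sqrt{a}+\sqrt{b}$ into $\int_{0}^{\sigma}\sqrt{K\log(\sqrt{K}/\varepsilon)}\,d\varepsilon \lessapprox \sqrt{K}\,\delta_{n}^{\kappa}\sqrt{\log(\sqrt{K}/\delta_{n}^{\kappa})} \lessapprox \sqrt{K\log K}\,\delta_{n}^{\kappa}$ in the regime where $\log(1/\delta_{n}) \lessapprox \log K$, and $\int_{0}^{\sigma}\varepsilon^{-d/(2t\kappa)}\,d\varepsilon \lessapprox \delta_{n}^{\kappa - d/(2t)}$, which converges precisely because $t > d/(2\kappa)$. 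Evaluating the Bernstein remainder at $\log N_{[]}(\sigma) \lessapprox K\log K + \delta_{n}^{-d/t}$ produces $K^{3/2}\log K/\sqrt{n} + \sqrt{K}\delta_{n}^{-d/t}/\sqrt{n}$. Summing the four contributions and taking $\sup_{l}$ yields the target bound.

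The main obstacle is the two-scale product bracketing of $\mathcal{G}$: the spherical part contributes an $\sqrt{K}$-weighted error because $\|g_{\alpha}-g_{\alpha_{i}}\|_{\infty}$ (not the $L^{2}$ norm) enters through the almost-sure envelope $C_{4}$, while the functional part contributes $\eta^{\kappa}$ via the local $L^{2}$ modulus of $\rho_{l}$. These two sources of error must be balanced at each entropy scale, and convergence of the functional contribution to the Dudley integral forces $d/(2t\kappa) < 1$, which is exactly the regularity requirement $t > d/(2\kappa)$ already built into Condition \ref{residuals}. A secondary point is ensuring the envelope $C_{4}$ in Condition \ref{residuals2}(iii) applies uniformly on the entire class; this is automatic for $n$ large enough since $\delta_{n}\downarrow 0$.
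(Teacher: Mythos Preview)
Your approach is essentially the paper's: duality over $S^{K-1}$, envelope $U \lessapprox \sqrt{K}$ and weak variance $\sigma \lessapprox \delta_n^{\kappa}$, product bracketing built from $\|\cdot\|_\infty$-covers of $\mathbf{H}^t(M)$ and Euclidean covers of the sphere, and a Bernstein-corrected entropy integral that delivers the same four terms. One correction: the localized maximal inequality you need is \citet[Proposition~3.5.15]{gine2021mathematical} (the paper's citation), which has the form $J_{[]}(\sigma)\bigl(1 + U\,J_{[]}(\sigma)/(\sigma^{2}\sqrt{n})\bigr)$ rather than your $J_{[]}(\sigma) + U\log N_{[]}(\sigma)/\sqrt{n}$---Theorem~3.5.13 there is the plain Dudley bound without a Bernstein remainder---though for this entropy profile $(J_{[]}(\sigma)/\sigma)^2 \asymp \log N_{[]}(\sigma)$ and both forms yield the identical four contributions.
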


\begin{proof}[Proof of Lemma \ref{emp-bloc}]
It suffices to verify the bound for each  $l \in \{1 , \dots , d_{\rho} \}$ individually. Fix any such $l$. For ease of notation, we suppress the dependence on $l$ and denote the  vector by $   R_{h-h',l}^K(Z) = R_{h-h'}^K(Z)$. Observe that
 \begin{align*} &
\E_{} \bigg[ \sup_{h,h' \in \mathbf{H}^t(M)  : \|  h- h'   \|_{L^2(\mathbb{P})} \leq \delta_n  }  \|   \E_n [R_{h-h'}^K(Z)] - \E_{}[R_{h-h'}^K(Z)]       \|_{\ell^2}  \bigg] \\ & =  \frac{1}{\sqrt{n}} \E \bigg[  \sup_{h,h'  \in \mathbf{H}^t(M)  : \|  h- h'   \|_{L^2(\mathbb{P})} \leq \delta_n  }  \sup_{\gamma \in  \mathbb{S}^{K-1} }  \frac{1}{\sqrt{n}} \sum_{i=1}^n  \gamma' \big(  R_{h-h'}^K(Z_i) -\E_{} [ R_{h-h'}^K(Z) ] \:  \big)           \:  \bigg]  
\end{align*} 
where $\mathbb{S}^{K-1} = \{ v \in \R^K : \| v \|_{\ell^2}  = 1 \}$. Define the class of functions $$\mathcal{F}_K  = \{ \gamma' R_{h-h'}^K(Z)   : h,h'  \in \mathbf{H}^t(M)  \: , \: \| h-h'  \|_{L^2(\mathbb{P})} \leq \delta_n  \: , \:  \gamma \in   \mathbb{S}^{K-1}   \}    . $$
Denote the associated envelope function by $F_{K}(Z_i) =   \sup_{f \in \mathcal{F}_K} \left|   f(Z_i)  \right|   $.  Let $C_4(M) < \infty $ be as in Condition \ref{residuals2}(iii).  By Cauchy-Schwarz, it follows that \begin{align*}
F_K(Z_i) & \leq  \sup_{\gamma \in \mathbb{S}^{K-1}} \left| \gamma' G_{b,K}^{-1/2} b^K(W) \right| \sup_{h,h' \in \mathbf{H}^t(M), \|h - h'  \|_{L^2(\mathbb{P}) \leq \delta_n}  } \left|  \rho_{l}(Y,h_{}(X)) - \rho_{l}(Y,h_{}'(X)) \right|  \leq C_4 \zeta_{b,K}.
\end{align*}
where $\zeta_{b,K} =  \sup_{w \in \mathcal{W} }  \|  G_{b,K}^{-1/2} b^K(w) \|_{\ell^2} $. 

Let $C_1(M) < \infty$ and $\kappa \in (0,1]$ be as in Condition \ref{residuals}.  For any fixed $\gamma \in \mathbb{S}^{K-1}$, we have that  \begin{align*}
  &  \sup_{h,h' \in \mathbf{H}^t(M), \|h - h'  \|_{L^2(\mathbb{P}) \leq \delta_n}  }  \E \big[  \left|  \gamma' R_{h-h'}^K(Z)   \right|^2     \big]      \\& =   \sup_{h,h' \in \mathbf{H}^t(M), \|h - h'  \|_{L^2(\mathbb{P}) \leq \delta_n}  }   \E \big[   \gamma' G_{b,K}^{-1/2} b^K(W) b^K(W)'  G_{b,K}^{-1/2} \gamma   \left|  \rho_{l}(Y, h(X))  - \rho_{l}(Y, h'(X))   \right|^2     \big]  \\ & \leq C_1^2 \delta_n^{2 \kappa}  \gamma' G_{b,K}^{-1/2} \E \big[ b^K(W) b^K(W)' \big] G_{b,K}^{-1/2} \gamma     \\ &  = C_1^2 \delta_n^{2 \kappa}.
\end{align*}
For ease of exposition in the remainder of the proof, define $ \sigma_n = \delta_n^{\kappa} $. From the preceding bound, it follows that  $
    \sup_{f \in \mathcal{F}_K}  \|  f \|_{L^2(\mathbb{P})} \leq C_1  \sigma_n$. By an application of \citep[Proposition 3.5.15]{gine2021mathematical},  there exists a universal constant $ L > 0 $ such that \begin{align*}
& \E_{} \bigg[\sup_{h,h' \in \mathbf{H}^t(M) : \|  h- h'   \|_{L^2(\mathbb{P})} \leq \delta_n  }  \|   \E_n [R_{h-h'}^K(Z)] - \E_{}[R_{h-h'}^K(Z)]       \|_{\ell^2}  \bigg] \\ &   \leq \frac{L}{\sqrt{n}}  \int_{0}^{ 2 \sigma_n  }  \sqrt{\log N_{[]}(\mathcal{F}_K, \| . \|_{L^2(\mathbb{P})},  \epsilon  ) } d \epsilon \bigg( 1 +         \frac{\zeta_{b,K}}{\sigma_n^2 \sqrt{n}} \int_{0}^{ 2 \sigma_n  }  \sqrt{\log N_{[]}(\mathcal{F}_K, \| . \|_{L^2(\mathbb{P})},  \epsilon  ) } d \epsilon  \bigg).
\end{align*}
Fix any $\delta > 0$. Let $\{  h_i\}_{i=1}^{T_1}  $  denote a  $\delta$ covering of  $\big( \mathbf{H}^t(M)    , \| . \|_{\infty}  \big)$ and $\{ \gamma_{m} \}_{ m=1}^{T_2} $ denote a $\delta$ covering of $(\mathbb{S}^{K-1} , \| . \|_{\ell^2})$. For $i,j \in \{1 , \dots , T_1 \}$ and  $m \in \{ 1 , \dots , T_2  \}$, define the functions   \begin{align*} e_{i,j,m}(Z) & =   \sup_{ \gamma \in \mathbb{S}^{K-1} : \| \gamma - \gamma_{m}   \|_{\ell^2} < \delta \; , \; h \in \mathbf{H}^t(M) \; , \; h' \in \mathbf{H}^t(M)  } \left| (\gamma - \gamma_m)' [  R_{h}^K(Z) - R_{h'}^K(Z)    ]   \right| \\ & + \sup_{\gamma \in \mathbb{S}^{K-1} \; , \; h \in \mathbf{H}^t(M) :  \| h- h_i   \|_{\infty} < \delta }  \left|  \gamma' [    R_{h}^K(Z) - R_{h_i}^K(Z)       ]        \right| \\ & +  \sup_{\gamma \in \mathbb{S}^{K-1} \; , \; h \in \mathbf{H}^t(M) :  \| h- h_j   \|_{\infty} < \delta }  \left|  \gamma' [    R_{h}^K(Z) - R_{h_j}^K(Z)       ]  \right| .
\end{align*}
Observe that $$ \bigg \{   \gamma_m ' [ R_{h_i}^K(Z) - R_{h_j}^K(Z)  ]    - e_{i,j,m} \; ,  \;   \gamma_m ' [ R_{h_i}^K(Z) - R_{h_j}^K(Z)  ]   + e_{i,j,m}   \bigg \}_{(i,j) \in \{1 , \dots , T_1 \}  \; , m \in \{ 1 , \dots , T_2 \}  }  $$ is a bracket covering for $\mathcal{F}_K$. Let $C_2(M) < \infty$ be as in Condition \ref{residuals2}(i).  By Cauchy-Schwarz: \begin{align*}
 \| e_{i,j,m} \|_{L^2(\mathbb{P})}  &   \leq   \bigg \|  \sup_{ \gamma \in \mathbb{S}^{K-1} : \| \gamma - \gamma_{m}   \|_{\ell^2} < \delta \; , \; h \in \mathbf{H}^t(M) \; , \; h' \in \mathbf{H}^t(M)  } \left| (\gamma - \gamma_m)' [  R_{h}^K(Z) - R_{h'}^K(Z)    ]   \right|   \bigg \|_{L^2(\mathbb{P})} \\ & + \bigg \|   \sup_{\gamma \in \mathbb{S}^{K-1} \; , \; h \in \mathbf{H}^t(M) :  \| h- h_i   \|_{\infty} < \delta }  \left|  \gamma' [    R_{h}^K(Z) - R_{h_i}^K(Z)       ]        \right|          \bigg \|_{L^2(\mathbb{P})} \\ & + \bigg \|     \sup_{\gamma \in \mathbb{S}^{K-1} \; , \; h \in \mathbf{H}^t(M) :  \| h- h_j   \|_{\infty} < \delta }  \left|  \gamma' [    R_{h}^K(Z) - R_{h_j}^K(Z)       ]  \right|       \bigg \|_{L^2(\mathbb{P})} \\ & \leq 2  \delta  \zeta_{b,K} C_2 +  \delta^{\kappa} \zeta_{b,K} C_1 + \delta^{\kappa} \zeta_{b,K} C_1. 
\end{align*}
In particular, for all $\delta \in (0,1]$, we have that $ \| e_{i,j,m} \|_{L^2(\mathbb{P})}  \leq C \delta^{\kappa} \zeta_{b,K}  $ for $C = 2 C_2 + 2 C_1$. By \citep[Proposition C.7]{ghosal2017fundamentals}, we have  $  \log N_{}( \mathbf{H}^t(M),  \| . \|_{ \infty}, \epsilon) \lessapprox  \epsilon^{-d/t}$ as $\epsilon \downarrow 0$. By Condition \ref{fsbasis}$(i)$, we have $\zeta_{b,K}  \lessapprox \sqrt{K}$. Since $ \log N(\mathbb{S}^{K-1},\| .\|_{\ell^2}, \epsilon ) \leq K \log (3 \epsilon^{-1} )  $, it follows that there exists a universal constant $ L > 0 $ such that  \begin{align*}
& \int_{0}^{ 2 \sigma_n  }  \sqrt{\log N_{[]}(\mathcal{F}_K, \| . \|_{L^2(\mathbb{P})},  \epsilon  ) } d \epsilon \\ & \leq L \bigg( \sqrt{K} \sqrt{\log \zeta_{b,K} } \sigma_n  + \sqrt{K} \int_{0}^{2 \sigma_n}  \sqrt{\log(\epsilon^{-1})} d \epsilon  + \int_{0}^{2 \sigma_n}  \epsilon^{-d/2 \kappa t}         d \epsilon  \bigg) \\ & \leq L \bigg(   \sqrt{K} \sqrt{\log \zeta_{b,K} } \sigma_n  + \sqrt{K} \sigma_n \sqrt{\log (\sigma_n^{-1})}   + \sigma_n^{1- d/ (2 \kappa t)}   \bigg)  \\ & \leq L \bigg( \sqrt{K} \sqrt{\log K} \sigma_n +   \sigma_n^{1- d/ (2 \kappa t)}      \bigg).
\end{align*}
From the preceding bounds, it follows that   \begin{align*} 
 & \sqrt{n} \E_{} \bigg[\sup_{h,h' \in \mathbf{H}^t(M) : \|  h- h'   \|_{L^2(\mathbb{P})} \leq \delta_n  }  \|   \E_n [R_{h-h'}^K(Z)] - \E_{}[R_{h-h'}^K(Z)]       \|_{\ell^2}  \bigg] \\ &   \leq L  \int_{0}^{ 2 \sigma_n  }  \sqrt{\log N_{[]}(\mathcal{F}_K, \| . \|_{L^2(\mathbb{P})},  \epsilon  ) } d \epsilon \bigg( 1 +         \frac{\zeta_{b,K}}{\sigma_n^2 \sqrt{n}} \int_{0}^{ 2 \sigma_n  }  \sqrt{\log N_{[]}(\mathcal{F}_K, \| . \|_{L^2(\mathbb{P})},  \epsilon  ) } d \epsilon  \bigg)  \\ &   \lessapprox \bigg( \sqrt{K} \sqrt{\log K} \sigma_n +   \sigma_n^{1- d/ (2 \kappa t)}      \bigg) + \bigg( \sqrt{K} \sqrt{\log K} \sigma_n +   \sigma_n^{1- d/ (2 \kappa t)}      \bigg)^2 \frac{\sqrt{K}}{ \sigma_n^2 \sqrt{n}}. 
 \end{align*}
By substituting back $\sigma_n = \delta_n^{\kappa}$, the preceding term reduces to  
  \begin{align*}
 \bigg( \sqrt{K} \sqrt{\log K} \delta_n^{\kappa} +  \delta_n^{\kappa- d/(2t) }            \bigg)  + \bigg( \sqrt{K} \sqrt{\log K} \delta_n^{\kappa}  +  \delta_n^{\kappa- d/(2t) }        \bigg)^2  \frac{\sqrt{K}}{ \delta_n^{2 \kappa}  \sqrt{n}}.
\end{align*}
\end{proof}
\begin{lemma}
\label{empsq}
Suppose Condition \ref{fsbasis}$(i)$ holds. For each realization of $W$, let $\Sigma(W)$ denote a positive definite matrix such that $\mathbb{P}(  \| \Sigma(W) \|_{op} \leq C  ) = 1$ for some $C > 0$. Given any fixed $M > 0 $ and sequences $\delta_n, \gamma_n \downarrow 0$, define the set  \begin{align*}
\Theta_n = \{  h \in \mathbf{H}^t(M) :   \E (  \|  \Pi_K m(W,h)  \|_{\ell^2}^2 ) \leq M  \gamma_n ^2 , \; \| h - h_0  \|_{L^2(\mathbb{P})} \leq M \delta_n  \}.
\end{align*}
Then, there exists a universal constants $D,R < \infty$ such that \begin{align*}
     & \E \bigg( \sup_{h \in  \Theta_n}  \left|
\sum_{i=1}^n \bigg \{ [\Pi_K m(W_i,h)]' \Sigma(W_i) [\Pi_K m(W_i,h)]  - \E  \big( [\Pi_K m(W,h)]' \Sigma(W) [\Pi_K m(W,h)] \big) \bigg \} \right| \bigg) \\ & \leq R \bigg[  \sqrt{n} \gamma_n ^2 K \mathcal{J}(K^{-1/2}) + \gamma_n ^2 K^3 \mathcal{J}^2(K^{-1/2})  \bigg]
\end{align*}
where $\mathcal{J}(.)$ is defined by \begin{align*}
   &  \mathcal{J}(c) =  \int_{0}^{c}  \sqrt{\log N(\mathcal{M}_n , \| . \|_{L^2(\mathbb{P})}, \tau  D \gamma_n   )  } d \tau \; \; \; \; \; \; \; \; \forall  \; c > 0 \\ & \mathcal{M}_n = \{ m(w,h) : h \in \Theta_n   \}.
\end{align*}

\end{lemma}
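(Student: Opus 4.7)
The strategy is to control the empirical process indexed by the quadratic functions $f_h(W) = [\Pi_K m(W,h)]'\, \Sigma(W)\, [\Pi_K m(W,h)]$, $h \in \Theta_n$, via one-step chaining at the scale $\delta_0 \asymp D\gamma_n/\sqrt{K}$ in the pseudometric $\|m(W,\cdot) - m(W,\cdot')\|_{L^2(\mathbb{P})}$. A direct Dudley bound at the natural variance level $\sigma^2 \asymp K\gamma_n^4$ yields an entropy integral whose upper limit is of constant order in $K$, not the required $K^{-1/2}$; the chaining is what forces the sharper cutoff on $\mathcal{J}$.

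First, I will derive three structural estimates on the class $\mathcal{F} = \{f_h : h \in \Theta_n\}$. Because each coordinate of $\Pi_K m(W,h)$ lies in the $K$-dimensional span of $\{b_1,\ldots,b_K\}$, the orthonormalized-basis inequality gives the pointwise bound $\|\Pi_K m(W,h)\|_{\ell^2}^2 \leq \zeta_{b,K}^2 \, \E \|\Pi_K m(W,h)\|_{\ell^2}^2 \lessapprox K\gamma_n^2$. Combined with $\|\Sigma(W)\|_{op} \leq C$ and the $\Theta_n$-constraint $\E\|\Pi_K m(W,h)\|_{\ell^2}^2 \leq M\gamma_n^2$, this yields the envelope $|f_h|_\infty \lessapprox K\gamma_n^2$ and the variance bound $\Var(f_h) \leq \E f_h^2 \lessapprox K\gamma_n^4$. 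The symmetric factorization $f_h - f_{h'} = (\Pi_K m(W,h) - \Pi_K m(W,h'))' \Sigma(W) (\Pi_K m(W,h) + \Pi_K m(W,h'))$, Cauchy-Schwarz, the $L^2$-contractivity of $\Pi_K$, and the pointwise envelope on the sum produce the Lipschitz-type estimate $\|f_h - f_{h'}\|_{L^2(\mathbb{P})} \lessapprox \sqrt{K}\gamma_n\, \|m(W,h) - m(W,h')\|_{L^2(\mathbb{P})}$, hence $\log N(\mathcal{F}, \|\cdot\|_{L^2(\mathbb{P})}, \eta) \leq \log N(\mathcal{M}_n, \|\cdot\|_{L^2(\mathbb{P})}, \eta/(D\sqrt{K}\gamma_n))$ for a suitable constant.

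Next, I perform the one-step chaining. Take a minimal $\delta_0$-cover of $\mathcal{M}_n$ of cardinality $N_0 = N(\mathcal{M}_n, \|\cdot\|_{L^2(\mathbb{P})}, \delta_0)$ with $\delta_0 = D\gamma_n/\sqrt{K}$, and let $h_1,\ldots,h_{N_0} \in \Theta_n$ denote corresponding preimages. For every $h \in \Theta_n$ assign $h^\ast$ to be the nearest centroid and split
\[
\sum_{i=1}^n (f_h(W_i) - \E f_h) = \sum_{i=1}^n (f_{h^\ast}(W_i) - \E f_{h^\ast}) + \sum_{i=1}^n \big[(f_h - f_{h^\ast})(W_i) - \E(f_h - f_{h^\ast})\big].
\]
The sup of the centroid term is controlled by Bernstein's inequality with a union bound over the $N_0$ centroids; using the monotonicity estimate $\sqrt{\log N_0} \lessapprox \sqrt{K}\, \mathcal{J}(K^{-1/2})$, which follows from $\mathcal{J}(c) \geq c\sqrt{\log N(\mathcal{M}_n, cD\gamma_n)}$, this contribution is bounded by $\sqrt{n}\, K\gamma_n^2 \mathcal{J}(K^{-1/2}) + K^2\gamma_n^2 \mathcal{J}^2(K^{-1/2})$. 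The residual term is a uniformly bounded empirical process over the difference class $\{f_h - f_{h'} : \|m(W,h) - m(W,h')\|_{L^2(\mathbb{P})} \leq \delta_0\}$, whose envelope is still $\lessapprox K\gamma_n^2$ but whose refined variance is $\lessapprox \gamma_n^4$ by the Lipschitz bound restricted to a $\delta_0$-ball. Applying a Talagrand-Koltchinskii maximal inequality (e.g.\ Proposition 3.5.15 in \citealp{gine2021mathematical}) and changing variables $\epsilon = (D\sqrt{K}\gamma_n)(\tau D\gamma_n)$ converts the resulting Dudley integral into a multiple of $\sqrt{K}\gamma_n^2 \mathcal{J}(O(K^{-1/2}))$, giving the matching contribution.

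The main technical obstacle is the choice of chaining scale $\delta_0 \asymp D\gamma_n/\sqrt{K}$: it must be small enough that the residual class has refined variance $\lessapprox \gamma_n^4$ (rather than the global $K\gamma_n^4$) yet large enough that $\log N_0$ is controlled by $K\mathcal{J}^2(K^{-1/2})$. This single scale threads the needle between the envelope $U \lessapprox K\gamma_n^2$ and the global variance $\sigma^2 \lessapprox K\gamma_n^4$, and is precisely what produces the $K^{-1/2}$ upper limit on $\mathcal{J}$ in the stated bound.
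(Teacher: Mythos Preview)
Your proposal is correct but takes a more elaborate route than the paper. The paper does \emph{not} perform a one-step chaining split into centroid and residual terms; instead it applies a single Koltchinskii--Pollard type maximal inequality \citep[Theorem 3.5.4]{gine2021mathematical} directly to the class $\mathcal{F}=\{f_h:h\in\Theta_n\}$. The key observation that makes this work is a \emph{sup-norm} (not merely $L^2$) Lipschitz bound: from the pointwise envelope on each factor one gets $\sup_w |f_h(w)-f_{h'}(w)|\lessapprox \gamma_n K\,\|m(W,h)-m(W,h')\|_{L^2(\mathbb{P})}$, which yields the uniform entropy estimate $\sup_Q \log N(\mathcal{F},\|\cdot\|_{L^2(Q)},\tau F)\leq \log N(\mathcal{M}_n,\|\cdot\|_{L^2(\mathbb{P})},c\tau\gamma_n)$. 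With envelope $F\asymp K\gamma_n^2$ and $\sigma^2\asymp K\gamma_n^4$, the ratio $\sigma/\|F\|_{L^2}\asymp K^{-1/2}$ is exactly what sets the upper limit of the entropy integral in that theorem, so the $K^{-1/2}$ cutoff on $\mathcal{J}$ falls out automatically---no explicit chaining at a hand-picked scale $\delta_0$ is needed.

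Your statement that ``a direct Dudley bound \ldots\ yields an entropy integral whose upper limit is of constant order in $K$'' is true only of the naive sub-Gaussian Dudley bound that ignores the envelope; it is not true of the refined uniform-entropy maximal inequality the paper invokes. Your approach buys nothing extra here and introduces the additional burden of controlling the residual class (whose indexing by $(h,\pi(h))$ requires some care), whereas the paper's approach is a clean one-line application once the sup-norm Lipschitz bound is in hand.
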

\begin{proof}[Proof of Lemma \ref{empsq}]
Define the class of functions \begin{align*}
\mathcal{F} = \{  g : g(.) = [\Pi_K m(.,h)]' \Sigma(.) [\Pi_K m(.,h)]  : h \in \Theta_n   \}.
\end{align*}
For every fixed $h \in \Theta_n$, we have that
\begin{align*}
  \Pi_{K}[ m(W,h)] &  = \sum_{i=1}^K c_{h,i} [  G_{b,K}^{-1/2} b^K(W) ]_{i} \; \;  , \; \; c_{h,i} =  \E \big[  \rho \big( Y , h(X)   \big) [  G_{b,K}^{-1/2} b^K(W) ]_{i}  \big]  \; ,  \end{align*}
where $[  G_{b,K}^{-1/2} b^K(W) ]_{i}$ denotes the $i^{th}$ element of the vector $  G_{b,K}^{-1/2} b^K(W) $.  For every $ l \in \{  1 , \dots , d_{\rho} \} $, denote by $c_{h}^l$ the coefficient vector \begin{align*}
c_{h}^l = \big \{    \E \big[  \rho_{l} \big( Y , h(X)   \big) [  G_{b,K}^{-1/2} b^K(W) ]_{i}  \big]                    \big   \}_{i=1}^K.
\end{align*}
Observe that  $ \sum_{l=1}^{d_{\rho}}  \| c_h^l \|_{\ell^2}^2 =  \E (  \|  \Pi_K m(W,h)  \|_{\ell^2}^2 ) $. Let $C > 0$ be such that $ \mathbb{P} \big( \| \Sigma(W) \|_{op} \leq C \big) = 1  $. By Cauchy-Schwarz and the definition of $\Theta_n$,  it follows that \begin{align*}
\sup_{g \in \mathcal{F}} \left|    g(W)     \right| \leq C  \sup_{h \in \Theta_n} \| \Pi_K m(W,h)    \|_{\ell^2}^2 &  \leq C \zeta_{b,K}^2  \sum_{l=1}^{d_{\rho}} \|   c_h^l \|_{\ell^2}^2  \\ & = C \zeta_{b,K}^2  \E (  \|  \Pi_K m(W,h)  \|_{\ell^2}^2 ) \\ & \leq C M \zeta_{b,K}^2 \gamma_n ^2.
\end{align*}
From the estimate $\zeta_{b,K} \lessapprox \sqrt{K}$, it follows that $\sup_{g \in \mathcal{F}} \left|  g(W)  \right|  \leq C \gamma_n ^2 K  $ for some constant $C < \infty$. It follows that we can take  $F = C \gamma_n ^2 K$ to be an envelope of $\mathcal{F}$.
From this bound and the definition of $\Theta_n$,  we also obtain \begin{align*}
\sup_{g \in \mathcal{F}} \E[ g^2(W) ] \leq F \sup_{g \in \mathcal{F}} \E[\left| g(W) \right|] 
 \lessapprox F  \sup_{h \in \Theta_n}  \E (  \|  \Pi_K m(W,h)  \|_{\ell^2}^2 ) \lessapprox \gamma_n^4 K.
\end{align*} 
From similar arguments to those employed above, we have for every fixed $h,h' \in \Theta_n$,  the bound \begin{align*}     
&   \sup_{w}  \left| [\Pi_K m(w,h)]' \Sigma(w) [\Pi_K m(w,h)] - [\Pi_K m(w,h')]' \Sigma(w) [\Pi_K m(w,h')] \right| \\ & \lessapprox    \sup_{w} \sup_{g \in \Theta_n}  \left| [\Pi_K m(w,h) - \Pi_K m(w,h')]' \Sigma(w) [\Pi_K m(w,g)]  \right|       \\ & \lessapprox  \sqrt{F}  \sup_{w} \|  \Pi_K m(w,h) - \Pi_K m(w,h')    \|_{\ell^2}         \\ & \lessapprox \gamma_n   K  \sqrt{   \E \big(  \|  \Pi_K m(W,h) - \Pi_K m(W,h')      \|_{\ell^2}^2 \big) }.\\ & \lessapprox \gamma_n   K  \sqrt{   \E \big(  \|   m(W,h) -  m(W,h')      \|_{\ell^2}^2 \big) }.
\end{align*}
In particular,  there exists a universal constant $c >0 $ such that
 \begin{align*} \sup_{Q}
 \log N(  \mathcal{F},   \| . \|_{L^2(Q)}  , \tau F  )  \leq  \log N(\mathcal{M}, \| . \|_{L^2(\mathbb{P})} , c \tau \gamma_n   ) \; \; \; \;\; \; \;\forall \; \; \tau \in (0,1) \;\; ,
\end{align*}
where the supremum is over all discrete probability measures $Q$ on $\mathcal{W}$. From an application of \citep[Theorem 3.5.4]{gine2021mathematical}, it follows that \begin{align*}  \E \bigg( \sup_{g \in \mathcal{F}}  \left|
\sum_{i=1}^n g(W_i) - \E g(W) \right| \bigg)  \lessapprox \sqrt{n} \gamma_n^2 K \mathcal{J}(K^{-1/2}) + \gamma_n^2 K^3 \mathcal{J}^2(K^{-1/2}) .
\end{align*}
\end{proof}

\begin{proof}[Proof of Theorem \ref{bvm}]
 Given a  positive semi-definite matrix $\Sigma \in \R^{\rho \times \rho}$, we denote the inner product and norm induced by $\Sigma$ as $ \langle v, w \rangle_{\Sigma} = v' \Sigma w $ and $\| v \|_{\Sigma}^2 = v' \Sigma v$, respectively. With this notation, the quasi-Bayes posterior can be expressed as  \begin{align*}  \mu(.| 
\:\mathcal{D}_n) = \frac{   \exp\big(    - \frac{n}{2}  \E_n \big( \|  \widehat{m}(W,.)  \|_{\widehat{\Sigma}(W)}^2    \big)      \big) d \mu(.)  }{\int_{} \exp\big(    - \frac{n}{2} \E_n \big( \|  \widehat{m}(W,h)  \|_{\widehat{\Sigma}(W)}^2    \big)      \big) d \mu(h) } . 
  \end{align*}
For notational convenience, given two functions $h,g : \mathcal{X} \rightarrow \R$, we define the pairwise difference in the empirical estimate and its projection as:
\begin{align*}
 \widehat{m}(W, h, g) = \widehat{m}(W, h) - \widehat{m}(W, g) \; \; \; , \; \; \; \Pi_K m(W, h, g) := \Pi_K m(W, h) - \Pi_K m(W, g).
\end{align*}
Given a function $h : \mathcal{X} \rightarrow \R$ and $t \in \R$, we denote by $h_t$ the function: \begin{align*}
    h_t = h - \frac{ t}{\sqrt{n}} \tilde{\Phi}.
\end{align*}
Given a vector $v \in \R^n$, we denote the least squares projection of $v$ onto the subspace spanned by $ \{  b_1(W_i) , \dots , b_K(W_i)  \}_{i=1}^n $ by $\widehat\Pi_{K}[v]$. In particular, for every $h: \mathcal{X} \rightarrow \R$ and $l \in \{1 , \dots , d_{\rho}  \}$, we have
\begin{align}
 \label{emp-proj-op}    \widehat{\Pi}_K [  \{ \rho_{l}(Y_i,h(X_i)  \}_{i=1}^n ] =   \{ \widehat{m}(W_i,h) \}_{i=1}^n.
\end{align}
The RKHS $(\mathbb{H}_n, \| \cdot \|_{\mathbb{H}_n})$ associated to the Gaussian  random element  $G_{\alpha} / \sqrt{K} $ is \begin{align*}
  \mathbb{H}_{n}  =  \bigg \{  h \in L^2(\mathcal{X})  : \|h \|_{\mathbb{H}_{n}}^2 =  K \sum_{i=1}^{\infty}     i^{ 1 +   2\alpha /d}  \left| \langle h ,  e_i \rangle_{L^2(\mathcal{X})}      \right|^2 < \infty               \bigg \} .
\end{align*}
Let \( \epsilon_n = \sqrt{K_n} / \sqrt{n} \). Define the sequences
\[
r_n =
\begin{cases}
(\log n)^{-1} & \text{if mildly ill-posed}, \\
(\log \log n)^{-1} & \text{if severely ill-posed},
\end{cases}
\quad \text{and} \quad
\delta_n =
\begin{cases}
n^{- \frac{\alpha}{2[\alpha + \zeta] + d}} \sqrt{\log n} & \text{if mildly ill-posed}, \\
(\log n)^{- \alpha / \zeta} \sqrt{\log \log n} & \text{if severely ill-posed}.
\end{cases}
\]

Given any $D,M > 0$, we define the set $\Theta_n = \Theta_n(D,M)$ by \begin{align*}
 \Theta_n  =  \bigg \{ h \in \mathbf{H}^t(M)   :&  \|   m(W,h_{})   \|_{L^2(\mathbb{P})}  \leq   D r_n^{-1/2}  \epsilon_{n},  \;  \E_n \big( \| \widehat{m}(W,h) \|_{\ell^2}^2  \big)  \leq   D^2    \epsilon_{n}^2, \\ \nonumber &   \| \Pi_K m(W,h) \|_{L^2(\mathbb{P})}^2 \leq D^2 \epsilon_n^2 \;, \:    \| h- h_0 \|_{L^2(\mathbb{P})} \leq D \delta_n \;, \big|  \langle h ,  \tilde{\Phi}   \rangle_{\mathbb{H}_n}    \big| \leq M \sqrt{n}  \epsilon_n \|  \tilde{\Phi}  \|_{\mathbb{H}_n}, \\ \nonumber   &  \|h \|_{\mathcal{H}^{\alpha-r_n}} \leq M r_n^{-1/2} , \| D_{h_0}[h-h_0]  \|_{L^2(\mathbb{P})} \leq D r_n^{-1/2} \epsilon_n    \bigg \}.
\end{align*}
The proof proceeds through several steps which we outline below.
\begin{enumerate}
    \item[$(i)$] 
    From the proof of Theorem \ref{rate} and an application of  \citet[Theorem 2.1.20]{gine2021mathematical} to the Gaussian random variable $ Z_n = \langle h ,  \tilde{\Phi}   \rangle_{\mathbb{H}_n}$, we can choose $D,M > 0 $ large enough such that \begin{align*}   \mu(\Theta_n^c|\mathcal{D}_n)  \leq R'e^{-R n \epsilon_n^2 }    \end{align*}
holds with $\mathbb{P}$ probability approaching $1$, where $R,R' > 0$ are universal constants (that depends on $D,M$). In particular, since $n \epsilon_n^2 \uparrow \infty$, we have $\mu(\Theta_n^c|\mathcal{D}_n) \xrightarrow{\mathbb{P}} 0$. Denote the localized posterior obtained by restricting $\mu(.|\mathcal{D}_n)$ to $\Theta_n$ by \begin{align*}  \mu^{\star}(A\:|\:\mathcal{D}_n)  = \frac{  \int_{A \cap \Theta_n} \exp\big(    - \frac{n}{2}  \E_n \big[ \|  \widehat{m}(W,h)  \|_{\widehat{\Sigma}(W)}^2    \big]      \big) d \mu(h)  }{\int_{\Theta_n} \exp\big(    - \frac{n}{2} \E_n \big[ \|  \widehat{m}(W,h)  \|_{\widehat{\Sigma}(W)}^2    \big]      \big) d \mu(h) } \;,
\end{align*}
for every Borel set $A$. 

If $\| . \|_{TV}$ denotes the classical total variation metric on probability measures, it is straightforward to verify that
  \begin{align*}
\|  \mu(.|\mathcal{D}_n)  - \mu^{\star} (.|\mathcal{D}_n)    \|_{TV} \leq 2  \mu(\Theta_n^c|\mathcal{D}_n) \xrightarrow{\mathbb{P}} 0. 
\end{align*}
In particular, to deduce the desired weak convergence claims of the theorem, it suffices to work with the localized posterior measure $\mu^{\star}(.|\mathcal{D}_n)$. 

 \item[$(ii)$]
Let $\Sigma_0(.)$ denote the limiting weighting matrix in Condition \ref{limit-weight}.  We aim to verify that \begin{align*}
\sup_{h \in \Theta_n} \left| \E_n  \big(   \|   \widehat{m}(W,h,h_0)   \|_{\widehat{\Sigma}(W)}^2    \big) - \E \big \{  \Pi_K m(W,h) ' \Sigma_0(W) \Pi_K m(W,h)       \big \}                  \right| =o_{\mathbb{P}}(n^{-1}).
\end{align*}
To do this, we proceed through several steps. From the definition of $\Theta_n$, we have that \begin{align*}
  \sup_{h \in \Theta_n}   \left| \E_n  \big(  \|   \widehat{m}(W,h, h_0)   \|_{\widehat{\Sigma}(W)}^2    \big)  - \E_n  \big(  \|   \widehat{m}(W,h, h_0)   \|_{\Sigma(W)}^2    \big)     \right|   & \leq    \E_n  \big(  \|   \widehat{m}(W,h, h_0)   \|_{\ell^2}^2  \|  \widehat{\Sigma}(W)  - \Sigma_0(W)    \|_{op}    \big)   \\  & \leq   \sup_{w \in \mathcal{W}}  \|  \widehat{\Sigma}(w)  - \Sigma_0(w)    \|_{op}     \E_n  \big(  \|   \widehat{m}(W,h, h_0)   \|_{\ell^2}^2 \big) \\ & =   \epsilon_{n}^2   O_{\mathbb{P}} \bigg( \sup_{w \in \mathcal{W}}  \|  \widehat{\Sigma}(w)  - \Sigma_0(w)    \|_{op}   \bigg)    \\ & =  n^{-1}  O_{\mathbb{P}} \big(  \gamma_n  K_n   \big) \\ & = n^{-1} o_{\mathbb{P}}(1) .
\end{align*}
For any fixed $h: \mathcal{X} \rightarrow \R $, note that the estimator $\widehat{m}(w,h)$ can be expressed as \begin{align*}
     \widehat{m}(w,h)  =     \E_n \big(   \rho(Y,h_{}(X)) \big[ G_{b,K}^{-1/2} b^K(W) \big]'    \big)               [ \widehat{G}_{b,K}^{o}  ]^{-1} G_{b,K}^{-1/2}   b^K(w).
\end{align*}
In particular, this leads to the identity: \begin{align*}
   &  \E_n \big( \| \widehat{m}(W,h) \|_{\ell^2}^2  \big)   = \sum_{l=1}^{d_{\rho}}  [ \E_n (R_{h, l}^K)      ] ' [ \widehat G_{b,K}^o]^{-1}  [ \E_n (R_{h, l}^K)  ] \\ &  R_{h,l}^K(Z) =     \big[ G_{b,K}^{-1/2}  b^K(W) \big] \rho_{l}(Y,h_{}(X)). 
\end{align*}
By replacing $\widehat{G}_{b,K}^{o}$ with its asymptotic population analog $I_K$, we define \begin{align*}
     \widetilde{m}(w,h)  =     \E_n \big(   \rho(Y,h_{}(X)) \big[ G_{b,K}^{-1/2} b^K(W) \big]'    \big)             G_{b,K}^{-1/2}   b^K(w).
\end{align*}
For any $h$, observe that 
\begin{align*}
& \E_n \big( \| \widehat{m}(W,h) - \widetilde{m}(W,h)   \|_{\ell^2}^2  \big)  \leq  \bigg( \sum_{l=1}^{d_{\rho}}  [ \E_n (R_{h, l}^K)      ] '   [ \E_n (R_{h, l}^K)  ] \bigg) \| \big( [ \widehat{G}_{b,K}^{o}  ]^{-1} - I   \big)   \|_{op}^2 \| \widehat{G}_{b,K}^{o}   \|_{op}.
\end{align*}
With $\mathbb{P}$ probability approaching one, an application of Lemma \ref{aux2} implies that (i) the first term on the right hand side is bounded above by $\E_n \big( \| \widehat{m}(W,h) \|_{\ell^2}^2 \big)$, (ii) the second term is bounded by $K \log(K) n^{-1}$, and (iii) the third term is bounded by a constant, with all bounds holding up to a universal constant. 

By Condition \ref{limit-weight}, the eigenvalues of $\Sigma_0(W)$ are bounded above with probability $1$. Hence, by Cauchy-Schwarz and the definition of $\Theta_n$, it follows that \begin{align*}
& \sup_{h \in \Theta_n} \left| \E_n  \big[  \widehat{m}(W,h,h_0) \Sigma_0(W) \widehat{m}(W,h,h_0) \big] - \E_n \big[  \widetilde{m}(W,h,h_0) \Sigma_0(W) \widetilde{m}(W,h,h_0) \big] \right| \\ & = O_{\mathbb{P}} \bigg(  \sup_{h \in \Theta_n}  \sqrt{\E_n \|   \widehat{m}(W,h) - \widetilde{m}(W,h)   \|_{\ell^2}^2  } \sqrt{\E_n \|  \widehat{m}(W,h)  \|_{\ell^2}^2} \bigg)    \\ & = \epsilon_n^2 n^{-1/2} O_{\mathbb{P}}\big( \sqrt{K} \sqrt{\log K}   \big ) .
\end{align*}
Since $ \epsilon_n^2 = K / n $ and  $  K \sqrt{K \log K} / \sqrt{n} = o(1) $, the preceding term is $o_{\mathbb{P}}(n^{-1}).$ Next, observe that $\Pi_K m(w,h)$ can be expressed as   \begin{align*}
 \Pi_K m(w,h) =     \E \big(   \rho(Y,h_{}(X)) \big[ G_{b,K}^{-1/2}  b^K(W) \big]'    \big)               G_{b,K}^{-1/2}  b^K(w).
\end{align*}
By Lemma \ref{aux2}, \ref{emp-bloc} and Condition $\ref{misc1}(ii)$, there exists a sequence $\xi_n$ satisfying $\xi_n  \sqrt{K_n}  \downarrow 0$ such that \begin{align*}
   \sup_{h \in \Theta_n} \E_n \| \widetilde{m}(W,h,h_0) - \Pi_K m(W,h,h_0)      \|_{\ell^2}^2  &  \leq  \sup_{h \in \Theta_n} \bigg( \sum_{l=1}^{d_{\rho}}  \|  \E_n  (R_{h, l}^K) - \E  (R_{h, l}^K)  \|_{\ell^2}^2       \bigg)  \| \widehat{G}_{b,K}^{o}   \|_{op} \\ & = O_{\mathbb{P}} \big( n^{-1} \xi_n^2     \big).
\end{align*}
By Cauchy-Schwarz, it follows that \begin{align*}
& \sup_{h \in \Theta_n} \left| \E_n \big[  \widetilde{m}(W,h,h_0) \Sigma(W) \widetilde{m}(W,h,h_0) \big] - \E_n \big[  \Pi_K m(W,h,h_0) \Sigma(W) \Pi_K m(W,h,h_0) \big] \right| \\ & = O_{\mathbb{P}} \bigg(  \sup_{h \in \Theta_n}  \sqrt{\E_n \|   \widetilde{m}(W,h) - \Pi_K m(W,h)   \|_{\ell^2}^2  } \sqrt{\E_n \|  \widetilde{m}(W,h)  \|_{\ell^2}^2} \bigg)  \\ & =  O_{\mathbb{P}} \big( n^{-1/2} \xi_n  \epsilon_n    \big) \\ & = n^{-1} O_{\mathbb{P}} \big(  \xi_n  \sqrt{K}   \big) \\ & = n^{-1} o_{\mathbb{P}}(1).
\end{align*}
Next, by Lemma \ref{empsq}, we obtain \begin{align*}
 & \sup_{h \in \Theta_n}  \left| \E_n \big \{   \Pi_K m(W,h)' \Sigma_0(W) \Pi_K m(W,h) \big \}   - \E \big \{  \Pi_K m(W,h)' \Sigma_0(W) \Pi_K m(W,h)        \big \} \right|  \\ & = n^{-1} O_{\mathbb{P}} \big(  \sqrt{n} \epsilon_n^2 K \mathcal{J}(K^{-1/2} ) +\epsilon_n^2 K^3 \mathcal{J}^2(K^{-1/2})  \big)
\end{align*}
where $\mathcal{J}(.)$ is the entropy integral in (\ref{entropy-int}). Since $\epsilon_n = K/ \sqrt{n}$, this expression is $o_{\mathbb{P}}(n^{-1})$ by Condition \ref{misc1}$(i)$.

\item[$(iii)$]
We aim to verify that   \begin{align*}
 &  \sup_{h \in \Theta_n}  \left|     \E \big \{  \Pi_K m(W,h)' \Sigma_0(W) \Pi_K m(W,h)        \big \}  - \E \big (  \Pi_K D_{h_0}[h-h_0] ' \Sigma_0(W) \Pi_K  D_{h_0}[h-h_0]        \big )                         \right|  = o(n^{-1}).
\end{align*}
Denote the remainder obtained from linearizing the map at $h$ by \begin{align*} 
R_{h_0}(h,W) = m(W,h) - m(W,h_0) -  D_{h_0}[h- h_0] .
\end{align*}
We expand the deviation as: \begin{align*}
   & \E \big \{  \Pi_K m(W,h)' \Sigma_0(W) \Pi_K m(W,h) \big \}   - \E \big (  \Pi_K D_{h_0}[h-h_0] ' \Sigma_0(W) \Pi_K  D_{h_0}[h-h_0]        \big )   \\ & = \E \big [  \Pi_K  R_{h_0}(h,W) ' \Sigma_0(W) \Pi_K  R_{h_0}(h,W) \big ] + 2 \E  \big [  \Pi_K R_{h_0}(h,W)  ' \Sigma_0(W)  \Pi_K D_{h_0}[h- h_0]    \big ] .
 \end{align*}
Since the eigenvalues of $\Sigma_0(.)$ are uniformly  bounded above, Cauchy-Schwarz yields \begin{align*}
   & n \sup_{h \in \Theta_n}  \left| \E \big \{  \Pi_K m(W,h)' \Sigma_0(W) \Pi_K m(W,h) \big \}   - \E \big (  \Pi_K D_{h_0}[h-h_0] ' \Sigma_0(W) \Pi_K  D_{h_0}[h-h_0]        \big )    \right| \\  & \lessapprox  n  \sup_{h \in \Theta_n} \bigg[ \| \Pi_K R_{h_0}(h,W)  \|_{L^2(\mathbb{P})}^2  + \| \Pi_K R_{h_0}(h,W)  \|_{L^2(\mathbb{P})} \| \Pi_K D_{h_0}[h-h_0]   \|_{L^2(\mathbb{P})}  \bigg] \\ & \lessapprox n  \sup_{h \in \Theta_n}\bigg[   \| \Pi_K R_{h_0}(h,W)  \|_{L^2(\mathbb{P})}^2 + \| \Pi_K R_{h_0}(h,W)  \|_{L^2(\mathbb{P})} \sqrt{\log n} \epsilon_n    \bigg] \\ & = n \sup_{h \in \Theta_n} \bigg[   \| \Pi_K R_{h_0}(h,W)  \|_{L^2(\mathbb{P})}^2 + \| \Pi_K R_{h_0}(h,W)  \|_{L^2(\mathbb{P})} \sqrt{\log n} \sqrt{K}n^{-1/2}   \bigg] .    
\end{align*}
The preceding quantity is $o(1)$ by Condition \ref{misc1}$(iii)$.
\item[\textbf{$(iv)$}]
By repeating the argument from parts $(i-iii)$, we  similarly obtain (for every fixed $t \in \R$) the bound:
  \begin{align*}
& \sup_{h \in \Theta_n} \left| \E_n  \big(   \|   \widehat{m}(W,h_t,h_0)   \|_{\widehat{\Sigma}(W)}^2    \big) - \E \big (  \Pi_K D_{h_0}[h_t - h_0]  ' \Sigma_0(W) \Pi_K D_{h_0}[h_t - h_0]  \big )                  \right|  =o_{\mathbb{P}}(n^{-1}).
\end{align*}
\item[\textbf{$(v)$}]
Define  \begin{align} \label{smean}  S_n =    \E_n \big[  \langle  \rho(Y,h_0(X)), D_{h_0}[\tilde{\Phi} ](W)        \rangle_{\Sigma_0(W)}            \big].
\end{align}
For any fixed $t \in \R$, we aim to verify that \begin{align}
 \label{sn-verify} \sup_{h \in \Theta_n} \left| \E_n \big[   \langle  \widehat{m}(W,h_0)   ,  \widehat{m}(W,h,h_t)     \rangle_{\widehat{\Sigma}(W)}     \big] - \frac{t}{\sqrt{n}} S_n \right|   = o_{\mathbb{P}}(n^{-1}).
\end{align}
By a similar argument to parts $(i-iii)$, it is straightforward to verify that \begin{align*}
 & \sup_{h \in \Theta_n} \left| \E_n \big[   \langle  \widehat{m}(W,h_0)   ,  \widehat{m}(W,h,h_t)     \rangle_{\widehat{\Sigma}(W)}     \big] - \E_n \big[   \langle  \widehat{m}(W,h_0)   ,  \widehat{m}(W,h,h_t)     \rangle_{\Sigma_0(W)}     \big] \right|   = o_{\mathbb{P}}(n^{-1}) \\ & \sup_{h \in \Theta_n} \left| \E_n \big[   \langle  \widehat{m}(W,h_0)   ,  \widehat{m}(W,h,h_t)     \rangle_{\Sigma_0(W)}     \big] - \E_n \big[   \langle  \widehat{m}(W,h_0)   ,  \Pi_K m(W,h,h_t)     \rangle_{\Sigma_0(W)}     \big]            \right|  = o_{\mathbb{P}}(n^{-1}).
\end{align*}
By orthogonality of the least squares projection, we can write \begin{align*}
 \E_n \big[   \langle  \widehat{m}(W,h_0)   ,  \Pi_K m(W,h,h_t)     \rangle_{\Sigma_0(W)}     \big]  &  =  \E_n \big[   \langle  \widehat{m}(W,h_0)   ,  \Sigma_0(W) \Pi_K m(W,h,h_t)     \rangle_{}     \big] \\ & =   \E_n \big[   \langle  \rho(Y,h_0(X))  ,  \widehat{\Pi}_K \big[ \Sigma_0(W) \Pi_K m(W,h,h_t)  \big]   \rangle_{}     \big] \; ,
\end{align*}
where $\widehat{\Pi}_K$ is the empirical projection operator  in (\ref{emp-proj-op}). By interchanging $\E_n$ and the inner product, the preceding term can be written as an inner product of two vectors in $\R^{d_{\rho}}$. To be specific, from the preceding expansion, we can write:
\begin{align*}
& \E_n \big[   \langle  \widehat{m}(W,h_0)   ,  \Pi_K m(W,h,h_t)     \rangle_{\Sigma_0(W)}     \big] = \sum_{i=1}^{d_{\rho}}  V_i \; \; ,\\ &
V_l =  \E_n \big(   [ \Sigma_0(W) \Pi_K m(W,h,h_t)  ]_{l} \big[ G_{b,K}^{-1/2}b^K  (W) \big]'      \big) [ \widehat{G}_{b,K}^{o}  ]^{-1}    \frac{1}{n}  \sum_{i=1}^n   G_{b,K}^{-1/2}   b^K(W_i)                 \rho_{l}(Y_i,h_0(X_i)).
\end{align*} 
Similarly, we can express $\E_n \big[   \langle  \rho(Y,h_0(X))  , \Pi_K \big[ \Sigma_0(W) \Pi_K m(W,h,h_t)  \big]   \rangle_{}     \big] $ as $\sum_{i=1}^{d_{\rho}} \widetilde{V}_i $ where
\begin{align*}
  \widetilde{V}_l =   \E \big(   [ \Sigma_0(W) \Pi_K m(W,h,h_t)  ]_{l} \big[ G_{b,K}^{-1/2}b^K  (W) \big]'      \big)    \frac{1}{n}  \sum_{i=1}^n   G_{b,K}^{-1/2}   b^K(W_i)                 \rho_{l}(Y_i,h_0(X_i)).
\end{align*}
The $\|. \|_{\ell^2}$ norm of the sample average on the right is of order $ \sqrt{K} /  \sqrt{n}  $ (by Lemma \ref{emp-b}). As the eigenvalues of $\Sigma_0(.)$ are uniformly bounded above, a straightforward application of Lemma \ref{emp-bloc} and Condition \ref{misc1}$(ii)$  implies that \begin{align*}
\sup_{l =1,\dots,d_{\rho}}  \E  \bigg [  \sup_{h \in \Theta_n} \|  (\E_n - \E) \big(   [ \Sigma_0(W) \Pi_K m(W,h,h_t)  ]_{l} \big[ G_{b,K}^{-1/2}b^K  (W) \big]'      \big)      \|_{\ell^2} \bigg ]  \leq \frac{s_n}{\sqrt{n}}
\end{align*}
for some sequence $s_n $ satisfying $s_n \sqrt{K} \sqrt{\log K} \downarrow 0$. Furthermore, by Lemma \ref{aux2}, we have $ \| [ \widehat{G}_{b,K}^{o}  ]^{-1}  -I_K  \|_{op} \leq D  \sqrt{K \log(K)} /\sqrt{n}       $ for some universal constant $D$, with $\mathbb{P}$ probability approaching $1$. From combining the preceding bounds and an application of Cauchy-Schwarz, we obtain
\begin{align*}
& \sup_{h \in \Theta_n}  \left| \E_n \big[   \langle  \widehat{m}(W,h_0)   ,  \Pi_K m(W,h,h_t)     \rangle_{\Sigma_0(W)}     \big]  - \E_n \big[   \langle  \rho(Y,h_0(X))  , \Pi_K \big[ \Sigma_0(W) \Pi_K m(W,h,h_t)  \big]   \rangle_{}     \big]  \right| \\ & = o_{\mathbb{P}}(n^{-1}).
\end{align*}
Next, we write $m(W,h) = R_{h_0}(h,W) +D_{h_0}[h-h_0]$ and obtain the expansion: \begin{align*}
& \E_n \big[   \langle  \rho(Y,h_0(X))  , \Pi_K \big[ \Sigma_0(W) \Pi_K m(W,h,h_t)  \big]   \rangle_{}     \big] \\ & = \E_n \big[   \langle  \rho(Y,h_0(X))  , \Pi_K \big[ \Sigma_0(W) \Pi_K R_{h_0}(h,W)  \big]   \rangle_{}     \big] - \E_n \big[   \langle  \rho(Y,h_0(X))  , \Pi_K \big[ \Sigma_0(W) \Pi_K R_{h_0}(h_t,W)  \big]   \rangle_{}     \big] \\ & + \E_n \big[   \langle  \rho(Y,h_0(X))  , \Pi_K \big[ \Sigma_0(W) \Pi_K D_{h_0}[h- h_t]   \big]   \rangle_{}     \big].
\end{align*}
Similar to our bounds above, by interchanging $\E_n$ and the inner product, the first two terms on the right side of the equality can be analyzed through the terms: \begin{align*}
&  Q_{l,1} =  \E \big(   [ \Sigma_0(W) \Pi_K R_{h_0} (h,W)  ]_{l} \big[ G_{b,K}^{-1/2}b^K  (W) \big]'      \big)    \frac{1}{n}  \sum_{i=1}^n   G_{b,K}^{-1/2}   b^K(W_i)                 \rho_{l}(Y_i,h_0(X_i)) \; , \\ & Q_{l,2} =      -  \E \big(   [ \Sigma_0(W) \Pi_K R_{h_0} (h_t,W)  ]_{l} \big[ G_{b,K}^{-1/2}b^K  (W) \big]'      \big)    \frac{1}{n}  \sum_{i=1}^n   G_{b,K}^{-1/2}   b^K(W_i)                 \rho_{l}(Y_i,h_0(X_i)).
\end{align*}
The $\| . \|_{\ell^2}$ norm of the sample average on the right of both the preceding terms is of order $ \sqrt{K} /  \sqrt{n}  $ (by Lemma \ref{emp-b}). Furthermore, by the Bessel inequality, we obtain
\begin{align*}
  & \|  \E \big(   [ \Sigma_0(W) \Pi_K R_{h_0} (h,W)  ]_{l} \big[ G_{b,K}^{-1/2}b^K  (W) \big] \|_{\ell^2}^2 \leq  \|  [ \Sigma_0(W) \Pi_K R_{h_0} (h,W)  ]_{l}   \|_{L^2(\mathbb{P})}^2 \;, \\ & \|  \E \big(   [ \Sigma_0(W) \Pi_K R_{h_0} (h_t,W)  ]_{l} \big[ G_{b,K}^{-1/2}b^K  (W) \big] \|_{\ell^2}^2 \leq  \|  [ \Sigma_0(W) \Pi_K R_{h_0} (h_t,W)  ]_{l}   \|_{L^2(\mathbb{P})}^2.
\end{align*} 
As the eigenvalues of $\Sigma_0(.)$ are uniformly bounded above, the preceding bounds provide us with the expansion \begin{align*}
   & \E_n \big[   \langle  \rho(Y,h_0(X))  , \Pi_K \big[ \Sigma_0(W) \Pi_K m(W,h,h_t)  \big]   \rangle_{}     \big] \\ & =   \E_n \big[   \langle  \rho(Y,h_0(X))  , \Pi_K \big[ \Sigma_0(W) \Pi_K D_{h_0}[h- h_t]   \big]   \rangle_{}     \big]    \\ &   + \frac{\sqrt{K}}{\sqrt{n}} O_{\mathbb{P}} \bigg(  \sup_{h \in \Theta_n}   \| \Pi_K R_{h_0}(h,W)  \|_{L^2(\mathbb{P})}  + \sup_{h \in \Theta_n}  \| \Pi_K R_{h_0}(h_t,W)  \|_{L^2(\mathbb{P})}     \bigg)
\end{align*}
uniformly over $h \in \Theta_n$. Hence, by Condition \ref{misc1}$(iii)$, it follows that \begin{align*}
    & \E_n \big[   \langle  \rho(Y,h_0(X))  , \Pi_K \big[ \Sigma_0(W) \Pi_K m(W,h,h_t)  \big]   \rangle_{}     \big] \\ &  =  \E_n \big[   \langle  \rho(Y,h_0(X))  , \Pi_K \big[ \Sigma_0(W) \Pi_K D_{h_0}[h- h_t]   \big]   \rangle_{}     \big] + o_{\mathbb{P}}(n^{-1})
\end{align*}
uniformly over $h \in \Theta_n$. 

Note that, by construction $h - h_t =  t \tilde{\Phi} / \sqrt{n}  $. Since $D_{h_0}(.)$ is a linear operator, it follows that the preceding term can be expressed as \begin{align*}
\E_n \big[   \langle  \rho(Y,h_0(X))  , \Pi_K \big[ \Sigma_0(W) \Pi_K D_{h_0}[h- h_t]   \big]   \rangle_{}     \big] = \ \frac{t}{\sqrt{n}} \E_n \big[   \langle  \rho(Y,h_0(X))  , \Pi_K \big[ \Sigma_0(W) \Pi_K D_{h_0}[ \tilde{\Phi}  ]   \big]   \rangle_{}     \big] .
\end{align*}
Hence, to show (\ref{sn-verify}), it suffices to verify that \begin{align*} 
     \E_n \big[   \langle  \rho(Y,h_0(X))  , \Pi_K \big[ \Sigma_0(W) \Pi_K D_{h_0}[ \tilde{\Phi}  ]   \big]   \rangle_{}     \big] =    \E_n \big[   \langle  \rho(Y,h_0(X))  ,   \Sigma_0(W)  D_{h_0}[ \tilde{\Phi}  ]     \rangle_{}     \big] + o_{\mathbb{P}}(n^{-1/2}).
\end{align*}
To show this, we write the expression as \begin{align*}
    & \E_n \big[   \langle  \rho(Y,h_0(X))  , \Pi_K \big[ \Sigma_0(W) \Pi_K D_{h_0}[ \tilde{\Phi}  ]   \big]   \rangle_{}     \big] \\ & =  \E_n \big[ \langle  \rho(Y,h_0(X))  , ( \Pi_K - I)\big[ \Sigma_0(W) \Pi_K D_{h_0}[\tilde{\Phi}]  \big]   \rangle  \big] \\ & +   \E_n \big[  \langle  \rho(Y,h_0(X))  , \big[ \Sigma_0(W) (\Pi_K-I) D_{h_0}[\tilde{\Phi}]   \big]   \rangle  \big]  + \E_n \big[   \langle  \rho(Y,h_0(X))  ,   \Sigma_0(W)  D_{h_0}[ \tilde{\Phi}  ]     \rangle    \big].
\end{align*}
Since  $\E[\rho(Y,h_0(X))|W] = m(W,h_0) =  0$, the sample means appearing above are over mean zero random variables. Furthermore, since $\E\big( \| \rho(Y,h_0(X)) \|_{\ell^2}^2|W )$ is bounded above (with $\mathbb{P}$ probability $1$), we obtain \begin{align*}
  & n \E  \left|   \E_n \big[ \langle  \rho(Y,h_0(X))  , (\Pi_K -I)\big[ \Sigma_0(W) \Pi_K D_{h_0}[\tilde{\Phi}]   \big]   \rangle  ]  \right|^2 \\ & = \E \bigg( \left|  \langle  \rho(Y,h_0(X))  , (\Pi_K -I)\big[ \Sigma_0(W) \Pi_K D_{h_0}[\tilde{\Phi}]   \big]   \rangle       \right|^2    \bigg) \\ & \rightarrow 0
\end{align*}
because  $ \| (\Pi_K - I)  \Sigma_0(W) \Pi_K D_{h_0}[\tilde{\Phi}]    \|_{L^2(\mathbb{P})} \rightarrow 0 $ as $K \rightarrow \infty$. This is because $\Pi_K$ is a projection operator that approximates the identity (as $K \rightarrow \infty$) when acting on functions already in $L^2(W)$. Similarly, we obtain \begin{align*}
  & n \E  \left|   \E_n \big[  \langle  \rho(Y,h_0(X))  , \big[ \Sigma_0(W) (\Pi_K-I) D_{h_0}[\tilde{\Phi}]   \big]   \rangle  ]  \right|^2 \\ & = \E \bigg( \left|   \langle  \rho(Y,h_0(X))  , \big[ \Sigma_0(W) (\Pi_K-I) D_{h_0}[\tilde{\Phi}]   \big]   \rangle    \right|^2    \bigg) \\  & \rightarrow 0.
\end{align*}
The claim in (\ref{sn-verify}) follows from the preceding bounds.
\item[\textbf{$(vi)$}]
The preceding steps $(i-v)$ show that  \begin{align*}
& \E_n  \big(   \|   \widehat{m}(W,h)   \|_{\widehat{\Sigma}(W)}^2    \big) - \E_n  \big(    \|   \widehat{m}(W,h_t)   \|_{\widehat{\Sigma}(W)}^2    \big) \\ & =   \E_n \big(  \| \widehat{m}(W,h,h_0)     \|_{\widehat{\Sigma}(W)}^2         \big) - \E_n \big(  \| \widehat{m}(W,h_t,h_0)     \|_{\widehat{\Sigma}(W)}^2         \big) + 2 \E_n \big[   \langle  \widehat{m}(W,h_0)   ,  \widehat{m}(W,h,h_t)     \rangle_{\widehat{\Sigma}(W)}      \big] \\ & = \E \big( \| \Pi_K D_{h_0}[h-h_0]   \|_{\Sigma_0(W)}^2      \big) - \E \big( \| \Pi_K D_{h_0}[h_t-h_0]   \|_{\Sigma_0(W)}^2      \big) +2 \frac{t}{\sqrt{n}} S_n + o_{\mathbb{P}}(n^{-1})
\end{align*}
uniformly over $h \in \Theta_n$, where $S_n$ is as in (\ref{smean}). Furthermore, since $D_{h_0}(.)$ is a linear operator, we obtain \begin{align*}
& \frac{n}{2} \bigg[ \E \big( \| \Pi_K D_{h_0}[h-h_0]   \|_{\Sigma_0(W)}^2      \big) - \E \big( \| \Pi_K D_{h_0}[h_t-h_0]   \|_{\Sigma_0(W)}^2      \big) \bigg] \\ & = - \frac{t^2}{2 } \E \big( \|   \Pi_K  D_{h_0}[\tilde{\Phi}]     \|_{\Sigma_0(W)}^2   \big)  + t \sqrt{n} \E \big[  \langle  \Pi_K  D_{h_0}[  h - h_0   ],  \Pi_K D_{h_0}[\tilde{\Phi}]           \rangle_{\Sigma_0(W)}        \big] .
\end{align*}
For the first term, since $K \uparrow \infty$, continuity yields $$  - \frac{t^2}{2 } \E \big( \|   \Pi_K  D_{h_0}[\tilde{\Phi}]     \|_{\Sigma_0(W)}^2   \big) = - \frac{t^2}{2 } \E \big( \|    D_{h_0}[\tilde{\Phi}]     \|_{\Sigma(W)}^2   \big) + o(1). $$
For the second term, we expand it as \begin{align*}
& \E \big[  \langle  \Pi_K  D_{h_0}[  h - h_0   ],  \Pi_K D_{h_0}[\tilde{\Phi}]           \rangle_{\Sigma_0(W)}        \big]  \\ & = \E \big[  \langle  \Pi_K  D_{h_0}[  h - h_0   ],   \Pi_K \big\{ \Sigma_0(W)  \Pi_K D_{h_0}[\tilde{\Phi}] \big \}          \rangle        \big]  \\ & = \E \big[  \langle  \Pi_K  D_{h_0}[  h - h_0   ],   \Pi_K \big\{ \Sigma_0(W)  (\Pi_K-I) D_{h_0}[\tilde{\Phi}] \big \}          \rangle        \big]  +  \E \big[  \langle  \Pi_K  D_{h_0}[  h - h_0   ],   \Pi_K \big\{ \Sigma_0(W)   D_{h_0}[\tilde{\Phi}] \big \}          \rangle        \big] .
\end{align*}
Since the eigenvalues of $\Sigma_0(.)$ are  bounded above, Condition \ref{phi-reg} and Cauchy-Schwarz yields \begin{align*}
   &  \sup_{h \in \Theta_n}  \sqrt{n} \left| \E \big[  \langle  \Pi_K  D_{h_0}[  h - h_0   ],   \Pi_K \big\{ \Sigma_0(W)  (\Pi_K - I) D_{h_0}[\tilde{\Phi}] \big \}          \rangle        \big]  \right| \\ & \lessapprox \sqrt{n} \epsilon_n \sqrt{\log n} \| (\Pi_K - I) D_{h_0}[\tilde{\Phi}]   \|_{L^2(\mathbb{P})} \\ & = \sqrt{K} \sqrt{\log n}  \| (\Pi_K - I) D_{h_0}[\tilde{\Phi}]   \|_{L^2(\mathbb{P})} \\ & = o(1).
\end{align*}
Next, by orthogonality we have that \begin{align*}
    & \E \big[  \langle  \Pi_K  D_{h_0}[  h - h_0   ],   \Pi_K \big\{ \Sigma_0(W)   D_{h_0}[\tilde{\Phi}] \big \}          \rangle        \big] \\ & = \E \big[  \langle    D_{h_0}[  h - h_0   ],     \Sigma_0(W)   D_{h_0}[\tilde{\Phi}]           \rangle        \big] + \E \big[  \langle  (\Pi_K-I)  D_{h_0}[  h - h_0   ],   (\Pi_K-I) \big\{ \Sigma_0(W)   D_{h_0}[\tilde{\Phi}] \big \}          \rangle        \big].
\end{align*}
Similar to above, by Cauchy-Schwarz, we obtain \begin{align*}
  & \sup_{h \in \Theta_n}  \sqrt{n} \left|   \E \big[  \langle  (\Pi_K-I)  D_{h_0}[  h - h_0   ],   (\Pi_K-I) \big\{ \Sigma_0(W)   D_{h_0}[\tilde{\Phi}] \big \}          \rangle        \big]    \right| \\ & \lessapprox  \sqrt{n} \epsilon_n \sqrt{\log n} \| (\Pi_K - I) \Sigma_0(W) D_{h_0}[\tilde{\Phi}]    \|_{L^2(\mathbb{P})}  \\ & = \sqrt{K} \sqrt{\log n}  \| (\Pi_K - I) \Sigma_0(W) D_{h_0}[\tilde{\Phi}]    \|_{L^2(\mathbb{P})}  \\ & = o(1).
\end{align*}
From combining the preceding bounds, we obtain the expansion
\begin{align*}
&  \frac{-n}{2} \bigg[ \E_n  \big(   \|   \widehat{m}(W,h)   \|_{\widehat{\Sigma}(W)}^2    \big) - \E_n  \big(    \|   \widehat{m}(W,h_t)   \|_{\widehat{\Sigma}(W)}^2    \big) \bigg] \\ & =    \frac{t^2}{2 } \E \big( \|    D_{h_0}[\tilde{\Phi}]     \|_{\Sigma_0(W)}^2   \big) - t \sqrt{n} \E \big[  \langle    D_{h_0}[  h - h_0   ],        D_{h_0}[\tilde{\Phi}]           \rangle_{\Sigma_0(W)}        \big] - t \sqrt{n} S_n +  o_{\mathbb{P}}(1)
\end{align*}
uniformly over $h \in \Theta_n$. By definition of the adjoint $D_{h_0}^*$ and Condition \ref{phi-reg}, we can write \begin{align*}
    t \sqrt{n} \E  \big[ \langle    D_{h_0}[  h - h_0   ],        D_{h_0}[\tilde{\Phi}]           \rangle_{\Sigma_0(W)}        \big] &= t \sqrt{n}  \langle      h - h_0   ,       D_{h_0}^*  D_{h_0}[\tilde{\Phi}]           \rangle_{L^2(\mathbb{P})}     \\ & = t \sqrt{n}   \langle      h - h_0   ,       \Phi          \rangle_{L^2(\mathbb{P})} .
\end{align*}

\item[\textbf{$(vii)$}]
We compute the Laplace transform of the random variable $\sqrt{n} \big[ \langle h  - h_0 , \Phi \rangle_{L^2(\mathbb{P})} + S_n \big]$ where $h \sim \mu^\star(.\:|\:\mathcal{D}_n)$ and $S_n$ is as in (\ref{smean}). Fix any $t \in \R$.  From the conclusion of part $(vi)$, we can deduce that the Laplace transform admits the expansion: \begin{align*}
&  \E^\star  \bigg[ \exp \bigg \{t  \sqrt{n} \big[ \langle  h - h_0 , \Phi       \rangle_{L^2(\mathbb{P})} + S_n  \big] \bigg \}  \bigg| \:\mathcal{D}_n   \bigg] \\ &  = \frac{ \int_{\Theta_n} \exp \bigg \{t  \sqrt{n} \big[ \langle  h - h_0 , \Phi       \rangle_{L^2(\mathbb{P})} + S_n   \big] \bigg \}     \exp \bigg \{   -\frac{n}{2} \bigg[ \E_n  \big(   \|   \widehat{m}(W,h)   \|_{\widehat{\Sigma}(W)}^2    \big)  - \E_n  \big(   \|   \widehat{m}(W,h_t)   \|_{\widehat{\Sigma}(W)}^2    \big)      \bigg]    \bigg \} }{\int_{\Theta_n}  \exp \big(  -\frac{n}{2} \E_n  \big(   \|   \widehat{m}(W,h)   \|_{\widehat{\Sigma}(W)}^2    \big)    \big) d \mu (h) } \\ & \times \exp \bigg\{   -\frac{n}{2}    \E_n  \big(   \|   \widehat{m}(W,h_t)   \|_{\widehat{\Sigma}(W)}^2    \big)            \bigg\} d \mu(h) \\ & = \exp \bigg[ \frac{t^2}{2}         \E \big[ (D_{h_0} \tilde{\Phi})' \Sigma_0(W) (D_{h_0} \tilde{\Phi})      \big]     + o_{\mathbb{P}}(1)     \bigg] \times \frac{\int_{\Theta_n}  \exp \big(  -\frac{n}{2} \E_n  \big(   \|   \widehat{m}(W,h_t)   \|_{\widehat{\Sigma}(W)}^2    \big)    \big) d \mu (h)}{\int_{\Theta_n}  \exp \big(  -\frac{n}{2} \E_n  \big(   \|   \widehat{m}(W,h)   \|_{\widehat{\Sigma}(W)}^2    \big)    \big) d \mu (h)}.
\end{align*}
Next, we verify that  \begin{align*}
    \frac{\int_{\Theta_n}  \exp \big(  -\frac{n}{2} \E_n  \big(   \|   \widehat{m}(W,h_t)   \|_{\widehat{\Sigma}(W)}^2    \big)    \big) d \mu (h)}{\int_{\Theta_n}  \exp \big(  -\frac{n}{2} \E_n  \big(   \|   \widehat{m}(W,h)   \|_{\widehat{\Sigma}(W)}^2    \big)    \big) d \mu (h)} \xrightarrow{\mathbb{P}} 1.
\end{align*}
Let $ \mu_{t,\tilde{\Phi}}(h) $ denote the measure obtained from  translating $  \mu(\cdot) $ around $t \tilde{\Phi} / \sqrt{n}$. To be specific, $$ d \mu_{t,\tilde{\Phi}} \sim  \frac{G_{\alpha}}{\sqrt{K} } - \frac{t}{\sqrt{n}} \tilde{\Phi}.   $$
Since $\tilde{\Phi}$ is an element of the RKHS $\mathbb{H}$, it follows from \citep[Proposition I.20]{ghosal2017fundamentals} that $\mu_{t,\tilde{\Phi}}(\cdot)$ is absolutely continuous with respect to $\mu(\cdot)$ and admits a Radon–Nikodym density \begin{align} \label{gauss-cov}  \frac{d \mu_{t,\tilde{\Phi}}(h)}{d \mu(h)}  = \exp \bigg \{ \frac{t}{\sqrt{n}}  \langle  h , \tilde{\Phi}   \rangle_{\mathbb{H}_n}   - \frac{t^2}{2n} \|  \tilde{\Phi}   \|_{\mathbb{H}_n}^2    \bigg \} . \end{align}
From the definition of $\Theta_n$, we have   \begin{align*}
    \sup_{h \in \Theta_n} \left| \frac{t}{\sqrt{n}}  \langle  h , \tilde{\Phi}   \rangle_{\mathbb{H}_n} \right| \lessapprox  \epsilon_n \|  \tilde{\Phi}  \|_{\mathbb{H}_n}  & = \epsilon_n  \sqrt{K} \|  \tilde{\Phi}  \|_{\mathbb{H}} \; , 
\end{align*}
where we used the fact that $  \|  \tilde{\Phi}  \|_{\mathbb{H}_n} = \sqrt{K }  \|  \tilde{\Phi}  \|_{\mathbb{H}}$. It follows that  \begin{align*}
   &  \sup_{h \in \Theta_n} \left| \frac{t}{\sqrt{n}}  \langle  h , \tilde{\Phi}   \rangle_{\mathbb{H}_n} \right| \lessapprox \frac{K }{\sqrt{n}} = o(1) \; \; ,\: \:  \frac{t^2}{2n} \|  \tilde{\Phi}  \|_{\mathbb{H}_n}^2 \lessapprox  \frac{K }{\sqrt{n}} = o(1).
\end{align*}
Define the translated set: $$ \Theta_{n,\tilde{\Phi}} = \Theta_n - \frac{t}{\sqrt{n}} \tilde{\Phi} = \bigg \{ g : g= h - \frac{t}{\sqrt{n}} \tilde{\Phi} \; , h \in \Theta_n      \bigg \} .  $$
By the Gaussian change of variables in (\ref{gauss-cov}) and the preceding bounds, we obtain
 \begin{align*}
    &   \frac{\int_{\Theta_n}  \exp \big(  -\frac{n}{2} \E_n  \big(   \|   \widehat{m}(W,h_t)   \|_{\widehat{\Sigma}(W)}^2    \big)    \big) d \mu (h)}{\int_{\Theta_n}  \exp \big(  -\frac{n}{2} \E_n  \big(   \|   \widehat{m}(W,h)   \|_{\widehat{\Sigma}(W)}^2    \big)    \big) d \mu (h)}   = e^{o(1)} \frac{\mu(\Theta_{n,\tilde{\Phi}}\:|\:\mathcal{D}_n)}{\mu(\Theta_n\:|\:\mathcal{D}_n)}.
\end{align*}
Since $\mu(\Theta_n^c\:|\:\mathcal{D}_n) \xrightarrow{\mathbb{P}} 0$, the preceding expression reduces to \begin{align*}
    &   \frac{\int_{\Theta_n}  \exp \big(  -\frac{n}{2} \E_n  \big(   \|   \widehat{m}(W,h_t)   \|_{\widehat{\Sigma}(W)}^2    \big)    \big) d \mu (h)}{\int_{\Theta_n}  \exp \big(  -\frac{n}{2} \E_n  \big(   \|   \widehat{m}(W,h)   \|_{\widehat{\Sigma}(W)}^2    \big)    \big) d \mu (h)}    = e^{o(1)} \frac{\mu(\Theta_{n,\tilde{\Phi}}\:|\:\mathcal{D}_n)}{1+ o_{\mathbb{P}}(1)}.
\end{align*}
By replacing $D,M$ in the definition of $\Theta_n$ with a larger $D',M'$ if necessary, it is straightforward to verify that $\mu(\Theta_{n,\tilde{\Phi}}\:|\:\mathcal{D}_n) \xrightarrow{\mathbb{P}} 1$. From combining the preceding bounds, we obtain \begin{align} \label{laplace-bound}
   \nonumber  & \E^\star  \bigg[ \exp \bigg \{t  \sqrt{n} \big[ \langle  h - h_0 , \Phi       \rangle_{L^2(\mathbb{P})} + S_n  \big] \bigg \}   \:\bigg| \:\mathcal{D}_n   \bigg] \\ &  =[1+o_{\mathbb{P}}(1)] \exp \bigg[ \frac{t^2}{2}         \E \big[ (D_{h_0} \tilde{\Phi})' \Sigma_0(W) (D_{h_0} \tilde{\Phi})      \big]          \bigg].
\end{align}
In particular, we have that $$ \E^\star  \bigg[ \exp \bigg \{t  \sqrt{n} \big[ \langle  h - h_0 , \Phi       \rangle_{L^2(\mathbb{P})} + S_n  \big] \bigg \}   \:\bigg| \:\mathcal{D}_n   \bigg] \xrightarrow{\mathbb{P}} \exp \bigg[ \frac{t^2}{2}         \E \big[ (D_{h_0} \tilde{\Phi})' \Sigma_0(W) (D_{h_0} \tilde{\Phi})      \big]          \bigg]. $$

Since this is  true for every fixed $t \in \R$, it follows from \citep[Lemma 1]{castillo2015bernstein} that \begin{align} \label{weak-conv-1}
     \sqrt{n} \big( S_n + \langle  h - h_0 , \Phi       \rangle_{L^2(\mathbb{P})}   \big) \:\big| \:\mathcal{D}_n \: \:\overset{\mathbb{P}}{\rightsquigarrow} \: \: N \big( 0 ,  \E \big[ (D_{h_0} \tilde{\Phi})' \Sigma_0 (D_{h_0} \tilde{\Phi})      \big]             \big) . 
\end{align}
\item[\textbf{$(viii)$}]
Recall that \begin{align*}   S_n =    \E_n \big[  \langle  \rho(Y,h_0(X)), D_{h_0}[\tilde{\Phi} ](W)        \rangle_{\Sigma_0(W)}            \big].
\end{align*}
Since $S_n$ is the sample mean of a mean zero  random variable with finite variance, we have $ n \E[ S_n^2] = O(1)$. From (\ref{weak-conv-1}) and Lemma \ref{posmean-conv}, we can deduce (using a uniform integrability in probability argument) that: \begin{align*}
     \langle \E \big[  h\:|\:\mathcal{D}_n  \big] , \Phi \rangle_{L^2(\mathbb{P})} =  \langle h_0 , \Phi \rangle_{L^2(\mathbb{P})} - S_n + o_{\mathbb{P}}(n^{-1/2}).
\end{align*}
The first implication of this is that by substituting this identity back into (\ref{weak-conv-1}), we obtain \begin{align*}
\sqrt{n} \langle  h - \E \big[  h\:|\:\mathcal{D}_n  \big] , \Phi       \rangle\: \big| \: \mathcal{D}_n  \;  \overset{\mathbb{P}}{\rightsquigarrow} \;  N \big( 0 ,  \E \big[ (D_{h_0} \tilde{\Phi})' \Sigma_0 (D_{h_0} \tilde{\Phi})      \big]             \big) .  
\end{align*}
The second implication is that  $  \sqrt{n}\langle \E \big[  h\: | \:\mathcal{D}_n  \big] - h_0 , \Phi \rangle_{L^2(\mathbb{P})} $ is asymptotically equivalent to $ - \sqrt{n} S_n $. Hence, by the central limit theorem \begin{align*}
    \sqrt{n} \langle  h_0 - \E \big[  h\:|\:\mathcal{D}_n  \big] , \Phi       \rangle  =   \sqrt{n} S_n + o_{\mathbb{P}}(1)  \rightsquigarrow \; &   N(0, \E \big[ (D_{h_0} \tilde{\Phi})' \Sigma_0  \rho_{\star} \rho_{\star}'  \Sigma_0  (D_{h_0} \tilde{\Phi}) \big]   ) \;,
\end{align*}
where $\rho_{\star} = \rho(Y,h_0(X))$. The claim follows.
\end{enumerate}
\end{proof}
\begin{lemma}
\label{posmean-conv} Suppose the hypothesis of Theorem \ref{bvm} holds. Then \begin{align*} 
    n \E  \big[ \left|  \langle h - h_0 , \Phi    \rangle_{L^2(\mathbb{P})} \right|^2 \:\big| \: \mathcal{D}_n  \big]  = O_{\mathbb{P}}(1).
\end{align*}
\end{lemma}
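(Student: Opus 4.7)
The plan is to leverage the Laplace transform bound (\ref{laplace-bound}), which was already established during the proof of Theorem \ref{bvm}, to directly extract the desired second moment bound. Specializing (\ref{laplace-bound}) to $t = \pm 1$ yields
\begin{align*}
\E^\star\!\left[\exp\!\bigl\{\pm\sqrt{n}\,\bigl(\langle h - h_0, \Phi\rangle_{L^2(\mathbb{P})} + S_n\bigr)\bigr\} \,\Big|\, \mathcal{D}_n\right] = O_\mathbb{P}(1).
\end{align*}
Combining this with the elementary inequality $x^2 \leq e^{x} + e^{-x}$ (which follows from $\cosh(x) \geq 1 + x^2/2$) immediately gives $n\,\E^\star\!\left[\bigl(\langle h - h_0, \Phi\rangle_{L^2(\mathbb{P})} + S_n\bigr)^2 \,\big|\, \mathcal{D}_n\right] = O_\mathbb{P}(1)$. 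Since $S_n$ is a sample mean of the mean zero random variable $\langle \rho(Y,h_0(X)),\, D_{h_0}[\tilde{\Phi}](W)\rangle_{\Sigma_0(W)}$, whose variance is finite under Condition \ref{phi-reg} and the uniform boundedness of $\Sigma_0$, we also have $n S_n^2 = O_\mathbb{P}(1)$. Expanding the square via $(a+b)^2 \leq 2a^2 + 2b^2$ therefore yields $n\,\E^\star[\langle h - h_0, \Phi\rangle_{L^2(\mathbb{P})}^2 \mid \mathcal{D}_n] = O_\mathbb{P}(1)$ for the \emph{localized} posterior.

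To transfer this bound to the full posterior, I would use the decomposition
\begin{align*}
\E\!\left[\langle h - h_0, \Phi\rangle^2 \,\big|\, \mathcal{D}_n\right] = \mu(\Theta_n \mid \mathcal{D}_n)\, \E^\star\!\left[\langle h - h_0, \Phi\rangle^2 \,\big|\, \mathcal{D}_n\right] + \int_{\Theta_n^c} \langle h - h_0, \Phi\rangle^2 \, d\mu(h \mid \mathcal{D}_n).
\end{align*}
The first term is $O_\mathbb{P}(n^{-1})$ by the previous step. For the tail integral, Cauchy--Schwarz gives the upper bound $\|\Phi\|_{L^2(\mathbb{P})}^2 \sqrt{\mu(\Theta_n^c \mid \mathcal{D}_n)}\sqrt{\E\!\left[\|h - h_0\|_{L^2(\mathbb{P})}^4 \mid \mathcal{D}_n\right]}$. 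The posterior fourth moment can be controlled by combining the lower bound on the marginal normalizing constant from part $(i)$ of the proof of Theorem \ref{main-contract} with the observation that the rescaled prior $G_\alpha / \sqrt{K}$ has uniformly bounded fourth $L^2(\mathbb{P})$-moment (a consequence of Fernique's theorem and the scaling), giving $\E[\|h - h_0\|_{L^2(\mathbb{P})}^4 \mid \mathcal{D}_n] = O_\mathbb{P}(e^{C' n\epsilon_n^2})$. Together with the exponential mass decay $\mu(\Theta_n^c \mid \mathcal{D}_n) \lessapprox e^{-R n\epsilon_n^2}$, where $R$ can be chosen strictly larger than $C'$ by enlarging the parameters $D, M$ in the definition of $\Theta_n$ (as in part $(i)$ of the proof of Theorem \ref{bvm}), the tail contribution decays exponentially and is $o_\mathbb{P}(n^{-1})$.

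The main obstacle is essentially bookkeeping rather than analytic: all of the substantive work has already been done in producing (\ref{laplace-bound}). The proof reduces to extracting the second moment via the $\cosh$ inequality and carefully managing the passage from the localized to the full posterior using the exponential tail bound on $\mu(\Theta_n^c \mid \mathcal{D}_n)$. No fundamentally new estimates are required beyond those established in the proof of Theorem \ref{bvm}.
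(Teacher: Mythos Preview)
Your proposal is correct and follows essentially the same approach as the paper: both extract the second moment from the Laplace transform bound (\ref{laplace-bound}) via the $x^2 \lesssim e^x + e^{-x}$ inequality, and both control the tail contribution by combining the exponential decay of $\mu(\Theta_n^c \mid \mathcal{D}_n)$ with a crude fourth-moment bound obtained from the normalizing-constant lower bound. The only cosmetic differences are that the paper applies the $\cosh$ inequality to $\sqrt{n}\langle h-h_0,\Phi\rangle$ directly (factoring out $e^{\pm\sqrt{n}S_n}$ afterward) rather than to the shifted variable, and it inserts an intermediate localization to $\{\|h-h_0\|_{L^2(\mathbb{P})} \le E\delta_n\}$ before passing to $\Theta_n$; neither changes the substance of the argument.
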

\begin{proof}[Proof of Lemma \ref{posmean-conv}] 
Let $C$ denote a generic universal constant that may change from line to line. Define the sequences \begin{align}
\label{seqs}  & \epsilon_{n}^{} =    \frac{  \sqrt{ K}}{\sqrt{n}}     \; \; , \; \; \delta_{n} =  \begin{cases}     n^{- \frac{\alpha}{2[\alpha + \zeta] + d}} \sqrt{\log n}     & \text{mildly ill-posed}  \\  ( \log n)^{- \alpha/ \zeta} \sqrt{\log \log n}   & \text{severely ill-posed}.    \end{cases}  
\end{align}
First, we state a few preliminary observations from the proof of Theorem \ref{rate}. There exists a universal constant $c > 0$ such that \begin{align} \label{lb-newproof}   \int  \exp\bigg(    - \frac{n}{2}  \E_n \big[    \widehat{m}(W,h) ' \widehat{\Sigma}(W)  \widehat{m}(W,h)            \big]       \bigg) d  \mu (h)  \geq   \exp \big(   - c n    \epsilon_{n}^2   \big)   \end{align}
holds with $\mathbb{P}$ probability approaching $1$. Furthermore, for every $E' > 0$, there exists a sufficiently large $E$ (which depends on $E'$) such that \begin{align}
    \label{elarge} \mu \big( \| h- h_0 \|_{L^2(\mathbb{P})} \leq E \delta_n  \: \big| \: \mathcal{D}_n  \big) \geq 1- \exp(-E' n  \epsilon_n^2)
\end{align}
holds with $\mathbb{P}$ probability approaching $1$. Fix any $E ' > c$ and let $E$ be as specified above. Write  \begin{align*}
  & \E  \bigg[ \left|  \langle h - h_0 , \Phi    \rangle_{L^2(\mathbb{P})} \right|^2 \bigg| \: \mathcal{D}_n  \bigg]  \\ & = \E  \bigg[ \left|  \langle h - h_0 , \Phi    \rangle_{L^2(\mathbb{P})} \right|^2 \mathbbm{1} \{  \| h - h_0  \|_{L^2(\mathbb{P})} \leq E \delta_n  \} \bigg| \;\mathcal{D}_n  \bigg]      \\ & + \E  \bigg[ \left|  \langle h - h_0 , \Phi    \rangle_{L^2(\mathbb{P})} \right|^2 \mathbbm{1} \{  \| h - h_0  \|_{L^2(\mathbb{P})} > E \delta_n  \} \bigg| \;\mathcal{D}_n  \bigg]   \\ & = A_1  + A_2.
\end{align*}
For $A_2$, Cauchy-Schwarz yields \begin{align*}
    A_2^2 \leq \bigg ( \E  \bigg[ \left|  \langle h - h_0 , \Phi    \rangle_{L^2(\mathbb{P})} \right|^4 \;  \bigg| \;\mathcal{D}_n  \bigg]   \bigg )  \times  \mu \big( \| h- h_0 \|_{L^2(\mathbb{P})} > E \delta_n   \;\big| \; \mathcal{D}_n  \big)   .
\end{align*}
From $(\ref{lb-newproof})$, we obtain \begin{align*}
    & \E  \bigg[ \left|  \langle h - h_0 , \Phi    \rangle_{L^2(\mathbb{P})} \right|^4  \; \bigg| \;\mathcal{D}_n  \bigg]  \\ &  = \frac{\int  \left|  \langle h - h_0 , \Phi    \rangle_{L^2(\mathbb{P})} \right|^4    \exp\bigg(    - \frac{n}{2}  \E_n \big[    \widehat{m}(W,h) ' \widehat{\Sigma}(W)  \widehat{m}(W,h)            \big]       \bigg) d  \mu (h) }{\int \exp\bigg(    - \frac{n}{2}  \E_n \big[    \widehat{m}(W,h) ' \widehat{\Sigma}(W)  \widehat{m}(W,h)            \big]       \bigg) d  \mu (h)  } \\ & \leq \exp(c n  \epsilon_n^2) \int  \left|  \langle h - h_0 , \Phi    \rangle_{L^2(\mathbb{P})} \right|^4    \exp\bigg(    - \frac{n}{2}  \E_n \big[    \widehat{m}(W,h) ' \widehat{\Sigma}(W)  \widehat{m}(W,h)            \big]       \bigg) d  \mu (h) \\ & \leq \exp(c n  \epsilon_n^2) \|  \Phi \|_{L^2(\mathbb{P})}^4  \int  \| h -h_0 \|_{L^2(\mathbb{P})}^4  d  \mu (h) \\ & \leq C \exp(cn  \epsilon_n^2).
\end{align*}
Hence, by (\ref{elarge}) it follows that $
    A_2^2 \leq C \exp(  (c- E')     n  \epsilon_n^2  ) .
$
Since $E' > c$, we obtain $n A_2 = o_{\mathbb{P}}(1) $.

Let $\Theta_n$ be defined as in the proof of Theorem \ref{bvm}. From part $(i)$ of the proof of Theorem \ref{bvm}, we have $\mu(\Theta_n^c\:|\:\mathcal{D}_n)  \leq R'e^{-R n \epsilon_n^2 } $ with $\mathbb{P}$ probability approaching $1$, for some universal constant $R ,R'> 0$. We denote by $\E^\star(.\:|\:\mathcal{D}_n)$, the expectation with respect to the localized (to $\Theta_n$) posterior measure \begin{align*}  \mu^{\star}(A\:|\:\mathcal{D}_n)  = \frac{  \int_{A \cap \Theta_n} \exp\big(    - \frac{n}{2}  \E_n \big[ \|  \widehat{m}(W,h)  \|_{\widehat{\Sigma}(W)}^2    \big]      \big) d \mu(h)  }{\int_{\Theta_n} \exp\big(    - \frac{n}{2} \E_n \big[ \|  \widehat{m}(W,h)  \|_{\widehat{\Sigma}(W)}^2    \big]      \big) d \mu(h) } \; \; \; \; \; \; \; \forall \;  \text{Borel A}.
\end{align*}
 Under this setting, it follows that $A_1$ can be expressed as \begin{align*}
    A_1 & =  \E^\star  \bigg[ \left|  \langle h - h_0 , \Phi    \rangle_{L^2(\mathbb{P})} \right|^2 \mathbbm{1} \{  \| h - h_0  \|_{L^2(\mathbb{P})} \leq E \delta_n  \} \:\bigg|  \:\mathcal{D}_n  \bigg]  \\ & +  \int  \left|  \langle h - h_0 , \Phi    \rangle_{L^2(\mathbb{P})} \right|^2 \mathbbm{1} \{  \| h - h_0  
 \|_{L^2(\mathbb{P})} \leq E \delta_n  \} d \big[ \mu(h\;|\;\mathcal{D}_n) - \mu^{\star}(h\;|\;\mathcal{D}_n)   ] \\ & = A_{1,1} + A_{1,2}.
\end{align*}
From the general bound $x^2 \leq 2 \big( e^x + e^{-x} ) $ for every $x \in \R$, it follows from (\ref{laplace-bound}) with $t= \pm 1$ that \begin{align*}
    n A_{1,1} \leq C \big( e^{\sqrt{n} S_n} + e^{- \sqrt{n} S_n}      \big) \;,
\end{align*}
with $\mathbb{P}$ probability approaching $1$, where $S_n$ is defined as in (\ref{smean}). Since $S_n$ is a sample mean of a mean zero random variable with finite variance, the central limit theorem implies $nA_{1,1} = O_{\mathbb{P}}(1)$.

For $A_{1,2}$, if $\| . \|_{TV}$ denotes the total variation metric, we have that \begin{align*}
    A_{1,2} \leq E^2 \delta_n^2 \| \Phi  \|_{L^2(\mathbb{P})}^2 \| \mu - \mu^\star   \|_{TV}  \leq E^2 \delta_n^2 2 \mu(\Theta_n^c\:|\:\mathcal{D}_n) \leq  C \delta_n^2 e^{-R n \epsilon_n^2 }.
\end{align*}
It follows that $n A_{1,2} = o_{\mathbb{P}}(1)$.

\end{proof}

\begin{proof}[Proof of Corollary \ref{posgp}]
Let $\delta_n$ denote the stated contraction rate and $\epsilon_n = \sqrt{K_n} / \sqrt{n}$. From the proof of Theorem \ref{rate}, there exists a universal constant $D > 0$ such that for all sufficiently large $L > 0$, we have
\begin{align*}
    \mu \big(  \| h -h_0 \|_{L^2} > L \delta_n  \mid \mathcal{D}_n     \big) \leq \exp(-D L  n \epsilon_n^2).
\end{align*}
with $\mathbb{P}$ probability approaching $1$. Fix any $\overline{L}$ such that the preceding bound holds for all $ L \geq \overline{L} > 0$. Then, we have that \begin{align*}
     & \|  h_0 -  \E \big[ h\mid\mathcal{D}_n \big]  \|_{L^2}^2 \\ &   \leq \E \big(  \| h - h_0  \|_{L^2}^2 \mid \mathcal{D}_n        ) \\ & = \int_{\|  h - h_0  \|_{L^2} < \overline{L} \delta_n }  \| h - h_0  \|_{L^2}^2 d \mu(h \mid \mathcal{D}_n)   + \sum_{j=1}^{\infty}  \int \limits_{ j \overline{L} \delta_n \leq  \|  h - h_0  \|_{L^2}  < (j+1) \overline{L} \delta_n }  \| h - h_0  \|_{L^2}^2 d \mu(h \mid \mathcal{D}_n)  \\ & \leq \overline{L}^2 \delta_n^2 +   \overline{L}^2 \delta_n^2 \sum_{j=1}^{\infty} (j+1)^2 \exp(-D j \overline{L}  n \epsilon_n^2   ) .
\end{align*}
Since the preceding sum is finite, the claim follows.

\end{proof}

\begin{proof}[Proof of Corollary \ref{bvm-col}]
The set $C_n(\gamma)$ can equivalently be expressed as \begin{align*}
   & C_n(\gamma) = \{  t \in \R : \sqrt{n} \left|  t - \mathbf{L}\big( \E [h \mid  \mathcal{D}_n  ] \big)    \right|  \leq c_{1- \gamma}   \}\:, \\ & c_{1- \gamma} = (1- \gamma) \; \:  \text{quantile of} \:   \;  \sqrt{n} \left| \mathbf{L}(h) - \mathbf{L}\big( \E [h \mid \mathcal{D}_n ]   \big)  \right| \;  ,  \; h \sim \mu(\cdot \mid \mathcal{D}_n).
\end{align*}
Define \begin{align*}
    \sigma_{\Phi}^2 =  \E \big[ (D_{h_0} \tilde{\Phi} )' \{ \E[ \rho(Y,h_0(X)) \rho(Y,h_0(X))'|W  ] \}^{-1} (D_{h_0} \tilde{\Phi} )     \big]      .
\end{align*}

By Theorem \ref{bvm}$(i)$, we have \begin{align}
 \label{limvarc}   c_{1-\gamma} \xrightarrow{\mathbb{P}} (1- \gamma) \; \:  \text{quantile of} \:   \;  \left|Z \right|  \; \;, \; \; Z \sim N(0, \sigma_{\Phi}^2 ).
\end{align}
By Theorem \ref{bvm}$(ii)$, the distribution of $ \sqrt{n} \big( \mathbf{L}(h_0) - \mathbf{L}\big( \E [h \mid \mathcal{D}_n ] \big) \big)$ is asymptotically Gaussian with variance $\sigma_{\Phi}^2$. From this observation and (\ref{limvarc}), it follows that the frequentist coverage of $C_n(\gamma)$ is given by
 \begin{align*}
    \mathbb{P} \big(  \sqrt{n} \left|  \mathbf{L}(h_0) - \mathbf{L}\big( \E [h \mid \mathcal{D}_n ]  \big)   \right|  \leq c_{1- \gamma}      \big) = 1- \gamma + o_{\mathbb{P}}(1).
\end{align*}

\end{proof}

\end{document}